\newcommand{\newgw}[1]{#1}
\newcommand{\delgw}[1]{}
\newcommand{\delsa}[1]{}
\newcommand{\newrj}[1]{{#1}}
\newcommand{\delrj}[1]{}
\newcommand{\intext}[1]{}
\newenvironment{sproof}[1]{{\noindent\bf #1}}{ \hfill ~\qed}
\newcommand{\mc}[1]{\ensuremath{\mathcal{#1}}}
\newcommand{\mbf}[1]{\ensuremath{\mathbf{#1}}}
\newcommand{\mfr}[1]{\ensuremath{\mathfrak{#1}}}
\newcommand{\norm}[1]{\ensuremath{\left\|#1\right\|}}
\newcommand{\ul}[1]{\ensuremath{\underline{#1}}}
\newcommand{\ol}[1]{\ensuremath{\overline{#1}}}
\newcommand{\addnote}[1]{\textbf{\color{red}#1}}
\newcommand{\delnote}[1]{}
\newcommand{\muVeci}{\ensuremath{\boldsymbol{\mu}_{-i}}}
\newcommand{\muVecm}{\ensuremath{\boldsymbol{\mu}^{(m)}}}
\newcommand{\muVecmone}{\ensuremath{\boldsymbol{\mu}^{(m-1)}}}
\newcommand{\muVec}{\ensuremath{\boldsymbol{\mu}}}
\newcommand{\onep}{\ensuremath{1\mhyphen p}}
\newcommand{\fmi}{\ensuremath{\boldsymbol{f}^{(m)}_{-i, t}}}
\newcommand{\indic}[1]{\ensuremath{\textbf{1}_{#1}}}
\newcommand{\vTilde}{\ensuremath{\tilde{V}}}
\newcommand{\xhat}{\ensuremath{\hat{x}}}
\renewcommand{\v}[1]{\ensuremath{\boldsymbol{#1}}}
\renewcommand{\b}[1]{\ensuremath{\overline{#1}}}
\mathchardef\mhyphen="2D
\newcommand{\ignore}[1]{}
\begin{document}
\date{First Version: October 2010; This Version: November, 2011}

\title{Equilibria of Dynamic Games with Many Players:\\
Existence, Approximation, and Market Structure
\thanks{The authors are
grateful for helpful conversations with Vineet Abhishek, Lanier Benkard, Peter Glynn,
Andrea Goldsmith, Ben Van Roy, and Assaf Zeevi, and seminar
participants at the INFORMS Annual Meeting and the Behavioral and
Quantitative Game Theory Workshop.  This work was 
supported by DARPA under the ITMANET program, and by the National
Science Foundation.}
}

\author{Sachin Adlakha\thanks{S. Adlakha is with the Center for Mathematics of Information,
California Institute of Technology, Pasadena, CA, 91125. {\tt\small adlakha@caltech.edu}}
\and
Ramesh Johari
\thanks{R. Johari is with the Department of Management Science and Engineering,
Stanford University, Stanford, CA, 94305. {\tt\small ramesh.johari@stanford.edu}}
\and
Gabriel Y. Weintraub
\thanks{G. Y. Weintraub is with the Columbia Business School, Columbia University, 
New York, NY, 10027. {\tt\small gweintraub@columbia.edu}}
}


\maketitle



\begin{abstract} 
In this paper we study stochastic dynamic games with many players;
these are a fundamental model for a wide range of economic
applications.  The standard solution concept for such 
games is {\em Markov perfect equilibrium} (MPE), but it is well known
that MPE computation becomes intractable as the number of players
increases. We
instead consider the notion of {\em stationary equilibrium} (SE), where players optimize
assuming the empirical distribution of others' states remains constant at
its long run average.  We make two main contributions. First, we
provide a rigorous justification for using SE.  
 In particular, we provide a parsimonious collection of
exogenous conditions over model primitives that guarantee
existence of SE, and ensure that an appropriate approximation property to MPE
holds, in a general model with possibly unbounded state
spaces. Second, we  draw a significant connection between the validity
of SE, and market structure: under the same conditions that imply SE
exist and approximates MPE well, the market becomes fragmented in the
limit of many firms.  To illustrate this connection, we study in
detail a series of dynamic 
oligopoly examples.  These examples show that our conditions
enforce a form of
``decreasing returns to larger states''; this yields fragmented
industries in the limit. By contrast, violation of these conditions
suggests ``increasing returns to larger states'' and potential market
concentration. In that sense, our work uses a fully dynamic framework
to also contribute  to a longstanding issue in industrial
organization: understanding  the determinants of market structure in
different industries. 

\end{abstract}

\newpage

\section{Introduction}
\label{sec:introduction}

A common framework to study dynamic economic systems of interacting
agents is a \textit{stochastic game}, as pioneered by
\cite{shapley_1953}. In a stochastic game agents' actions directly
affect underlying state variables that influence their payoff.  The
state variables evolve according to a Markov process in discrete time,
and players maximize their infinite horizon expected discounted
payoff. Stochastic games provide a valuable general framework for a
range of economic settings, including {\em dynamic
  oligopolies}\newrj{---i.e., models of competition among firms over time.} In
particular, since the introduction of 
the dynamic oligopoly model of \cite{ericson_1995}, they have been
extensively used to study industry dynamics with heterogeneous firms
in different applied settings (see \cite{pakes_2006} for a survey of
this literature).

The standard solution concept for stochastic games is {\em Markov
  perfect equilibrium} (MPE) \citep{fudenberg_1991}, where a player's
equilibrium strategy depends on the current state of all players. MPE
presents two significant obstacles as an analytical tool, particularly
as the number of players grows large.  First is {\em computability}:
the state space expands in dimension with the number of players, and
thus the ``curse of dimensionality'' kicks in, making computation of
MPE infeasible in many problems of practical
interest. 
Second is {\em plausibility}: as the number of players grows large, it becomes
increasingly difficult to believe that individual players track the
exact behavior of the other agents.

To overcome these difficulties, previous research has considered an
asymptotic regime in which the number of agents is infinite
\citep{jovanovic_1988, hopenhayn_1992}. In this case, individuals take
a simpler view of the world: they postulate that fluctuations in the
empirical distribution of other players' states have ``averaged out''
due to a law of large numbers, and thus they optimize holding the
state distribution of other players fixed. Based on this insight, this
approach considers an equilibrium concept where agents optimize only
with respect to the long run average of the distribution of other
players' states; \cite{hopenhayn_1992} refers to this concept as {\em
  stationary equilibrium} (SE), and we adopt his terminology.  SE are
much simpler to compute and analyze than MPE, making this a useful
approach across a wide range of applications. In particular, SE of
{\em infinite models} have also been extensively used to study
industry dynamics (see, for example, \citealp{LUT07}, \citealp{MEL03},
\citealp{KLEKOR03}, and \citealp{HOPROG93}).

In this paper, we address two significant questions.  
First, {\em under what conditions is it justifiable to use SE as
  a modeling tool}?  We provide theoretical foundations for the use of
SE.  In particular, our main results provide a parsimonious collection of
\textit{exogenous conditions over model primitives} that guarantee
existence of SE, and ensure that an appropriate approximation property
holds. These results provide a rigorous justification for using SE of infinite models  to
study stochastic games with a large but finite number of players.

The second question we address relates to a longstanding topic of research in industrial
organization: {\em when do industries fragment, and when do they
  concentrate?}  \newgw{In a fragmented industry all firms have small market shares,  
 with no single firm or group of firms becoming dominant. By contrast, in a concentrated industry, few participants that hold a notable market share can exert significant market power.}  In dynamic oligopoly models in particular, this is a
challenging question to answer due to the inherent complexity of MPE.
Our second main contribution is to draw a significant connection
between the validity of SE, and market structure: under the same conditions that imply SE exist and
an appropriate approximation property holds, the market becomes fragmented in the limit of many
firms.  In particular, we interpret our conditions over model primitives
as enforcement of a form of ``decreasing returns to larger states'' for an individual firm, that
yields fragmented industries in the limit.  By contrast, as we
discuss, violation of these conditions suggests ``increasing returns
to larger states'' and potential market concentration.

Our
main results are described in detail below.

\begin{enumerate}
\item {\em Theoretical foundations for SE: Existence of SE}.  We
  provide natural conditions over model primitives 
  that guarantee existence of SE over {\em unbounded} state
  spaces. This is distinct from prior work on SE, which typically
  studies models with {\em compact} state spaces.  Crucially, considering
  unbounded state spaces allows us to obtain sharp distinctions
  between increasing and decreasing returns to higher states, and 
  the resulting concentration or fragmentation of
  an industry.

In addition, even though SE of a
  given model may exist over any compact state space, it may fail to
  exist over an unbounded state space. The reason is that agents may
  have incentives to grow unboundedly large and in this case the
  steady-state distribution is not well defined. Hence, a key aspect
  of our conditions is that they ensure the stability of the
  stochastic process that describes each agent's state evolution, and
  that the resulting steady-state distribution is well defined. In
  this way, we guarantee the compactness of an appropriately defined
  ``best-response'' correspondence. Our conditions also ensure the
  continuity and convexity of this correspondence, allowing us to use
  a topological fixed-point approach to prove existence. 

\item {\em Theoretical foundations for SE: Approximating MPE}.  We
  show that the same conditions  
over model primitives that ensure the existence of SE, imply that SE of infinite models
approximate well MPE of models with a finite number of players, as the number of agents increases. An important
condition that is required for this approximation result to hold is
that the distribution of players' states in the SE under consideration
must possess a {\em light-tail}, as originally observed in
\cite{weintraub_2008} for a sequence of finite games, and in
\cite{weintraub_2010} for a limiting infinite model like the one
studied in this paper. In a light-tailed equilibrium, no single agent
is ``dominant;'' without such a condition it is not possible for
agents' to rationally ignore the state fluctuations of their dominant
competitors.

Crucially, the light-tail assumption as used
in~\cite{weintraub_2008} and~\cite{weintraub_2010} is an {\em endogenous}
condition on the equilibrium outcome. A central contribution of this
work is to develop {\em exogenous conditions over model primitives}
that ensure the existence of light-tailed SE. In fact, the conditions
that guarantee compactness in the existence result ensure that {\em all} SE
are light-tailed. Thus approximation need not be verified separately;
verification of our conditions simultaneously guarantees existence of
SE as well as a good approximation to MPE as the number of agents increases. 

\item {\em Market structure in dynamic industries}.  Our results
  provide important insights into market structure in dynamic
  industries. The
  literature on dynamic oligopoly models has largely study individual
  industries in which market outcomes are very sensitive to certain
  model features and parameters \citep{pakes_2006}. In contrast,
  our results provide conditions for which we can predict important features of the
  equilibrium market structure for a broad range of parameters and
  specifications.

  In particular, our conditions over model primitives
  imply that all SE are light-tailed, and therefore, in all SE the
  industry yields a fragmented market structure and no dominant firms
  emerge. Moreover, all these SE are valid approximations to MPE.  
  While these conditions cannot pin-down the equilibrium exactly, they
  guarantee that in all of them the market structure is fragmented. In
  that sense, our work contributes to the ``bounds approach'' in the
  industrial organization literature pioneered by \cite{SUT91}, which
  aims to identify broad structural properties in industries that
  would yield a fragmented or a concentrated market structure. A
  novelty of our analysis compared to previous work is that it is done
  in a {\em fully dynamic} framework.
\end{enumerate}

To illustrate the connection between our theoretical results and
market structure in dynamic industries, we study in detail a collection of
three examples in industrial organization.  For each of these
examples, we demonstrate that our conditions on model primitives that
guarantee existence of light-tailed SE can be interpreted as enforcing
``decreasing returns to higher states.'' Conversely, our analysis of
the examples suggests that when these conditions are violated, the
resulting models exhibit ``increasing returns to higher states,'' and
SE are not expected to provide accurate approximations or may not even
exist. We note that, as emphasized above, unbounded
state spaces are necessary to highlight the difference between increasing and
decreasing returns to higher states.

The first example we discuss is a quality-ladder dynamic oligopoly
model where firms can invest to improve a  firm-specific state; e.g.,
a firm might invest in advertising to 
improve brand awareness, or invest in R\&D to improve product
quality \citep{pakes_1994}.  Firms' single period profits are determined through a
monopolistic competition model. 
Through a limiting construction where the number of
firms and market size both scale to infinity, we use our conditions to
show that light-tailed SE exist and approximate MPE asymptotically if
the single period profit function exhibits diminishing marginal
returns to higher quality.

Next, we discuss a model with
positive spillovers between firms \citep{GIR98}. Here our conditions
impose a form of decreasing returns in the spillover effect that,
together with the decreasing returns to investment condition
introduced in the previous model, ensure SE exist and provide good
approximations to MPE. When the spillover effect is controlled in
this way, the market is more likely to fragment.

Finally, we discuss a dynamic oligopoly that incorporates
``learning-by-doing'', so that firms become more efficient as they
gain experience in the marketplace \citep{FUD83}. In this case, we
find that firms' learning processes must exhibit decreasing returns to
scale to ensure existence of light-tailed SE. These conditions are
consistent with prior observations in the literature that suggest
industries with prominent learning-by-doing effects will tend to
concentrate; our results compactly quantify such intuition.

Indeed, in all these examples, our results validate intuition by
providing quantifiable insight into market structure. Industries with
increasing returns are typically concentrated and dominated by few
firms, so SE would not be good approximations.  By contrast, our
conditions on model primitives delineate a broad range of industries
with decreasing returns that become fragmented in the limit and for
which SE provide accurate approximations.

The remainder of the paper is organized as follows. Section
\ref{sec:related} describes related literature.  Section
\ref{sec:model} introduces our stochastic game model, and there we define
both MPE and SE. We then preview our results and discuss the
motivating examples above in detail in Section \ref{sec:examples}. In
Section \ref{sec:exist}, we develop exogenous conditions over model
primitives that ensure existence of light-tailed SE. In Section
\ref{sec:approx}, we show that under our conditions any light-tailed
SE approximates MPE asymptotically.  Section \ref{sec:discussion}
revisits the examples in light of the theoretical results provided in
the two previous sections. We conclude and discuss future research
directions in Section \ref{sec:conclusions_new}. The appendices
contain all mathematical proofs as well as important
complementary material.

\section{Related Work}
\label{sec:related}

Our work is related to previous literature that studies stationary equilibria or closely-related equilibrium concepts.
SE is sometimes called {\em mean field
equilibrium} because of its relationship to mean field models in
physics, where large systems exhibit macroscopic behavior that is
considerably more tractable than their microscopic description.  (See,
e.g., \cite{blume_1993} and
\cite{morris_2000} for related ideas applied to static games.)  In the
context of stochastic games, SE and related approaches have been
proposed under a variety of monikers across economics and engineering;
see, e.g., studies of anonymous sequential games
\citep{jovanovic_1988, bergin_1995}; dynamic stochastic general equilibrium in
macroeconomic modeling \citep{stokey_1989}; Nash certainty equivalent
control \citep{huang_2006, huang_2007}; mean field games
\citep{lasry_2007}; and dynamic user equilibrium \citep{friesz_1993}.
SE has also been studied in
recent works on information percolation models \citep{duffie_2009},
sensitivity analysis in aggregate games \citep{acemoglu_2009},
coupling of oscillators \citep{yin_2010}, scaling behavior of markets
\citep{bodoh_creed_2010}, and in analysis of stochastic games with
complementarities \citep{adlakha_2010_b}.  

{Prior work has considered existence of equilibrium in
  stochastic games in general, but these are typically established
  only in restricted classes such as zero-sum games and games of
  identical interest; see \cite{mertens_94} for background.}
\cite{DORSAT03} and \cite{ESC08} show existence of MPE for different
classes of stochastic games under appropriate concavity assumptions.
Our work 
is particularly related to \cite{jovanovic_1988} and
\cite{hopenhayn_1992} that consider existence of SE.  The former paper
considers a model similar to ours but restricts attention to compact
sets, while the latter paper is focused on a specific model of
oligopoly competition.  {\cite{adlakha_2010_b} also consider existence of SE;
  they focus on games with strategic complementarities, and establish
  existence using a constructive approach based on lattice theoretic
  methods.} The preceding three papers study a
different setting to ours and do not establish an approximation theorem. 
Several prior papers have considered various
notions of approximation properties for SE {in specific
  settings, either with bounded state spaces \citep{glynn_2004, 
  tembine_2009, bodoh_creed_2010} or with an exogenous compactness assumption
\citep{adlakha_2010_cdc}, or in linear-quadratic payoff models
\citep{huang_2007, adlakha_2008_cdc}.} 

We briefly discuss here relation to our own prior work.  In our
 previous conference papers
 \citep{adlakha_2008_cdc, adlakha_2010_cdc}, we study SE in a less general
 model of stochastic games than this paper.  Though we study existence of SE
 and an appropriate approximation property, we make an {\em endogenous}
 assumption of compactness; in other words, we assume the model is
 such that in searching for SE we can restrict attention to a compact
 set.  As a result, those results do not relate model primitives to
 either validity of SE as an approximation, nor to market structure.  By contrast, in this
 paper, we derive {\em
   exogenous} conditions on model primitives that guarantee compactness, existence
 of SE, and an appropriate approximation property. In addition, as a consequence, we are able to
 apply our results to derive sharp insight into market structure.

Our paper is also closely related to \cite{weintraub_2010}, who
study a class of industry dynamic models. They also show a result
that depends endogenously on SE: if a given SE
satisfies an appropriate light-tail condition, then it approximates MPE well as the
number of firms grows.  
\delrj{Our paper further contributes to providing a rigorous justification for using infinite model SE in the study of industry dynamics.
More specifically,} Our paper provides several important contributions with respect to
\cite{weintraub_2010}.  {First, we consider a more general stochastic game
model that allows us, for example, to study the models with spillovers
and learning-by-doing. On the other hand, we do not consider entry and
exit as they do; we discuss this extension in the conclusions section.
We also consider a stronger approximation property. 
Second, and more importantly, the light-tail condition
used to prove the approximation result in \cite{weintraub_2010} is a
condition over {\em equilibrium outcomes}; by contrast, we provide conditions over
{\em model primitives} that guarantee all SE are light-tailed and
hence approximate MPE asymptotically. As a consequence, these
  conditions also give sharp insight into market structure in our paper.
Finally, we provide a 
novel result pertaining to existence of SE, particularly over
unbounded state spaces.  We close by noting that \cite{weintraub_2010} also
consider an analog of SE called ``oblivious equilibrium'' (OE) in
models with finitely many agents. They study the relation between OE
and SE by analyzing  the hemicontinuity of the OE correspondence at
the point where number of firms becomes infinite.

\section{Preliminaries and Definitions}
\label{sec:model}

In this section we define our general model of a stochastic game, and
proceed to define two equilibrium concepts: Markov perfect equilibrium (MPE) and stationary equilibrium (SE). We conclude by defining the asymptotic Markov equilibrium property, which requires that SE approximates MPE well as the number of players grows large.

\subsection{Stochastic Game Model}

\label{subsec:stochastic-game}
In this section, we describe our stochastic game model. Compared to standard stochastic games in the literature~\citep{shapley_1953}, in
our model, every player has an individual state. Players are coupled through their payoffs and
state transitions. A stochastic game has the following
  elements:

\textit{Time.} The game is played in discrete time. We index time
periods by~$t = 0, 1, 2, \ldots$.

\textit{Players.} There are $m$ players in the game; we use $i$ to
denote a particular player.

\textit{State.} The state of player $i$ at time $t$ is denoted by
$x_{i,t} \in \mc{X}$, where $\mc{X} \subseteq \Z^{d}$ is a
subset of the $d$-dimensional integer
lattice. We use~$\mbox{\boldmath{$x$}}_{ t}$ to denote the state of all players at time $t$ and $\mbox{\boldmath{$x$}}_{-i, t}$ to denote the state of
all players except player~$i$ at time~$t$. For indication of
  how to proceed with compact but not necessarily discrete state spaces, we refer the
reader to the recent independent work of \cite{bodoh_creed_2010}.

\textit{Action.} The action taken by player $i$ at time $t$ is denoted
by $a_{i,t} \in \mc{A}$, where $\mc{A} \subseteq \R^{q}$ is a
subset of the $q$-dimensional Euclidean space. We use
$\mbox{\boldmath{$a$}}_{ t}$ to denote the action of all players at time $t$.

\textit{Transition Probabilities.} The state of a player evolves
in a Markov fashion. Formally, let $h_t = \{ \v{x}_0, \v{a}_0, \ldots,
\v{x}_{t-1}, \v{a}_{t-1}\}$ denote the {\em history} up to time $t$.
Conditional on $h_t$, players' states at time~$t$ are {\em
  independent} of each other. This assumption implies that random shocks are idiosyncratic, ruling out aggregate random shocks that are common to all players. The assumption is important to derive our asymptotic results. Player $i's$ state $x_{i,t}$ at time $t$
depends on the past history $h_t$
only through the state of player $i$
at time $t-1$, $x_{i,t-1}$; the states of other players at time $t-1$,
$\v{x}_{-i,t-1}$; and the action taken by player $i$ at time $t-1$,
$a_{i,t-1}$.  We represent the distribution of the next state as a
transition kernel $\mbf{P}$, where:
\begin{align}
\label{eqn:tm}
\mbf{P}(x_i'\ | \ x_i, a_i, \v{x}_{-i}) = \prob\big(x_{i,t+1} = x_i'\ | \ x_{i,t}
= x_i, a_{i,t} = a_i, \v{x}_{-i,t} = \v{x}_{-i} \big).
\end{align}

\textit{Payoff.} In a given time period, if the state of player $i$ is $x_i$, the state of
other players is $\v{x}_{-i}$, and the action taken by player $i$ is
$a_i$, then the single period payoff to player~$i$ is
$\pi\big(x_{i},a_{i},\v{x}_{-i}\big) \in \R$.

\textit{Discount Factor.} The players discount their future payoff by
a discount factor $0 < \beta < 1$.  Thus a player $i$'s infinite
horizon payoff is given by: $ \sum_{t = 0}^\infty \beta^{t}\pi\big(x_{i,t}, a_{i,t},
\v{x}_{-i,t}\big).$

In a variety of games, coupling between players is independent of the
identity of the players. The notion of \textit{anonymity} captures
scenarios where the interaction between players is via aggregate
information about the state (e.g., see \citealp{jovanovic_1988}). Let
$\fmi(y)$ denote the fraction of players (excluding player $i$) that
have their state as $y$ at time $t$, i.e.:
\begin{align}
\label{eqn:actual-dist}
\fmi(y) = \frac{1}{m-1}\sum_{j \neq i}\indic{\{x_{j,t} = y\}},
\end{align}
where $\indic{\{x_{j,t} = y\}}$ is the indicator function that the
state of player $j$ at time $t$ is $y$.  We refer to $\fmi$ as the
{\em population state} at time $t$ (from player $i$'s point of view).

\begin{definition}[Anonymous Stochastic Game]
\label{def:mean-field-games}
A stochastic game is called an {\em anonymous stochastic game} if
the payoff function $\pi(x_{i,t}, a_{i,t}, \v{x}_{-i,t})$ and transition kernel
$\mbf{P}(x_{i,t}'\  |\ x_{i,t}, a_{i,t}, \v{x}_{-i,t})$ depend on
$\v{x}_{-i,t}$ only through $\fmi$.  In an abuse of notation, we
write $\pi\big(x_{i,t}, a_{i,t}, \fmi\big)$ for the payoff to player~$i$, and $\mbf{P}(x_{i,t}'\ |\ x_{i,t}, a_{i,t}, \fmi)$ for the
transition kernel for player $i$.
\end{definition}

For the remainder of the paper, we focus our attention on anonymous
stochastic games. For ease of notation, we often drop the subscript
$i$ and $t$ and denote a generic transition kernel by $\mbf{P}(\cdot \ | \ x, a, f)$, and a generic payoff function by $\pi(x, a, f)$, where $f$ represents the
population state of players other than the player under
consideration. Anonymity
requires that a firm's
single period payoff and transition kernel depend on the states of
other firms via their empirical distribution over the
state space, and not on their specific identify. The examples we discuss in the next section satisfy this assumption. Second, in an anonymous stochastic game the
functional form of the payoff function is the same, regardless of the
number of players~$m$. In that sense, we often interpret the
  profit function $\pi(x,a,f)$ as representing a limiting regime in
  which the number of agents is infinite. In  Section
  \ref{sec:examples} we discuss how to derive this limiting profit
  function in different applications. Moreover, in Appendix~\ref{se:sequence} we briefly discuss how our results can be extended to include the case where there is a sequence of payoff functions that depends on the number of
agents.

We introduce some additional useful notation. Let $\mfr{F}$ be the
set of all possible population states on $\mc{X}$: 
\begin{align}
\label{eqn:dist-set}
\mfr{F} = \big\{f : \mc{X} \rightarrow [0, 1]\ |\ f(x) \geq 0, \sum_{x
  \in \mc{X}} f(x) = 1 \big\}.
\end{align}

In addition, we let $\mfr{F}^{(m)}$ denote the set of all
population states in $\mfr{F}$ over $m-1$ players, i.e.:
\begin{equation}
\label{eqn:dist-set-mpe}
\mfr{F}^{(m)} = \Big\{ f \in \mfr{F} : \text{ there exists } \v{x}
\in \mc{X}^{m-1} \text{ with } f(y) = \frac{1}{m-1}
\sum_{j}\indic{\{x_{j} = y\}} ,\ \forall y\in\mc{X} \Big\}. \notag
\end{equation}

\ignore{

\subsection{Extensions to the Basic Model}
\label{sec:extensions}

\addnote{Edit this section!}

We briefly mention two extensions for which all our results follow;
the technical details are omitted, and the reader is referred to \cite{techrep}.

  First, note that players are ex-ante {\em homogeneous} in the model
  considered, in the sense that they share the same model primitives.  This is not a particularly consequential choice,
  and is made primarily for notational convenience; indeed, by an
  appropriate redefinition of state we can model agent heterogeneity
  via types.  

Second, note that in the game defined here, players are coupled
through their states: both the transition kernel and the payoff
depend on the current state of all players.  All the results of this
paper naturally extend to a setting where players
may also be coupled through their {\em actions}, i.e., where the
transition kernel and payoff may depend on the current actions of
all players as well.

To model a game where players are coupled through actions, we now
assume that $f$ is a distribution over both states and actions.  We
refer to $f$ as the {\em population state-action profile} (to
distinguish it from just the population state, which is the marginal
distribution of $f$ over $\mc{X}$).  For simplicity, since our basic
model assumes state spaces are discrete, whenever players are coupled through actions we 
restrict attention to games with a {\em finite} action space $S
\subset \Z^k$. Thus the population
state-action profile is a distribution over $\mc{X} \times
S$. Because in this setting we restrict attention to finite action spaces, we assume that players maximize payoffs
with respect to randomized strategies over~$S$.\footnote{This is done to ensure existence of equilibrium.}  (See
Section \ref{ssec:finiteaction} for further details on games with finite action
spaces.)

We again let $x_{i,t} \in \mc{X}$ be the state of player~$i$ at
time~$t$, where $\mc{X} \subseteq \Z^{d}$. We let~$s_{i,t} \in S$ be
the action taken by player~$i$ at time~$t$.  Let $\fmi$ denote
the empirical population state-action profile at time $t$ in an
$m$-player game; in other
words, $f_{i,t}^{(m)}(x,s)$ is the fraction of players other than~$i$
at state $x$ who play $s$ at time $t$.  With these definitions,
$x_{i,t}$ evolves according to the transition kernel $\mbf{P}$ as before, i.e.,
$x_{i,t+1} \sim \mbf{P}(\cdot | x_{i,t}, a_{i,t}, f_{-i,t}^{(m)})$.
A player acts to maximize his expected discounted payoff, as before.
Note that a player's time $t$
payoff and transition kernel depend on the actions of his competitors, which are chosen
{\em simultaneously} with his own action.  Thus to evaluate the
time~$t$ expected  payoffs and transition kernel, a player must take
an expectation with respect to the randomized strategies employed by
his competitors.  With these definitions, all the analysis  and results of this paper go through
for a game where agents are coupled through actions with modest additional technical work.

We conclude by commenting on the restriction imposed when players are coupled through actions that the action space
must be finite.  From a computational standpoint this is not very
restrictive, since in many applications discretization is required or
can be used efficiently.  From a theoretical standpoint, we can
analyze games with general compact Euclidean action spaces using
techniques similar to this paper, at the expense of additional
measure-theoretic complexity, since now the population state-action
profile is a measure over a continuous extended state space.


}

\subsection{Markov Perfect Equilibrium}
\label{subsec:MPE}

In studying stochastic games, attention is typically focused on 
\textit{Markov} strategies, where the action of a player at each time is a function of only current state of every player~\citep{fudenberg_1991, MASTIR88}. In the
context of anonymous stochastic games, a Markov strategy depends
on the current state of the player as well as the current population
state.  Because a player using such a strategy tracks the evolution of
the other players, we refer to such strategies in our context as {\em cognizant} strategies.

\begin{definition}
Let $\mfr{M}$ be the set of cognizant strategies available to a player. That is, $\mfr{M} = \big\{ \mu\ |\ \mu : \mc{X} \times \mfr{F} \rightarrow \mc{A} \big\}$.
\end{definition}

\ignore{Consider  an $m$-player anonymous stochastic game and let $\mu_i \in \mfr{M}$ denote the cognizant
strategy used by player~$i$, i.e., we have~$a_{i, t} = \mu_i(x_{i,t},\fmi)$.}
Consider an $m$-player anonymous stochastic game. At every time~$t$, player~$i$ chooses an action~$a_{i,t}$ 
that depends on its current state and on the current population state $\fmi \in \mfr{F}^{(m)}$. Letting $\mu_{i} \in \mfr{M}$
denote the cognizant strategy used by player~$i$, we have $a_{i,t} = \mu_{i}(x_{i,t}, \fmi)$. The next state
of player~$i$ is randomly drawn according to the kernel $\mbf{P}$:
\begin{equation}
\label{eq:mpe_dynamics}
x_{i,t+1} \sim \mbf{P}\left( \cdot\ \Big |\ x_{i,t}, \mu_i(x_{i,t},
  \fmi), \fmi \right).
\end{equation}
We let $\muVecm$ denote the strategy
vector where every player has chosen strategy $\mu$. Define~$V^{(m)}\big(x, f\ |\ \mu', \muVecmone \big)$ to be the expected net
present value for a player with initial state~$x$, and with initial
population state~$f \in \mfr{F}^{(m)}$, given that the player follows a strategy~$\mu'$ and
every other player follows the strategy $\mu$. In particular, we
have
\begin{multline}
\label{eqn:mpe-value-func}
V^{(m)}\big(x, f\ |\ \mu', \muVecmone \big) \triangleq\\
\E\left[\sum_{t =0}^{\infty}\beta^{t}\pi\big(x_{i,t}, a_{i,t},\fmi\big) \ \big| \
x_{i,0} = x, f_{-i,0}^{(m)} = f;
\mu_{i} = \mu', \muVeci = \muVecmone \right],
\end{multline}
where $\muVeci$ denotes the strategies employed by every player except $i$.
Note that state sequence~$x_{i,t}$ and population state
sequence~$\fmi$ evolve according to the transition dynamics \eqref{eq:mpe_dynamics}.

We focus our attention on a {\em symmetric Markov perfect equilibrium} (MPE),
where all players use the same cognizant strategy~$\mu$. In an abuse of
notation, we
write~$V^{(m)}\big(x, f\ |\ \muVecm \big)$ to refer to the expected
discounted value as given in equation~\eqref{eqn:mpe-value-func} when
every player follows the same cognizant strategy~$\mu$.

\begin{definition}[Markov Perfect Equilibrium]
\label{def:mpe}
The vector of cognizant strategies~$\muVecm \in \mfr{M}$ is a
\textit{symmetric Markov perfect equilibrium} (MPE) if for all initial
states~$x \in \mc{X}$ and population states $f \in \mfr{F}^{(m)}$ we have
$\sup_{\mu'\in \mfr{M}} V^{(m)}\big(x, f\ |\ \mu', \muVecmone \big) =  V^{(m)}\big(x, f\ |\ \muVecm \big).$
\end{definition}

Thus, a Markov perfect equilibrium is a profile of cognizant strategies that
simultaneously maximize the expected discounted payoff for every
player, given the strategies of other players.\footnote{Under the assumptions we make later in this paper, it can be shown
that for any vector of cognizant strategies of players other than $i$,
an optimal cognizant strategy always exists for player $i$.}
 It is a well known fact that computing a Markov perfect equilibrium
for a stochastic game is computationally challenging in
general~\citep{pakes_2006}. This is because to find an optimal cognizant strategy, each
player needs to
track and forecast the exact evolution of the entire population state. In certain
scenarios, it might be infeasible to exchange or learn this information at
every step because of limited communication capacity between
players or limited cognitive ability. Moreover, even if this is possible, the computation of an optimal cognizant strategy is subject to a curse of dimensionality; the state space $\mfr{F}^{(m)}$  grows too quickly as the number of agents $m$ and/or the number of individual states $\mc{X}$ becomes large. As a consequence, computing Markov perfect equilibrium in practice is only possible in models with few agents and few individual states, severely restricting the set of problems for which this equilibrium concept can be used. In the next subsection, we describe a
scheme for approximating Markov perfect equilibrium that alleviates
these difficulties.

\ignore{The state space $\mfr{F}^{(m)}$  grows too quickly as the number of agents $m$ and/or the number of individual states $\mc{X}$ becomes large. Hence,  computing MPE is only feasible for models with few agents and few individual states, severely restricting the set of problems for which MPE can be used. The concept of stationary equilibrium alleviates
these difficulties.}

\subsection{Stationary Equilibrium}
\label{subsec:OE}

In a game with a large number of players, we might expect that
fluctuations of players' states ``average out'' and hence the actual population state remains roughly constant over time. Because the effect of other players on a single player's payoff
and transition probabilities
is only via the population state, it is intuitive that, as
the number of players increases, a single player has negligible effect
on the outcome of the game. Based on this intuition, related schemes for
approximating MPE have been proposed in different application domains via a solution concept we call
\textit{stationary equilibrium} or SE (see
Sections~\ref{sec:introduction} and \ref{sec:related} for references
  on SE and 
related work).

We consider a limiting model with an infinite number of agents in which a law of large numbers holds exactly. In an SE of this model, each player optimizes its payoff assuming the population state is
fixed at its long-run average. Thus, rather than keep track of the
exact population state, a single player's immediate action depends
only on his own current state. \ignore{Hence,  a single player's 
immediate action depends
only on his own current state.} We call such players {\em
  oblivious}, and refer to their strategies as {\em oblivious strategies}.  (This terminology is due to \citealp{weintraub_2008}.)
Formally, we let $\mfr{M}_O$ denote
the set of (stationary, nonrandomized) oblivious strategies, defined
as follows.
\begin{definition}
Let $\mfr{M}_O$ be the set of oblivious strategies
available to a player. That is, $\mfr{M}_O = \big\{ \mu\ |\ \mu : \mc{X} \to \mc{A}  \big\} $.
\end{definition}
Given a strategy $\mu \in \mfr{M}_O$, an oblivious player~$i$ takes an action $a_{i,t} = \mu(x_{i,t})$ at time~$t$; as before, the next state of the player 
is randomly distributed according to the transition
kernel~$\mbf{P}$:
\begin{equation}
\label{eq:oe_dynamics}
x_{i,t+1} \sim \mbf{P}( \cdot \ | \ x_{i,t}, \mu(x_{i,t}), f) 
\end{equation}
Note that because we are considering a limiting model, the player's
state evolves according to a transition kernel with fixed population state
$f$.  The interpretation is that a single player conjectures the
population state to be $f$; therefore, in determining a player's future
expected payoff stream, it considers a transition kernel where its
own state evolution is affected by the fixed population state~$f$.

\ignore{Note that an oblivious player conjectures the population state to be fixed at~$f$
and hence its state evolves according to a transition kernel with fixed population state
$f$. }

We define the
\textit{oblivious value function}~$\vTilde\big(x\ |\ \mu, f\big)$ to
be the expected net present value for any oblivious player with
initial state~$x$, when the long run average population state
is~$f$, and the player uses an oblivious strategy $\mu$. We have
\begin{align}
\label{eqn:oe-value-func}
\vTilde\big(x\ |\ \mu, f\big) \triangleq
\E\Big[\sum_{t=0}^{\infty}\beta^{t}\pi\big(x_{i,t}, a_{i, t},f \big)\ \Big| \ x_{i,0} = x;\ \mu \Big].
\end{align}
Note that the state sequence~$x_{i,t}$ is determined by the strategy~$\mu$ according to the dynamics~\eqref{eq:oe_dynamics}, where the population state is fixed at $f$. We define the {\em optimal oblivious value function} $\vTilde^*(x \ | \ f)$ as $\vTilde^*(x \ | \ f) = \sup_{\mu \in \mfr{M}_{O}}\vTilde(x \ | \ \mu, f)$. Given a population state $f$, an oblivious player computes an
optimal strategy by maximizing its oblivious value function. Note
that because an oblivious player does not track the evolution of the
population state and its state evolution depends only on the population state~$f$, if an optimal stationary nonrandomized strategy exists, it will only be a function of the player's current
state---i.e., it must be oblivious even if optimizing over cognizant
strategies. We capture this optimization step via
the correspondence $\mc{P}$ defined next. 

\begin{definition}
\label{def:P}
The correspondence $\mc{P}: \mfr{F}\rightarrow \mfr{M}_O$ maps a distribution
$f \in \mfr{F}$ to the set of optimal oblivious strategies for a
player. That is, $\mu \in \mc{P}(f)$ if and only if $\vTilde\big(x\ |\
\mu, f\big) = \vTilde^*(x \ | \ f) $ for all $x$.
\end{definition}
Note that $\mc{P}$ maps a distribution to a
  {\em stationary, nonrandomized} oblivious strategy. This is typically without loss of
  generality, since in most models of interest there always exists
  such an optimal strategy.  We later establish that under our
  assumptions $\mc{P}(f)$ is nonempty.

Now suppose that the population state is $f$, and {\em all} players
are oblivious and play using a stationary strategy $\mu$. Because of
averaging effects, we expect that if the number of agents is large,
then 
the long run population state should
in fact be an invariant 
distribution of the Markov process on $\mc{X}$ that describes the evolution of an individual agent, with transition kernel
\eqref{eq:oe_dynamics}.  We capture this relationship via the
correspondence $\mc{D}$, defined next. 

\begin{definition}
The correspondence $\mc{D} : \mfr{M}_O \times \mfr{F} \rightarrow \mfr{F}$ maps the
oblivious strategy $\mu$ and population state $f$ to the set of invariant distributions
$\mc{D}(\mu, f)$ associated with the dynamics~\eqref{eq:oe_dynamics}.
\end{definition}
Note that the image of the correspondence~$\mc{D}$ is empty if the strategy
does not result in an invariant distribution. We later establish
conditions under which~$\mc{D}(\mu,f)$ is nonempty. In addition, while
we do not impose this restriction {\em a priori}, there are many models of
interest where $\mc{D}$ is actually a {\em function}; that is, for all
$\mu$ and $f$ the Markov process associated with the dynamics
\eqref{eq:oe_dynamics} will be ergodic and admit a unique invariant
distribution.

\ignore{Assume that every agent conjectures the long run population state to be~$f$ and plays
an optimal oblivious strategy~$\mu$. Stationary equilibrium requires that the equilibrium population state $f$ must in fact be an
invariant distribution of the dynamics \eqref{eq:oe_dynamics} under
the strategy $\mu$ and the initially conjectured population state $f$. The consistency of players' conjectures is captured in the following definition.}

We can now define stationary equilibrium.  If every agent
conjectures that $f$ is the long run population state, then
every agent would prefer to play an optimal oblivious strategy $\mu$.
On the other hand, if every agent plays $\mu$ and the population state
is in fact $f$, then we should expect
the long run population state of all players to be an invariant
distribution of \eqref{eq:oe_dynamics}.  Stationary equilibrium requires a consistency
condition: the equilibrium population state $f$ must in fact be an
invariant distribution of the dynamics \eqref{eq:oe_dynamics} under
the strategy $\mu$ and the same population state $f$.

\begin{definition}[Stationary Equilibrium]
\label{def:oe}
An oblivious strategy~$\mu \in \mfr{M}_O$ and a
distribution~$f \in \mfr{F}$ constitute a
stationary equilibrium (SE) if $\mu \in \mc{P}(f)$ and $f \in
\mc{D}(\mu, f)$.
\end{definition}

In the event that the Markov
chain induced by~$\mu$ and~$f$ has multiple invariant distributions,
the agents must all conjecture the population state in equilibrium to
be~$f$.  Further, in the event that there exist multiple optimal
strategies given $f$, the agents must all choose to play~$\mu$.
\ignore{In the event that there exist multiple optimal
strategies given $f$ or that the chain induced by~$\mu$ and~$f$ has multiple invariant distributions,
the agents must all choose to play the same optimal strategy and they must all have the same conjecture about the equilibrium population state.}
In many models of interest (such as the examples presented in Section
\ref{sec:examples}), both $\mc{P}$ and $\mc{D}$ are singletons, so
such problems do not arise. For later reference, we define the
correspondence $\Phi : \mfr{F} \to 
\mfr{F}$ as follows:
\begin{equation}
\label{eq:phi}
\Phi(f) = \mc{D}(\mc{P}(f), f).
\end{equation}
Observe that with this definition, {\em a pair $(\mu, f)$ is an SE
if and only if $f$ is a fixed point of $\Phi$, $f \in \Phi(f)$, such that
$\mu \in \mc{P}(f)$ and $f\in\mc{D}(\mu, f)$}

\subsection{Approximation}
\label{ssec:approx}

A central goal of this paper is to determine conditions under which SE
provides a good approximation to MPE as the number of players
grows large.  Here we formalize the
approximation property of interest, referred to as the asymptotic
Markov equilibrium (AME) property. Intuitively,
this property requires that a stationary equilibrium strategy is
approximately optimal 
even when compared against Markov
strategies, as the number of players grows large.

\begin{definition}[Asymptotic Markov Equilibrium]
A stationary equilibrium $(\mu, f)$
possesses the asymptotic Markov equilibrium (AME) property if for all
states~$x$ and sequences of cognizant strategies $\mu_m \in \mfr{M}$, we have:
\begin{equation}
\label{eqn:ame-def}
\limsup_{m \to \infty}~ V^{(m)}\big(x,f^{(m)}\ |\
\mu_m, \muVec^{(m-1)} \big) -  V^{(m)}\big(x, f^{(m)}\ |\
\muVec^{(m)} \big) \leq 0,
\end{equation}
almost surely, where the initial population state $f^{(m)}$ is derived by sampling
each other player's initial state independently from the probability mass function $f$.
\end{definition}

Note that $V^{(m)}\big(x, f^{(m)} \ | \ \mu', \muVec^{(m-1)}\big)$ is the
\textit{actual} value function of a player as defined in
equation~\eqref{eqn:mpe-value-func}, when the player uses a cognizant strategy
$\mu'$ and every other player plays an oblivious strategy~$\mu$.
Similarly, $ V^{(m)}\big(x, f^{(m)}\ |\ \muVec^{(m)} \big)$ is the actual value function of a player as defined in equation~\eqref{eqn:mpe-value-func} when every player is playing the oblivious strategy~$\mu$.
AME requires that the error when using the SE strategy
approaches zero almost surely with respect to the randomness in the
initial population state. Hence, AME requires that the SE strategy becomes approximately optimal as
the number of agents grows, with respect to population states that
have nonzero probability of occurrence when sampling individual states
according to the invariant distribution.\footnote{As noted earlier, under the assumptions we make an optimal cognizant strategy can be shown to exist, for any vector of cognizant strategies of the opponents.  Therefore the AME property can be equivalently stated as the requirement that for all $x$: $\lim_{m \to \infty} \left( \sup_{\mu_m \in \mfr{M}} V^{(m)}\big(x,f^{(m)}\ |\ \mu_m, \muVec^{(m-1)} \big) -  V^{(m)}\big(x, f^{(m)}\ |\ \muVec^{(m)} \big) \right) = 0,\ \mbox{almost surely.}$}  This definition can be shown to be stronger than the
definition considered by~\cite{weintraub_2008}, where AME is defined
only in expectation with respect to randomness in the initial
population state.

\subsection{Extensions to the Basic Model}
\label{sec:extensions}

We briefly mention two extensions for which all our results follow.
These extensions are often important in applications, but do not
require any significant technical arguments.  See
Appendix~\ref{appendix-extension} for further details.

First, note that players are ex-ante homogeneous in the model considered, in the sense that
they share the same model primitives. This is not a particularly consequential choice, and is made
primarily for notational convenience; indeed, by an appropriate redefinition of state we can model
agent heterogeneity via types.

Second, note that in the game defined here, players are coupled through their states: both the
transition kernel and the payoff depend on the current state of all
players. 
However, in many
models of interest the transition kernel and payoff of a player may depend on both the current state and {\em current actions} of other
players. In particular, the example in Section~\ref{subsec:learning} is a model where players are
coupled through their actions. All the results of this
paper naturally extend to a setting where players may also be coupled through their actions, i.e.,
where the transition kernel and payoff may depend on the current
actions of all players as well. 
In the context of this paper, when players are coupled through actions, for technical simplicity
  we focus on finite action spaces.  In this setting, to ensure
  existence of equilibrium, we assume that players maximize payoffs
  with respect to randomized strategies.  In addition, we briefly
discuss how our 
  results could be extended to include continuous action spaces as well (see Appendix ~\ref{appendix-extension}).

\section{Preview of Results and Motivating Examples}
\label{sec:examples}

As discussed in the Introduction, this paper makes two complementary
contributions.  On one hand, we establish sufficient conditions over
model primitives that
provide justification for use of SE (in particular, that guarantee
existence of SE and the AME property).  On the other hand, we
demonstrate that our conditions encode an economic dichotomy, broadly,
between ``increasing'' and ``decreasing'' returns to higher states;
the latter corresponds to those models where the industry becomes fragmented in the limit and SE is an appropriate
modeling tool.  In this way, our conditions directly provide insight
into market structure.

This section is devoted to introducing examples drawn from industrial
organization that motivate and illustrate our results.
Each example
presents the same basic difficulty: in terms of the parameters of the
model, where does the boundary lie between those markets where
fragmentation might arise, and those markets where concentration might
be expected?  As suggested by the preceding discussion, we use SE as a
tool to inform this market structure question.  In each example,
we discuss how our technical results yield sharp conditions under
which SE exist, the AME property holds, and all SE yield market fragmentation.  We also discuss how
failure of the conditions would suggest market concentration.

To set the stage, we first briefly preview the approach behind our
main technical results (see Section~\ref{sec:exist} and~\ref{sec:approx}).  The mathematical complexity in our analysis arises
due to unbounded state spaces; these are essential if we hope to
identify a boundary between fragmentation and concentration in the
limit of many firms.  Unfortunately, with unbounded state spaces, both
existence of SE and the AME property may become difficult to
establish.  Informally, this is because mass in the population state
may ``escape'' to larger states as the number of firms grows;
alternatively, firms may choose strategies that lead to unbounded
steady state distributions over the state space.

The key condition we require to overcome these hurdles is to ensure
that SE have {\em
  light tails}, i.e., limited mass at larger states (in a sense we
make precise later).  We develop exogenous conditions over model primitives that ensure all SE
population states have {\em light tails}, and we further show that all
light-tailed SE satisfy the AME property (extending a prior result of
\cite{weintraub_2010}).  Light tails ensure that no single dominant
agent emerges in the limit of many firms.  Note that {\em in market
  structure terms, this is exactly market fragmentation}.


Interpretation of our exogenous conditions reveals exactly the
dichotomy introduced above: the conditions enforce a form of
``decreasing returns to higher states'' in the optimization problem
faced by an individual agent, while their failure corresponds roughly
to ``increasing returns.''  Notably, all our results in the examples
are simply applications of the {\em same} theoretical architecture.
As we point out, when the examples below violate the assumptions we
require---in particular, in models that exhibit increasing returns to
higher states---we also expect that SE will {\em not} satisfy the AME
property, and indeed, may not exist.  Thus despite the fact that we
only discuss {\em sufficient} conditions for existence and
approximation in this paper, the examples suggest that perhaps these
sufficient conditions identify a reasonable boundary between those
models that admit analysis via SE, and those that do not.

For the rest of this section, we consider stochastic games
with~$m$ players in which the state of a player takes values on
$\Z_{+}$.

\subsection{Dynamic Oligopoly Models}
\label{subsec:RD}

Dynamic oligopoly models have received significant attention in the
recent industrial organization literature (see \citealp{pakes_2006}
for a survey).  In these models, firms' states correspond to some variable that affects profitability; for example, the state could represent the
firm's product quality, its current productivity level, or its
capacity.  Per period profits are based on a static competition
game, with heterogeneity among firms determined by their respective
quality levels.  Firms take actions to improve their quality; in the
absence of this investment quality degrades over time.  

Such models are extremely broad and capture a wide range of dynamic
phenomena in industrial organization.  In this context, we address the following important question:
under what conditions on the model
primitives do we obtain concentration of the market, and under what
conditions do we obtain fragmentation?  Intuitively, we might expect
that firms need to exhibit decreasing returns to their investments to
obtain fragmentation. 
Our technical results yield a simple condition on model primitives that
formalizes this intuition: we require that the single stage profit
function exhibits decreasing 
returns to firm quality.
In this case SE exist, the AME property
holds, and the market structure is fragmented in the limit.

We now describe our specific model and our result in more detail.

{\em States}.  
For
concreteness, here we consider the quality ladder model of
\cite{pakes_1994}, where the state $x_{i,t} \in \Z_+$ represents the quality of
the product produced by firm $i$ at time $t$.

{\em Actions}.  Investments improve the state variable over time. At each time~$t$,
firm~$i$ invests~$a_{i,t} \in [0, \ol{a}]$ to improve the quality of
its product. The action changes the state of the firm in a stochastic
fashion as described below.

{\em Payoffs}.  We consider a payoff function derived from price
competition under a classic logit demand system.  In such a model, there are
$n$ consumers in the market.  In period $t$,
consumer $j$ receives utility $u_{ijt}$ from consuming the good
produced by firm $i$ given by:
$ u_{ijt} = \theta_1\ln(x_{it}+1) + \theta_2 \ln(Y-p_{it}) +
\nu_{ijt},$ where $\theta_1,\theta_2>0$, $Y$ is the consumer's
income, and $p_{it}$ is the price of the good produced by firm
$i$. Here $\nu_{ijt}$ are i.i.d.~Gumbel random variables that
represent unobserved characteristics for each consumer-good
pair. 

We assume that there are~$m$ firms that set prices in the spot market.
For a constant marginal production cost $c$, there is a unique Nash equilibrium in pure
strategies of the pricing game, denoted $p^*_t$ \citep{CAPNAL91}. For
our limit profit function, we consider an asymptotic regime in which
the market size $n$ and the number of firms $m$ grow to infinity at
the same rate. The limiting profit function corresponds to a logit model
of monopolistic competition \citep{BESETAL90} and is given by {$\pi(x, a, f) =
\frac{\tilde{c}(x+1)^{\theta_1}}{\sum_{y}f(y)(y+1)^{\theta_1}} - da$},
where $\tilde{c}$ is a constant that depends on the limit equilibrium
price, $c$, $\theta_2$, and $Y$. Here the second term is the cost of
investment, where $d > 0$ is the marginal cost per unit investment.

\ignore{ There is also an outside good that provides consumers zero
utility. We assume consumers buy at most one product each period and
that they choose the product that maximizes utility. Under these
assumptions our demand system is a classical logit
model.

We assume that firms set prices in the spot market. If there is a
constant marginal production cost $c$, there is a unique Nash equilibrium in pure
strategies, denoted $p^*_t$ \citep{CAPNAL91}. Expected profits at time
$t$ in an
industry with $m$ firms and $n$ consumers are given by:
\[
\pi_{n}\big(x_{i,t}, \fmi,  m\big) = n \sigma\big(x_{i,t}, m \fmi\big)\big(p^{*}_{i,t} - c\big),
\]
where $\sigma$ represents the market share from the logit model. For
our limit profit function, we consider an asymptotic regime in which
the market size $n$ and the number of firms $m$ grow to infinity at
the same rate. The limit payoff function corresponds to a logit model
of monopolistic competition \citep{BESETAL90} and is given by:
\begin{align*}
\pi(x, a, f) =
\frac{\tilde{c}(x+1)^{\theta_1}}{\sum_{y}f(y)(y+1)^{\theta_1}} - da,
\end{align*}
where $\tilde{c}$ is a constant that depends on the limit equilibrium
price, $c$, $\theta_2$, and $Y$.  Here the first term is the limit of
$\pi_n$ as $m,n \to \infty$ at the same rate, and the second term is the cost of
investment, where $d > 0$ is the marginal cost per unit investment.
}

\textit{Transition dynamics.} We use
dynamics similar to those in~\cite{pakes_1994} that have
been widely used in dynamic oligopoly models. Compared to that paper,
we assume  random shocks are idiosyncratic. At each time period, a
firm's investment of~$a$ is successful with probability $\frac{\alpha
  a}{1 + \alpha a}$ for some $\alpha > 0$, in which case the quality
level of its product increases by one level. The parameter~$\alpha$
represents the effectiveness of the investment. The firm's product
depreciates one quality level with probability $\delta \in (0, 1)$
independently at each time period. Thus a firm's state decreases by one with
probability $\frac{\delta}{1 + \alpha a}$; it increases by one with probability 
$\frac{(1 - \delta)\alpha a}{1 + \alpha a}$ and stays at the same level with probability
$\frac{1 - \delta + \delta \alpha a}{1 + \alpha a}$.

\ignore{

Thus, the transition probabilities are given by:
\begin{align}
\label{eqn:TD-RD}
\mbf{P}(x'\ |\ x, a) = \left\{
\begin{array}{ll}
\frac{(1 - \delta)\alpha a}{1 + \alpha a},&\ \text{if}\ x' = x + 1; \\
&\\
\frac{1 - \delta + \delta \alpha a}{1 + \alpha a},&\ \text{if}\ x' =
x; \\
&\\
\frac{\delta}{1 + \alpha a},&\ \text{if}\ x' = x - 1.
\end{array}
\right.
\end{align}
\vspace{0.1in}

}

{\em Discussion.}  Our main result for this model is the
following proposition.  The proof can be found in Section \ref{sec:discussion_DO}.

\begin{proposition}
\label{prop:DO}
Suppose that $\theta_1 < 1$.  Then there exists an SE for the dynamic
oligopoly model, and all SE possess the AME property.
\end{proposition}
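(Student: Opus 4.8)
The plan is not to run a fresh fixed-point or ergodicity argument, but to verify that the quality-ladder model, under the hypothesis $\theta_1<1$, satisfies the exogenous conditions on model primitives developed in Section~\ref{sec:exist}, and then to quote the general theorems: the existence result of Section~\ref{sec:exist} yields an SE together with the conclusion that \emph{every} SE population state is light-tailed, and the approximation result of Section~\ref{sec:approx}, applied to each such light-tailed SE, yields the AME property. Thus Proposition~\ref{prop:DO} reduces to a checklist: $\theta_1<1$ should be exactly the instantiation, for this example, of the abstract ``decreasing returns to higher states'' hypothesis.

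First I would dispatch the routine structural conditions. The action set $\mc{A}=[0,\overline{a}]$ is compact; states lie in $\Z_+$; the transition kernel $\mbf{P}(\cdot\mid x,a)$ changes the state by at most one level per period, is affine (hence continuous) in $a$, and does not depend on $f$; and the limit profit $\pi(x,a,f)=\frac{\tilde{c}(x+1)^{\theta_1}}{\sum_y f(y)(y+1)^{\theta_1}}-da$ is continuous in $a$ and in $f$, affine---so concave---in $a$, and nondecreasing in $x$. Since $(y+1)^{\theta_1}\ge 1$ for every $y\in\Z_+$, the denominator is always at least $1$; and for any light-tailed $f$ it is also finite and bounded above, so $\pi$ is well defined and the ``bounded denominator'' requirement holds over the set on which the fixed point is sought. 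The individual-state chain stays put with positive probability and can move both up and down on its support, giving aperiodicity and irreducibility, so $\mc{D}$ is single-valued and $\mc{P}$ behaves as the abstract ergodicity hypothesis requires.

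The crux is the decreasing-returns condition, and this is exactly where $\theta_1<1$ enters. When $\theta_1<1$ the map $x\mapsto(x+1)^{\theta_1}$ is increasing and strictly concave with vanishing increments: $(x+2)^{\theta_1}-(x+1)^{\theta_1}=O(x^{\theta_1-1})\to 0$. Hence the marginal profit from advancing one quality level, $\pi(x+1,a,f)-\pi(x,a,f)$, is bounded and tends to $0$ as $x\to\infty$ uniformly over the feasible $(a,f)$, while the marginal investment cost is the fixed constant $d$. This is precisely the form needed to conclude that an optimal oblivious strategy invests only a bounded amount and, crucially, invests vanishingly little at large states; the improvement term $\frac{(1-\delta)\alpha a}{1+\alpha a}$ is then eventually dominated by the depreciation term $\frac{\delta}{1+\alpha a}$, so the induced chain has strictly negative drift far out and is positive recurrent with a \emph{uniformly} light-tailed invariant distribution. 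This delivers both the compactness of $\Phi$ used in the existence proof and the light-tail conclusion; if instead $\theta_1\ge 1$, profit growth is at least linear in $x$ and this drift argument breaks, which is why the hypothesis is tight.

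The remaining step is mechanical: apply the existence theorem of Section~\ref{sec:exist} to obtain an SE $(\mu,f)$ with $f$ light-tailed and, more strongly, every SE light-tailed; then apply the approximation theorem of Section~\ref{sec:approx} to each such SE to obtain AME. I expect the main obstacle to be the third paragraph: one must turn the qualitative statement ``$\theta_1<1$'' into the precise quantitative growth (Lyapunov / bounded value-function-difference) bound that the abstract conditions demand, holding uniformly over all candidate population states $f$ in the relevant compact set, and doing so while keeping track simultaneously of the facts that $\pi$ is unbounded in $x$---but only sublinearly---and that the self-consistent denominator $\sum_y f(y)(y+1)^{\theta_1}$ must stay bounded along the fixed-point iteration.
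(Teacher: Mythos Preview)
Your approach is essentially the paper's: verify the abstract conditions of Section~\ref{sec:exist} (continuity, convexity of $\Phi(f)$, compactness via decreasing returns) and then invoke Corollary~\ref{co:existence} and Theorem~\ref{th:AME}, with $\theta_1<1$ identified as precisely the instantiation of Assumption~\ref{as:compactness}.1. Two small corrections worth making when you write it up: first, the transition kernel is \emph{not} affine in $a$---the success probability $\frac{\alpha a}{1+\alpha a}$ is strictly concave in $a$, and this strict concavity (not the affine payoff) is what secures uniqueness of the optimal oblivious strategy via Assumption~\ref{as:UIC} or~\ref{as:convexity}, a point your proposal glosses over; second, you should make explicit the choice $p=\theta_1$ for the $\onep$ norm, since the payoff depends on $f$ exactly through $\sum_y f(y)(y+1)^{\theta_1}$, and this choice is what aligns continuity of $\pi$ in $f$, the growth-rate bound, and the light-tail conclusion.
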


The preceding result has a natural interpretation in terms of
increasing and decreasing returns to higher states. Recall that
$\theta_1$ represents how much consumers value the quality of the
products, and hence if $\theta_1 < 1$, firms have strictly {\em decreasing}
marginal returns in their payoff from increasing their own state. This
implies that as their state grows, firms have less incentives to
invest in improving their own
state and ensures that, in equilibrium, the distribution of firms over the
state space has a light tail and, therefore, the market structure becomes fragmented in the limit. On the
other hand, if
$\theta_1 \geq 1$, then firms have an {\em increasing} marginal gain in
their payoff from increasing their own state. Because the marginal
cost of investment is constant, firms may continue to invest large
amounts to improve their state even at very large states. Thus, a
single firm optimization problem may not even induce a stable Markov
process, 
and
hence an SE may not exist (and the AME
property may fail).

This result matches our intuition for exactly those regimes where SE
work well as approximations to equilibria in finite 
models. In industries with decreasing returns, we expect to see
a fragmented structure in the limit. By contrast, in industries
with increasing returns, market concentration would likely result in
the limit, {i.e., a few firms capture most of the demand in the
market.}  This is precisely where the AME property ceases to hold.

\ignore{
 Recall that
$\theta_1$ represents how much consumers value the quality of the
products, and hence this parameter effectively controls for the
magnitude of the firms' returns to increasing their quality levels. In
particular, if $\theta_1 < 1$, firms have strictly {\em decreasing}
marginal returns in their payoff from increasing their own state. This
implies that as their state grows, firms have less incentives to
invest in improving their own
state
and ensures that, in equilibrium, the distribution of firms over the
state space has a light tail.

On the
other hand, if
$\theta_1 \geq 1$, then firms have an {\em increasing} marginal gain in
their payoff from increasing their own state. Because the marginal
cost of investment is constant, firms may continue to invest large
amounts to improve their state even at very large states. Thus, a
single firm optimization problem may not even induce a stable Markov
process---in other words, $\Phi(f)$ may be empty for some $f$, and
hence an SE may not exist (and the AME
property may fail).}



\subsection{Dynamic Oligopoly Models with Positive Spillovers}
\label{subsec:spillovers}

In this section, we extend the previous model to account for
positive {\em spillovers}, or externalities, across firms.  Spillovers are
commonly observed in industry
data and could arise, for example, due to laggard firms imitating
leaders' R\&D activities \citep{GIR98}.  The main difference from the
preceding model is that now transition dynamics are coupled among the
firms: one firm's state is more likely to increase if other firms are
at higher quality levels.

Again, we are led to consider the effect of spillovers on market
structure.  From a technical standpoint, the main complexity is that
firms' best responses may lead to unbounded distributions over the
state space, due to the spillover effect.  Thus, in order to ensure
existence of SE and the AME property, we need a condition that
controls the spillover effect: intuitively, if the spillover effect is
not ``too strong'', then the dynamics will effectively exhibit decreasing
returns.  Our results quantify this sufficient condition.  As before,
in this case, market fragmentation is obtained in the limit of many
firms.
%

To introduce spillovers, we consider a formal model in which the state
space, action space, and payoff are identical to the previous section,
and we continue to use the same notation.  However, we modify the
transition kernel to include spillovers, as described below.


{\em Transition dynamics}. We follow the model of ~\cite{XU06}, in which transition dynamics
depend not only on the action of the firm, but also on the state of
its competitors.
Formally, let~$s_{-i,t}^{(m)}$ be the {\em spillover effect} of
the population state on player $i$ at time $t$, where: $s_{-i,t}^{(m)} = \sum_{y \in \mc{X}}\fmi(y) h_{i,t}(y)$. 
Here $h_{i,t}(y)$ is a weight function that distinguishes the effect
of different states.  For this example, we use $h_{i,t}(y) = \zeta(y){\bf
  1}_{\{y>x_{i,t}\}}$ for some uniformly bounded function $\zeta(y)$. In this
case, a firm is affected with spillovers only from firms that have a
better state than its own, which seems natural. We define the {\em
  effective investment} of 
player~$i$ at time $t$ by: $a_{i,t} + \gamma s_{-i,t}^{(m)} \triangleq e_{i,t}$.
The constant $\gamma$ is a spillover coefficient and it captures the
effect of industry state on the state transition. A higher value of
$\gamma$ means a higher spillover effect. 
With an effective investment of~$e$,
similar to Section \ref{subsec:RD}, a firm's 
state increases by one level with probability $\frac{\alpha
  e}{1+\alpha e}$. Finally, as before, the firm's product depreciates in quality by one
level with probability $\delta \in (0, 1)$ independently at each time
period.\vspace{0.1in}

\ignore{In the last section we argued that in a model without spillovers, our
results apply if and only if $\theta_1 < 1$; with positive spillovers,
the same requirement remains necessary.  However, now the kernel also
depends on the population state $f$, and this introduces an additional
potential complexity: even if $\theta_1 < 1$, the population state of
an agent may grow due to large competitor states.  This may lead to a
scenario where the image of $\Phi$ is unbounded, because firms may
exhibit unbounded growth.  The following proposition provides a simple
condition for existence of SE.}

{\em Discussion.} Since the kernel now depends on the population state $f$ through the spillover effect, even if $\theta_1 < 1$, the population state of
an agent may grow due to large competitor states.  This may lead to a
scenario where the image of $\Phi$ is unbounded, because firms may
exhibit unbounded growth.  The following proposition provides a simple
condition for existence of SE.  The proof can be found in Section
\ref{sec:discussion_DOspill}.  

\begin{proposition}
\label{prop:DOspill}
Suppose that $\theta_1 < 1$, and:
\begin{equation}
\label{eq:spillovers}
\gamma < \frac{\delta}{(1-\delta) \alpha \sup_y \zeta(y)}
\end{equation}
Then there exists an SE for the dynamic
oligopoly model with spillovers, and all SE possess the AME property.
\end{proposition}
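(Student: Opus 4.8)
The plan is to verify that the spillovers model satisfies the exogenous conditions over model primitives developed in Section~\ref{sec:exist}; the existence theorem there then delivers an SE and, more importantly, certifies that \emph{every} SE population state is light-tailed, after which the approximation theorem of Section~\ref{sec:approx} upgrades ``light-tailed'' to the AME property. So the proposition reduces to a checklist. The routine half is regularity of the primitives: the action set $[0,\ol{a}]$ is compact and convex; $\pi(x,a,f) = \tilde{c}(x+1)^{\theta_1}/\sum_y f(y)(y+1)^{\theta_1} - da$ is continuous in $(a,f)$, affine (hence concave) in $a$, and bounded on bounded sets of states; the transition kernel is nearest-neighbor with probabilities continuous in $(a,f)$ through the effective investment $e = a + \gamma s_{-i}$; and the hypothesis $\theta_1 < 1$ supplies the ``decreasing returns to higher states'' requirement --- the marginal payoff gain from a unit increase of one's own state is of order $(x+1)^{\theta_1-1}$, hence bounded and vanishing as $x\to\infty$, while the marginal cost of investment $d$ is constant. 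This is exactly what forces an optimal oblivious policy to place vanishing investment at large states, i.e. $\mu(x)\to 0$ as $x\to\infty$, uniformly over the relevant population states, which is the policy-side hypothesis of Section~\ref{sec:exist}; together with the continuity and convexity built into $\pi$ this also gives the continuity/convexity of $\mc{P}$ and $\mc{D}$ needed for the fixed-point argument behind the existence theorem.

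The crux is the stability and light-tail condition for the single-firm Markov chain \eqref{eq:oe_dynamics}, and this is where \eqref{eq:spillovers} enters. Conditioned on own state $x$ and population state $f$, the chain moves up by one with probability $(1-\delta)\alpha e/(1+\alpha e)$ and down by one with probability $\delta/(1+\alpha e)$, where $e = \mu(x) + \gamma s_{-i}$ and, crucially, the spillover term obeys $s_{-i} = \sum_y f(y)\,\zeta(y)\,{\bf 1}_{\{y>x\}} \le \sup_y \zeta(y)$ \emph{for every} $f\in\mfr{F}$. Hence at large $x$, where $\mu(x)$ is arbitrarily small, $e \le \mu(x) + \gamma\sup_y\zeta(y)$, and \eqref{eq:spillovers} gives $(1-\delta)\alpha\,\gamma\sup_y\zeta(y) < \delta$, so the one-step mean displacement $\big((1-\delta)\alpha e - \delta\big)/(1+\alpha e)$ is negative and bounded away from $0$, uniformly in $f$, outside a finite set of states. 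Since jumps have size at most one, a geometric Lyapunov function $W(x) = \rho^{x}$ with $\rho>1$ close enough to $1$ then satisfies a Foster--Lyapunov inequality $\E[W(x_{t+1}) \mid x_t = x] \le \kappa\,W(x)$ with $\kappa<1$ off a finite set, uniformly in $f$; this simultaneously yields positive recurrence (so $\mc{D}(\mu,f)$ is nonempty and the image of $\Phi$ is precompact) and the geometric tail bound on the invariant distribution that is the light-tail condition of Section~\ref{sec:exist}.

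With both halves of the checklist in hand, the existence theorem of Section~\ref{sec:exist} produces an SE and shows all SE are light-tailed, and the approximation theorem of Section~\ref{sec:approx} then gives that all SE possess the AME property. The main obstacle I anticipate is the interplay in the drift estimate between the unknown optimal policy $\mu$, which varies with $f$, and the bound needed at large states: one must argue that $\theta_1<1$ forces $\mu(x)$ to be \emph{uniformly} small at large $x$ across all candidate population states (not merely small for each fixed $f$), so that the threshold $e < \delta/((1-\delta)\alpha)$ holds eventually and uniformly. Once that uniformity is secured, the population-free bound $s_{-i}\le\sup_y\zeta(y)$ makes the remainder of the drift computation and the choice of $\rho$ essentially mechanical.
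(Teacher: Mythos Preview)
Your overall architecture is right and matches the paper's: verify the exogenous conditions of Section~\ref{sec:exist} (continuity, convexity via UIC, compactness), then invoke Corollary~\ref{co:existence} and Theorem~\ref{th:AME}. The continuity and convexity checks you sketch are exactly what the paper does (it cites Assumption~\ref{as:UIC} as in the no-spillovers case).

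Where you diverge is in how you approach the compactness/stability step, and the ``obstacle'' you anticipate is in fact avoidable. You plan to argue directly that $\theta_1<1$ forces $\mu(x)\to 0$ uniformly in $f$, then plug this into a drift estimate with a geometric Lyapunov function. The paper instead verifies Assumption~\ref{as:compactness}.5 \emph{purely in terms of primitives}, with $\mc{A}'=\{0\}$: part~(a) holds because $\pi$ is strictly decreasing in $a$ with slope $-d$ (so $\kappa(r)=dr$ works), and part~(b) is the single computation
\[
\sup_f \sum_z z\,\mbf{Q}(z\mid x,0,f) \;=\; \frac{\alpha\,\gamma\sup_y\zeta(y)}{1+\alpha\,\gamma\sup_y\zeta(y)} - \delta \;<\; 0,
\]
which is equivalent to \eqref{eq:spillovers}. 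No knowledge of the optimal policy is needed; the uniformity of $\mu(x)\to 0$ that worries you is then a \emph{consequence} of the general machinery (it is exactly Lemma~\ref{lem:mulimit}), not something you must establish first. So your route works, but it re-derives by hand a special case of what the paper has already packaged into Assumption~\ref{as:compactness}.5 plus Lemmas~\ref{lem:mulimit}--\ref{lem:momentbound}. The paper's modular check is shorter and sidesteps the policy-uniformity issue entirely; your direct Foster--Lyapunov argument with $W(x)=\rho^x$ would give a sharper (geometric) tail, but that extra strength is not needed here since the framework only requires polynomial moments.
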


Condition \eqref{eq:spillovers} admits a simple interpretation.  This condition
 enforces a form of decreasing returns in the spillover
  effect. If the spillover effect is too large relative to
  depreciation---i.e., if \eqref{eq:spillovers} fails---then the
state of a given firm has positive drift whenever other firms have
large states; and in this case we
expect that, for some $f$, the spillover effect can lead to optimal oblivious
strategies that yield unbounded growth. 
On the other hand, when \eqref{eq:spillovers} holds, then this effect
is controlled, and despite the presence of positive spillovers the
state distribution has a light tail in equilibrium and the industry
becomes fragmented in the limit. 

What happens when the sufficient condition fails?  We present one
informal scenario that suggests market concentration may result.
Observe that it is plausible  that if the condition fails, few firms will have enough incentives to grow large to obtain a competitive advantage. 
Moreover, it is also plausible that a  significant fraction of ``fringe'' firms will remain small to free-ride on the ``dominant'' firms.
In this sense, when our condition is
violated, a dramatically different market structure might be expected. 

\subsection{Learning-By-Doing}
\label{subsec:learning}
Another example that commonly arises in oligopoly settings is {\em
  learning-by-doing}, where firms become more efficient by producing
goods.  In a learning-by-doing model, the state of the firm represents
its experience level; this grows in response to production, and
otherwise depreciates over time.  

In this type of model, it is clear that we require a dichotomy between
``increasing'' and ``decreasing'' returns to experience.  Firms have
to produce even in the absence of learning, simply to earn profits in
each period.  Note that if experience levels continue to grow without
bound, then it will be impossible to ensure SE are light tailed.  We
show this is in fact sufficient: as long as experience begins to
depreciate at sufficiently large states (in a sense we make precise),
then SE exist, the market becomes fragmented in the limit, and the AME property holds.  

We now describe our model; the variant we study is inspired by \cite{FUD83}.

{\em States}.  We let the state~$x_{i,t}$ represent the
cumulative experience level of a firm at time~$t$; this represents the
knowledge accumulated through past production.

{\em Actions}.  The action~$a_{i,t}$ represents the firm's output
(i.e., goods produced) at time~$t$. We consider a model in which firms 
compete on quantity; thus firms are coupled to each other through their {\em actions}.
As discussed in Section \ref{sec:extensions}, such an extension can be accommodated within
our framework by restricting pure actions to lie on a finite subset $S =
\{0,1,\ldots,s_{\max}\}$ of the integers.\footnote{This amounts to
  discretizing the action space of 
  production quantities. In this case, we allow for mixed strategies to ensure existence of SE (see Proposition \ref{prop:finiteconvex}). However, note that in many models of interest, under
  the appropriate concavity assumptions, this is not very restrictive
  as firms will mix between two adjacent pure actions in equilibrium.}

{\em Payoffs}.  At each time period, firms produce goods and compete
in a market with~$n$ consumers. Let $P_{n}(\cdot) \geq 0$ be the
inverse demand function for a market size of~$n$. For state $x$, pure
action $s$, and population state-action profile $f$, we can
write the payoff function as $\pi_n\big(x,s,f,m\big) = s P_{n}\Big(s + (m-1)\sum_{x',s'}s' f(x',s')\Big) - C(x,s),
$ where the argument of $P_{n}$ is the aggregate output (from $m$ firms)
in the market. Note that $f$ is a distribution over state-action
pairs.  Here, $C(x, s)$ denotes the cost of producing quantity~$s$
when the firm's experience level is~$x$. We assume that $C$ is nonnegative,
decreasing, and convex in $x$; is increasing and convex in~$s$; and has decreasing 
differences between~$x$ and~$s$. Consider a limiting case
where both the number of firms $m$ and the market size $n$ become large at the
same rate. 
We assume that there exists a limiting decreasing continuous demand function $P$
such that the limit profit function is given by $\pi\big(x, s, f\big) = s P\left(\sum_{x',s'}s' f(x',s')\right) - C(x, s)$.
Note that the limiting case represents perfect competition as firms
become price takers. 


{\em Transition dynamics}.  A firm's cumulative experience is improved
as it produces more goods since it learns from the production process. On the other
hand, experience capital depreciates over time due to ``organizational
forgetting.'' We assume that a firm's experience evolves independent
of the experience level or the output of other firms in the
market. For concreteness, we assume the
transition dynamics are the same as those described in Section~\ref{subsec:RD}.

{\em Discussion}.  Let $\lim_{x\rightarrow \infty}C(x, s) =
\ul{C}(s)$, that is, $\ul{C}(s)$ is the cost of producing quantity~$s$
for a firm with infinite experience.  Our main result for this
model is the following proposition.  The proof can be found in Section
\ref{sec:discussion_LbD}.  

\begin{proposition}
\label{prop:LbD}
Let $s^*$ be the production level that maximizes $sP(0) -
\ul{C}(s)$.  Suppose that for all sufficiently large $x$ and all
actions $s \in  [0,s^*]$, we have $\sum_{x'} x' \mbf{P}(x' | x, s) <
x$; i.e., the state has negative drift at all such pairs $(x,s)$.
Then there exists an SE for the learning-by-doing model, and all SE
possess the AME property.
\end{proposition}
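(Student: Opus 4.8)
The plan is to derive Proposition~\ref{prop:LbD} as an application of the two main theorems of the paper: the existence theorem of Section~\ref{sec:exist}, whose hypotheses on model primitives guarantee existence of an SE and imply that \emph{every} SE has a light tail, and the approximation theorem of Section~\ref{sec:approx}, which delivers the AME property for every light-tailed SE. Because firms in this model are coupled through their actions and the action set $S=\{0,1,\dots,s_{\max}\}$ is finite, I would first recast the model in the finite-action, randomized-strategy framework of Section~\ref{sec:extensions}: the population object is a distribution $f$ over $\mc{X}\times S$, the per-period payoff $\pi(x,s,f)=sP(\bar{s}(f))-C(x,s)$ depends on $f$ only through the aggregate output $\bar{s}(f)=\sum_{x',s'}s'f(x',s')\in[0,s_{\max}]$, and we invoke the finite-action existence result (Proposition~\ref{prop:finiteconvex}). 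The ``soft'' hypotheses are then routine: $\pi$ is bounded ($0\le sP(\bar s)\le s_{\max}P(0)$ minus the nonnegative cost) and jointly continuous in $(s,f)$, concave in the action (linear in $s$ minus the convex $C(x,\cdot)$), and the transition kernel is the explicit birth--death kernel of Section~\ref{subsec:RD}, which here does not depend on $f$ and is trivially continuous in it. All of this follows from the maintained assumptions on $C$ and $P$ with $S$ finite.

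The substantive step is to verify the stability hypothesis of the existence theorem, and this is where the negative-drift assumption enters, through the intermediate claim: there is a threshold $x_0$ so that for every $f$ and every optimal oblivious best response $\mu\in\mc{P}(f)$, the support of the (possibly randomized) action $\mu(x)$ lies in $\{0,\dots,s^*\}$ for all $x\ge x_0$. Granting the claim, for $x\ge x_0$ the one-step drift of the experience process under $\mu$ is a convex combination of the quantities $\sum_{x'}x'\mbf{P}(x'\mid x,s)$ over $s\le s^*$, each of which is $<x$ by hypothesis; hence the experience process has uniformly negative drift outside a finite set under every $\mu\in\mc{P}(f)$. This is precisely the Foster--Lyapunov-type condition (with $w(x)=x$) needed for $\mc{D}(\mu,f)$ to be a nonempty set of well-defined invariant distributions, for $\Phi$ to have compact image, and for the existence theorem's conclusion that every SE is light-tailed.

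To prove the intermediate claim I would combine the Bellman equation for $\vTilde^*(\cdot\mid f)$ with two observations. First, $\vTilde^*(\cdot\mid f)$ is nondecreasing in $x$ (a value-iteration/coupling argument using that $C$ is decreasing in $x$ and the kernel is stochastically increasing in $x$) and uniformly bounded ($\vTilde^*(x\mid f)\le s_{\max}P(0)/(1-\beta)$); being monotone and bounded, its consecutive differences $\epsilon_x:=\vTilde^*(x+1\mid f)-\vTilde^*(x-1\mid f)\to0$ as $x\to\infty$, and this controls the continuation-value gain from producing more than $s^*$. Second, since $C(x,s)\downarrow\ul{C}(s)$ as $x\to\infty$ and $s^*$ maximizes $g(s):=sP(0)-\ul{C}(s)$ (a concave function, $\ul{C}$ being convex; take $s^*$ the largest maximizer if it is not unique), for any $s>s^*$ the one-period gain from deviating down to $s^*$ is $[C(x,s)-C(x,s^*)]-(s-s^*)P(\bar{s}(f))$, which tends to $[\ul{C}(s)-\ul{C}(s^*)]-(s-s^*)P(\bar{s}(f))\ge(s-s^*)\big(P(0)-P(\bar{s}(f))\big)\ge0$ and is strictly positive once $g$ is strictly maximized at $s^*$ or $\bar{s}(f)$ is bounded away from $0$. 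Comparing this with the $O(\epsilon_x)$ bound on the continuation term shows that choosing $s>s^*$ is strictly suboptimal for all large $x$, giving the claim. Having verified all hypotheses, the existence theorem yields an SE with every SE light-tailed, and the approximation theorem then yields the AME property for every (hence every) SE.

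I expect the real difficulty to lie in the intermediate claim — concretely, in making the comparison of the one-period deviation gain against the $O(\epsilon_x)$ continuation term hold for all $x$ beyond a threshold that is uniform (or harmlessly nonuniform) in $f$, and in disposing of the degenerate cases: ties in the maximizer of $g$ (handled by defining $s^*$ as the largest maximizer, which also makes the hypothesis on $\mbf{P}$ the weakest form) and $\bar{s}(f)$ near $0$ (such $f$ cannot support an equilibrium, since a single firm would strictly prefer to produce $s^*>0$, so the drift condition is needed only on the relevant range of population states; alternatively, $x_0$ may be allowed to depend on $f$ provided the drift bound itself is uniform, which the existence theorem accommodates).
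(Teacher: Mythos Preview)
Your plan is correct in outline and would succeed, but it takes a longer route than the paper. The paper does not prove your intermediate claim directly; it simply verifies Assumption~\ref{as:compactness} with $\mc{A}'=S\cap[0,s^*]$ and then invokes Corollary~\ref{co:existence} and Theorem~\ref{th:AME}. For Assumption~\ref{as:compactness}.5(a) the paper cites a ``standard supermodularity argument,'' which amounts to the one-line static estimate (for $s>s^*$, any $x$, any $f$, using decreasing differences of $C$ in $(x,s)$ and $P(\bar s(f))\le P(0)$):
\[
\pi(x,s^*,f)-\pi(x,s,f)\;\ge\;(s^*-s)P(0)+\ul C(s)-\ul C(s^*)\;=\;g(s^*)-g(s)\;>\;0,
\]
with $g(s)=sP(0)-\ul C(s)$; Assumption~\ref{as:compactness}.5(b) is exactly the drift hypothesis. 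Assumption~\ref{as:compactness}.1 holds because $\pi(x+\Delta,s,f)-\pi(x,s,f)=C(x,s)-C(x+\Delta,s)\to 0$, which is independent of $f$. Your intermediate claim---that every optimal oblivious strategy eventually plays in $[0,s^*]$, uniformly in $f$---is then the \emph{conclusion} of the general Lemma~\ref{lem:mulimit} (specialized to finite actions in Appendix~\ref{subsec:finite-actions}), not something re-proved for this model.

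Your direct value-function route is viable, but the uniformity-in-$f$ issue you flag is real there: monotonicity plus a uniform upper bound on $\vTilde^*(\cdot\mid f)$ does \emph{not} by itself force $\epsilon_x\to 0$ uniformly in $f$. What actually delivers uniformity is the structure the paper isolates as Assumption~\ref{as:compactness}.1---here the payoff increment $C(x,s)-C(x+\Delta,s)$ does not depend on $f$---and the coupling proof of Lemma~\ref{lem:mulimit} converts this into the uniform statement you need. Your worry about degenerate cases also dissolves via the displayed inequality (with $s^*$ the largest maximizer of $g$, as you suggest): the one-period gain is bounded below by $g(s^*)-g(s)>0$ for \emph{every} $x$ and $f$, so no restriction on $\bar s(f)$ and no appeal to equilibrium is needed. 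In short, the paper's modularity---checking the purely static Assumption~\ref{as:compactness}.5 and letting the general machinery carry the dynamic step---buys exactly the uniformity and case-freeness that your hands-on argument would have to reconstruct.
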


Observe that $s p- C(x,s)$ is the single period profit to a firm when the market price is $p$, the firm produces quantity $s$, and its experience level is $x$.  Generally speaking, because of learning, firms at low experience levels face strong
incentives to increase their experience, leading them to produce
beyond the single period optimal quantity.  However, for firms at high
experience levels, the choice of optimal quantity is driven primarily
by maximization of single period profit (because $C(x,s)$ is
  decreasing and convex in $x$).  The quantity $s^*$ is an upper bound on
the maximizer of single period profits, so the drift condition in the 
proposition ensures that at high experience levels, firms'
maximization of single period profit does not continue to yield
unbounded growth in the experience level.\footnote{For example, consider
$C(x,s)=s/x$. Then $s^*$ is the largest allowable pure action, hence,
the condition requires that all actions have negative drift for
sufficiently large experience levels. For a
less restrictive case, consider $C(x,s)=s^2/x+s^2/c$. Then,
$s^*=cP(0)/2$, so the condition requires that all actions less than or
equal to $cP(0)/2$ eventually have negative drift.}

The condition requires that the
transition kernel must exhibit sufficiently strong decreasing returns
to scale; 
as long as the possible
productivity gains induced by learning-by-doing are reduced at larger
states, light-tailed SE will exist and the market becomes fragmented in the limit.  However, if
there are not diminishing returns to learning-by-doing, then a firm's experience
level will grow without bound and hence a light-tailed SE may not exist.   This is consistent with prior
observations: an industry for which learning-by-doing is prevalent may
naturally become concentrated over time \citep{DASSTI88}.

\section{Theory: Existence}
\label{sec:exist}

In this section, we study the existence
of {\em light-tailed stationary equilibria}.  We recall that $(\mu, f)$ is a stationary
equilibrium if and only if $f$ is a fixed point of $\Phi(f) =
\mc{D}(\mc{P}(f), f)$, such that $\mu \in \mc{P}(f)$ and
$f\in\mc{D}(\mu, f)$.  Thus our approach is to
find conditions under which the correspondence $\Phi$ has a fixed point; in
particular, we aim to apply Kakutani's fixed point theorem to $\Phi$ to find an
SE.

Kakutani's fixed point theorem requires three essential pieces:
(1) {\em compactness} of the range of~$\Phi$; (2) {\em convexity} of
both the domain of $\Phi$, as well as $\Phi(f)$ for each $f$; and
(3) appropriate {\em continuity} properties of the operator $\Phi$.
It is clear, therefore, that our analysis requires topologies on both
the set of possible strategies and the set of population states.  For
the set of oblivious strategies $\mfr{M}_O$, we use the topology of
pointwise convergence.

For the set of population states, we recall that a key concept in our analysis is that
of ``light-tailed'' population states. To formalize this notion, for the
set of population states we consider a topology 
induced by the {\em $\onep$ norm}.  Given $p > 0$, the~$\onep$-norm of
a function $f : \mc{X} \to \R$ is given by $\norm{f}_{\onep} = \sum_{x \in \mc{X}}\norm{x}_{p}^{p}|f(x)|,$
where $\norm{x}_p$ is the usual~$p$-norm of a vector. Let $\mfr{F}_p$ be the
set of all possible population states on~$\mc{X}$ with finite $\onep$ norm, i.e.,
$\mfr{F}_p = \big\{f \in \mfr{F} : \|f\|_{\onep}<\infty \big\}.$
The requirement $f \in \mfr{F}_p$ imposes a {\em light-tail
condition} over the population state $f$.  The exponent~$p$ controls the weight in the tail of the population
state: distributions with finite $\onep$-norms for larger $p$ have
lighter tails. The condition essentially requires that larger states
must have a small probability of occurrence under $f$. As we discussed
in the context of our examples, light-tailed SE imply that the market
structure becomes fragmented in the limit of a large number of firms.
 
We start with the following restatement of Kakutani's theorem.
\begin{theorem}[Kakutani-Fan-Glicksberg]
\label{th:existence}
Suppose there exists a set $\mfr{C} \subseteq
  \mfr{F}_p$ such that \text{(1)} $\mfr{C}$ is convex and compact (in the $1\mhyphen
  p$ norm), with $\Phi(\mfr{C}) \subset \mfr{C}$; \text{(2)} $\Phi(f)$ is convex and nonempty for every $f \in \mfr{C}$; and
\text{(3)} $\Phi$ has a closed graph  on $\mfr{C}$.\footnote{$\Phi$ has a closed graph if the set $\{ (f,g) : g \in \Phi(f) \}
\subset \mfr{F}_p \times \mfr{F}_p$ is closed (where $\mfr{F}_p$ is endowed
with the $\onep$ norm).} Then there exists a stationary equilibrium $(\mu, f)$ with $f \in \mfr{C}$.
\end{theorem}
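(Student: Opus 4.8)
The plan is to obtain Theorem~\ref{th:existence} as a direct corollary of the classical Kakutani--Fan--Glicksberg fixed point theorem for correspondences on a locally convex Hausdorff topological vector space; the work lies entirely in checking that the hypotheses translate. The first step is to realize $\mfr{F}_p$, with the $\onep$ topology, as a subset of such a space. Let $V$ be the weighted $\ell^1$ space $\{g : \mc{X} \to \R : \sum_{x \in \mc{X}}(1+\norm{x}_p^p)|g(x)| < \infty\}$, normed by $\norm{g}_V = \sum_{x \in \mc{X}}(1+\norm{x}_p^p)|g(x)|$; this is a Banach space, hence in particular a locally convex Hausdorff topological vector space, and $\mfr{F}_p \subseteq V$. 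I would then check that the subspace topology on $\mfr{F}_p$ induced by $\norm{\cdot}_V$ coincides with the $\onep$ topology: clearly $\norm{\cdot}_{\onep} \le \norm{\cdot}_V$, and for any two population states $f, g \in \mfr{F}_p$ one has $\sum_x |f(x) - g(x)| \le 2\norm{f-g}_{\onep}$ — because $\norm{x}_p \ge 1$ for $x \ne 0$ and $f - g$ sums to zero, so the mass at $0$ is controlled by the rest — whence $\norm{f-g}_V \le 3\norm{f-g}_{\onep}$. Thus $(\mfr{F}_p, \norm{\cdot}_{\onep})$ sits inside a locally convex Hausdorff topological vector space with matching topology, so the infinite-dimensional version of Kakutani's theorem is applicable.

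The second step is to read hypotheses (1)--(3) as the hypotheses of Kakutani--Fan--Glicksberg for the restriction $\Phi : \mfr{C} \to 2^{\mfr{C}}$ (this restriction makes sense because $\Phi(\mfr{C}) \subset \mfr{C}$). By (1), $\mfr{C}$ is a nonempty, convex, compact subset of $V$ (nonempty since otherwise the conclusion is vacuous; in all our applications it genuinely is). By (3), the graph of $\Phi|_{\mfr{C}}$ is closed in $\mfr{C} \times \mfr{C}$; intersecting it with $\{f\} \times \mfr{C}$ shows each value $\Phi(f)$ is closed, hence compact as a closed subset of the compact set $\mfr{C}$, and by (2) it is also nonempty and convex. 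Since $\mfr{C}$ is compact, a correspondence into $\mfr{C}$ with closed graph is upper hemicontinuous, so $\Phi|_{\mfr{C}}$ is an upper hemicontinuous correspondence from the nonempty convex compact set $\mfr{C}$ into itself with nonempty, convex, compact values. The Kakutani--Fan--Glicksberg theorem then yields a point $f \in \mfr{C}$ with $f \in \Phi(f)$.

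The final step is to translate this fixed point into a stationary equilibrium using the characterization recorded just after~\eqref{eq:phi}: from $f \in \Phi(f) = \mc{D}(\mc{P}(f), f)$ we may choose $\mu \in \mc{P}(f)$ with $f \in \mc{D}(\mu, f)$, and then $(\mu, f)$ is a stationary equilibrium with $f \in \mfr{C}$, as claimed. I expect the only real subtlety to be in the first step — verifying carefully that the $\onep$ topology on $\mfr{F}_p$ is the trace of a locally convex Hausdorff topological vector space topology — together with the standard but worth-stating fact that, into a compact target, a closed graph is equivalent to upper hemicontinuity with closed values. The genuinely hard work of Section~\ref{sec:exist} is not in this theorem but in the subsequent results, which must produce a concrete set $\mfr{C}$ satisfying (1)--(3) from conditions on model primitives — above all the compactness in (1), which is exactly where unbounded state spaces and the light-tail condition bite.
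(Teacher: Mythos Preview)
Your proposal is correct and matches the paper's approach: the paper simply presents this theorem as a ``restatement of Kakutani's theorem'' without proof, taking the applicability of Kakutani--Fan--Glicksberg as given. You have filled in the details the paper omits---in particular the embedding of $(\mfr{F}_p,\norm{\cdot}_{\onep})$ into a genuine Banach space and the verification that the topologies agree (using that $\mc{X}\subseteq\Z^d$ so $\norm{x}_p\ge 1$ for $x\neq 0$, together with the constraint $\sum_x f(x)=1$ to control the mass at the origin)---but there is no substantive difference in strategy.
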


In the remainder of this section, we find exogenous conditions on
model primitives to ensure these requirements are met.  We tackle them
in reverse order.  We first show that under
an appropriate continuity condition, 
$\Phi$ has a closed graph.  Next, we study
conditions under which $\Phi(f)$ can be guaranteed to be convex.  Finally, we provide
conditions on model primitives under which there exists a compact, convex set
$\mfr{C}$ with $\Phi(\mfr{F}) \subset \mfr{C}$.  The conditions
we provide suffice to guarantee that $\Phi(f)$
is nonempty for all ${f \in \mfr{F}}$.  Taken together our conditions ensure
existence of SE, as well as an additional stronger characterization:
{\em all} SE are light-tailed, i.e., they have finite $\onep$ norm.
This fact will allow us to show that 
every SE satisfies the AME property in the next section.

\subsection{Closed Graph}
\label{ssec:continuity}

In this section we develop conditions to ensure the model is
appropriately ``continuous.''
Before stating the desired assumption, we introduce one more piece of notation. 
Without loss  of generality, we can view the state Markov process in terms of the
increments from the current state. In particular, we can write $x_{i, t+1} = x_{i,t} +
\xi_{i,t}$, where $\xi_{i,t}$ is a random increment distributed
according to the probability mass function $\mbf{Q}(\cdot \ | \ x, a,
f)$ defined by $\mbf{Q}(z'\ | \ x, a, f) = \mbf{P}(x + z'\ | \  x, a, f).$
Note that $\mbf{Q}(z'\ | \  x, a, f)$ is positive for only those $z'$ such
that $x + z' \in \mc{X}$.
We make the following assumptions over model primitives.
\begin{assumption}[Continuity]
\label{as:continuity}
\begin{enumerate}
\setlength{\itemsep}{1pt}
 \setlength{\parskip}{0pt}
  \setlength{\parsep}{0pt}
\item {\em Compact action set}. The set of feasible actions for a player,
denoted by $\mc{A}$, is compact.
\item {\em Bounded increments}. There exists $M \geq 0$ such that, for all $z$ with $\|z\|_\infty > M$,
  $\mbf{Q}(z \ | \  x, a, f) = 0$, for all $x\in\mc{X}$, $a\in\mc{A}$, and~$f\in\mfr{F}$.
\item {\em Growth rate bound}. There exist constants $K$ and $n \in \Z_{+}$ such
  that $\sup_{a \in \mc{A}, {f \in \mfr{F}}} | \pi(x,a,f)| \leq K
(1 + \norm{x}_\infty)^{n}$ for every $x \in \mc{X}$, where
$\norm{\cdot}_\infty$ is the sup norm.
\item {\em Payoff and kernel continuity}.  {For each fixed $x,x' \in
  \mc{X}$ and $f \in \mfr{F}$, the payoff $\pi(x,a,f)$ and the kernel
  $\mbf{P}(x'\ |\ x, a, f)$ are continuous in $a \in \mc{A}$.

In addition, for each fixed $x, x' \in \mc{X}$, the payoff
$\pi(x,a,f)$ and
the kernel
$\mbf{P}(x' \ | \  x, a, f)$ are jointly continuous in $a\in \mc{A}$
and $f \in \mfr{F}_p$
(where $\mfr{F}_p$ is
  endowed with the $\onep$ norm).}\footnote{Here we view $\mbf{P}(x'\ | \  x, a, f)$ as a real valued
  function of $a$ and $f$, for fixed $x, x'$; note that since we have
  also assumed bounded increments, this notion of
  continuity is equivalent to assuming that $\mbf{P}(\cdot\ | \  x, a, f)$ is
  jointly continuous in $a$ and $f$, for fixed $x$, with respect to the topology of weak convergence
  on distributions over $\mc{X}$.}
\end{enumerate}
\end{assumption}

The assumptions are fairly mild and are satisfied in a variety of
models of interest. For example, all models in
Section~\ref{sec:examples} satisfy it.
The first
assumption is standard. We also place a finite (but possibly large)
bound on how much an agent's state can change in one period 
(Assumption \ref{as:continuity}.2), an assumption that is reasonably weak.   The
polynomial growth rate bound on the payoff is quite weak, and serves to exclude the possibility
of strategies that yield infinite expected discounted payoff.

Finally, Assumption \ref{as:continuity}.4 ensures that the impact of action on payoff and transitions is continuous. It also imposes that the payoff function and transition kernel are ``smooth'' functions of the population state under an appropriate norm.
We note that when $\mc{X}$
is finite, then $\norm{f}_{\onep}$ induces the same topology as the
standard Euclidean norm.  However, when $\mc{X}$ is infinite,
the $\onep$-norm weights larger states higher
than smaller states.
In many applications, other
players at larger states have a greater impact on the payoff; in such
settings, continuity of the payoff in $f$ in the $\onep$-norm naturally
controls for this effect.   Given a
particular model, the exponent $p$ 
should be chosen to ensure continuity of the payoff and transition
kernel.\footnote{See Section \ref{sec:examples} and Section~\ref{sec:discussion} for concrete examples. For example, in subsection \ref{subsec:RD} the payoff function depends on the distribution $f$ via its $\theta_1$ moment so it is natural to endow the set of distributions with the $\onep$ norm with $p=\theta_1$.}
The following proposition establishes that the continuity assumptions
embodied in Assumption \ref{as:continuity} suffice to ensure that
$\Phi$ has a closed graph.  

\begin{proposition}
\label{prop:uhc}
Suppose that Assumption \ref{as:continuity} holds.  Then $\Phi$ has a
closed graph on $\mfr{F}_p$.
\end{proposition}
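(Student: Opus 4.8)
The plan is to exploit the factorization $\Phi(f) = \mc{D}(\mc{P}(f),f)$ and prove separately that the correspondence $\mc{P}$ (optimal oblivious strategy given the population state) and the correspondence $\mc{D}$ (invariant distribution given strategy and population state) each have a closed graph, where $\mfr{M}_O = \mc{A}^{\mc{X}}$ carries the topology of pointwise convergence and $\mfr{F}_p$ carries the $\onep$ norm; then a short compactness argument glues them. Note both spaces are metrizable: $\mfr{F}_p$ is a metric space, and $\mfr{M}_O$ is a countable product (as $\mc{X}\subseteq\Z^d$) of the metric space $\mc{A}$. So every topological claim below may be verified sequentially, and by Assumption~\ref{as:continuity}.1 and Tychonoff, $\mfr{M}_O$ is (sequentially) compact.

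The first and central step is to show that the oblivious value function $\vTilde(x\mid\mu,f)$ is finite and jointly continuous as a function on $\mfr{M}_O\times\mfr{F}_p$. Bounded increments (Assumption~\ref{as:continuity}.2) force $\norm{x_{i,t}}_\infty\le\norm{x}_\infty+tM$ almost surely, so by the polynomial growth bound (Assumption~\ref{as:continuity}.3) the $t$-th term of $\vTilde$ is dominated by $\beta^t K(1+\norm{x}_\infty+tM)^{n}$, which is summable and \emph{independent of $\mu$ and $f$}; this gives finiteness and a uniform tail bound. For continuity, truncate the series at horizon $T$: bounded increments confine the path $(x_{i,0},\dots,x_{i,T})$ to a finite subset of $\mc{X}$, so the horizon-$T$ expected discounted payoff is a finite sum of products of kernel values $\mbf{P}(\cdot\mid y,\mu(y),f)$ and payoffs $\pi(y,\mu(y),f)$ over finitely many states $y$. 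If $(\mu_n,f_n)\to(\mu,f)$ then $\mu_n(y)\to\mu(y)$ for each relevant $y$ and, by Assumption~\ref{as:continuity}.4, each $\mbf{P}(\cdot\mid y,\mu_n(y),f_n)$ and $\pi(y,\mu_n(y),f_n)$ converges; hence the finite truncation converges, and the uniform tail bound upgrades this to $\vTilde(x\mid\mu_n,f_n)\to\vTilde(x\mid\mu,f)$.

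Given this, closed graph of $\mc{P}$ follows from a Berge-type argument: since $\mfr{M}_O$ is compact and $\vTilde(x\mid\cdot,\cdot)$ is continuous, $\vTilde^{*}(x\mid f)=\sup_{\mu\in\mfr{M}_O}\vTilde(x\mid\mu,f)$ is attained and $f\mapsto\vTilde^{*}(x\mid f)$ is continuous (lower semicontinuity is automatic; upper semicontinuity follows by extracting a convergent subsequence of near-maximizers and using joint continuity). Hence for each fixed $x$ the set $\{(f,\mu):\vTilde(x\mid\mu,f)=\vTilde^{*}(x\mid f)\}$ is closed, and intersecting over $x\in\mc{X}$ shows the graph of $\mc{P}$ is closed. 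For $\mc{D}$: if $\mu_n\to\mu$, $f_n\to f$, $g_n\in\mc{D}(\mu_n,f_n)$ and $g_n\to g$ in $\onep$, then $\sup_n\norm{g_n}_{\onep}<\infty$, which is a tightness bound, $\sum_{\norm{x}_\infty>R}g_n(x)\le R^{-p}\sup_n\norm{g_n}_{\onep}$; together with the pointwise convergence $g_n(x)\to g(x)$ for $x\ne 0$ (immediate, since the $\onep$ norm weights each $x\ne0$ by at least $1$) and $\sum_x g_n(x)=1$, this gives pointwise convergence everywhere and $g\in\mfr{F}$. Passing to the limit in the invariance identity $g_n(x')=\sum_x g_n(x)\mbf{P}(x'\mid x,\mu_n(x),f_n)$ — a finite sum over $\{x:\norm{x-x'}_\infty\le M\}$, each term converging by Assumption~\ref{as:continuity}.4 — yields $g\in\mc{D}(\mu,f)$. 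Finally, to assemble: given $f_n\to f$ and $g_n\in\Phi(f_n)$ with $g_n\to g$, choose $\mu_n\in\mc{P}(f_n)$ with $g_n\in\mc{D}(\mu_n,f_n)$, extract $\mu_{n_k}\to\mu$ by compactness of $\mfr{M}_O$, and conclude $\mu\in\mc{P}(f)$ and $g\in\mc{D}(\mu,f)$, i.e. $g\in\Phi(f)$.

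The main obstacle is the continuity of the infinite-horizon value function. The two structural hypotheses are tailored precisely for this: bounded increments collapse every finite-horizon quantity to a genuinely finite sum and supply a deterministic moment/tightness bound, while the polynomial growth bound furnishes a dominating tail that is uniform over strategies and population states. One must also take care that continuity of $\pi$ and $\mbf{P}$ in the population state is available \emph{only} in the $\onep$ norm (Assumption~\ref{as:continuity}.4), so the topology on $\mfr{F}_p$ cannot be weakened; a secondary subtlety is that the $\onep$ norm gives zero weight to the origin, so recovering full pointwise convergence (and total mass $1$) of the invariant distributions in the $\mc{D}$ step genuinely needs the tightness bound.
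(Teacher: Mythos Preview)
Your proof is correct and shares the paper's overall architecture: establish that $\mc{P}$ has a closed graph, pass to the limit in the invariance equation for $\mc{D}$, and glue the two via compactness of $\mfr{M}_O$ (Tychonoff). The treatment of $\mc{D}$ and the final assembly step are essentially identical to the paper's.

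The one genuine technical difference is in how continuity of the optimal value is obtained. The paper first develops the Bellman equation via a weighted-sup-norm contraction (Lemma~\ref{lem:bellman}): the dynamic programming operator $T_f$ is shown to be a $k$-stage $\rho$-contraction with $k,\rho$ \emph{independent of $f$}, so continuity of $\vTilde^{*}(x\mid f)$ in $f$ follows by showing each finite iterate $T_f^\ell 0$ is continuous and invoking the uniform contraction rate. Berge's theorem is then applied \emph{state by state} to the right-hand side of Bellman, giving upper hemicontinuity of each $\mc{P}_x:\mfr{F}_p\rightrightarrows\mc{A}$, and $\mc{P}$ inherits this as a product. Your route is more direct: you prove joint continuity of $\vTilde(x\mid\mu,f)$ on all of $\mfr{M}_O\times\mfr{F}_p$ by a finite-horizon truncation together with a uniform (in $\mu,f$) tail bound, and then run a Berge argument globally on the compact space $\mfr{M}_O$. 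This is more elementary and bypasses the Bellman machinery entirely; the paper's approach, on the other hand, delivers the Bellman equation and the pointwise characterization of optimality as byproducts, which it needs elsewhere (e.g., Propositions~\ref{prop:finiteconvex} and~\ref{prop:convexity-new}). Your explicit handling of the origin in the $\mc{D}$ step---using tightness to recover $g_k(0)\to g(0)$, since the $\onep$ norm gives zero weight there---is a point the paper's proof passes over without comment.
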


\subsection{Convexity}
\label{ssec:convexity}

Next, we develop conditions to ensure that $\Phi(f)$ is
convex. We first provide a result for mixed strategies and then a result for pure strategies.

\subsubsection{Mixed Strategies}
\label{ssec:finiteaction}

We start by considering a simple model, where
the action set $\mc{A}$ is the simplex of {\em randomized actions} on
a base set of finite pure actions. This setting is particularly useful when we assume players are coupled through actions (see Section \ref{sec:extensions}). Formally, we have the following
definition.

\begin{definition}
\label{def:finiteaction}
An anonymous stochastic game has a {\em finite action space} if there exists a
finite set~$S$ such that the following three conditions hold:
\begin{enumerate}
\setlength{\itemsep}{1pt}
 \setlength{\parskip}{0pt}
  \setlength{\parsep}{0pt}
\item $\mc{A}$ consists of all probability distributions over $S$:
  $\mc{A} = \{ a \geq 0 : \sum_s a(s) = 1 \}$.
\item $\pi(x,a,f) = \sum_s a(s) \pi(x,s,f)$, where $\pi(x,s,f)$ is the
  payoff evaluated at state $x$, population state $f$, and pure
  action $s$.
\item $\mbf{P}(x'\ |\ x, a, f) = \sum_s a(s) \mbf{P}(x'\ | \ x, s, f)$, where
  $\mbf{P}(x'\ | \ x, s, f)$ is the kernel evaluated at states $x'$
  and $x$, population state $f$, and pure action $s$.
\end{enumerate}
\end{definition}

Essentially, the preceding definition allows inclusion of {\em
  randomized} strategies in our search for SE.
This model inherits  Nash's original approach to establishing
existence of an equilibrium for static games, where randomization
induces convexity on the strategy space.  We
show next that in any game with finite action spaces, the set
$\Phi(f)$ is always convex.

\begin{proposition}
\label{prop:finiteconvex}
Suppose Assumption \ref{as:continuity} holds.  In any anonymous
stochastic game with a finite action space, $\Phi(f)$
is convex for all $f \in \mfr{F}$.
\end{proposition}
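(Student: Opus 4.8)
The plan is to argue convexity of $\Phi(f)$ directly, without invoking any fixed-point result. Fix $f\in\mfr{F}$. If $\Phi(f)$ has at most one element there is nothing to prove, so I would take $g_0,g_1\in\Phi(f)$ and a weight $\lambda\in(0,1)$, set $g=(1-\lambda)g_0+\lambda g_1$ (again an element of $\mfr{F}$), and aim to produce an optimal oblivious strategy $\mu\in\mc{P}(f)$ for which $g\in\mc{D}(\mu,f)$; this yields $g\in\Phi(f)$. By definition of $\Phi$ I may choose $\mu_0,\mu_1\in\mc{P}(f)$ with $g_k$ invariant for the Markov chain on $\mc{X}$ with one-step kernel $x'\mapsto\mbf{P}(x'\mid x,\mu_k(x),f)$, for $k=0,1$.

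The first ingredient is that $\mc{P}(f)$ is itself convex. Under Assumption \ref{as:continuity}, the single-agent problem with population state frozen at $f$ is a $\beta$-discounted Markov decision process with finitely many pure actions and payoffs of at most polynomial growth, so all value functions are finite and the optimal oblivious value function $\vTilde^*(\cdot\mid f)$ satisfies the Bellman optimality equation; hence a stationary strategy $\mu$ belongs to $\mc{P}(f)$ iff for every $x$ the action $\mu(x)$ is supported on the set of maximizers over $\mc{A}$ of $a\mapsto\pi(x,a,f)+\beta\sum_{x'}\mbf{P}(x'\mid x,a,f)\vTilde^*(x'\mid f)$. In a game with a finite action space (Definition \ref{def:finiteaction}) this objective equals $\sum_s a(s)\big[\pi(x,s,f)+\beta\sum_{x'}\mbf{P}(x'\mid x,s,f)\vTilde^*(x'\mid f)\big]$, which is linear in $a$; its maximizer set over the simplex is therefore a face of the simplex, in particular convex, and so $\mc{P}(f)$ — the set of strategies taking values in these faces — is convex. (These are the standard MDP facts already underlying Proposition \ref{prop:uhc}.)

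The crux, and the step I expect to be the main obstacle, is that $\Phi(f)=\bigcup_{\mu\in\mc{P}(f)}\mc{D}(\mu,f)$ is a priori only a union of convex sets, so convexity does not follow formally. The resolution is to blend $\mu_0$ and $\mu_1$ with weights proportional to their stationary masses: I would define $\mu(x)=\big[(1-\lambda)g_0(x)\mu_0(x)+\lambda g_1(x)\mu_1(x)\big]/g(x)$ whenever $g(x)>0$, and $\mu(x)=\mu_0(x)$ when $g(x)=0$. At each state, $\mu(x)$ is a convex combination of $\mu_0(x)$ and $\mu_1(x)$, hence lies in $\mc{A}$, and by the convexity of $\mc{P}(f)$ just established, $\mu\in\mc{P}(f)$. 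It then remains to verify that $g$ is invariant for the chain induced by $\mu$ and $f$: using that $a\mapsto\mbf{P}(x'\mid x,a,f)$ is linear (Definition \ref{def:finiteaction}), one gets $g(x)\mbf{P}(x'\mid x,\mu(x),f)=(1-\lambda)g_0(x)\mbf{P}(x'\mid x,\mu_0(x),f)+\lambda g_1(x)\mbf{P}(x'\mid x,\mu_1(x),f)$ for all $x,x'$ — both sides vanishing when $g(x)=0$, since then $g_0(x)=g_1(x)=0$. Summing over $x$ and applying the invariance of $g_0$ and $g_1$ gives $\sum_x g(x)\mbf{P}(x'\mid x,\mu(x),f)=(1-\lambda)g_0(x')+\lambda g_1(x')=g(x')$ for every $x'$; the interchange of summations is harmless, as all terms are nonnegative and bounded by the total mass of $g$. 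Thus $g\in\mc{D}(\mu,f)$ with $\mu\in\mc{P}(f)$, so $g\in\Phi(f)$, proving that $\Phi(f)$ is convex.
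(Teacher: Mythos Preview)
Your proposal is correct and follows essentially the same approach as the paper: both arguments construct the blended strategy $\mu(x)=\big[(1-\lambda)g_0(x)\mu_0(x)+\lambda g_1(x)\mu_1(x)\big]/g(x)$, use linearity of the Bellman right-hand side in the mixed action to conclude $\mu\in\mc{P}(f)$, and then verify invariance of $g$ via the same computation exploiting linearity of the kernel. The only cosmetic difference is that you phrase the optimality step as ``$\mc{P}(f)$ is convex'' while the paper states it as ``any convex combination of two optimal actions is optimal''; the relevant MDP facts are recorded in the paper as Lemma~\ref{lem:bellman} rather than Proposition~\ref{prop:uhc}.
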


The preceding result ensures that if randomization is allowed over a
set of finite actions, then the map $\Phi$ is convex-valued.
We conclude by noting that another simplification is possible
  when working with a finite action space.  In particular, it is
  straightforward to show that {\em if Assumption~\ref{as:continuity}
    holds for the payoff and transition kernel over all pure actions,
    then it also holds for the payoff and transition kernel over all
    mixed actions}; Proposition \ref{prop:uhc} follows
  similarly.  The proof follows in an easy manner using the 
  linearity of the payoff and transition kernel.   This is a valuable
  insight, since in applications it simplifies the complexity of
  checking the model assumptions necessary to guarantee existence of
  an equilibrium.  We discuss a similar point in Section \ref{ssec:compactness}.

\subsubsection{Pure Strategies}

In contrast to the preceding section, many relevant applications
typically require existence of equilibria in pure strategies.  For
such examples, we employ an approach based on the following
proposition.

\begin{proposition}
\label{prop:single_convex}
Suppose that $\mc{P}(f)$ is a singleton for all $f \in \mfr{F}$.  Then
$\Phi(f)$ is convex for all $f \in \mfr{F}$.
\end{proposition}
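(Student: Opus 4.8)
The plan is to exploit the linearity of the invariance condition that defines $\mc{D}$, after using the singleton hypothesis to collapse $\Phi(f)$ onto a single Markov kernel. Since $\mc{P}(f)$ is a singleton, I would write $\mc{P}(f) = \{\mu_f\}$, so that by \eqref{eq:phi} we have $\Phi(f) = \mc{D}(\mu_f, f)$; that is, $\Phi(f)$ is exactly the set of invariant distributions of the Markov chain on $\mc{X}$ whose transition kernel $T(x' \mid x) := \mbf{P}\big(x' \mid x, \mu_f(x), f\big)$ is now completely fixed (both $\mu_f$ and $f$ being held fixed). The whole claim then reduces to the elementary fact that the set of invariant distributions of a fixed Markov chain is convex.

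To establish that fact, I would take $g_1, g_2 \in \Phi(f)$ and $\lambda \in [0,1]$ and set $g = \lambda g_1 + (1-\lambda) g_2$. First, $g \in \mfr{F}$ because $\mfr{F}$ is the probability simplex over $\mc{X}$, which is convex. Second, $g_k \in \mc{D}(\mu_f, f)$ means $g_k(x') = \sum_{x \in \mc{X}} g_k(x)\, T(x' \mid x)$ for every $x'$ (a sum that converges absolutely, being dominated by $\sum_x g_k(x) = 1$, so terms may be freely rearranged), and therefore, by linearity of the sum,
\[
\sum_{x \in \mc{X}} g(x)\, T(x' \mid x) = \lambda \sum_{x} g_1(x)\, T(x' \mid x) + (1-\lambda)\sum_{x} g_2(x)\, T(x' \mid x) = \lambda g_1(x') + (1-\lambda) g_2(x') = g(x')
\]
for all $x'$. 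Hence $g$ is invariant for $T$, i.e., $g \in \mc{D}(\mu_f, f) = \Phi(f)$, which gives convexity.

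There is really no analytic obstacle here; the only point worth spelling out is why the singleton hypothesis cannot be dropped. In general $\Phi(f) = \bigcup_{\mu \in \mc{P}(f)} \mc{D}(\mu, f)$, and while each $\mc{D}(\mu, f)$ is convex by the argument above, this is a union of convex sets indexed by the (possibly many) optimal oblivious strategies, which are associated with genuinely different transition kernels; such a union need not be convex. The singleton hypothesis is precisely what removes the union. Note also that this proposition concerns only convexity of $\Phi(f)$; nonemptiness of $\Phi(f)$ (equivalently, nonemptiness of $\mc{P}(f)$ and of $\mc{D}(\mu_f,f)$) is obtained separately from the compactness conditions developed in Section \ref{ssec:compactness}.
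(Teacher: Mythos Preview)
Your proof is correct and is exactly the approach the paper takes: the paper simply remarks that $\mc{D}(\mu,f)$ is convex because invariant distributions are identified by a collection of linear equations, and that the singleton hypothesis collapses $\Phi(f)$ to a single such $\mc{D}(\mu_f,f)$. You have written out those details explicitly, including the verification that $g$ stays in $\mfr{F}$ and the linearity of the invariance equation, but the argument is the same.
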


The proof is straightforward: $\mc{D}(\mu,f)$ is convex-valued for any
fixed $\mu$ and $f$, since the set of invariant distributions for the
kernel defined by $\mu$ and $f$ are identified by a collection of
linear equations.  Thus if $\mc{P}(f)$ is a singleton, then $\Phi(f) =
\mc{D}(\mc{P}(f), f)$ will be convex.

We now provide two different assumptions over model primitives that
guarantee that $\mc{P}(f)$ is a singleton, for all $f \in \mfr{F}$.
The first assumption is a condition introduced by \cite{DORSAT03} and
is described in detail there. The assumption has found wide
application in dynamic oligopoly models.

\begin{assumption}
\label{as:UIC}
\begin{enumerate}
\setlength{\itemsep}{1pt}
 \setlength{\parskip}{0pt}
  \setlength{\parsep}{0pt}
\item The state space is scalar, i.e., $\mc{X} \subseteq \Z_+$, and
the action space $\mc{A}$ is a compact interval of the real
numbers.
\item The payoff $\pi(x,a,f)$ is
strictly decreasing and concave in $a$ for fixed
$x$ and $f$.
\item For all $f \in \mfr{F}$, the transition kernel $\mbf{P}$
is {\em unique investment choice (UIC)  admissible}: there exist
functions $g_1$, $g_2$, $g_3$ such that $\mbf{P}(x'\ |\ x, a, f) =
g_1(x,a,f)g_2(x',x,f)+g_3(x',x,f),\ \forall x',x,a,f$, where
$g_1(x,a,f)$ is strictly increasing and strictly concave in $a$.
\end{enumerate}
\end{assumption}
The preceding conditions ensure that for all population states $f$ and
initial states $x$, and all
continuation value functions, the maximization problem in the right
hand side of Bellman's equation (cf. \eqref{eq:bellman} in the
Appendix) is strictly concave, or that the unique
maximizer is a corner solution.

The previous assumption requires a single-dimensional state
space and action space. Our next assumption imposes a different set of
conditions over the payoff and the transition kernel, and allows for
multi-dimensional state and action spaces.   Before providing our second condition, we
require some additional terminology. Let $S \subset \R^n$.  We say that a function $g : S \to
\R$ is {\em nondecreasing} if
$g(x') \geq g(x)$ whenever $x' \geq x$ (where we write $x' \geq x$ if
$x'$ is at least as large as $x$ in every component).  We say $g$ is
{\em strictly increasing} if the inequality is strict. Let $\mbf{P}_\theta$ be a family of probability distributions on
$\mc{X}$ indexed by $\theta \in S$. Given a nondecreasing function $u :
\mc{X} \to \R$, define $\E_\theta[u] = \sum_{x} u(x) \mbf{P}_\theta(x)$. We say that $\mbf{P}_\theta$ is {\em stochastically nondecreasing} in the
parameter $\theta$, if $\E_\theta[u]$ is nondecreasing in $\theta$ for every nondecreasing function~$u$. Similarly, we say that $\mbf{P}_\theta$ is {\em stochastically concave} in the parameter $\theta$ if $\E_\theta[u]$ is a concave function of $\theta$ for every nondecreasing function~$u$. We say that $\mbf{P}_\theta$ is {\em strictly stochastically
  concave} if, in addition, $\E_\theta[u]$ is strictly concave for every strictly
increasing function $u$. We have the following assumption.

\begin{assumption}
\label{as:convexity}
\begin{enumerate}
\setlength{\itemsep}{1pt}
 \setlength{\parskip}{0pt}
  \setlength{\parsep}{0pt}
\item The action set $\mc{A}$ is convex.
\item The payoff $\pi(x,a,f)$ is
strictly increasing in $x$ for fixed
$a$ and $f$, and the kernel $\mbf{P}(\cdot\ | \ x, a, f)$ is stochastically
nondecreasing in $x$ for fixed $a$ and $f$.
\item The payoff is concave in $a$ for
fixed $x$ and $f$, and the
kernel is stochastically
concave in $a$ for fixed $x$ and $f$, with at least one of the two {\em
  strictly} concave in $a$.
\end{enumerate}
\end{assumption}

The following result shows the preceding conditions on model
primitives ensure the optimal oblivious strategy is unique.

\begin{proposition}
\label{prop:convexity-new}
Suppose Assumption \ref{as:continuity} holds, and that at least one of
Assumptions \ref{as:UIC} or \ref{as:convexity} holds.
Then $\mc{P}(f)$ is a singleton, and thus $\Phi(f)$ is convex for all $f \in \mfr{F}$.
\end{proposition}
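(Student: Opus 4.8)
The plan is to reduce the statement to the claim that $\mc{P}(f)$ is a singleton for every $f \in \mfr{F}$: granting that, convexity of $\Phi(f)$ for all $f$ follows immediately from Proposition~\ref{prop:single_convex}. So fix $f \in \mfr{F}$ and consider the single-agent dynamic program solved by an oblivious player who conjectures the population state $f$. First I would record that this program is well posed: since the payoff is polynomially bounded (Assumption~\ref{as:continuity}.3), increments are bounded (Assumption~\ref{as:continuity}.2), and $\beta < 1$, a standard weighted-norm dynamic-programming argument shows that the Bellman equation $v(x) = \sup_{a \in \mc{A}}\bigl\{\pi(x,a,f) + \beta \sum_{x'} \mbf{P}(x' \mid x,a,f)\, v(x')\bigr\}$ has a unique solution $\vTilde^*(\cdot \mid f)$ in the class of functions of at most polynomial growth, that this solution equals the optimal oblivious value function, and that a stationary strategy $\mu$ lies in $\mc{P}(f)$ if and only if $\mu(x)$ attains the supremum on the right-hand side at $\vTilde^*$ for every $x$. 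Continuity of $\pi(x,\cdot,f)$ and $\mbf{P}(x' \mid x,\cdot,f)$ in $a$ (Assumption~\ref{as:continuity}.4), compactness of $\mc{A}$ (Assumption~\ref{as:continuity}.1), and bounded increments make $a \mapsto \pi(x,a,f) + \beta \sum_{x'} \mbf{P}(x' \mid x,a,f)\,\vTilde^*(x' \mid f)$ continuous with a nonempty maximizer set, so $\mc{P}(f) \neq \emptyset$; it then remains to prove the maximizer is unique at each $x$.

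Suppose Assumption~\ref{as:UIC} holds. Using UIC-admissibility I would rewrite the per-state objective as $\pi(x,a,f) + \beta\, g_1(x,a,f)\, c(x,f) + (\text{term independent of }a)$, where $c(x,f) = \sum_{x'} g_2(x',x,f)\,\vTilde^*(x' \mid f)$ does not depend on $a$. If $c(x,f) > 0$, strict concavity of $g_1(x,\cdot,f)$ together with concavity of $\pi(x,\cdot,f)$ makes the objective strictly concave on the interval $\mc{A}$, so the maximizer is unique. If $c(x,f) \le 0$, the objective is the sum of the strictly decreasing function $\pi(x,\cdot,f)$ and the nonincreasing function $\beta\, c(x,f)\, g_1(x,\cdot,f)$, hence strictly decreasing, so its unique maximizer is the left endpoint of $\mc{A}$. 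Either way the Bellman maximizer is unique, so $\mc{P}(f)$ is a singleton.

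Suppose instead Assumption~\ref{as:convexity} holds. Here I would first show $\vTilde^*(\cdot \mid f)$ is strictly increasing in $x$: the Bellman operator sends nondecreasing functions to strictly increasing functions, since $\pi(\cdot,a,f)$ is strictly increasing and $\mbf{P}(\cdot \mid x,a,f)$ is stochastically nondecreasing in $x$, and strict monotonicity passes to the fixed point. Consequently $a \mapsto \sum_{x'} \mbf{P}(x' \mid x,a,f)\,\vTilde^*(x' \mid f) = \E_a[\vTilde^*]$ is concave by stochastic concavity of the kernel in $a$ applied to the nondecreasing function $\vTilde^*$, and it is strictly concave if the kernel is strictly stochastically concave in $a$, applied to the strictly increasing function $\vTilde^*$. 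Since $\mc{A}$ is convex and at least one of $\pi(x,\cdot,f)$ and $\E_{(\cdot)}[\vTilde^*]$ is strictly concave, the objective is strictly concave on $\mc{A}$, so the maximizer is unique and $\mc{P}(f)$ is again a singleton.

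In both cases the optimal oblivious strategy is pinned down state by state, so $\mc{P}(f)$ is a singleton for all $f \in \mfr{F}$, and Proposition~\ref{prop:single_convex} yields convexity of $\Phi(f)$. I expect the main obstacle to be the verification step underlying the first paragraph: with an unbounded state space and only polynomially bounded payoffs, one must choose the correct weighted function class, rule out strategies yielding infinite discounted payoff, and justify that membership in $\mc{P}(f)$ is genuinely equivalent to being a state-by-state Bellman maximizer. Once that dynamic-programming machinery is in place (presumably established as a lemma in the appendix and simply invoked here), the concavity bookkeeping under Assumptions~\ref{as:UIC} and~\ref{as:convexity} is routine.
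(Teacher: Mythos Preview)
Your proposal is correct and follows essentially the same route as the paper: reduce to uniqueness of the Bellman maximizer state by state, then invoke Proposition~\ref{prop:single_convex}. The paper is terser---it cites \cite{DORSAT03} for the UIC case rather than spelling out the sign-of-$c(x,f)$ dichotomy you give, and it packages the strict-monotonicity-of-$\vTilde^*$ argument under Assumption~\ref{as:convexity} into a separate Lemma~\ref{lem:incrconc}, while the weighted-norm dynamic-programming machinery you correctly flag as the main technical prerequisite is exactly Lemma~\ref{lem:bellman}.
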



\subsection{Compactness}
\label{ssec:compactness}


In this section, we provide conditions under which we can guarantee
the existence of a compact, convex, nonempty set $\mfr{C}$ such that
$\Phi(\mfr{F}) \subset \mfr{C}$.  The
assumptions we make are closely related to those needed to  ensure
that $\Phi(f)$ is
nonempty.  To see the relationship between these results, observe that in Lemma
\ref{lem:bellman} in the Appendix, we show that under Assumption
\ref{as:continuity} an optimal oblivious strategy always exists for
any {$f \in \mfr{F}$}.  Thus to ensure that $\Phi(f)$ is nonempty, it
suffices to show that there exists at least one strategy that possesses an invariant
distribution.  Our approach to demonstrating existence of an invariant distribution
is based on the {\em Foster-Lyapunov criterion}
\cite{meyn_1993}. Intuitively, this criterion 
checks whether the process that describes the evolution of an agent
eventually has ``negative'' drift and in this way controls for the
growth of the agent's state.
This same argument also allows
us to bound the moments of the invariant distribution---precisely what
is needed to find the desired set $\mfr{C}$ that is compact in the
$\onep$ norm.

One simple condition under which $\Phi(f)$ is nonempty is that the
state space is finite; any
Markov chain on a finite state space possesses at least one positive
recurrent class.  In this case the entire set $\mfr{F}$ is compact
in the $\onep$ norm.  Thus we have the following result.

\begin{proposition}
Suppose Assumption \ref{as:continuity} holds, and that the state space $\mc{X}$
is finite. Then $\Phi(f)$ is nonempty for all $f \in \mfr{F}$, and
$\mfr{F}$ is compact in the $\onep$ norm.
\end{proposition}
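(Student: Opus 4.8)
The plan is to treat the two assertions separately, each reducing to a standard fact once Assumption~\ref{as:continuity} is in force. For nonemptiness of $\Phi(f)$, recall that $\Phi(f) = \mc{D}(\mc{P}(f), f)$, so it suffices to show both that $\mc{P}(f) \neq \emptyset$ and that $\mc{D}(\mu, f) \neq \emptyset$ for some (equivalently, every) $\mu \in \mc{P}(f)$. For the former I would invoke Lemma~\ref{lem:bellman} in the Appendix, already cited in the preamble to this subsection: under Assumption~\ref{as:continuity} the oblivious single-agent control problem with fixed population state $f$ is well posed---the polynomial growth bound (Assumption~\ref{as:continuity}.3) together with $\beta < 1$ makes the value function finite, while bounded increments plus continuity of $\pi(x,\cdot,f)$ and $\mbf{P}(x'\mid x,\cdot,f)$ over the compact set $\mc{A}$ make the right-hand side of the Bellman equation attain its supremum---so an optimal stationary nonrandomized, hence oblivious, strategy exists and $\mc{P}(f)\neq\emptyset$.

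Next, fix any $\mu \in \mc{P}(f)$ and consider the Markov chain on $\mc{X}$ with transition kernel $\mbf{P}(\cdot\mid x,\mu(x),f)$. Since $\mc{X}$ is finite, this chain has at least one closed communicating class, which is automatically positive recurrent, so the chain admits at least one stationary distribution (equivalently, the associated finite stochastic matrix has $1$ as an eigenvalue with a left eigenvector lying in the simplex). Hence $\mc{D}(\mu,f) \neq \emptyset$, and therefore $\Phi(f)$ is nonempty for every $f \in \mfr{F}$.

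For compactness of $\mfr{F}$ in the $\onep$ norm, note that when $\mc{X}$ is finite, $\mfr{F}$ is just the probability simplex in $\R^{|\mc{X}|}$, and $\mfr{F} = \mfr{F}_p$, since $\norm{f}_{\onep} = \sum_{x\in\mc{X}} \norm{x}_p^p\,|f(x)| \le \max_{x\in\mc{X}}\norm{x}_p^p < \infty$ for all $f \in \mfr{F}$. On a finite-dimensional space all norms are equivalent, so the topology induced by the $\onep$ norm on $\mfr{F}$ coincides with the usual Euclidean topology, under which $\mfr{F}$ is a compact (closed and bounded) set; hence $\mfr{F}$ is compact in the $\onep$ norm. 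There is no genuinely hard step here: the substantive input is the existence-of-an-optimal-oblivious-strategy lemma, which is proved separately, while everything else is the elementary observation that finite Markov chains possess invariant distributions and that on a finite state space the $\onep$ norm is comparable to the $\ell^1$ norm.
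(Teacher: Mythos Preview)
Your proposal is correct and follows essentially the same approach as the paper: the paper argues inline that under Assumption~\ref{as:continuity} an optimal oblivious strategy exists (Lemma~\ref{lem:bellman}), that any Markov chain on a finite state space has a positive recurrent class and hence an invariant distribution, and that $\mfr{F}$ is compact in the $\onep$ norm when $\mc{X}$ is finite. Your write-up simply fleshes out these points with the standard details (closed class plus finiteness gives positive recurrence; norm equivalence on finite-dimensional spaces).
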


We now turn our attention to the setting where the state space may be
unbounded; for notational simplicity, in the remainder of the section we assume
$\mc{X} = \Z_+^d$.  In this case, we must make additional assumptions to
control for the agent's growth; these assumptions
ensure the optimal strategy does not allow the state to become
transient, and also allows us to bound moments of the invariant distribution of any
optimal oblivious strategy.

In the sequel we restrict attention to multiplicatively
  separable transition kernels, as defined below.

\begin{definition}
The transition kernel is {\em multiplicatively separable} if there
exist transition kernels $\mbf{P}_1,\ldots,\mbf{P}_d$ such that for
all $x,x' \in \mc{X}, a \in \mc{A}, f \in \mfr{F}$, there holds $ \mbf{P}(x' | x, a, f) = \prod_{\ell = 1}^d \mbf{P}_\ell(x_\ell' |
x, a, f).$ In this case we let $\mbf{Q}_1, \ldots, \mbf{Q}_\ell$ be the
coordinatewise increment transition kernels; i.e., $\mbf{Q}_\ell(z_\ell | x, a, f) =
\mbf{P}_\ell(x_\ell + z_\ell | x, a, f)$, for $z$ such that $x + z \in
\mc{X}$.
\end{definition}
This is  a natural
class of dynamics in models with multidimensional state spaces. We
note that if $\mc{X}$ is one-dimensional, the definition is vacuous.
We introduce the following assumption.
\begin{assumption}
\label{as:compactness}
\begin{enumerate}
\setlength{\itemsep}{1pt}
 \setlength{\parskip}{0pt}
  \setlength{\parsep}{0pt}
\item For all $\Delta \in \Z_+^d$, there holds $\limsup_{\|x\|_\infty \to \infty} \sup_{{a \in \mc{A}, f \in \mfr{F}}} \pi(x + \Delta, a, f) -
\pi(x, a, f) \leq 0.$
\item The transition kernel $\mbf{P}$ is multiplicatively separable.
\item For $\ell =1,\ldots,d$, $\mbf{P}_\ell(\cdot
  | x, a, f)$ is stochastically nondecreasing in $x \in \mc{X}$ and $a
  \in \mc{A}$ for  fixed {$f \in \mfr{F}$}.
\item For $\ell =1,\ldots,d$, and for each $a \in \mc{A}$ and {$f \in
  \mfr{F}$}, $\mbf{Q}_\ell(\cdot | x, a, f)$ is stochastically
  nonincreasing in $x \in \mc{X}$.  Further, for all $x \in
\mc{X}$, $\sup_f \sum_{z_\ell} z_\ell \mbf{Q}_\ell(z_\ell | x, a, f)$
is continuous in $a$.
\item {There exists a compact set $\mc{A}' \subset \mc{A}$, a constant
  $K'$, and a continuous, strictly increasing function $\kappa : \R_+ \to \R_+$
  with $\kappa(0) = 0$, such that:}
\begin{enumerate}
\setlength{\itemsep}{1pt}
 \setlength{\parskip}{0pt}
  \setlength{\parsep}{0pt}
\item For all $x \in \mc{X}$, {$f \in \mfr{F}$}, $a \not\in \mc{A}'$,
  there exists $a' \in \mc{A}'$ with $a' \leq a$, such that $ \pi(x, a', f) - \pi(x, a, f) \geq \kappa(\|a' - a \|_\infty).$
\item For all $\ell$, and all $x'$ such that $x'_\ell \geq K'$, $\sup_{a' \in \mc{A}'} \sup_{{f\in \mfr{F}}} \sum_{z_\ell} z_\ell \mbf{Q}_\ell(z_\ell | x', a', f) < 0.$
\end{enumerate}
\end{enumerate}
\end{assumption}

Some of the previous conditions are natural, while others impose a
type of ``decreasing returns to higher states.''  First, we discuss
the former. Multiplicative separability (Assumption \ref{as:compactness}.2) is
natural. The first part of Assumption \ref{as:compactness}.3 is also
fairly weak. The transition kernel is stochastically nondecreasing in
state in models for which the state is persistent, in the sense that a
larger state today increases the chances of being at a larger state
tomorrow.  The transition kernel is stochastically nondecreasing in
action in models where larger actions take agents to larger states.


{Assumption \ref{as:compactness}.1, \ref{as:compactness}.4, and
  \ref{as:compactness}.5 impose a form of ``decreasing returns to
  higher states'' in the model.}   In particular, Assumption
\ref{as:compactness}.1 ensures the marginal gain in payoff by
increasing one's state becomes nonpositive as the state grows
large. This assumption is used to show that for large enough states
agents effectively become myopic; increasing the state further does
not provide additional gains. Assumption \ref{as:compactness}.5 then
implies that as the state grows large, optimal actions produce
negative drift inducing a ``light-tail'' on any invariant distribution
of the resulting optimal oblivious strategy.  The set $\mc{A}'$ can be
understood as (essentially) the set of actions that maximize the
single period payoff function.  Assumption \ref{as:compactness}.5 is
often natural because in many models of interest increasing the state
beyond a certain point is costly and requires dynamic incentives;
agents will take larger actions that induce positive drift only if
they consider the future benefits of doing so.

The first part of Assumption
\ref{as:compactness}.4 imposes a form of decreasing returns in the transition kernel.
The second part of Assumption
\ref{as:compactness}.4 will hold if, for example, the transition
kernel is coordinatewise stochastically nonincreasing in $f \in
\mfr{F}$ (with respect to the first order stochastic dominance
ordering) and Assumption \ref{as:continuity} holds.  In this case
$\sup_f \sum_{z_\ell} z_\ell \mbf{Q}_\ell(z_\ell | x, a, f) =
\sum_{z_\ell} z_\ell \mbf{Q}_\ell(z_\ell | x, a, \ul{f})$, where
$\ul{f}$ is the distribution that places all its mass at state $0$.

Much of the difficulty in the proof of the result lies in ensuring
that the tail of any invariant distribution obtained from an optimal
oblivious strategy is uniformly light over the image of $\Phi$. The fact
that Assumptions \ref{as:compactness}.1, \ref{as:compactness}.4, and
\ref{as:compactness}.5 are
uniform over $f$ are crucial for this purpose.


Under the preceding assumptions we have the following result.

\begin{proposition}
\label{prop:compactness}
Suppose $\mc{X} = \Z_+^d$, and Assumptions \ref{as:continuity} and
\ref{as:compactness} hold.  Then $\Phi(f)$ is
nonempty for all {$f \in \mfr{F}$}, and there exists
a compact, convex, nonempty set $\mfr{C}$ such that $\Phi(\mfr{F})
\subset \mfr{C}$.
\end{proposition}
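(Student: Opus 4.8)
The plan is to produce, via a Foster--Lyapunov argument, a bound on the $1\mhyphen q$-moments of every invariant distribution generated by an optimal oblivious policy that is \emph{uniform} over the conjectured population state $f$, for a suitable exponent $q>p$; the set $\mfr{C}$ is then a sublevel set of $\norm{\cdot}_{1\mhyphen q}$. Recall $\Phi(f)=\mc{D}(\mc{P}(f),f)$, and that by Lemma~\ref{lem:bellman} an optimal oblivious strategy exists for every $f\in\mfr{F}$ under Assumption~\ref{as:continuity}, so $\mc{P}(f)\neq\emptyset$. It therefore suffices to prove that for every $f\in\mfr{F}$ and every $\mu\in\mc{P}(f)$, the Markov chain on $\mc{X}=\Z_+^d$ with kernel $\mbf{P}(\cdot\ |\ x,\mu(x),f)$ (i)~admits an invariant probability distribution (so that $\mc{D}(\mu,f)$, and hence $\Phi(f)$, is nonempty), and (ii)~every such invariant distribution $\eta$ satisfies $\norm{\eta}_{1\mhyphen q}\le B$, where $q>p$ and $B<\infty$ depend only on the model primitives. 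Granting (i) and (ii), take $\mfr{C}=\{g\in\mfr{F}:\norm{g}_{1\mhyphen q}\le B\}$: it is convex because $g\mapsto\norm{g}_{1\mhyphen q}=\sum_x\norm{x}_q^q g(x)$ is linear on $\mfr{F}$; nonempty because it contains the point mass at the origin; contains $\Phi(\mfr{F})$ by~(ii); lies in $\mfr{F}_p$ because $\norm{x}_p^p\le\norm{x}_q^q$ on $\Z_+^d$; and is compact in the $1\mhyphen p$ norm, since (ii) makes $\mfr{C}$ uniformly tight, a pointwise limit of a diagonal subsequence of any sequence in $\mfr{C}$ is again a distribution in $\mfr{C}$ by Fatou, and this convergence upgrades to the $1\mhyphen p$ norm because $\norm{x}_p^p\le d\,N^{p-q}\norm{x}_q^q$ on $\{\norm{x}_\infty>N\}$. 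These last verifications are routine; the substance is in (i) and (ii).

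\textbf{The drift bound.} The core of the argument is to build a Lyapunov function with drift bounded away from $0$ at large states under \emph{every} optimal oblivious policy, uniformly in $f$. The key preliminary fact, proved using Assumption~\ref{as:compactness}.1 together with the bounded increments and polynomial payoff growth of Assumption~\ref{as:continuity}, is that the oscillation of the optimal oblivious value function $\vTilde^*(\cdot\ |\ f)$ over any cube of fixed side length tends to $0$ as the cube moves to infinity, uniformly over $f\in\mfr{F}$: since the marginal single-period payoff gain from a larger state is asymptotically nonpositive, so is the associated gain in discounted value. Now fix a large state $x$ and an optimal action $a=\mu(x)$. If $a\notin\mc{A}'$, Assumption~\ref{as:compactness}.5(a) provides $a'\in\mc{A}'$ with $a'\le a$ and $\pi(x,a',f)-\pi(x,a,f)\ge\kappa(\norm{a'-a}_\infty)$; replacing $a$ by $a'$ in the Bellman equation and bounding the change in continuation value by the value-function oscillation over the (bounded-radius, by Assumption~\ref{as:continuity}.2) reachable set around $x$, one sees the deviation is strictly improving once $\mathrm{dist}(a,\mc{A}')$ is bounded away from $0$ and $\norm{x}_\infty$ is large; hence $\mathrm{dist}(\mu(x),\mc{A}')\to0$ as $\norm{x}_\infty\to\infty$. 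Combining this with multiplicative separability (Assumption~\ref{as:compactness}.2), the stochastic monotonicity of $\mbf{Q}_\ell(\cdot\ |\ x,a,f)$ in $x$ (Assumption~\ref{as:compactness}.4, which makes the coordinate drift $\sum_{z_\ell}z_\ell\mbf{Q}_\ell(z_\ell\ |\ x,a,f)$ nonincreasing in $x_\ell$), its continuity in $a$, and the uniformly negative drift of $\mc{A}'$-actions at large states (Assumption~\ref{as:compactness}.5(b), where monotonicity in $x$ reduces the supremum over $\{x'_\ell\ge K'\}$ to a single strictly negative value), one obtains constants $K''<\infty$ and $\epsilon_0>0$ such that, for every $f\in\mfr{F}$, every $\mu\in\mc{P}(f)$, every coordinate $\ell$, and every $x$ with $x_\ell\ge K''$,
\begin{equation}
\label{eq:compactness-drift}
\sum_{z_\ell}z_\ell\,\mbf{Q}_\ell(z_\ell\ |\ x,\mu(x),f)\le-\epsilon_0.
\end{equation}

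\textbf{Conclusion.} Fix $q>p$ and set $V(x)=\sum_{\ell=1}^d(1+x_\ell)^{q+1}$. Expanding $(1+x_\ell+\xi_\ell)^{q+1}$, using multiplicative separability to handle coordinates one at a time and bounded increments to dominate the lower-order terms, \eqref{eq:compactness-drift} gives, for the kernel under any $\mu\in\mc{P}(f)$ and any $f$,
\begin{equation}
\label{eq:compactness-flp}
\sum_{x'}\mbf{P}(x'\ |\ x,\mu(x),f)\,V(x')-V(x)\le-W(x)+b\,\indic{C}(x),
\end{equation}
where $C=\{x\in\mc{X}:x_\ell<K''\text{ for all }\ell\}$ is finite, $b<\infty$, and $W(x)\ge c\,\norm{x}_q^q$ on $\mc{X}\setminus C$ for some $c>0$, with $b$, $c$, $C$ independent of $f$ and $\mu$. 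Since $W$ has finite sublevel sets, \eqref{eq:compactness-flp} makes the time-averaged occupation measures of the chain tight, so a weak limit is an invariant probability distribution $\eta$, yielding~(i); and integrating \eqref{eq:compactness-flp} against $\eta$ (after the standard truncation justifying $\eta(V)<\infty$) gives $\eta(W)\le b$, hence $\norm{\eta}_{1\mhyphen q}=\sum_x\norm{x}_q^q\,\eta(x)\le b/c=:B$, yielding~(ii).

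\textbf{Main obstacle.} The hard part is the passage to \eqref{eq:compactness-drift}: turning the ``decreasing returns to higher states'' of Assumptions~\ref{as:compactness}.1, \ref{as:compactness}.4 and~\ref{as:compactness}.5 into a drift that is negative \emph{uniformly} in $f$ and in the choice of optimal policy and remains bounded away from $0$ as $\norm{x}_\infty\to\infty$. Two points demand care: (a)~the uniform-in-$f$ decay at infinity of the value-function oscillation, where one must rule out persistent dynamic incentives to grow at arbitrarily large states (this is precisely what Assumption~\ref{as:compactness}.1 delivers); and (b)~propagating uniformity through the interplay of stochastic monotonicity in $x$ and in $a$, so that near-optimal actions---which converge to, but need not lie in, $\mc{A}'$---still inherit the strictly negative drift guaranteed for $\mc{A}'$ by Assumption~\ref{as:compactness}.5(b). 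By contrast, the Foster--Lyapunov moment bound (cf.~\cite{meyn_1993}) and the compactness of the $\norm{\cdot}_{1\mhyphen q}$-sublevel set in the $1\mhyphen p$ norm are standard.
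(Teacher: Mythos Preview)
Your strategy matches the paper's: (i) show that every optimal oblivious action lies near $\mc{A}'$ at large states, uniformly in $f$ (the paper's Lemma~\ref{lem:mulimit}); (ii) deduce uniform negative coordinatewise drift (Lemma~\ref{lem:negincr}); (iii) run Foster--Lyapunov to get existence and a uniform moment bound; (iv) take $\mfr{C}$ to be a sublevel set of a moment one order higher than $p$. The packaging differs in a few places. For~(i), the paper constructs an explicit one-step deviation (play $a'\in\mc{A}'$ at time~$0$, then copy the optimal action sequence thereafter) and compares the two discounted payoff streams directly via a trajectory coupling (Lemma~\ref{lem:coupling}); you instead argue through the Bellman equation by bounding the change in continuation value. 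For~(iii), the paper inducts on the moment order (Lemma~\ref{lem:momentbound}) while you do it in one pass with $V(x)=\sum_\ell(1+x_\ell)^{q+1}$, absorbing the lower-order binomial terms into the leading negative drift; and for existence of an invariant distribution the paper first exhibits a closed communicating class (Lemma~\ref{lem:phinonempty}), whereas you appeal to tightness of time-averaged occupation measures. All of these variants work.

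One point to tighten: the \emph{two-sided} oscillation of $\vTilde^*(\cdot\,|\,f)$ over a cube need not tend to~$0$, since Assumption~\ref{as:compactness}.1 is one-sided (payoffs may strictly decrease at large states). What you actually need---and what suffices once combined with stochastic monotonicity of the kernel in $a$ (Assumption~\ref{as:compactness}.3) to couple the next-state distributions under $a$ and $a'\le a$---is the one-sided bound
$\limsup_{\|x\|_\infty\to\infty}\sup_{f}\sup_{0\le\Delta\le 2M}\bigl[\vTilde^*(x+\Delta\,|\,f)-\vTilde^*(x\,|\,f)\bigr]\le 0$.
Proving even this requires the coupling ingredients of Assumptions~\ref{as:compactness}.3--4 (the paper's Lemma~\ref{lem:coupling}), not only Assumption~\ref{as:compactness}.1 and bounded increments as you cite: without a device that keeps the two trajectories within a bounded distance of each other forever, the single-period payoff bound does not propagate to the infinite discounted sum.
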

Note that the preceding result ensures $\Phi(f) \subset \mfr{C}$ for
{\em all} $f \in \mfr{F}$.  

We conclude this section with a brief comment regarding finite
action spaces, cf. Definition \ref{def:finiteaction}.  The key
observation we make is that {\em if Assumption \ref{as:compactness}
holds with respect to the pure actions---i.e., with $\mc{A}$ replaced
by $S$---then the same result as Proposition \ref{prop:compactness}
holds for mixed actions.}  A nearly identical argument applies to
establish the result.

\subsection{Summary of Results}

 The previous results can be summarized by the following corollary that
 imposes conditions over model primitives to guarantee the existence of
 a light-tailed SE.

 \begin{corollary} \label{co:existence}
 Suppose that (1) Assumption \ref{as:continuity} holds; (2) 
 either the game has a finite action space, or Assumption
   \ref{as:UIC} holds, or Assumption \ref{as:convexity} holds; and (3)
 either the state space $\mc{X}$ is finite, or $\mc{X} = \Z_+^d$
   and Assumption \ref{as:compactness} holds. Then, there exists a SE, and every SE $(\mu,f)$ has $f\in
 \mfr{F}_p$.
 \end{corollary}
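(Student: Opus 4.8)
The plan is to assemble the preceding propositions into the three hypotheses of Theorem~\ref{th:existence} (Kakutani--Fan--Glicksberg) and apply it, extracting the extra ``light-tailed'' conclusion from the fact that our compactness results deliver $\Phi(\mfr{F})\subset\mfr{C}$ rather than only $\Phi(\mfr{C})\subset\mfr{C}$. As a preliminary reduction, in the finite-action-space branch of hypothesis~(2) I would invoke the remarks following Propositions~\ref{prop:finiteconvex} and~\ref{prop:compactness}: since the payoff and transition kernel are linear in the mixing weights, if Assumptions~\ref{as:continuity} and~\ref{as:compactness} hold for the pure actions in $S$ then they hold for the induced mixed-action game as well. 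Hence in every case we may treat Assumption~\ref{as:continuity} (and, when invoked, Assumption~\ref{as:compactness}) as holding over the relevant convex action set.

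Next I would produce the set $\mfr{C}$. If $\mc{X}$ is finite, take $\mfr{C}=\mfr{F}$: it is a finite-dimensional simplex, hence convex and compact in the $\onep$ norm, it is contained in $\mfr{F}_p$, and $\Phi(\mfr{F})\subset\mfr{F}$ trivially; the finite-state-space proposition above also gives $\Phi(f)\neq\emptyset$ for every $f$. If $\mc{X}=\Z_+^d$ and Assumption~\ref{as:compactness} holds, Proposition~\ref{prop:compactness} supplies a convex, compact, nonempty $\mfr{C}\subset\mfr{F}_p$ with $\Phi(\mfr{F})\subset\mfr{C}$ and with $\Phi(f)\neq\emptyset$ for all $f$. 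In both cases hypothesis~(1) of Theorem~\ref{th:existence} holds, because $\Phi(\mfr{C})\subset\Phi(\mfr{F})\subset\mfr{C}$, and $\Phi(f)$ is nonempty for every $f\in\mfr{C}$.

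For the convexity half of hypothesis~(2), $\Phi(f)$ is convex for all $f\in\mfr{F}\supseteq\mfr{C}$ by Proposition~\ref{prop:finiteconvex} in the finite-action branch, and by Proposition~\ref{prop:convexity-new} when Assumption~\ref{as:UIC} or~\ref{as:convexity} holds. For hypothesis~(3), Proposition~\ref{prop:uhc} gives that $\Phi$ has a closed graph on all of $\mfr{F}_p$, and since $\mfr{C}$ is a compact (hence closed) subset of $\mfr{F}_p$, the restriction of the graph to $\mfr{C}$ is again closed. Theorem~\ref{th:existence} then yields a stationary equilibrium $(\mu,f)$ with $f\in\mfr{C}\subseteq\mfr{F}_p$.

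Finally, for the stronger ``every SE is light-tailed'' claim: if $(\mu',f')$ is any SE then $f'\in\Phi(f')$ with $f'\in\mfr{F}$, and since $\Phi(\mfr{F})\subset\mfr{C}\subset\mfr{F}_p$ we get $f'\in\mfr{F}_p$. There is essentially no obstacle beyond careful bookkeeping across the finite/unbounded state-space and finite-action cases; the one point I would emphasize is precisely that the compactness statements hold \emph{uniformly} over all of $\mfr{F}$, which is what turns ``an SE exists'' into ``all SE are light-tailed''---and this uniformity, traceable to the fact that Assumptions~\ref{as:compactness}.1, \ref{as:compactness}.4, and~\ref{as:compactness}.5 are uniform in $f$, is exactly what later underpins the market-structure conclusions.
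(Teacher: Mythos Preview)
Your proposal is correct and follows exactly the approach the paper intends: the corollary is simply a summary assembling Theorem~\ref{th:existence} together with Propositions~\ref{prop:uhc}, \ref{prop:finiteconvex}, \ref{prop:convexity-new}, and~\ref{prop:compactness} (and the finite-state and finite-action remarks). In particular, your emphasis that $\Phi(\mfr{F})\subset\mfr{C}$, not merely $\Phi(\mfr{C})\subset\mfr{C}$, is precisely the point the paper highlights after Proposition~\ref{prop:compactness} to obtain that \emph{every} SE is light-tailed.
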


 As we have discussed and as one can show in the examples in
 Section \ref{sec:examples}, many models of interest satisfy 
 Assumption  \ref{as:continuity} and  Assumptions \ref{as:UIC} or
 \ref{as:convexity} (or, more generally, some condition that
 guarantees uniqueness of the optimal oblivious strategy); see Section
 \ref{sec:discussion}. Hence, if these models have a 
 finite state space, existence of SE follows immediately. If the state
 space is unbounded, the only condition that remains to be checked to
 guarantee existence of SE is Assumption \ref{as:compactness}. As
 discussed in the examples in Section \ref{sec:examples}, this
 condition imposes a form of ``decreasing returns to higher states.''

 We conclude by emphasizing that under the assumptions of the
 existence result all SE have $f\in\mfr{F}_p$; in other words, all the
 resulting SE have a light-tail. In the context of our examples, as
 previously discussed, this implies that all SE yield a fragmented
 market structure.  In addition, the light-tail property, together
 with Assumption~\ref{as:continuity}, will be used in the next section
 to ensure that the AME property holds.





\section{Theory: Approximation}
\label{sec:approx}

In this section we show that under the assumptions of the preceding
section, any SE $(\mu,f)$ possesses the AME property.  
We emphasize that the AME property is essentially a continuity
property in the population state~$f$.  Under reasonable assumptions,
we show that the time $t$ population state in the system with $m$
players, $f_{-i,t}^{(m)}$, approaches the deterministic
  population state $f$ in an appropriate sense 
almost surely for all $t$ as $m \to \infty$; in particular,
  this type of uniform law of large numbers will hold as long as $f$
has tails that are sufficiently light.  If $f_{-i,t}^{(m)}$
approaches $f$ almost surely, then informally, if the payoff satisfies
an appropriate continuity property in $f$, we should expect the AME
property to hold.  The remainder of the section is devoted to
formalizing this argument.

\begin{theorem}[AME]
Suppose Assumption \ref{as:continuity} holds. Let $(\mu, f)$ be a stationary equilibrium with~$f \in \mfr{F}_p$.  Then the AME property holds
for $(\mu, f)$.
\label{th:AME}
\end{theorem}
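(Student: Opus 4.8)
The plan is to anchor both value functions in \eqref{eqn:ame-def} to the single quantity $\vTilde^*(x\mid f)$ and then invoke the equilibrium condition $\mu\in\mc{P}(f)$. Write $A_m = V^{(m)}\big(x, f^{(m)}\mid \mu_m, \muVec^{(m-1)}\big)$ and $B_m = V^{(m)}\big(x, f^{(m)}\mid \muVec^{(m)}\big)$. I would establish two facts, holding almost surely with respect to the initial sampling: (i) $\limsup_{m\to\infty} A_m \le \vTilde^*(x\mid f)$ --- no cognizant deviation can asymptotically beat the optimal oblivious payoff against the conjectured population state; and (ii) $B_m \to \vTilde(x\mid\mu,f)$ --- conforming to $\mu$ recovers the oblivious value in the limit. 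Since $(\mu,f)$ is an SE, $\mu\in\mc{P}(f)$, hence $\vTilde(x\mid\mu,f) = \vTilde^*(x\mid f)$; then $\limsup_m (A_m - B_m) \le \limsup_m A_m - \liminf_m B_m \le \vTilde^*(x\mid f) - \vTilde^*(x\mid f) = 0$, which is exactly AME.

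The engine behind (i) and (ii) is a law of large numbers for the realized population state, which I would prove as a lemma: \emph{under Assumption~\ref{as:continuity}, if the $m-1$ players other than $i$ use the oblivious strategy $\mu$, their initial states are drawn i.i.d.\ from $f\in\mfr{F}_p$, and $f$ is invariant for the oblivious $\mu$-dynamics at population state $f$, then for every fixed $t$ one has $\|f_{-i,t}^{(m)} - f\|_{\onep}\to 0$ almost surely as $m\to\infty$, and this convergence is unaffected by whatever strategy player $i$ employs.} The proof is by induction on $t$. For $t=0$ it is a strong law for the $\onep$-weighted empirical measure of i.i.d.\ draws from $f$, and finiteness of $\|f\|_{\onep}$ is precisely the integrability needed --- this is where the light-tail hypothesis $f\in\mfr{F}_p$ enters. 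For the inductive step I would couple the true $m$-player dynamics (with player $i$ deviating) to an idealized ensemble in which each of the $m-1$ conforming players independently follows the oblivious $\mu$-chain with population frozen at $f$; in the idealized ensemble the time-$t$ states are i.i.d.\ with marginal $f$ by invariance, so its empirical measure converges by the strong law. The discrepancy between true and idealized dynamics comes from (a) evaluating each conforming player's kernel at $f_{-j,t}^{(m)}$ rather than $f$, controlled by the joint continuity of $\mbf{P}(\cdot\mid x,a,f)$ in $f$ under the $\onep$-norm (Assumption~\ref{as:continuity}.4) together with the inductive hypothesis, and (b) the deviator $i$, whose empirical weight is $O(1/m)$ and whose state is confined to $\|x\|_\infty + Mt$ by bounded increments (Assumption~\ref{as:continuity}.2), hence contributes $O\big((\|x\|_\infty+Mt)^p/m\big)\to 0$ in $\onep$-norm. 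Bounded increments also propagate the uniform integrability of $\|x_{j,t}\|_p^p$ from $t$ to $t+1$; since the state space is discrete and increments are bounded, the coupling can be arranged so the true and idealized time-$(t+1)$ profiles coincide with probability tending to one.

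Granting the lemma, the passage to value functions is routine but requires taming the infinite horizon. By bounded increments any state reachable at time $t$ satisfies $\|x_{i,t}\|_\infty\le\|x\|_\infty + Mt$, so by the polynomial growth bound (Assumption~\ref{as:continuity}.3) the discounted per-period payoff is dominated, uniformly in $m$ and over all strategies, by $\beta^t K(1+\|x\|_\infty+Mt)^n$, which is summable; hence it suffices to match finite-horizon truncations and let the horizon grow. On a finite horizon I would iterate the lemma: given $f_{-i,s}^{(m)}\to f$ for $s\le t$, the joint continuity of $\pi$ and $\mbf{P}$ in $(a,f)$ lets me replace, up to vanishing error, the realized kernel and payoff at time $t$ by their population-$f$ counterparts. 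For $B_m$ this shows player $i$'s trajectory and rewards converge to those of the oblivious $\mu$-chain with population $f$, giving $B_m\to\vTilde(x\mid\mu,f)$, i.e.\ (ii). For $A_m$, whatever action sequence $\mu_m$ induces along the realized path, that same adapted sequence is a feasible (history-dependent) control for the single-agent MDP with population frozen at $f$, whose value function $\vTilde^*(\cdot\mid f)$ dominates all such controls (it is attained by an oblivious policy; cf.\ the discussion preceding Definition~\ref{def:P} and Lemma~\ref{lem:bellman}). Adding the coupling and continuity errors, which vanish, yields (i).

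The main obstacle is the LLN lemma, specifically propagating $\onep$-norm (not merely weak) convergence through time with two sources of dependence active: the $m-1$ conforming players are coupled to one another through the common empirical state entering their transitions, and the cognizant deviator's own state feeds, with weight $1/(m-1)$, into everyone's kernel. The $\onep$-weighting forces the argument beyond a bare propagation-of-chaos statement: one must also show rare large states contribute negligibly to the weighted empirical mass, which is exactly what $f\in\mfr{F}_p$ plus bounded increments delivers via uniform integrability of $\|x_{j,t}\|_p^p$ across $m$. A secondary, bookkeeping-level issue is the interplay between the ``almost surely'' over the initial i.i.d.\ sampling and the expectations over transition noise inside $V^{(m)}$; I would handle it by realizing all players' initial states on a single probability space as one infinite i.i.d.\ $f$-sequence, conditioning on the full-measure event where the weighted strong laws hold, and then establishing convergence of the noise-averaged finite-horizon payoffs on that event using the dominating summable bound.
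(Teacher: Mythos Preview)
Your proposal is correct and mirrors the paper's proof closely: the same decomposition anchoring $A_m$ and $B_m$ to $\vTilde^*(x\mid f)$ via the SE condition, the same inductive law-of-large-numbers lemma for $\|f_{-i,t}^{(m)}-f\|_{\onep}\to 0$ (the paper's Lemma~\ref{lem:convergence-of-f}), and the same finite-horizon truncation via the polynomial growth bound (Lemma~\ref{lem:supbound}). Your ``same adapted sequence is feasible for the frozen-$f$ MDP'' argument for (i) is exactly the paper's coupled-player construction (equations~\eqref{eqn:tm-coupled-player}--\eqref{eqn:cd-vhat-ub}), and the only stylistic difference is that the paper proves the LLN step directly via a triangular-array Bernoulli SLLN (Lemma~\ref{lem:slln}) rather than by coupling to an idealized independent ensemble.
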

Observe that Assumption \ref{as:continuity} is also required for the existence of SE that
satisfy $f \in \mfr{F}_p$.  In this sense, under our assumptions, {\em
  the AME property is a direct consequence of existence.}  This
relationship between existence and the AME property is a significant
insight of our work.

The proof of the AME property exploits the fact that the $\onep$-norm of
$f$ must be finite (since~$f \in \mfr{F}_p$) to show that
$\norm{f_{-i,t}^{(m)} - f}_{\onep} \to 0$ almost surely as $m \to
\infty$; i.e., the population state of other players approaches $f$
almost surely under an appropriate norm.  Continuity of the payoff $\pi$ in $f$, together with
the growth rate bounds in Assumption \ref{as:continuity}, yields the desired
result.

In practice, the light-tail condition---i.e., the requirement that $f \in
\mfr{F}_p$---ensures that an agent's state rarely becomes too large 
under the invariant distribution $f$ associated with the dynamics
\eqref{eq:oe_dynamics}. \cite{weintraub_2010} provide a
similar result in a dynamic industry model with entry and exit. Our
result, on the other hand, is more general in terms of {the definition
of the AME property}, as well as the payoff
functions and transition kernels considered. In particular, we allow
for dependence of the transition kernel on the population
state. {This necessitates a significantly different proof
  technique, since agents' states do not evolve independently in general.}  
We note that the light-tail condition
is consequential, as it is possible to construct examples for which
stationary equilibria exist, but $f\notin \mfr{F}_p$ and the AME
property does not hold \citep{weintraub_2010}.

We conclude by noting that in many models of interest it is more reasonable to assume that the
payoff function explicitly depends on the number of agents. To study
these environments, we consider a sequence of payoff
functions indexed by the number of agents, $\pi_m(x,a,f)$. Here, the
profit function~$\pi$ is a {\em limit}: $\lim_{m\to\infty}
\pi_m(x,a,f)=\pi(x,a,f)$.  (See Section \ref{sec:examples} for concrete
examples.) In this case, if the number of players is $m$, the actual expected net present value is defined with $\pi_m$; hence, the payoff function in the AME property depends on $m$. In Appendix~\ref{se:sequence} we show that under a strengthening of Assumption \ref{as:continuity}, Theorem  \ref{th:AME} can be generalized to this setting.

\ignore{

 \subsection{Approximation: Sequence of Payoff Functions}
 \label{se:sequence}

 In many models of interest it is more reasonable to assume that the
 payoff function explicitly depends on the number of agents. To study
 these environments, in this section we consider a sequence of payoff
 functions indexed by the number of agents, $\pi_m(x,a,f)$. Here, the
 profit function~$\pi$ is a {\em limit}: $\lim_{m\to\infty}
 \pi_m(x,a,f)=\pi(x,a,f)$. See Section \ref{sec:examples} for concrete
 examples.

 In this case, the actual expected net present value of a player using a cognizant strategy $\mu'$ when every other of the  $m-1$ players uses an oblivious strategy $\mu$ is given by equation \eqref{eqn:mpe-value-func}, but where~$\pi$ is replaced by $\pi_m$. That is, if the number of players is $m$, the payoff obtained each period is given by $\pi_m$. Hence, with some abuse of notation, for this section, we define:
 \begin{multline}
 \label{eqn:mpe-value-func-seq}
 V^{(m)}\big(x, f\ |\ \mu', \muVecmone \big) \triangleq\\
 \E\Big[\sum_{t =0}^{\infty}\beta^{t}\pi_m\big(x_{i,t}, a_{i,t},\fmi\big) \ \big| \
 x_{i,0} = x, f_{-i,0}^{(m)} = f;
 \mu_{i} = \mu', \muVeci = \muVecmone \Big].
 \end{multline}

 We generalize Theorem \ref{th:AME} for this setting. First, we need to strengthen Assumption \ref{as:continuity}.

 \begin{assumption} \label{as:continuity-S}
 For each $m\in \Z_+$, Assumption \ref{as:continuity} holds, with the
 following strengthened properties.
 \begin{enumerate}
 \item {\em Equicontinuity.}  The set of functions $\{\pi_m(x, a, f): m\in\Z_+\}$
   is jointly equicontinuous in $a \in \mc{A}$ and $f \in \mfr{F}_p$.
 \item {\em Uniform growth rate bound.}  There exist constants $K$ and $n \in \Z_{+}$ such
   that $\sup_{m\in\Z_+,a \in \mc{A},f \in \mfr{F}_p} | \pi_m(x,a,f)| \leq K
 (1 + \norm{x}_\infty)^{n}$ for every $x \in \mc{X}$.
 \end{enumerate}
 \end{assumption}

 The following result is more general than Theorem \ref{th:AME}, because the payoff function in the AME property depends on $m$.

 \begin{theorem}[AME]
 Suppose Assumption \ref{as:continuity-S} holds. Let $(\mu, f)$ be a stationary equilibrium with $f \in \mfr{F}_p$.  Then the AME property holds
 for $(\mu, f)$.
 \label{th:AME-S}
 \end{theorem}

 The proof is similar to Theorem \ref{th:AME}, but requires an
 additional step to accommodate the sequence of payoff
 functions. However, note that similar to Theorem \ref{th:AME}, the
 stationary equilibrium $(\mu,f)$ is fixed and is computed with the
 limit payoff function $\pi$. Alternatively, it is possible to
     define an ``oblivious equilibrium'' (OE) for each finite model. An
     OE is similar to SE in the sense that agents optimize assuming
     that the long run population state is constant; the main
     difference is that it is defined in a finite model rather than in the
     limit model. Under a {\em
       uniform} light-tail condition, it can be shown that the sequence
     of OE satisfies the AME property
     \cite{weintraub_2008}. In addition, we conjecture
     that a version of the assumptions that guarantee existence of SE
     in Section \ref{sec:exist}, but that applies uniformly over all
     finite models, would guarantee that such a uniform light-tail
     condition holds. For clarity of presentation, we chose to work
     with the SE of the limit model directly.

 Moreover, we believe that
     the existence result for the limit model that we provide is
     important, because even though OE might exist under mild conditions for each finite
     model, SE in the limit model may fail to exist.  In particular, as
     we discuss in Section \ref{sec:examples}, this might be the case
     in applications that exhibit ``increasing returns to scale''.  See
     in particular Sections \ref{subsec:RD}, \ref{subsec:spillovers},
     \ref{subsec:learning}, and \ref{subsec:SC} for examples of how
     limit models are derived in specific applications, and also
     conditions in such models that ensure stationary equilibria provide
     accurate approximations.

}


\section{Examples Revisited}
\label{sec:discussion}

In this section we revisit each of the examples presented in
 Section \ref{sec:examples} and show that {\em all} the
  propositions for these examples are consequences of
Corollary \ref{co:existence} and Theorem~\ref{th:AME}.  This
establishes the key connection in the paper between existence of SE
and the AME property on one hand, and the impact of model primitives
on market structure on the other hand. In particular, our conditions
over model primitives imply that all SE are light-tailed, and
therefore, in all SE the industry yields a fragmented market
structure, and the AME property is satisfied. 

Formally, recall that the conditions required to establish the main
 results of this paper are Assumption~\ref{as:continuity} (used to
 ensure continuity properties); Assumption~\ref{as:UIC} and/or
 \ref{as:convexity} (used to ensure convexity of the image of $\Phi$);
 and Assumption~\ref{as:compactness} (used to ensure the existence of a
 compact subset $\mfr{C} \subset \mfr{F}$ such that  $\Phi(\mfr{C})
 \subset \mfr{C}$).  Of these properties, continuity and convexity are typically
 straightforward to guarantee in each of the models we consider below.
 Thus we primarily focus on the role of Assumption~\ref{as:compactness}.

\subsection{Dynamic Oligopoly Models}
\label{sec:discussion_DO}
In this section, we provide the proof of Proposition~\ref{prop:DO}.  Note that the payoff function depends on the
 distribution~$f$ via its $\theta_1$ moment, and hence we endow the set of
 distributions with the topology induced by the~$\onep$ norm
 with~$p = \theta_1$.  Since the payoff is continuous and nonincreasing in the $\theta_1$ moment of $f$, and
 the transition kernel is independent of $f$, it is straightforward to
 check that Assumption \ref{as:continuity} holds.  In addition, \cite{DORSAT03}
 show that the transition kernel of this model satisfies Assumption
 \ref{as:UIC} (it can also be shown that Assumption~\ref{as:convexity}
 is satisfied).

 Thus the desired result is reduced determining
 whether Assumption \ref{as:compactness} holds.  It is straightforward
 to check that Assumptions \ref{as:compactness}.2-4 hold; we omit the
 details.  Assumption \ref{as:compactness}.5 holds because positive
 drift is costly, as the kernel defined above exhibits depreciation; in
 particular, it suffices to set $\mc{A}' = \{0\}$.  Thus the central
 condition to check in this model is Assumption \ref{as:compactness}.1.
 This assumption holds {\em if and only if $\theta_1 
   < 1$}: in this case, $\sup_{a, f} \pi(x+\Delta, a, f) - \pi(x, a,
 f) \to 0$ as  $x \to \infty$ for all $\Delta > 0$.  Using Corollary
 \ref{co:existence} and Theorem \ref{th:AME}, the result follows.
 
Thus, existence of SE and the AME property are closely tied to the parameter
$\theta$ which represents how much consumers value the quality of the product.
For $\theta < 1$, the firms have decreasing marginal returns in their payoff 
from increasing their state. This ensures that the Markov process associated 
with a single firm optimization process is stable which in turn ensures that the
range of $\Phi$ is compact. As discussed earlier, this condition
leads to a natural separation between industries where we expect to
see a fragmented market structure and the industries where market concentration
is likely to result in the limit.

\subsection{Dynamic Oligopoly Models with Positive Spillovers}
\label{sec:discussion_DOspill}

In this section we provide the proof of Proposition~\ref{prop:DOspill}. 
Assumption \ref{as:continuity} and \ref{as:UIC} follow as in the
 preceding result; the proof is omitted.  Again we focus on Assumption
 \ref{as:compactness}.  Assumption \ref{as:compactness}.1,
 \ref{as:compactness}.2, \ref{as:compactness}.3, and the first part of
 \ref{as:compactness}.4 hold as before; we omit the details.
 The key assumptions that we need to verify are thus the
 second part of Assumption \ref{as:compactness}.4, and Assumption
 \ref{as:compactness}.5.

 Observe that the maximum possible value of the effective investment
 when a firm takes action~$a$ is $e_{\max}(a) \triangleq a + \gamma \sup_y
 \zeta(y)$.  A straightforward calculation yields:
 \begin{align}
 \sup_f \sum_z z \mbf{Q}(z | x, a,f ) &= (1-
 \delta)\left(\frac{\alpha e_{\max}(a)}{1+\alpha e_{\max}(a)}\right) - \delta
   \left( \frac{1}{1 + \alpha e_{\max}(a)}\right)\\
 & = \frac{\alpha e_{\max}(a)}{1 +
     \alpha e_{\max}(a)} - \delta. \label{eq:spilloversdrift}
 \end{align}
 It
 follows from the definition of the transition kernel that the second
 part of Assumption \ref{as:compactness}.4
 holds.  
 In order for
 Assumption~\ref{as:compactness}.5 to hold with $\mc{A}' = \{0\}$, it
 follows that we need:
 \[
 \gamma < \frac{\delta}{(1-\delta) \alpha \sup_y \zeta(y)}
   \]
 Using Corollary
 \ref{co:existence} and Theorem \ref{th:AME}, we conclude that the
 result of the proposition follows if \eqref{eq:spillovers} holds.

For industries with spillovers, the compactness assumptions requires that the spillover 
effect is not too large relative to depreciation. This along with decreasing marginal returns
in the payoff ensures that the firms do not have unbounded growth in
their state.  As a result, the market structure becomes fragmented in
the limit of a large number of firms.


\subsection{Learning-By-Doing}
\label{sec:discussion_LbD}

In this section, we provide the proof of Proposition~\ref{prop:LbD}.
Since $P$ is decreasing and $C(x,s)$ is decreasing
 in~$x$, Assumption~\ref{as:continuity} follows in a straightforward
 manner in this model, as long as $P$ is continuous.  Since this is a
 model with finite action spaces, the result of Proposition
 \ref{prop:finiteconvex} also applies.  Thus, as before, the proof is
 reduced to determining 
 whether Assumption \ref{as:compactness} holds for the given model.  As
 in the preceding examples, it is straightforward to check that
 Assumptions \ref{as:compactness}.2, \ref{as:compactness}.3, and
 \ref{as:compactness}.4 hold.  Note that $\pi(x+\Delta, s, f) - \pi(x,s,f)
 = C(x,s) - C(x+\Delta,s)$ for all $x,s,f$ and $\Delta \geq 0$, and the
 action space is finite.  Thus Assumption \ref{as:compactness}.1
 follows since $C(x,s)$ is nonnegative, decreasing, and convex in $x$.

 Therefore, our focus turns to Assumption
 \ref{as:compactness}.5.   
  Using standard supermodularity arguments, it is simple to
 check that any $s$ that maximizes $\pi(x,s,f)$ for some $x,f$ is
 contained in the interval $[0,s^*]$.  In particular, then, suppose
 that for all sufficiently large $x$ and all actions $s \in
   [0,s^*]$, we have $\sum_z z \mbf{Q}(z | x, s) < 0$.  Then
 Assumption \ref{as:compactness}.5 holds, so using Corollary
 \ref{co:existence} and Theorem \ref{th:AME} the result follows.

 In learning-by-doing models, compactness of the image of~$\Phi$ is ensured by 
 requiring that the transition kernel exhibits decreasing returns to
 higher states.  In other words, if the productivity gains induced by
 learning-by-doing are reduced 
 at larger states, light-tailed SE will exist and the AME property will hold.
 As discussed earlier this is consistent with the observation that a
 very strong learning-by-doing effect (that persists even at large
 scale) will likely lead to market concentration.

\ignore{
This paper considered stationary equilibria of dynamic games with many
players. Our main results provide a parsimonious set of assumptions on
the model primitives which ensure that SE exist in a large variety of
games. We also showed that the same set of assumptions ensure that SE
is a good approximation to MPE in large finite games.

Our results can
be succinctly summarized by the following corollary that imposes conditions over
model primitives to guarantee the existence of light-tailed SE and the
AME property.

\begin{corollary} 
\label{co:existence}
Suppose that {(1) Assumption \ref{as:continuity} holds; (2) 
either the game has a finite action space, or Assumption
  \ref{as:UIC} holds, or Assumption \ref{as:convexity} holds; and (3)
either the state space $\mc{X}$ is finite, or $\mc{X} = \Z_+^d$
  and Assumption \ref{as:compactness} holds.} Then, there exists a SE, and every SE $(\mu,f)$ has $f\in
\mfr{F}_p$. Furthermore, the AME property holds for {all SE}. 
\end{corollary}

{Our theoretical analysis provides a unifying thread: {\em all} the
  propositions 
for the examples in Section \ref{sec:examples} are consequences of
Corollary \ref{co:existence}.  In the Corollary, condition (1) is
used to guarantee continuity properties; condition (2) is used to
ensure the convexity of the image of $\Phi$; and condition (3) is used
to ensure the existence of a compact subset  
$\mfr{C} \subset \mfr{F}_p$ such that  $\Phi(\mfr{F}) \subset
\mfr{C}$.

We now briefly sketch how these conditions are verified for the
propositions in Section \ref{sec:examples}.

{\em Continuity properties}.  In all the examples, verifying Assumption
\ref{as:continuity} is fairly straightforward, through an appropriate
choice of $\onep$-norm.  For example, for the dynamic oligopoly model
in Section \ref{subsec:RD}, note that the payoff function depends on the
distribution~$f$ via its $\theta_1$ moment, and hence we endow the set of
distributions with the topology induced by the~$\onep$ norm
with~$p = \theta_1$. Since the payoff is continuous and nonincreasing in the $\theta_1$ moment of $f$, and
the transition kernel is independent of $f$, it is straightforward to
check that Assumption \ref{as:continuity} holds.  Verification for the
other examples proceeds in a similar manner.

{\em Convexity of the image of $\Phi$.}  This is straightforward to
verify in each example; e.g., the examples in Sections \ref{subsec:RD} and
\ref{subsec:spillovers} satisfy Assumption \ref{as:UIC}; the example in Section
\ref{subsec:consumer-learning} satisfies Assumption \ref{as:convexity}; and the examples
  in Sections \ref{subsec:learning} and \ref{subsec:SC} have finite action spaces.

{\em Compactness and ``decreasing returns.''}  The main technical
difficulty arises in ensuring existence of a compact set that contains
the image of $\Phi$.  Indeed, the conditions in the propositions of
Section~\ref{sec:examples} are in fact used to ensure
Assumption~\ref{as:compactness} holds.  As discussed in Section~\ref{sec:examples}, 
these conditions impose a form of ``decreasing returns to higher
states.''

To illustrate this point consider the dynamic oligopoly model in
Section~\ref{subsec:RD}.  It is straightforward
to check that Assumptions \ref{as:compactness}.2 to \ref{as:compactness}.4 hold; we omit the
details.  Assumption \ref{as:compactness}.5 holds because investment
is costly and there is depreciation; in 
particular, it suffices to set $\mc{A}' = \{0\}$.  Thus the central
condition to check in this model is Assumption \ref{as:compactness}.1.
This assumption holds {\em if and only if $\theta_1 
  < 1$}: in this case, $\sup_{a, f} \pi(x+\Delta, a, f) - \pi(x, a,
f) \to 0$ as  $x \to \infty$ for all $\Delta > 0$.  Thus the result in
Proposition~\ref{prop:DO} follows. 

The conditions in the other propositions exhibit similar connections
to Assumption \ref{as:compactness}.  The analysis in the model with
spillovers (Section \ref{subsec:spillovers}) is similar to the previous
paragraph, but the additional condition in
Proposition \ref{prop:DOspill} ensures that for zero investment, the
drift eventually becomes negative even in the presence of
spillovers---thus guaranteeing Assumption
\ref{as:compactness}.5.  Analogously, the condition in Proposition \ref{prop:LbD} for
the learning-by-doing model (Section \ref{subsec:learning}) implies Assumption
\ref{as:compactness}.5.   We refer the reader to \cite{techrep} for
further details.
}

}

\delrj{We conclude by noting several extensions that can be developed for the models described here. {First, a natural extension, particularly relevant
for dynamic oligopoly models, would be to consider a scenario where
firms make entry and exit decisions endogenously in
equilibrium. We conjecture that under some mild additional
assumptions our results would extend to this setting. In addition, we conjecture that our results 
can also be extended to nonstationary versions of an equilibrium concept based on averaging effects that could be used to approximate transitional
  short-run dynamics as oppose to long-run behavior. We leave these directions for future research.}
}    

\section{Conclusions}
\label{sec:conclusions_new}

This paper considered stationary equilibrium in dynamic games with
many players.  Our main results provide a parsimonious set of
assumptions on the model primitives which ensure that a stationary
equilibrium exists in a large variety of games.  We also showed
that the same set of assumptions ensure that SE yield fragmented
market structures, and is a good 
approximation to MPE in large finite games. Through a set of examples,
we illustrate that our conditions on model primitives can be naturally
interpreted as enforcing ``decreasing returns to higher states.''  

We conclude by noting several extensions that can be developed for the models described here.

\begin{enumerate}

\item {\em Entry and exit.} A natural extension, particularly relevant
  for dynamic oligopoly models, would be to consider a scenario where
  agents (i.e., firms) make entry and exit decisions endogenously in
  equilibrium. We conjecture that under some mild additional
  assumptions our results would extend to this setting.

\item {\em Connections between SE and oblivious equilibrium in finite
    models.} In some contexts, particularly in empirical settings, it
  may be more appropriate to work over a model with a finite number of
  agents. In these cases, as discussed in Section~\ref{se:sequence},
  it is possible to define an ``oblivious equilibrium'' for finite
  models \citep{weintraub_2008}. We conjecture that under some additional technical
  conditions over the model primitives we can prove that a sequence of
  OE satisfies the AME property.

\item {\em Nonstationary equilibrium.} Our focus was on SE because it
  is of practical interest and has received significant attention in
  the literature. We conjecture, however, that our results can be
  extended to nonstationary versions of an equilibrium concept based
  on averaging effects that could be used to approximate transitional
  short-run dynamics as oppose to long-run behavior.
\end{enumerate} 

We leave these directions for future research.

\bibliographystyle{ormsv080}
\bibliography{OE-TR}

\appendix

\section{Extensions to the Basic Model}
\label{appendix-extension}


\subsection{Heterogeneous Players}

In this section, we study anonymous stochastic games with ex-ante heterogeneous
players. To represent this heterogeneity, at the beginning of the
game, a player is assigned a {\em type} (denoted by $\theta$)
that stays fixed for the entire duration of the game. For simplicity,
we assume that the players'
types are randomly and independently drawn out of a {\em finite set}
$\Theta$ with a common prior distribution~$\Gamma$.   Let $\mbf{P}(
\cdot | x, a, f; \theta)$ and $\pi(x, a, f; \theta)$
denote the transition kernel and payoff of a type $\theta$ player.

To analyze a stochastic game with heterogeneous players, we define a
new state as follows. Let  $\xhat = (x, \theta)$ be
an extended state; if a player's extended state is $\xhat$, we interpret it to
mean that the player is in state $x$ and has a type $\theta$.  We let
$\mc{\hat{X}} = \mc{X} \times \Theta$ denote the expanded state
space.  Let $\hat{f}$ denote a population state over the expanded
state space, i.e., $\hat{f}$ is a distribution over $\mc{\hat{X}}$.
Given $\hat{f}$, we define $F(\hat{f}) \in \mfr{F}$ by:
\[ F(\hat{f})(x) = \sum_{\theta} \hat{f}(x, \theta). \]
We have the following two definitions:
\begin{align*}
 \hat{\pi}(\hat{x}, a, \hat{f}) &= \pi(x, a, F(\hat{f}); \theta);\\
\hat{\mbf{P}}(\hat{x}' | \hat{x}, a, \hat{f}) &= \left\{
\begin{array}{ll}
0,&\ \text{if}\ \theta' \neq \theta;\\
\mbf{P}(x' | x, a, F(\hat{f}); \theta),&\ \text{if}\ \theta' = \theta.
\end{array}
\right.
\end{align*}
These definitions simply map the payoff and transition kernel with
respect to the extended state back to the payoff and transition kernel
in the original game.  Now observe that in the new game defined in
this way, it can be verified that if the original game satisfied
Assumptions \ref{as:continuity}, \ref{as:UIC} or \ref{as:convexity}, and~\ref{as:compactness} for each $\theta$, then the extended game
satisfies the same assumptions as well.  Thus all our preceding
results apply even in games with heterogeneous players. Because
strategies are a function of the extended state,  in this case
players of different types will use different strategies.

\subsection{Coupling Through Actions} \label{se:couple actions}

In the main development of this paper, we considered anonymous
stochastic games where players are coupled to each other via the
population state as defined in equation~\eqref{eqn:actual-dist}; note,
in particular, that the population state gives the fraction of players
at each state.  As discussed in the Introduction, however, in many
models of interest the transition kernel and payoff of a player may
depend on both the current state and {\em current actions} of other
players.  In particular, the example in Section
\ref{subsec:learning} is a model where players are
coupled through their actions.

To formally model such a scenario, we consider an~$m$ player
stochastic game being played in discrete time over the infinite
horizon, where again the payoff and transition kernel of a player are
denoted {by} $\pi(x, a,f)$ and $\mbf{P}(\cdot | x, a, f)$
respectively.\footnote{For the purposes of this subsection we assume
  players are homogeneous.}  However, we now assume that $f$ is a
distribution over both states and actions.  We refer to $f$ as the
{\em population state-action profile} (to distinguish it from just the
population state, which is the marginal distribution of $f$ over
$\mc{X}$).  For simplicity, since the prior development in this paper
assumes state spaces are discrete, for the purposes of this subsection
we restrict attention to a game with a {\em finite} action space $S
\subset \Z^k$, cf. Definition \ref{def:finiteaction}; in particular,
we assume that players maximize payoffs with respect to randomized
strategies over~$S$.  Thus the population state-action profile is a
distribution over $\mc{X} \times S$.

We again let $x_{i,t} \in \mc{X}$ be the state of player~$i$ at
time~$t$, where $\mc{X} \subseteq \Z^{d}$. We let~$s_{i,t} \in S$ be
the (pure) action taken by player~$i$ at time~$t$.  Let $\fmi$ denote
the empirical population state-action profile at time $t$ in an
$m$-player game; in other
words, $f_{i,t}^{(m)}(x,s)$ is the fraction of players other than~$i$
at state $x$ who play $s$ at time $t$.  With these definitions,
$x_{i,t}$ evolves according to the transition kernel $\mbf{P}$ as before, i.e.,
$x_{i,t+1} \sim \mbf{P}(\cdot | x_{i,t}, a_{i,t}, f_{-i,t}^{(m)})$.

A player acts to maximize their expected discounted payoff, as before.
Note that a potential challenge here is that a player's time $t$
payoff and transition kernel depend on the actions of his competitors, which are chosen
{\em simultaneously} with his own action.  Thus to evaluate the time~$t$ expected  payoffs and transition kernel, a player must take an expectation with respect to
the randomized strategies employed by his competitors.

Our first step is to extend the appropriate assumptions to this game
model.  Let $\mfr{F}$ now denote the set of all distributions over $\mc{X}
\times S$, and let $\mfr{F}_p$ denote the set of all distributions in
$\mfr{F}$ with finite $\onep$-norm as before.  Assumptions
\ref{as:continuity} and \ref{as:compactness} thus extend naturally to
games with coupling through actions, with these new interpretations of
$\mfr{F}$ and $\mfr{F}_p$.

The AME property continues to hold for games with coupling through
actions.  Recall that in the proof of Theorem \ref{th:AME}, we
establish that if $(\mu,f)$ is a stationary equilibrium, then
$\|f_{-i,t}^{(m)} - f\|_{\onep} \to 0$ almost surely for all $t$, if
players' initial states are sampled independently from $f$, all
players other than $i$ follow strategy $\mu$, and player $i$ follows
any strategy.  (See Lemma \ref{lem:convergence-of-f} in the Appendix.)
In a game with coupling through actions, $f_{-i,t}^{(m)}$ also tracks
the empirical distribution of players' actions.  However, since all
players other than~$i$ use the same oblivious strategy $\mu$, and
since the base action space $S$ is finite, it is straightforward to
extend the argument of Lemma~\ref{lem:convergence-of-f} to the current
setting. The remainder of the existing proof of Theorem~\ref{th:AME}
carries over essentially unchanged under Assumption~\ref{as:continuity}; for brevity we omit the details.

Next, recall that to prove existence of a stationary equilibrium, we consider two maps: $\mc{P}(f)$ (which identifies
the set of optimal oblivious strategies given $f$), and $\mc{D}(\mu,f
)$ (which identifies the set of invariant distributions of the Markov
process induced by $\mu$ and $f$).  The analysis of $\mc{P}(f)$
proceeds exactly as before (but with randomized strategies, as
  discussed in Section \ref{ssec:finiteaction}).  However, in a game
with coupling through 
actions, we redefine $\mc{D}(\mu, f)$ to be the
set of invariant distributions over $\mc{X} \times S$ induced by $\mu$
and $f$.  In other words, $f' \in \mc{D}(\mu, f)$ if and only if there
exists a distribution $g$ over $\mc{X}$ such that the following two conditions hold:
\begin{align*}
g(x') &= \sum_x g(x) \mbf{P}(x' | x, \mu(x),f),\ \text{for all}\ x';\\
f'(x, s)  &= g(x) \cdot \mu(x)(s),\ \text{for all}\ x,s.
\end{align*}
Note that here $\mu(x)(s)$ is the probability assigned to
pure action~$s$ by the randomized strategy $\mu$ at state~$x$.
The first equation requires that $g$ is an invariant distribution of
the state Markov process induced by~$\mu$ and~$f$ (recall Definition~\ref{def:finiteaction} of the transition kernel with mixed
actions).  The second 
equation requires $f'$ to be derived from $g$ in the natural way, via $\mu$.
As before, we let $\Phi(f) = \mc{D}(\mc{P}(f), f)$.

It is now straightforward to show that if Assumption~\ref{as:continuity} holds, then the result of Proposition~\ref{prop:uhc} holds, i.e.,~$\Phi$ has a closed graph.  Further, if Assumptions~\ref{as:continuity} and~\ref{as:compactness} hold, then
the result of Proposition~\ref{prop:compactness} holds as well.  From this and the result in Proposition~\ref{prop:finiteconvex} we conclude that under those assumptions, a stationary equilibrium
exists, and all SE are light-tailed (i.e., have finite $\onep$ norm).  The arguments involved are analogous to the existing proofs,
and we omit the details.

We conclude by commenting on the restriction that the action space
must be finite.  From a computational standpoint this is not very
restrictive, since in many applications discretization is required or
can be used efficiently.  From a theoretical standpoint, we can
analyze games with general compact Euclidean action spaces using
techniques similar to this paper, at the expense of additional
measure-theoretic complexity, since now the population state-action
profile is a measure over a continuous extended state space.

 \section{Approximation: Sequence of Payoff Functions}
 \label{se:sequence}

 In many models of interest it is more reasonable to assume that the
 payoff function explicitly depends on the number of agents. To study
 these environments, in this section we consider a sequence of payoff
 functions indexed by the number of agents, $\pi_m(x,a,f)$. Here, the
 profit function~$\pi$ is a {\em limit}: $\lim_{m\to\infty}
 \pi_m(x,a,f)=\pi(x,a,f)$. See Section \ref{sec:examples} for concrete
 examples.

 In this case, the actual expected net present value of a player using a cognizant strategy $\mu'$ when every other of the  $m-1$ players uses an oblivious strategy $\mu$ is given by equation \eqref{eqn:mpe-value-func}, but where~$\pi$ is replaced by $\pi_m$. That is, if the number of players is $m$, the payoff obtained each period is given by $\pi_m$. Hence, with some abuse of notation, for this section, we define:
 \begin{multline}
 \label{eqn:mpe-value-func-seq}
 V^{(m)}\big(x, f\ |\ \mu', \muVecmone \big) \triangleq\\
 \E\Big[\sum_{t =0}^{\infty}\beta^{t}\pi_m\big(x_{i,t}, a_{i,t},\fmi\big) \ \big| \
 x_{i,0} = x, f_{-i,0}^{(m)} = f;
 \mu_{i} = \mu', \muVeci = \muVecmone \Big].
 \end{multline}

 We generalize Theorem \ref{th:AME} for this setting. First, we need to strengthen Assumption \ref{as:continuity}.

 \begin{assumption} \label{as:continuity-S}
 For each $m\in \Z_+$, Assumption \ref{as:continuity} holds, with the
 following strengthened properties.
 \begin{enumerate}
 \item {\em Equicontinuity.}  The set of functions $\{\pi_m(x, a, f): m\in\Z_+\}$
   is jointly equicontinuous \footnote{\newrj{Let $\mc{X}$ and
       $\mc{Y}$ be two metric spaces, with metrics $d_\mc{X}$ and
       $d_{\mc{Y}}$ respectively.  A set of functions
       $\mc{F}$ mapping $\mc{X}$ to $\mc{Y}$ is said to be
       equicontinuous at $x_{0} \in \mc{X}$, if for every $\epsilon >
       0$, there exists a $\delta > 0$ such that $d_{\mc{Y}}(f(x), f(x_{0}))<
       \epsilon$ for all $f \in \mc{F}$ and all $x$ such that
       $d_{\mc{X}}(x_{0}, x) < \delta$.}} in $a \in \mc{A}$ and $f \in
   \mfr{F}_p$. 
 \item {\em Uniform growth rate bound.}  There exist constants $K$ and $n \in \Z_{+}$ such
   that $\sup_{m\in\Z_+,a \in \mc{A},f \in \mfr{F}_p} | \pi_m(x,a,f)| \leq K
 (1 + \norm{x}_\infty)^{n}$ for every $x \in \mc{X}$.
 \end{enumerate}
 \end{assumption}

 The following result is more general than Theorem \ref{th:AME}, because the payoff function in the AME property depends on $m$.

 \begin{theorem}[AME]
 Suppose Assumption \ref{as:continuity-S} holds. Let $(\mu, f)$ be a stationary equilibrium with $f \in \mfr{F}_p$.  Then the AME property holds
 for $(\mu, f)$.
 \label{th:AME-S}
 \end{theorem}

 The proof is similar to Theorem \ref{th:AME}, but requires an
 additional step to accommodate the sequence of payoff
 functions. However, note that similar to Theorem \ref{th:AME}, the
 stationary equilibrium $(\mu,f)$ is fixed and is computed with the
 limit payoff function $\pi$. Alternatively, it is possible to
     define an ``oblivious equilibrium'' (OE) for each finite model. An
     OE is similar to SE in the sense that agents optimize assuming
     that the long run population state is constant; the main
     difference is that it is defined in a finite model rather than in the
     limit model. Under a {\em
       uniform} light-tail condition, it can be shown that the sequence
     of OE satisfies the AME property
     \cite{weintraub_2008}. In addition, we conjecture
     that a version of the assumptions that guarantee existence of SE
     in Section \ref{sec:exist}, but that applies uniformly over all
     finite models, would guarantee that such a uniform light-tail
     condition holds. For clarity of presentation, we chose to work
     with the SE of the limit model directly.

 Moreover, we believe that
     the existence result for the limit model that we provide is
     important, because even though OE might exist under mild conditions for each finite
     model, SE in the limit model may fail to exist.  In particular, as
     we discuss in Section \ref{sec:examples}, this might be the case
     in applications that exhibit ``increasing returns to scale''.  See
     in particular Sections \ref{subsec:RD}, \ref{subsec:spillovers}, and
     \ref{subsec:learning}, for examples of how
     limit models are derived in specific applications, and also
     conditions in such models that ensure stationary equilibria provide
     accurate approximations.

\section{Additional Examples}
\label{ap:additional_examples}
In this section we present two additional applications to our results, to 
a model of supply chain competition, and a model of consumer learning.

 \subsection{Supply Chain Competition}
 \label{subsec:SC}

 We now consider an example of supply chain competition among firms \citep{CAC99},
 where the firms use a common resource that is sold by a single supplier.
 The firms only interact with each other in the sourcing stage as the goods produced are 
 assumed to be sold in independent markets.

 {\em States.}  We let the state $x_{i,t}$ be the inventory of goods
 held by firm $i$ at time $t$.

 {\em Actions.}  At each time period, the supplier runs an auction to
 sell the goods.  Each firm $i$ places a bid $a_{i,t}$  at time $t$; for example, $a_{i,t}$ may denote the willingness-to-pay of
 the supplier, or it may be a two-dimensional bid consisting of desired
 payment and quantity. Since the interaction between firms is via
 their action profiles we again assume that the action taken by a firm 
 lies in a  finite subset  $S$ of the integer lattice.

 {\em Transition dynamics.}  Suppose that each firm $i$ sees demand
 $d_{i,t}$ at time $t$; we assume $d_{i,t}$ are i.i.d.~and
 independent across firms, with bounded nonnegative support and
 positive expected value.  Further, suppose that
 when a firm bids $s$ and the population state-action
 profile is $f$, the firm receives an allocation $\xi(s, f)$.  Then the state
 evolution for a firm $i$ is given by $x_{i,t+1} = \max\{x_{i,t} - d_{i,t}, 0\} + \xi(s_{i,t}, \fmi)$.
 Note that $\xi$ depends on $\fmi$
 only through the marginal distribution over actions.  We make the
 natural assumptions that
 $\xi(s,f)$ is increasing in $s$ and decreasing in $f$ (where the set of distributions
 is ordered in the first order stochastic dominance sense).
 Thus the
 transition kernel captures inventory evolution in the usual way:
 demand consumes inventory, and procurement restocks inventory.  The
 amount of resource procured by a firm
 and the price it pays depends on its own bid, as
 well as bids of other firms competing for the resource.

 As one example of how $\xi$ might arise, suppose that the supplier
 uses a {\em proportional allocation mechanism} \citep{Kelly97}.  In such a mechanism,
 the bid $s$ denotes the total amount a firm pays.  Further, suppose
 the total quantity $Q_m$ of the
 resource available scales with the number of firms, i.e., $Q_m = mQ$.
 Let $k(s|f) = \sum_x f(x,s)$ denote the fraction of agents bidding $s$
 in population state-action profile $f$.  

   As $m \to \infty$, and introducing $R$ as a small ``reserve'' bid that ensures the denominator is
 always nonzero, we obtain the following limiting proportional allocation
 function: $\xi(s,f) = s Q /\Big(R + \sum_{s'} s' k(s' | f)\Big)$. Note that this expression is increasing in $s$ and decreasing in $f$.

 {\em Payoffs.}  A firm earns revenue for demand served, and incurs a
 cost both for holding inventory, as well as for procuring additional
 goods via restocking.  We assume every firm faces an exogenous retail
 price $\phi$.  (Heterogeneity in the retail price could be captured
 via the description in Section \ref{sec:extensions}.)
 Let $h$ be
 the unit cost of holding inventory for one period and let $\Omega(s,f)$ be
 the procurement payment made by a firm with bid $s$, when the
 population state-action profile is $f$; of course, $\Omega$ also
 depends on $f$ only through $k(\cdot|f)$.  In general we assume that
 $\Omega$ is increasing in $f$ for each fixed $s$.  In the proportional
 allocation mechanism described above, we simply have $\Omega(s,f) =
 s$.
 Since the demand is i.i.d., the single period payoff for a firm is
 given by the expected payoff it receives, where the expectation is
 over the demand uncertainty; i.e. $ \pi(x,s,f) = \phi\E[\min\{d,x\}] - hx -\Omega(s,f).$

 {\em Discussion.}  We have the following proposition.

 \begin{proposition}
 \label{prop:SCC}
 Suppose that $d$ has positive expected value.  Then there exists an SE
 for the supply chain competition model with the proportional
 allocation mechanism, and all SE possess the AME property.
 \end{proposition}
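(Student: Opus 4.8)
The plan is to obtain Proposition~\ref{prop:SCC} from Corollary~\ref{co:existence} and Theorem~\ref{th:AME}, exactly as the other examples are handled in Section~\ref{sec:discussion}. Two structural features must be accommodated first. The inventory state lives in the one-dimensional but \emph{unbounded} set $\mc{X}=\Z_+$, so the substance of the argument is the verification of Assumption~\ref{as:compactness}. And firms are coupled through their \emph{actions} (via the proportional allocation $\xi$), so I would work inside the coupling-through-actions framework of Appendix~\ref{se:couple actions}: there $f$ is a population state-action profile, players use randomized strategies over the finite bid set $S$, convexity of $\Phi(f)$ is automatic (Proposition~\ref{prop:finiteconvex}), and the analogs of Corollary~\ref{co:existence} and Theorem~\ref{th:AME} hold once Assumptions~\ref{as:continuity} and~\ref{as:compactness} are checked. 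By the remark following Proposition~\ref{prop:compactness} it suffices to check Assumption~\ref{as:compactness} for the pure actions $S$, so everything reduces to routine verifications plus one model-specific step.

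Continuity (Assumption~\ref{as:continuity}) is straightforward. The set $S$ is finite, hence compact. Increments are bounded: $d$ has bounded support and $\xi(s,f)=sQ/(R+\sum_{s'}s'k(s'|f))$ is uniformly bounded over $s\in S$ and all $f$ (since $R>0$ and $S$ is finite), so $x_{i,t+1}-x_{i,t}=\max\{x_{i,t}-d_{i,t},0\}-x_{i,t}+\xi(s_{i,t},f)$ takes values in a bounded range. The payoff $\pi(x,s,f)=\phi\,\E[\min\{d,x\}]-hx-s$ (note that under proportional allocation $\Omega(s,f)=s$, so $\pi$ does not depend on $f$ at all) is linear in $x$, giving the growth-rate bound, is trivially continuous in $f$, and is linear — hence continuous — in the mixed action. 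The kernel depends on $f$ only through the marginal of $f$ over $S$, which is a continuous functional of $f$ in the $\onep$-norm, so the kernel is jointly continuous in the action and in $f\in\mfr{F}_p$; I would take $p=1$.

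For Assumption~\ref{as:compactness}, parts 2--4 are immediate: separability is vacuous in one dimension; $\pi(x+\Delta,s,f)-\pi(x,s,f)=\phi(\E[\min\{d,x+\Delta\}]-\E[\min\{d,x\}])-h\Delta\to-h\Delta\le 0$ as $x\to\infty$, uniformly in $s,f$, since $\min\{d,x\}=d$ once $x$ exceeds the support of $d$; $x_{i,t+1}$ is a pathwise nondecreasing function of $x_{i,t}$ and of $s_{i,t}$ (the latter because $\xi$ is increasing in $s$), and the increment $-\min\{d,x\}+\xi(s,f)$ is pathwise nonincreasing in $x$. The crux is part~5, which I would verify with $\mc{A}'=\{0\}$: bidding zero is feasible (the reserve $R$ keeps the allocation formula well-defined) and, since $\pi$ is \emph{strictly decreasing} in the bid, $\{0\}$ is exactly the set of single-period-optimal actions. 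Part~5(a) then holds with $\kappa(t)=t$, since for any $s\in S\setminus\{0\}$ we have $0\le s$ and $\pi(x,0,f)-\pi(x,s,f)=s=\|0-s\|_\infty$. Part~5(b) holds because $\xi(0,f)=0$: for inventory $x'$ above the support of $d$, $\sup_f\sum_z z\,\mbf{Q}(z\,|\,x',0,f)=-\E[\min\{d,x'\}]=-\E[d]<0$, which is precisely where the hypothesis that $d$ has positive expected value is used.

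With Assumptions~\ref{as:continuity} and~\ref{as:compactness} in hand, the coupling-through-actions versions of Corollary~\ref{co:existence} and Theorem~\ref{th:AME} give existence of an SE, the property $f\in\mfr{F}_p$ for every SE, and the AME property for every SE, which is the claim. I expect the only nonroutine step to be part~5 of Assumption~\ref{as:compactness}: one must recognize that the procurement cost $\Omega$ makes minimal bidding myopically optimal, that a zero bid earns a zero allocation, and hence that at large inventories the optimal oblivious policy produces downward drift at rate $\E[d]>0$ — the ``decreasing returns to higher states'' mechanism on which the light-tail (hence AME) conclusion rests.
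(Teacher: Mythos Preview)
Your proposal is correct and follows essentially the same approach as the paper: work in the coupling-through-actions framework (finite pure bid set, Proposition~\ref{prop:finiteconvex} for convexity), verify Assumptions~\ref{as:continuity} and~\ref{as:compactness} with $p=1$, and take $\mc{A}'=\{0\}$ in Assumption~\ref{as:compactness}.5 using that $\Omega(s,f)=s$ makes bidding zero myopically optimal, $\xi(0,f)=0$, and $\E[d]>0$ yields negative drift at large inventories. Your write-up is in fact slightly more explicit than the paper's own proof (e.g., you spell out $\kappa(t)=t$ and the exact form of the drift), but the structure and the key ideas are identical.
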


\begin{proof}
We present the proof in a more general setting, and specialize to the
 proportional allocation mechanism.
 If $\xi$ and $\Omega$ are uniformly bounded and
 appropriately continuous in $f$ for each pure action $s$, then Assumption \ref{as:continuity} follows in a straightforward
 manner.  For example, in the proportional allocation mechanism with a
 positive reserve bid $R$, note that $\xi$ is continuous in $f$ in the $\onep$ norm with $p =
 1$, since $\xi$ depends on $f$ through its first moment.   Since this is a
 model with finite action spaces, the result of Proposition
 \ref{prop:finiteconvex} also applies.  Thus, as before, the proof is reduced to determining
 whether Assumption \ref{as:compactness} holds for the given model.

 As before, Assumption \ref{as:compactness}.2, Assumption \ref{as:compactness}.3,
 and Assumption \ref{as:compactness}.4 are easy to
 check.  Assumption \ref{as:compactness}.1 follows because the payoff
 function is decreasing in $x$ for large $x$.  Finally, suppose $0 \in
 S$ and $\xi(0,f) = 0$ for all $f$; this will be the case, for example,
 in the proportional allocation mechanism with reserve $R$.  Then if
 $\mc{A}' = \{0\}$, it follows that Assumption
 \ref{as:compactness}.5 holds, as long as (1) $d$ has positive expected
 value; and (2) bidding zero is myopically optimal, and this induces
 negative drift in the inventory level.  Note that bidding zero is
 myopically optimal for the proportional allocation mechanism, and
 this induces negative drift in the inventory. Using Corollary
 \ref{co:existence} and Theorem \ref{th:AME} the result follows.
 \end{proof}

 More generally, for other choices of allocation mechanism, it can be
 shown that the same result holds if $d$ has positive expected value
 and the following conditions hold: (1) if $\xi$ and $\Omega$ are uniformly bounded and 
 appropriately continuous in $f$ for each pure action $s$; (2) $0 \in
 S$ and $\xi(0,f) = 0$ for all $f$; and (3) bidding zero maximizes
 a firm's single period payoff, and this induces negative drift in the
 inventory level.  

 In this model, decreasing returns to higher states are naturally enforced because the
 payoff function  becomes {\em decreasing} in the state as the state grows.  Simply because
 holding inventory is costly, firms prefer not to become arbitrarily
 large.  Thus in this model light tails in the population state can be
 guaranteed under fairly weak assumptions on the model primitives.

 \subsection{Consumer Learning}
 \label{subsec:consumer-learning}

 In this section, we analyze a model of social learning.  Imagine a
 scenario where a group of individuals decide to consume a product
 (e.g., visiting a restaurant). These individuals learn from each
 other's experience, perhaps through product reviews or word-of-mouth
 (see, for example, \citealp{ching2010}).

 {\em States.} We let~$x_{i,t}$ be the experience level of an
 individual at time~$t$.

 {\em Actions.} At each time period~$t$, an individual invests an
 ``effort''~$a_{i,t} \in [0, \ol{a}] $ in searching for a new product. 

 {\em Payoffs.} At each time period, an individual selects a product to
 consume.  The quality of the product
 is a normally distributed random variable~$Q$ with a
 distribution given by $Q \sim \mc{N}\left(\gamma a, \omega(x,f)\right)$,
 where $\gamma > 0$ is a constant.  Thus, the average quality of the
 product is proportional to the amount of effort made. Furthermore, the variance of the product is dependent on both
 individual and population experience levels.

 We assume that $\omega(x,f)$ is continuous in the population state $f$
 (in an appropriate norm, cf. Section \ref{sec:exist}).  We make the natural assumption that $\omega(x, f)$ is a
 nonincreasing function of~$f$ and strictly decreasing in $x$ (where the set of distributions
 is ordered in the first order stochastic dominance sense).  
 This is
 natural as we expect that as an individual's experience increases or
 if she can learn from highly expert people, the
 randomness in choosing a product will decrease.
 We also assume
 that there exists constants $\sigma_L,\sigma_H$, such that  $\sigma^{2}_{L} \leq \omega(x, f) \leq \sigma^{2}_{H}$.

 The individual
 receives a utility~$U(Q)$, where~$U(\cdot)$ is a nondecreasing concave
 function of the quality. For concreteness, we let $U(Q) = 1 -
 e^{-Q}$. Since at each time, the individual selects the product or the
 restaurant in an i.i.d.~manner, the single period payoff is given by
 {$ \pi(x,a,f)  = E\left[ U(Q) \ | \ Q \sim \mc{N}\left(\gamma a, \omega(x,f)\right)\right] - d a  = 1 - e^{-\gamma a + \frac{1}{2} \omega(x,f)} - d a,$},
 where $d$ is the marginal cost of effort.

 {\em Transition dynamics.}
 An individual's experience level is improved as she expends effort
 because she learns more about the quality of products.  However, this
 experience level also
 depreciates over time; this depreciation is assumed to be
 player-specific and comes about because an individual's tastes may
 change over time.
 Thus, an
 individual's experience evolves (independently of the experience of
 others or their investments) in a stochastic manner. Several
 specifications for the transition kernel satisfying our assumptions
 can be used; for concreteness we assume that the dynamics are the same as
 those described in Section~\ref{subsec:RD}.

 {\em Discussion.}  Our main result is the following proposition.
 \begin{proposition}
 \label{prop:CL}
 Suppose that:
 \begin{equation}
 \label{eq:CLcondition}
 d \geq \gamma e^{-\gamma c_0 + \frac{1}{2}\sigma^{2}_{H}},
 \end{equation}
 where $c_0 = \delta/(\alpha(1-\delta))$.  Then there exists an SE
 for the consumer learning model, and all SE possess the AME property.
 \end{proposition}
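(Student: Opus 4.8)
The plan is to derive Proposition~\ref{prop:CL} as an instance of Corollary~\ref{co:existence} together with Theorem~\ref{th:AME}, following the template of the other worked examples in Section~\ref{sec:discussion}. Here $\mc{X} = \Z_+$ (so $d = 1$ and multiplicative separability is vacuous), $\mc{A} = [0,\ol{a}]$ is a compact interval, and the transition dynamics are exactly those of Section~\ref{subsec:RD}; in particular the kernel does not depend on $f$, so every ``$\sup_f$'' in Assumptions~\ref{as:continuity} and~\ref{as:compactness} that refers only to the kernel is trivial. Thus the work reduces to: (i) Assumption~\ref{as:continuity}; (ii) convexity of $\Phi(f)$; and (iii) Assumption~\ref{as:compactness}. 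Endow $\mfr{F}$ with the $\onep$ norm under which $\omega(x,\cdot)$ is continuous. Then Assumption~\ref{as:continuity} is immediate: $\pi(x,a,f) = 1 - e^{-\gamma a + \omega(x,f)/2} - da$ is continuous in $a$ and, by continuity of $\omega$ in $f$, jointly continuous in $(a,f)$; it is uniformly bounded since $\sigma_L^2 \le \omega \le \sigma_H^2$ and $a \in [0,\ol a]$, so the polynomial growth bound holds with $n=0$; increments of the state are bounded by $1$; and $\mc{A}$ is compact.

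For convexity I would invoke Assumption~\ref{as:convexity} rather than~\ref{as:UIC}, since $\pi$ is not monotone in $a$ here. The action set is convex; $\pi$ is strictly increasing in $x$ because $\omega(\cdot,f)$ is strictly decreasing; the kernel is stochastically nondecreasing in $x$ (it is $x$-independent for $x \ge 1$, and a short direct computation handles the reflecting boundary at $0$) and in $a$ (the success probability $\alpha a/(1+\alpha a)$ is increasing in $a$); $\pi$ is strictly concave in $a$ because $-e^{-\gamma a}$ is strictly concave; and the kernel is stochastically concave in $a$, because for any nondecreasing $u$ the map $\E_a[u]$ is a nondecreasing affine function of the success probability $\alpha a/(1+\alpha a)$, which is concave in $a$. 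Hence Proposition~\ref{prop:convexity-new} gives that $\mc{P}(f)$ is a singleton and $\Phi(f)$ is convex for every $f$.

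The crux is Assumption~\ref{as:compactness}. Parts~2--4 are routine: part~2 is vacuous; parts~3 and~4 follow from the structure of the kernel (success probability increasing in $a$; increments $x$-independent away from the boundary; the drift $((1-\delta)\alpha a - \delta)/(1+\alpha a)$ continuous in $a$ and $f$-free); and part~1 holds because $\pi(x+\Delta,a,f) - \pi(x,a,f) = e^{-\gamma a}\bigl(e^{\omega(x,f)/2} - e^{\omega(x+\Delta,f)/2}\bigr) \ge 0$ vanishes as $x \to \infty$ since $\omega(\cdot,f)$ is nonincreasing and bounded (so the increments $\omega(x,f) - \omega(x+\Delta,f)$ tend to $0$, uniformly in $f$ under the maintained regularity of $\omega$). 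The real content is Assumption~\ref{as:compactness}.5, and here I would take $\mc{A}'$ to be (essentially) the set of single-period-payoff maximizers. The single-period payoff $\pi(x,\cdot,f)$ is strictly concave with maximizer $a^*(x,f) = \tfrac1\gamma\bigl(\tfrac12\omega(x,f) + \ln(\gamma/d)\bigr)$, truncated to $[0,\ol a]$; since $\omega(x,f) \le \sigma_H^2$, the condition $d \ge \gamma e^{-\gamma c_0 + \sigma_H^2/2}$ is precisely what forces $a^*(x,f) \le c_0$ for all $x,f$, where $c_0 = \delta/(\alpha(1-\delta))$ is the investment level at which the state drift vanishes. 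Taking $\mc{A}' = [0,c_0] \cap [0,\ol a]$, part~5(a) follows from strict concavity of $\pi$ in $a$ with a uniform curvature bound (the second $a$-derivative is at most $-\gamma^2 e^{-\gamma\ol a + \sigma_L^2/2} < 0$), which permits a quadratic $\kappa$; and part~5(b) holds because for $a' \in \mc{A}'$ the state drift is $\le 0$ (strictly negative for $a' < c_0$), so at large states optimal oblivious play induces negative drift and any invariant distribution has a light tail. With Assumptions~\ref{as:continuity}, \ref{as:convexity}, and~\ref{as:compactness} verified, Corollary~\ref{co:existence} yields existence of SE with every SE light-tailed, and Theorem~\ref{th:AME} yields the AME property.

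I expect the main obstacle to be Assumption~\ref{as:compactness}.5, specifically translating the cost condition~\eqref{eq:CLcondition}, which is stated on the payoff side, into a drift statement. Unlike the pure quality-ladder model (Proposition~\ref{prop:DO}), where the myopically optimal action is $0$, here firms always invest a strictly positive amount in search, so one must show this positive investment does not push the state's drift above zero; this is exactly what~\eqref{eq:CLcondition} buys, by bounding the marginal benefit of variance reduction (at most $\gamma e^{\sigma_H^2/2}$, discounted to $\gamma e^{-\gamma c_0 + \sigma_H^2/2}$ at the zero-drift investment level) below the marginal search cost $d$. Two minor technical points to handle carefully are the equality case in~\eqref{eq:CLcondition} (where the myopically optimal action can coincide with the zero-drift level $c_0$, so one wants a slightly smaller $\mc{A}'$, or a more careful argument that uses $\omega(x,f) < \sigma_H^2$ strictly at large $x$), and the reflecting boundary of the state chain at $0$ when checking the monotonicity and concavity conditions on the kernel.
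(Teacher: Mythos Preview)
Your overall strategy matches the paper's: verify Assumption~\ref{as:continuity}, invoke Assumption~\ref{as:convexity} for convexity (the paper does the same, not Assumption~\ref{as:UIC}), and then check Assumption~\ref{as:compactness} with the crux at part~5. Your treatments of continuity, convexity, and parts~1--4 of Assumption~\ref{as:compactness} are fine and essentially identical to the paper's.

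There is, however, a genuine gap in your handling of Assumption~\ref{as:compactness}.5, and it is not the equality-case subtlety you flag. With your choice $\mc{A}' = [0,c_0]$, the endpoint $c_0$ lies in $\mc{A}'$, and the drift at $a=c_0$ is exactly zero: $\sum_z z\,\mbf{Q}(z\mid x,c_0) = \bigl((1-\delta)\alpha c_0 - \delta\bigr)/(1+\alpha c_0) = 0$. Hence $\sup_{a'\in\mc{A}'}\sum_z z\,\mbf{Q}(z\mid x,a') = 0$, which fails the \emph{strict} inequality required in 5(b). This happens regardless of whether \eqref{eq:CLcondition} is strict or not, so it is a different issue from the one you identify. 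Your 5(a) argument via the uniform curvature bound is correct for this $\mc{A}'$, but 5(b) is not.

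The paper's fix is to take $\mc{A}' = [0,a_{\max}]$ with $a_{\max} = \tfrac{1}{2\gamma}\sigma_H^2 - \tfrac{1}{\gamma}\log(d/\gamma)$, the supremum over $x,f$ of the myopic maximizers $a^*(x,f)$. The hypothesis \eqref{eq:CLcondition} is precisely $a_{\max}\le c_0$, so (under strict inequality) $a_{\max}<c_0$ and 5(b) holds with a uniform negative margin. For 5(a) the paper gives an exact computation rather than a curvature bound: using the first-order condition $\gamma e^{-\gamma a^*(x,f)+\omega(x,f)/2}=d$, one gets $\pi(x,a^*(x,f),f)-\pi(x,a,f) = \tfrac{d}{\gamma}\,\kappa(a-a^*(x,f))$ with $\kappa(r)=e^{-\gamma r}-1+\gamma r$, which is the required strictly increasing function with $\kappa(0)=0$. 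Your quadratic $\kappa$ would also work once $\mc{A}'$ is shrunk to $[0,a_{\max}]$; the key correction is the choice of $\mc{A}'$, not the form of $\kappa$. (The borderline case of equality in \eqref{eq:CLcondition}, where $a_{\max}=c_0$, is a residual delicacy shared by both arguments.)
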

 \begin{proof}
 Note that $\omega(x, f) < \sigma^{2}_{H}$ and thus the growth rate
 bound in Assumption \ref{as:continuity} is trivially satisfied. If
 $\omega(x, f)$ is continuous in~$f$ (in the appropriate~$\onep$ norm),
 then Assumption~\ref{as:continuity} follows in a straightforward
 manner. To verify that~$\Phi(f)$ is convex, we note that
 Assumption~\ref{as:convexity} will hold if~$\pi(x,a,f)$ is strictly
 increasing in~$x$ and concave in~$a$. Since~$\omega(x,f)$ is strictly
 decreasing in~$x$, these conditions are naturally satisfied for our
 model. Thus, to complete the proof, we need to verify Assumption~\ref{as:compactness}.

 It is straightforward to check that Assumption
 \ref{as:compactness}.2-4 hold; we omit the
 details. Assumption~\ref{as:compactness}.1 follows since $\omega(x,
 f)$ is nonincreasing in~$x$ and bounded below, so $\omega(x, f) -
 \omega(x + \Delta,f) \to 0$ as $x \to \infty$.  In order for
 Assumption~\ref{as:compactness}.5 to hold, we require~$\mc{A}'$ to
 contain all {\em myopically} optimal actions. A straightforward
 calculation shows that~$\arg\max_a \pi(x,a,f) = a^*(x,f)$, where
 \[
 a^{*}(x,f) = \frac{1}{2\gamma} \omega(x,f) - \frac{1}{\gamma}\log\left(\frac{d}{\gamma}\right);
 \]
 for simplicity we assume $0 < a^*(x,f) < \ol{a}$ for all $x,f$,
 though an analogous argument holds otherwise.
 Thus we define $\mc{A}' = [0, a_{\max}]$, where: 
 \[ a_{\max} = \frac{1}{2\gamma} \sigma_H^2 -
 \frac{1}{\gamma}\log\left(\frac{d}{\gamma}\right). \]
 To verify Assumption \ref{as:compactness}.5(a), note that if $a
 \not\in \mc{A}'$, then:
 \begin{align*}
 \pi(x, a^*(x,f), f) - \pi(x,a,f) &= e^{-\gamma a^*(x,f) + \frac{1}{2}
   \omega(x,f)} ( e^{-\gamma(a - a^*(x,f))} - 1) + d(a - a^*(x,f)) \\
 &= \frac{d}{\gamma} \kappa(a - a^*(x,f)),
 \end{align*}
 where $\kappa(x) = e^{-\gamma x} - 1 + \gamma x$, which is strictly
 increasing and nonnegative with $\kappa(0) = 0$.  Here the preceding derivation follows
 by observing that the optimality condition for $a^*(x,f)$ ensures that $\gamma
 e^{-\gamma a^*(x,f) + \frac{1}{2} \omega(x,f)} = d$.  Thus Assumption
 \ref{as:compactness}.5(a) holds.

 When do the actions in $\mc{A}'$ produce negative drift in the state?
 For the dynamics given
 in~Section~\ref{subsec:RD}, one can easily verify that the drift is
 negative if the action is sufficiently small; in particular, the drift
 is negative for any action $a$ such that:
 \[
 a < \frac{\delta}{(1-\delta)\alpha} \triangleq c_{0},
 \]
 where $\delta \in (0,1)$ is the probability that the experience
 depreciates and $\alpha > 0$ controls the probability that a player is
 successful in improving the experience. The above
 inequality is satisfied by all $a' \in \mc{A}'$ if :
 \[
 d \geq \gamma e^{-\gamma c_0 + \frac{1}{2}\sigma^{2}_{H}}.
 \]
 Using Corollary
 \ref{co:existence} and Theorem \ref{th:AME} the result follows.
 \end{proof}

 Recall that $\delta \in (0,1)$ is the probability that the experience
 depreciates and $\alpha > 0$ controls the probability that a player is
 successful in improving the experience.  The right hand side is an
 upper bound to the marginal gain in utility due to effort, at effort
 level $c_0$; while the left hand side is the marginal cost of effort.
 Thus the condition  
 \eqref{eq:CLcondition} can be interpreted as a requirement that the
 marginal cost of effort should be sufficiently large relative to the
 marginal gain in utility due to effort.
  Otherwise, an individual's
 effort level when her experience is high will cause her state to
 continue to increase, so a light-tailed SE may not
 exist.  Hence we see the same dichotomy as before: decreasing returns
 to higher states yield existence of SE and the AME property, while
 increasing returns may not.

\section{Existence and AME: Preliminary Lemmas}

We begin with the following lemma, {which follows from the growth rate
bound and bounded increments in Assumption \ref{as:continuity}.}

\begin{lemma}
\label{lem:supbound}
Suppose Assumption \ref{as:continuity} holds.  Let $x_0 = x$.
Let $a_{t} \in \mc{A}$ be any sequence of (possibly history dependent)
actions, and let {$f_t \in \mfr{F}$} be
any sequence of (possibly history dependent) population states.  Let
$x_t$ be the state sequence generated, i.e., $ x_t \sim \mbf{P}(\cdot\ |\ x_{t-1}, a_{t-1}, f_{t-1}).$
Then for all $T \geq 0$, there exists $C(x,T) < \infty$ such that $\E\left[\sum_{t = T}^\infty \beta^t |\pi(x_t, a_t, f_t)|\ \big |\
  x_0 = x \right] \leq C(x,T)$. 
Further, $C(x,T) \to 0$ as $T \to \infty$.
\end{lemma}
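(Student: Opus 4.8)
The plan is to use the bounded-increments hypothesis to control how fast $\norm{x_t}_\infty$ can grow along \emph{any} sample path, and then apply the polynomial growth bound on $\pi$ to dominate the tail sum by a convergent \emph{deterministic} series depending only on $x$ and $T$.

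First I would observe that, by Assumption~\ref{as:continuity}.2, increments satisfy $\norm{x_t - x_{t-1}}_\infty \leq M$ almost surely, and this holds whatever the (possibly history-dependent) choices of $a_{t-1} \in \mc{A}$ and $f_{t-1} \in \mfr{F}$ happen to be. Iterating from $x_0 = x$ yields the nonrandom bound $\norm{x_t}_\infty \leq \norm{x}_\infty + Mt$, almost surely, for every $t \geq 0$. Then, by Assumption~\ref{as:continuity}.3, $|\pi(x_t, a_t, f_t)| \leq K(1 + \norm{x_t}_\infty)^n \leq K(1 + \norm{x}_\infty + Mt)^n$ almost surely, where the final bound is nonrandom and uses that the growth-rate bound is already uniform over $a \in \mc{A}$ and $f \in \mfr{F}$, so the randomness in the actions and population states is irrelevant. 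Since all summands are nonnegative, Tonelli's theorem lets me exchange expectation and sum, giving
\[
\E\Big[\sum_{t=T}^\infty \beta^t |\pi(x_t, a_t, f_t)| \ \Big|\ x_0 = x\Big]
\;\leq\; \sum_{t=T}^\infty \beta^t K(1 + \norm{x}_\infty + Mt)^n \;\triangleq\; C(x,T).
\]

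It then remains to verify the two stated properties of $C(x,T)$. Finiteness is immediate: the summand is a degree-$n$ polynomial in $t$ times $\beta^t$ with $0 < \beta < 1$, so the series $\sum_{t=0}^\infty \beta^t (1 + \norm{x}_\infty + Mt)^n$ converges, whence $C(x,T) \leq C(x,0) < \infty$ for all $T$. For the decay statement, note that $C(x,T)$ is exactly the tail from index $T$ of this convergent series, so $C(x,T) \to 0$ as $T \to \infty$.

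I do not anticipate a genuine obstacle here; the only point needing care is that the growth bound $\norm{x_t}_\infty \leq \norm{x}_\infty + Mt$ must hold \emph{almost surely and uniformly} across all admissible action and population-state sequences (including adversarially chosen ones), which is precisely what the bounded-increments assumption guarantees, together with the uniformity over $a \in \mc{A}$, $f \in \mfr{F}$ built into Assumption~\ref{as:continuity}.3.
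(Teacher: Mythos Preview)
Your proposal is correct and follows essentially the same approach as the paper: bound $\norm{x_t}_\infty$ deterministically via bounded increments, apply the polynomial growth bound on $\pi$, and dominate by the deterministic series $\sum_{t=T}^\infty \beta^t K(1+\norm{x}_\infty+Mt)^n$. The only cosmetic difference is that the paper establishes $C(x,T)\to 0$ by deriving the explicit bound $C(x,T)\leq C(x,0)\,\beta^T(2MT)^n$ via a binomial expansion, whereas you simply observe that $C(x,T)$ is the tail of a convergent series; your argument is cleaner and entirely sufficient.
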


\begin{proof}
Observe that by Assumption \ref{as:continuity}, the increments are bounded. Thus
starting from state~$x$, we have $\norm{x_{t}}_\infty \leq \norm{x}_{\infty} + tM$. Again by
Assumption \ref{as:continuity}, $|\pi(x_t, a_t, f_t)| \leq K
(1 + \norm{x_t}_\infty)^{n}$.  Therefore:
\[ \E\left[ \sum_{t=T}^\infty \beta^t |\pi(x_t, a_t, f_t)|\ |\ x_0 = x
\right] \leq K \sum_{t = T}^\infty
 \beta^t ( 1 + \norm{x}_\infty + tM)^{n}. \]
We define
$C(x,0)$ as the right hand side above when $T = 0$:
\[ C(x,0) = K \sum_{t = 0}^\infty
 \beta^t ( 1 + \norm{x}_\infty + tM)^{n}. \]
Observe that $C(x,0) < \infty$.

We now reason as follows for $T \geq 1$:
\begin{align*}
K\sum_{t = T}^{\infty}\beta^{t}(1 + \norm{x}_\infty + tM)^{n} &= K\beta^{T} \sum_{t=0}^{\infty}\beta^{t}(1 + \norm{x}_\infty + tM + TM)^{n} \\
&= K\beta^{T} \sum_{t=0}^{\infty}\beta^{t}\sum_{j=0}^{n} {{n} \choose {j}}(1 + \norm{x}_\infty + tM )^{j}(TM)^{n-j} \\
&\leq K\beta^{T} \sum_{t=0}^{\infty}\beta^{t}\sum_{j=0}^{n} {{n} \choose {j}}(1 + \norm{x}_\infty + tM )^{n}(TM)^{n}\\
&= K\beta^{T}2^{n}(TM)^{n}\sum_{t=0}^{\infty}\beta^{t}(1 + \norm{x}_\infty + tM )^{n} \\
& = C(x,0) \beta^{T}(2MT)^{n}.
\end{align*}
Here the inequality holds because $1 + \|x\|_\infty + tM \geq 1$, $M
\geq 0$, and $T \geq 1$.  So for $T \geq 1$, define:
\begin{equation}
\label{eq:C(x,T)}
C(x,T) = C(x,0) \beta^T (2MT)^n.
\end{equation}
Then $C(x,T) \to 0$ as $T \to \infty$, as required. 
\end{proof}


We now show that the Bellman equation holds for the dynamic program
solved by a single agent given a population state $f$.  {Given our unbounded state space,} our proof
involves the use of a {\em weighted sup norm}, defined as follows.
For each $x \in \mc{X}$, let $W(x) = (1 + \norm{x}_\infty)^{n}$.  For a function
$F: \mc{X} \to \R$, define:
\[ \norm{F}_{W\mhyphen \infty} = \sup_{x \in \mc{X}} \left|
  \frac{F(x)}{W(x)}\right|. \]
This is the weighted sup norm with weight function $W$.  We let
$B(\mc{X})$ denote the set of all functions $F : \mc{X} \to \R$ such
that $\norm{F}_{W\mhyphen \infty} < \infty$.

Let $T_f$ denote the dynamic programming operator with population
state $f$: given a function $F : \mc{X} \to \R$, we have $(T_fF)(x) = \sup_{a \in \mc{A}} \Big\{ \pi(x,a,f) +
 \beta\sum_{x' \in \mc{X}}F(x')\mbf{P}(x'\ |\ x, a, f)\Big\}$.
We define $T^{k}_{f}$ to be the composition of the mapping $T_{f}$
with itself~$k$ times.
The following lemma applies standard dynamic programming arguments.

\begin{lemma}
\label{lem:bellman}
Suppose Assumption \ref{as:continuity} holds.  For all {$f \in \mfr{F}$}, if $F \in B(\mc{X})$ then $T_fF \in
B(\mc{X})$.  Further, there exist $k,\rho$ independent of $f$ with
$0 < \rho < 1$ such that $T_f$ is a $k$-stage $\rho$-contraction
on $B(\mc{X})$; i.e., if $F, F' \in B(\mc{X})$, then for all $f$: $\norm{T_f^k F - T_f^k F'}_{W\mhyphen \infty} \leq \rho \norm{F -
 F'}_{W\mhyphen \infty}.$

In particular, value iteration converges to $\vTilde^*(\cdot | f) \in B(\mc{X})$
from any initial value function in $B(\mc{X})$, and for all {$f
\in \mfr{F}$} and $x \in \mc{X}$, the Bellman equation holds:
\begin{equation}
\label{eq:bellman}
\vTilde^{*}(x \ | \ f) = \sup_{a \in \mc{A}}\Big\{ \pi(x,a,f) +
 \beta\sum_{x' \in \mc{X}}\vTilde^{*}(x' \ | \ f)\mbf{P}(x'\ | \  x,
 a, f)\Big\}.
\end{equation}
Further, $\vTilde^{*}(x | f)$ is continuous in {$f \in \mfr{F}_p$}.

Finally, there exists at least one optimal oblivious strategy among
all (possibly history-dependent, possibly randomized)
strategies; i.e., $\mc{P}(f)$ is nonempty {for all $f\in\mfr{F}$}.  An
oblivious strategy $\mu \in \mfr{M}_O$ is optimal given $f$ if and only
if $\mu(x)$ achieves the maximum on the right hand side of \eqref{eq:bellman} for every $x \in \mc{X}$.
\end{lemma}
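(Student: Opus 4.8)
The plan is to treat each agent's problem, for a fixed population state $f$, as a discounted Markov decision process with (polynomially) unbounded one-period payoffs, and to apply the standard weighted-supremum-norm contraction machinery; continuity of the value in $f$ is then handled by a separate uniform-convergence argument. Throughout, the engine is the observation that bounded increments (Assumption~\ref{as:continuity}.2) make everything ``local'': if $\mbf{P}(x'\mid x,a,f)>0$ then $\norm{x'}_\infty\le\norm{x}_\infty+M$, hence $W(x')\le(1+M)^nW(x)$, and only finitely many $x'$ are reachable from a given $x$.

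\textbf{The operator $T_f$.} Using the estimate above together with the growth bound $|\pi(x,a,f)|\le KW(x)$ (Assumption~\ref{as:continuity}.3), one gets $\norm{T_fF}_{W\mhyphen\infty}\le K+\beta(1+M)^n\norm{F}_{W\mhyphen\infty}<\infty$, uniformly in $f$, so $T_f:B(\mc{X})\to B(\mc{X})$. The single-step Lipschitz bound $\norm{T_fF-T_fF'}_{W\mhyphen\infty}\le\beta(1+M)^n\norm{F-F'}_{W\mhyphen\infty}$ is useless when $\beta(1+M)^n\ge1$, and iterating it does not help; the one point requiring care is therefore to bound the $k$-step operator directly. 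Applying $|\sup_a g-\sup_a h|\le\sup_a|g-h|$ at each of the $k$ stages gives $|(T_f^kF)(x)-(T_f^kF')(x)|\le\beta^k\sup_{a_0}\sum_{x_1}\mbf{P}(x_1\mid x,a_0,f)\cdots\sup_{a_{k-1}}\sum_{x_k}\mbf{P}(x_k\mid x_{k-1},a_{k-1},f)\,|F(x_k)-F'(x_k)|$, and on every path in the support $\norm{x_k}_\infty\le\norm{x}_\infty+kM$, so $|F(x_k)-F'(x_k)|\le\norm{F-F'}_{W\mhyphen\infty}(1+\norm{x}_\infty+kM)^n$ is a constant that pulls out while the nested probabilities telescope to $1$; since $1+\norm{x}_\infty+kM\le(1+kM)(1+\norm{x}_\infty)$, this yields $\norm{T_f^kF-T_f^kF'}_{W\mhyphen\infty}\le\beta^k(1+kM)^n\norm{F-F'}_{W\mhyphen\infty}$. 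As $\beta^k(1+kM)^n\to0$, choose $k$ with $\rho:=\beta^k(1+kM)^n<1$; $k$ and $\rho$ do not depend on $f$. Since $(B(\mc{X}),\norm{\cdot}_{W\mhyphen\infty})$ is complete, $T_f$ has a unique fixed point $U_f\in B(\mc{X})$ and $T_f^nF_0\to U_f$ for every $F_0\in B(\mc{X})$ --- this is the ``value iteration converges'' claim.

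\textbf{Identification with $\vTilde^*$ and the optimal oblivious strategy.} By Lemma~\ref{lem:supbound} (with $T=0$), every strategy has a well-defined value with $|\vTilde(x\mid\sigma,f)|\le C(x,0)\le K'W(x)$, $K'$ independent of $f$, and $C(x,T)\to0$. Finite-horizon backward induction identifies the $T$-horizon optimal value (over history-dependent randomized strategies) with $T_f^T\mathbf{0}$; letting $T\to\infty$ and invoking the vanishing tail $C(x,T)\to0$ shows $\vTilde(x\mid\sigma,f)\le U_f(x)$ for all such $\sigma$. For fixed $x,f$ the right side of \eqref{eq:bellman} is the maximum over the compact set $\mc{A}$ of $a\mapsto\pi(x,a,f)+\beta\sum_{x'}\mbf{P}(x'\mid x,a,f)U_f(x')$, a finite sum of functions continuous in $a$ (Assumption~\ref{as:continuity}.4), so it is attained; the induced stationary deterministic --- hence oblivious --- strategy $\mu$ satisfies the one-step policy-evaluation identity, and iterating that identity as above gives $\vTilde(x\mid\mu,f)=U_f(x)$. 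Combining, $U_f=\vTilde^*(\cdot\mid f)$, equation~\eqref{eq:bellman} holds, and $\mc{P}(f)\ne\emptyset$. The ``iff'' follows from the policy-evaluation identity $\vTilde(x\mid\mu',f)=\pi(x,\mu'(x),f)+\beta\sum_{x'}\mbf{P}(x'\mid x,\mu'(x),f)\vTilde(x'\mid\mu',f)$: if $\mu'(x)$ attains the max in \eqref{eq:bellman} for every $x$, iteration gives optimality; conversely, substituting $\vTilde(\cdot\mid\mu',f)=\vTilde^*(\cdot\mid f)$ into this identity and comparing with \eqref{eq:bellman} forces $\mu'(x)$ to attain the max for every $x$.

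\textbf{Continuity in $f$ --- the main obstacle.} Take $F_0\equiv\mathbf{0}$. By the $f$-uniform contraction modulus and the $f$-uniform bound $\norm{\vTilde^*(\cdot\mid f)}_{W\mhyphen\infty}\le K'$, one has $\norm{T_f^{jk}\mathbf{0}-\vTilde^*(\cdot\mid f)}_{W\mhyphen\infty}\le\rho^jK'$, so for each fixed $x$ the convergence $T_f^{jk}\mathbf{0}(x)\to\vTilde^*(x\mid f)$ is uniform over all $f\in\mfr{F}_p$. Each iterate $f\mapsto(T_f^m\mathbf{0})(x)$ is continuous on $\mfr{F}_p$ (with the $\onep$ norm): by induction on $m$, the map $(a,f)\mapsto\pi(x,a,f)+\beta\sum_{x'}\mbf{P}(x'\mid x,a,f)(T_f^{m-1}\mathbf{0})(x')$ is a finite sum of jointly continuous functions (Assumption~\ref{as:continuity}.4 and the inductive hypothesis), so taking $\sup_{a\in\mc{A}}$ over the compact set $\mc{A}$ preserves continuity in $f$ by Berge's maximum theorem. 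Since a uniform limit of continuous functions is continuous, $\vTilde^*(x\mid\cdot)$ is continuous on $\mfr{F}_p$. This last step is where the $\onep$-norm topology and the joint-continuity hypothesis are genuinely exploited, and it is the delicate part: it requires assembling the $f$-independence of the contraction rate and of the value bound, Berge's theorem, and the passage to a uniform limit.
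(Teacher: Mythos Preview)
Your proposal is correct and follows essentially the same route as the paper: you derive the same $k$-stage contraction modulus $\beta^k(1+kM)^n$ from bounded increments (the paper phrases this via the three weighted-norm conditions and then cites \cite{bertsekas_2007}), and your continuity-in-$f$ argument---induction plus Berge for the iterates, then uniform convergence from the $f$-independent contraction rate and value bound---matches the paper's proof almost verbatim. The only cosmetic difference is that you spell out the identification $U_f=\vTilde^*(\cdot\mid f)$ and the ``iff'' characterization more explicitly, whereas the paper defers these to standard references.
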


\begin{proof}
 We have the following three properties:
 \begin{enumerate}
 \item By growth rate bound in Assumption \ref{as:continuity} we have $\sup_{a} |
   \pi(x,a,f)|/W(x) \leq K$ for all $x$.
 \item We have:
 \[  \b{W}(x) = \sup_{a \in \mc{A}} \sum_{x'} \mbf{P}(x'\ | \ x, a, f) W(x') \leq (1 +
  \norm{x}_\infty + M)^{n}, \]
 since the increments are bounded (Assumption~\ref{as:continuity}).  Thus
 $\b{W}(x)/W(x) \leq  (1 + M)^{n}$ for all $x$.
 \item Finally, fix $\rho$ such that $0 < \rho < 1$ and let:
 \[ \b{W}_k(x) = \sup_{\mu \in \mfr{M}_O} \mbf{E}[ W(x_k) | x_0 =
 x, \mu], \]
 where the state evolves according to $x_{t+1} \sim \mbf{P}(\cdot \ | \ x_t,
 \mu(x_t), f)$.
 By bounded increments in Assumption \ref{as:continuity}, we
 have:
 \[  \beta^k \b{W}_k(x) \leq \beta^k (1 + \norm{x}_\infty +
 kM)^{n} \leq \beta^k (1 + kM)^n W(x). \]
 By choosing $k$ sufficiently large so that $\beta^k (1 + kM)^n <
 \rho$, we have:
 \[  \beta^k \b{W}_k(x) \leq \rho W(x).\]
 %

 \end{enumerate}

 Given (1)-(3), by standard arguments (see, e.g., \citealp{bertsekas_2007}), it
 follows that $T_f$ is a $k$-stage $\rho$-contraction with respect to
 the weighted sup norm, value iteration converges to $\vTilde^*(\cdot\ |\
 f)$, the Bellman equation holds, and any (stationary, nonrandomized)
 oblivious strategy that maximizes
 the right hand side in \eqref{eq:bellman} for each $x \in \mc{X}$ is optimal.
 Observe that
 since $\vTilde^*(\cdot\ |\ f) \in B(\mc{X})$ for any~$f$, it follows
 that $\vTilde^*(x\ |\ f) < \infty$ for all $x$.  In fact, by Lemma
 \ref{lem:supbound}, $|\vTilde^*(x\ |\ f)| \leq C(x,0)$ for all $x$.


 Next we show that $\vTilde^*(x \ |\ f)$ is continuous in $f$.  Define $Z(x) = 0$
 for all $x$, and let $V_f^{(\ell)} = T_f^{\ell}Z$.  We first show that
 $V_f^{(\ell)}(x)$ is continuous in $f$.
 To see this, we proceed by induction.  The result is trivially true at
 $\ell = 0$.  Next, observe that $\pi(x,a,f)$ is jointly continuous in~$a$
 and~$f$ for each fixed~$x$ by Assumption~\ref{as:continuity}.  Suppose
 $V_f^{(\ell)}(x)$ is continuous in $f$ for each $x$;
 then
 $V_f^{(\ell)}(x') \mbf{P}(x'\ |\  x, a, f)$ is jointly continuous in $a$ and
 $f$ for each fixed $x,x'$.  Since the kernel has bounded increments from
 Assumption \ref{as:continuity}, we conclude that $\sum_{x'}
 V_f^{(\ell)}(x') \mbf{P}(x'\ | \ x, a, f)$ is jointly continuous
 in~$a$ and~$f$ for each fixed $x$.  It follows by Berge's maximum
 theorem
 \citep{aliprantis_2006} that $V_f^{(\ell+1)}(x)$ is continuous in $f$.

 Fix $\epsilon > 0$.   Since $T_f$
 is a $k$-stage $\rho$-contraction in the weighted sup
 norm for every $f$, it follows that for all sufficiently large $\ell$,
 for every $f$ there holds:
 \[ | V_f^{(\ell)}(x) - \vTilde^*(x\ |\ f)| \leq W(x) \epsilon. \]
 So now suppose that $f_n \to f$ in the $\onep$ norm.  Since
 $V_f^{\ell}(x)$ is continuous in $f$, for all sufficiently
 large $n$ there holds:
 \[ | V_{f_n}^{(\ell)}(x) - V_f^{(\ell)}(x) | \leq \epsilon. \]
 Thus using the triangle inequality, for all sufficiently large $n$ we
 have:
 \[ | \vTilde^*(x\ |\ f) - \vTilde^*(x\ |\ f_n)| \leq (2W(x)
 +1)\epsilon. \]
 Since $\epsilon$ was arbitrary it follows that the left hand side
 approaches zero as $n \to\infty$, as required.
 Finally, observe that by a similar argument as above,
 \[ \sum_{x'} \vTilde^*(x'\ |\ f) \mbf{P}(x'\ | \ x, a, f) \]
 is a continuous function of $a$ for each fixed $x$ and $f$; since
 $\pi(x,a,f)$ is also continuous in $a$ for each fixed $f$, the right hand side of
 \eqref{eq:bellman} is continuous in $a$ for each fixed $f$.  Since $\mc{A}$ is
 compact, it follows that there exists an optimal action at each state $x$, and thus
 there exists an optimal strategy given~$f$. 
 \end{proof}

\section{Existence: Proof}

%

\subsection{Closed Graph: Proof}

Throughout this subsection we suppose Assumption \ref{as:continuity} holds.

\begin{lemma}
\label{lemma:cont-P}
For each $f$, $\mc{P}(f)$ is compact; further, the
correspondence $\mc{P}$ is upper hemicontinuous
on $\mfr{F}_p$.
\end{lemma}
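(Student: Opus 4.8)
The plan is to characterize $\mc{P}(f)$ through the Bellman equation and then reduce everything to a coordinatewise application of Berge's maximum theorem. By Lemma~\ref{lem:bellman}, an oblivious strategy $\mu \in \mfr{M}_O$ belongs to $\mc{P}(f)$ if and only if, for every $x \in \mc{X}$, the action $\mu(x)$ maximizes over $a \in \mc{A}$ the quantity
\[
Q(x,a,f) \triangleq \pi(x,a,f) + \beta \sum_{x' \in \mc{X}} \vTilde^{*}(x' \mid f)\, \mbf{P}(x' \mid x, a, f).
\]
The first step is to verify that, for each fixed $x$, the map $(a,f) \mapsto Q(x,a,f)$ is jointly continuous on $\mc{A} \times \mfr{F}_p$, where $\mfr{F}_p$ carries the $\onep$ norm. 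Here $\pi(x,a,f)$ is jointly continuous by Assumption~\ref{as:continuity}; by bounded increments (Assumption~\ref{as:continuity}) the sum defining the continuation term runs over the finite set $\{x' : \|x'-x\|_\infty \le M\}$; and each summand is the product of $\vTilde^{*}(x' \mid f)$, continuous in $f$ by Lemma~\ref{lem:bellman}, with $\mbf{P}(x' \mid x, a, f)$, jointly continuous in $(a,f)$ by Assumption~\ref{as:continuity}. Hence the finite sum, and therefore $Q(x,\cdot,\cdot)$, is jointly continuous.

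Second, for each $x$ define $\mc{P}_x(f) = \arg\max_{a \in \mc{A}} Q(x,a,f)$. Since $\mc{A}$ is compact and $Q(x,\cdot,\cdot)$ is jointly continuous, Berge's maximum theorem \citep{aliprantis_2006} shows that $\mc{P}_x$ is nonempty- and compact-valued and upper hemicontinuous on $\mfr{F}_p$. Applying the ``if and only if'' characterization of Lemma~\ref{lem:bellman} coordinatewise gives $\mc{P}(f) = \prod_{x \in \mc{X}} \mc{P}_x(f)$, where the product is taken inside $\mc{A}^{\mc{X}}$ with the product topology, which coincides with the topology of pointwise convergence on $\mfr{M}_O$. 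Compactness of $\mc{P}(f)$ is then immediate from Tychonoff's theorem, each factor $\mc{P}_x(f)$ being compact.

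Third, for upper hemicontinuity of $\mc{P}$ itself: since $\mc{X}$ is countable and $\mc{A}$ is a compact metric space, $\mc{A}^{\mc{X}}$ is compact and metrizable, so it suffices to show $\mc{P}$ has a closed graph along sequences (its range trivially lying in the compact set $\mc{A}^{\mc{X}}$). Suppose $f_n \to f$ in the $\onep$ norm and $\mu_n \to \mu$ pointwise with $\mu_n \in \mc{P}(f_n)$. Fixing $x$, we have $\mu_n(x) \to \mu(x)$ in $\mc{A}$ and $\mu_n(x) \in \mc{P}_x(f_n)$; since $\mc{P}_x$ is upper hemicontinuous and compact-valued into a compact metric space it has a closed graph, so $\mu(x) \in \mc{P}_x(f)$. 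As $x$ was arbitrary, $\mu \in \prod_x \mc{P}_x(f) = \mc{P}(f)$, so $\mc{P}$ has a closed graph; together with the compact range this gives upper hemicontinuity. The only genuinely delicate point is the joint continuity of the continuation term $\sum_{x'} \vTilde^{*}(x'\mid f)\,\mbf{P}(x'\mid x,a,f)$, and this is handled cleanly by the bounded-increments assumption, which truncates the sum to finitely many terms, combined with the continuity of $\vTilde^{*}(\cdot\mid f)$ in $f$ from Lemma~\ref{lem:bellman}; the rest is a routine assembly of Berge's theorem, Tychonoff's theorem, and the standard equivalence between closed graph and upper hemicontinuity for correspondences with compact range.
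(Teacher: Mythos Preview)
Your proof is correct and follows essentially the same route as the paper: establish joint continuity of the Bellman right-hand side using Assumption~\ref{as:continuity} and Lemma~\ref{lem:bellman}, apply Berge's maximum theorem to each coordinate correspondence $\mc{P}_x$, identify $\mc{P}(f)=\prod_x \mc{P}_x(f)$, invoke Tychonoff for compactness, and pass from closed graph to upper hemicontinuity via the compact range $\mc{A}^{\mc{X}}$. Your write-up is slightly more explicit than the paper's (notably in spelling out why the continuation sum is finite and why sequential closed graph suffices), but the argument is the same.
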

\begin{proof}
By Assumption \ref{as:continuity}, $\pi(x, a, f)$ is jointly continuous in
$a$ and $f$.
Lemma~\ref{lem:bellman} establishes that the optimal oblivious value
function $\vTilde^*(x\ |\ f)$ is continuous
in $f$, and so as in the proof of that lemma, it follows that for a
fixed state  $x$, $\pi(x, a, f) + \beta \sum_{x'}\vTilde^*(x'\ |\ f)
\mbf{P}(x'\ |\  x,a, f)$ is finite and jointly continuous in $a$ and $f$.
Define the set $\mc{P}_x(f) \subset \mc{A}$ as the set of actions
that achieve the maximum on the right hand side of \eqref{eq:bellman}; this is nonempty
as $\mc{A}$ is compact (Assumption \ref{as:continuity}) and the right hand side is continuous in $a$.
By Berge's maximum theorem, for each $x$ the correspondence~$\mc{P}_x$ is upper hemicontinuous with compact values~\citep{aliprantis_2006}.

By Lemma \ref{lem:bellman}, $\mu \in \mc{P}(f)$ if and only if $\mu(x) \in
\mc{P}_x(f)$ for each $x$.  Note that we have endowed the set of
strategies with the topology of pointwise convergence. The range space of $\mc{P}$ is
an infinite product of the compact action space~$\mc{A}$ (Assumption
\ref{as:continuity}) over the countable state space.
Hence by Tychonoff's theorem~\citep{aliprantis_2006}, the range space
of $\mc{P}$ is compact.  Further, since~$\mc{P}_x$ is compact-valued,
it follows that $\mc{P}$ is compact-valued.
Since $\mc{P}_{x}(f)$ is compact-valued and upper hemicontinuous, the Closed Graph
Theorem ensures that $\mc{P}_{x}$  has a closed
graph~\citep{aliprantis_2006}. This in turn ensures that
$\mc{P}$ has closed graph; again by the Closed Graph Theorem, we
conclude that $\mc{P}$ is upper hemicontinuous. 
\end{proof}

\begin{sproof}{Proof of Proposition \ref{prop:uhc}.}
Suppose $f_k \to f$ in the $\onep$ norm, and that $g_k \to g$ in the
$\onep$ norm, where $g_k \in \Phi(f_k)$ for all $k$.  We must show
that $g \in \Phi(f)$.  For each $k$, let $\mu_k \in \mc{P}(f_k)$ be an
optimal oblivious strategy such that $g_k \in \mc{D}(\mu_k, f_k)$.  As in the proof
of Lemma~\ref{lemma:cont-P}, the range space of $\mc{P}$ is compact in
the topology of pointwise convergence; therefore, taking subsequences
if necessary, we can assume without loss of generality that $\mu_k$
converges to some strategy $\mu \in \mfr{M}_O$ pointwise.  By upper
hemicontinuity of $\mc{P}$ (Lemma \ref{lemma:cont-P}), we have $\mu
\in \mc{P}(f)$.

By definition of $\mc{D}$, it follows that for all $x$:
\begin{equation}
\label{eq:invariant_k}
 g_k(x) = \sum_{x'} g_k(x') \mbf{P}(x | x', \mu_k(x'), f_k).
\end{equation}
Since $\mbf{P}(x | x', a, f)$ is jointly continuous in action and
population state (Assumption \ref{as:continuity}), it follows that for all $x$ and $x'$:
\[ \mbf{P}(x | x', \mu_k(x'), f_k) \to \mbf{P}(x | x', \mu(x'), f) \]
as $k \to \infty$.  Further, if $g_k \to g$ in the $\onep$ norm, then
in particular, $g_k(x) \to g(x)$ for all $x$.  Finally, observe that
for all $a$ and $f$, we have $\mbf{P}(x | x', a, f) = 0$ for all
states $x'$ such that $\|x' - x\|_\infty > M$, since increments are
bounded (Assumption \ref{as:continuity}).  Thus:
\[ \sum_{x'} g_k(x') \mbf{P}(x | x', \mu_k(x'), f_k) \to \sum_{x'}
g(x') \mbf{P}(x | x', \mu(x'), f) \]
as $k \to \infty$.  Taking the limit as $k \to \infty$ on both sides
of \eqref{eq:invariant_k} yields:
\begin{equation}
 g(x) = \sum_{x'} g(x') \mbf{P}(x | x', \mu(x'), f),
\end{equation}
which establishes that $g \in \mc{D}(\mu, f)$.  Since we had $\mu \in
\mc{P}(f)$, we conclude $g \in \Phi(f)$, as required. 
\end{sproof}

\subsection{Convexity: Proof}

\begin{sproof}{Proof of Proposition \ref{prop:finiteconvex}.}
Fix $f \in \mfr{F}_p$, and let $g_1, g_2$ be elements of $\Phi(f)$.  Let
$\mu_1, \mu_2 \in \mc{P}(f)$ be strategies such that $g_i \in
\mc{D}(\mu_i, f)$, $i = 1,2$.  Then for $i = 1,2$ and all $x' \in \mc{X}$, we have:
\[ g_i(x') = \sum_x g_i(x') \mbf{P}(x'\ | \ x, \mu_i(x), f). \]

Fix $\delta$, $0 \leq \delta \leq 1$, and for each $x$, define $g(x)$ by:
\[ g(x) = \delta g_1(x) + (1 - \delta) g_2(x). \]
We must show $g \in \Phi(f)$.  Define a new strategy $\mu$ as follows:
for each $x$ such that $g(x) > 0$,
\[ \mu(x) = \frac{ \delta g_1(x) \mu_1(x) + (1 - \delta) g_2(x) \mu_2(x)}{g(x)}. \]
For each $x$ such that $g(x) = 0$, let $\mu(x) = \mu_1(x)$.

We claim that $\mu \in \mc{P}(f)$, i.e., $\mu$ is an optimal oblivious
strategy given $f$; and that $g \in \mc{D}(\mu, f)$, i.e., that $g$ is an
invariant distribution given strategy $\mu$ and population state $f$.  This suffices to
establish that $g \in \Phi(f)$.

To establish the claim, first observe that under Definition
\ref{def:finiteaction}, the right hand side of \eqref{eq:bellman} is
{\em linear} in $a$.  Thus any convex combination of two optimal
actions is also an optimal action.  This establishes that for every
$x$, $\mu(x)$ achieves the maximum on the right hand side of
\eqref{eq:bellman}; so we conclude $\mu \in \mc{P}(f)$.

Let $T = \{ x : g(x) > 0 \}$.  Then:
\begin{align*}
g(x') &= \delta g_1(x') + (1-\delta)g_2(x') \\
&= \sum_{x} \delta g_1(x) \mbf{P}(x'\ | \ x, \mu_1(x), f) + (1 - \delta)
g_2(x) \mbf{P}(x'\ | \ x, \mu_2(x), f) \\
&= \sum_x \sum_s \left( \delta g_1(x) \mu_1(x)(s) + (1 - \delta)
g_2(x) \mu_2(x)(s) \right) \mbf{P}(x'\ |\ x,s, f) \\
&= \sum_{x \in T} \sum_s g(x) \mu(x)(s) \mbf{P}(x'\ | \ x, s, f).
\end{align*}
The first equality is the definition of $g(x')$, and the second equality
follows by expanding the invariant distribution equations for $g_1$
and $g_2$.  The third equality follows by expanding the sum over pure
actions $s$.  Finally, in the last equality, we substitute the
definition of $g(x)$, and we also observe that for $x \not\in T$,
$g(x) = 0$---and therefore, $g_1(x) = g_2(x) = 0$.  Since $g(x) = 0$
for $x \not\in T$, it follows that:
\[ \sum_{x \not\in T} \sum_s g(x) \mu(x)(s) \mbf{P}(x'\ |\ x, s, f) = 0. \]
It follows that:
\[ g(x') = \sum_x g(x) \mbf{P}(x'\ |\ x, \mu(x), f), \]
as required. 
\end{sproof}

\begin{lemma}
\label{lem:incrconc}
Suppose Assumptions \ref{as:continuity} and \ref{as:convexity} hold.
Then $\vTilde^*(\cdot\ |\ f)$ is strictly increasing for every $f \in
\mfr{F}_p$, and the right hand side of \eqref{eq:bellman} is
strictly concave in $a$.
\end{lemma}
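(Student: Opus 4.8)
The plan is to run the value-iteration scheme of Lemma~\ref{lem:bellman} and propagate monotonicity and concavity through it. Fix $f \in \mfr{F}_p$ and let $R_f(x,a)$ denote the right-hand side of \eqref{eq:bellman}, i.e. $R_f(x,a) = \pi(x,a,f) + \beta\sum_{x'}\vTilde^*(x'\ |\ f)\mbf{P}(x'\ |\ x,a,f)$; this is finite for every $(x,a)$ because increments are bounded (Assumption~\ref{as:continuity}) and $\vTilde^*(\cdot\ |\ f) \in B(\mc{X})$. I would prove the two assertions in the order stated: first that $\vTilde^*(\cdot\ |\ f)$ is nondecreasing, then that it is strictly increasing, and finally that $R_f(x,\cdot)$ is strictly concave; the key point is that the strict-monotonicity step uses only Assumption~\ref{as:convexity}.2 and hence does not rely on any concavity, so there is no circularity when strict monotonicity of $\vTilde^*$ is fed into the strict-concavity step.

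\textbf{Monotonicity.} Let $V^{(0)}\equiv 0$ and $V^{(\ell+1)} = T_f V^{(\ell)}$. I would argue by induction that each $V^{(\ell)}$ is nondecreasing: if $V^{(\ell)}$ is nondecreasing and $x_1 \leq x_2$, then for every $a\in\mc{A}$ we have $\pi(x_1,a,f)\leq\pi(x_2,a,f)$ by Assumption~\ref{as:convexity}.2, and $\sum_{x'}V^{(\ell)}(x')\mbf{P}(x'\ |\ x_1,a,f) \leq \sum_{x'}V^{(\ell)}(x')\mbf{P}(x'\ |\ x_2,a,f)$ since $\mbf{P}(\cdot\ |\ x,a,f)$ is stochastically nondecreasing in $x$ and $V^{(\ell)}$ is nondecreasing; taking the supremum over $a\in\mc{A}$ preserves the inequality, so $V^{(\ell+1)}$ is nondecreasing. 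By Lemma~\ref{lem:bellman}, $V^{(\ell)}\to\vTilde^*(\cdot\ |\ f)$ in the weighted sup norm, hence pointwise, and a pointwise limit of nondecreasing functions is nondecreasing; thus $\vTilde^*(\cdot\ |\ f)$ is nondecreasing.

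\textbf{Strict monotonicity.} Take $x_1,x_2\in\mc{X}$ with $x_2\geq x_1$ and $x_2\neq x_1$, and let $a^*\in\mc{A}$ attain the supremum in \eqref{eq:bellman} at state $x_1$ (such $a^*$ exists by Lemma~\ref{lem:bellman}). Then
\begin{align*}
\vTilde^*(x_2\ |\ f) &\geq \pi(x_2,a^*,f) + \beta\sum_{x'}\vTilde^*(x'\ |\ f)\mbf{P}(x'\ |\ x_2,a^*,f) \\
&> \pi(x_1,a^*,f) + \beta\sum_{x'}\vTilde^*(x'\ |\ f)\mbf{P}(x'\ |\ x_1,a^*,f) = \vTilde^*(x_1\ |\ f),
\end{align*}
where the strict inequality holds because $\pi(\cdot,a^*,f)$ is strictly increasing in the state (Assumption~\ref{as:convexity}.2) and $x\mapsto\sum_{x'}\vTilde^*(x'\ |\ f)\mbf{P}(x'\ |\ x,a^*,f)$ is nondecreasing (using that $\vTilde^*(\cdot\ |\ f)$ is nondecreasing and $\mbf{P}$ is stochastically nondecreasing in $x$). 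Hence $\vTilde^*(\cdot\ |\ f)$ is strictly increasing.

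\textbf{Strict concavity of the Bellman right-hand side.} The term $\pi(x,\cdot,f)$ is concave in $a$ by Assumption~\ref{as:convexity}.3, and $a\mapsto\sum_{x'}\vTilde^*(x'\ |\ f)\mbf{P}(x'\ |\ x,a,f) = \E_a[\vTilde^*(\cdot\ |\ f)]$ is concave in $a$ because $\mbf{P}(\cdot\ |\ x,\cdot,f)$ is stochastically concave in $a$ and $\vTilde^*(\cdot\ |\ f)$ is nondecreasing; so $R_f(x,\cdot)$ is concave. For strictness I would split into the two cases permitted by Assumption~\ref{as:convexity}.3: if $\pi(x,\cdot,f)$ is strictly concave, then $R_f(x,\cdot)$ is a sum of a strictly concave and a concave function, hence strictly concave; if instead $\mbf{P}(\cdot\ |\ x,\cdot,f)$ is strictly stochastically concave in $a$, then since $\vTilde^*(\cdot\ |\ f)$ is strictly increasing, $\E_a[\vTilde^*(\cdot\ |\ f)]$ is strictly concave in $a$, and $R_f(x,\cdot)$ is again strictly concave. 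The only point needing care — and it is not really an obstacle — is exactly this last dichotomy: one must record that ``strictly concave in $a$'' for the kernel means ``strictly stochastically concave,'' so that it can be combined with the strict monotonicity of $\vTilde^*(\cdot\ |\ f)$ established in the previous step.
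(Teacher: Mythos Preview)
Your proof is correct and follows essentially the same approach as the paper: value iteration from the zero function to get nondecreasingness, bootstrap to strict monotonicity via the Bellman equation, then feed strict monotonicity of $\vTilde^*(\cdot\ |\ f)$ into the concavity argument to obtain strict concavity of the right-hand side. Your write-up is in fact more explicit than the paper's, which leaves the inductive step and the case split in Assumption~\ref{as:convexity}.3 implicit; your careful separation of the two strictness cases (strictly concave payoff versus strictly stochastically concave kernel) is exactly the right reading of the definition of strict stochastic concavity given in the paper.
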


\begin{proof}
Define $Z(x) = 0$ for all $x$, and let $V_f^{(\ell)} =
T_f^\ell Z$.  Observe that if $V_f^{(\ell)}$ is nondecreasing, then under
the conditions of the lemma, it follows that $V_f^{(\ell+1)}$ will be
nondecreasing.  Taking the limit as $n \to
\infty$, we conclude (from convergence of value iteration) that
$\vTilde^*(\cdot\ |\ f)$ is nondecreasing, and thus the right hand side of
\eqref{eq:bellman} is {\em strictly increasing} in $x$.  From this it
follows that in fact, $\vTilde^*(\cdot\ |\ f)$ is strictly increasing.

Since $\vTilde^*(\cdot\ |\ f)$ is strictly increasing, $\pi(x,a,f)$ is concave in $a$, and the kernel is
stochastically concave in $a$, with at least one of the last two
strictly concave, it follows that the right hand
side of \eqref{eq:bellman} is strictly concave in $a$. 
\end{proof}

\begin{sproof}{Proof of Proposition \ref{prop:convexity-new}.}
Under Assumptions \ref{as:continuity} and  \ref{as:UIC}, the optimal
action in \eqref{eq:bellman} can be shown to be unique (see
\citealp{DORSAT03}).  It follows that $\mc{P}(f)$ is a singleton.

From the preceding lemma, Assumptions \ref{as:continuity} and
\ref{as:convexity} together also guarantee a unique optimal
solution in the right hand side of \eqref{eq:bellman}, for every $x \in
\mc{X}$.  Thus under either of these conditions the optimal strategy
given $f$ is unique, i.e., $\mc{P}(f)$ is a singleton. The result follows by Proposition \ref{prop:single_convex}. 
\end{sproof}

\subsection{Compactness: Proof}

Throughout this subsection we suppose $\mc{X} = \Z_+^d$ and that
Assumptions \ref{as:continuity} and \ref{as:compactness} are in
effect.

\begin{lemma}
\label{lem:coupling}
Given $x' \geq x$, $x,x'\in\mc{X}$, $a \in \mc{A}$, and {$f \in \mfr{F}$}, there exists a probability
space with random variables $\xi' \sim \mbf{Q}( \cdot \ | \ x', a, f)$, $\xi
\sim \mbf{Q}( \cdot \ |\ x, a, f)$, such that $\xi' \leq \xi$ almost surely,
and $x' + \xi' \geq x + \xi$ almost surely.
\end{lemma}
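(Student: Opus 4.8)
The plan is to construct the coupling one coordinate at a time, using multiplicative separability (Assumption~\ref{as:compactness}.2) to glue the coordinates together, and within each coordinate to use the monotone (quantile) coupling driven by a single uniform variable. Set up a probability space carrying i.i.d.\ random variables $U_1,\dots,U_d$ uniform on $[0,1]$. For each $\ell$, let $F_\ell$ and $G_\ell$ denote the cumulative distribution functions of $\mbf{Q}_\ell(\cdot\mid x,a,f)$ and $\mbf{Q}_\ell(\cdot\mid x',a,f)$ respectively (each finitely supported on $\Z$, by bounded increments), with generalized inverses $F_\ell^{-1}(u)=\inf\{t:F_\ell(t)\ge u\}$ and $G_\ell^{-1}(u)=\inf\{t:G_\ell(t)\ge u\}$, and define $\xi_\ell=F_\ell^{-1}(U_\ell)$ and $\xi_\ell'=G_\ell^{-1}(U_\ell)$. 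Then $\xi_\ell\sim\mbf{Q}_\ell(\cdot\mid x,a,f)$ and $\xi_\ell'\sim\mbf{Q}_\ell(\cdot\mid x',a,f)$; since the $U_\ell$ are independent and the full increment kernel factors as $\mbf{Q}(\cdot\mid x,a,f)=\prod_\ell\mbf{Q}_\ell(\cdot\mid x,a,f)$ (and similarly at $x'$), the vectors $\xi=(\xi_1,\dots,\xi_d)$ and $\xi'=(\xi_1',\dots,\xi_d')$ have the laws $\mbf{Q}(\cdot\mid x,a,f)$ and $\mbf{Q}(\cdot\mid x',a,f)$. It then remains only to check the two almost-sure inequalities, and since everything is coordinatewise it suffices to verify them for a fixed $\ell$.

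Fix $\ell$ and write $\Delta_\ell=x_\ell'-x_\ell\ge 0$. For the first inequality, Assumption~\ref{as:compactness}.4 gives that $x'\ge x$ implies $\mbf{Q}_\ell(\cdot\mid x',a,f)$ is stochastically dominated by $\mbf{Q}_\ell(\cdot\mid x,a,f)$, equivalently $G_\ell(t)\ge F_\ell(t)$ for all $t$; hence $\{t:G_\ell(t)\ge u\}\supseteq\{t:F_\ell(t)\ge u\}$ and $G_\ell^{-1}(u)\le F_\ell^{-1}(u)$ for every $u$, so $\xi_\ell'\le\xi_\ell$ almost surely. For the second inequality, observe that $\xi_\ell\sim\mbf{Q}_\ell(\cdot\mid x,a,f)$ implies $x_\ell+\xi_\ell\sim\mbf{P}_\ell(\cdot\mid x,a,f)$, and likewise $x_\ell'+\xi_\ell'\sim\mbf{P}_\ell(\cdot\mid x',a,f)$. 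By Assumption~\ref{as:compactness}.3, $x'\ge x$ implies $\mbf{P}_\ell(\cdot\mid x',a,f)$ stochastically dominates $\mbf{P}_\ell(\cdot\mid x,a,f)$; writing this in terms of cumulative distribution functions and substituting $s=t-x_\ell'$ yields $G_\ell(s)\le F_\ell(s+\Delta_\ell)$ for all $s$. Consequently, if $G_\ell(s)\ge u$ then $F_\ell(s+\Delta_\ell)\ge G_\ell(s)\ge u$, so $s+\Delta_\ell\ge F_\ell^{-1}(u)$; taking the infimum over such $s$ gives $G_\ell^{-1}(u)\ge F_\ell^{-1}(u)-\Delta_\ell$, and therefore $x_\ell'+\xi_\ell'=x_\ell'+G_\ell^{-1}(U_\ell)\ge x_\ell'+F_\ell^{-1}(U_\ell)-\Delta_\ell=x_\ell+F_\ell^{-1}(U_\ell)=x_\ell+\xi_\ell$ almost surely. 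Since both hold for every $\ell$, we get $\xi'\le\xi$ and $x'+\xi'\ge x+\xi$ almost surely.

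The one point requiring care is that a single coupling must realize both orderings at once, so it is not enough to invoke Strassen's theorem twice. The resolution is that the monotone coupling $u\mapsto(F_\ell^{-1}(u),G_\ell^{-1}(u))$ simultaneously encodes both stochastic-dominance facts: the stochastic monotonicity of $\mbf{Q}_\ell$ in $x$ becomes the pointwise bound $G_\ell\ge F_\ell$, while the stochastic monotonicity of $\mbf{P}_\ell$ in $x$ becomes the shifted bound $G_\ell(\cdot)\le F_\ell(\cdot+\Delta_\ell)$, and each of these passes cleanly to the generalized inverses as above. The reduction to one dimension via multiplicative separability, and the identity $F_\ell^{-1}(U)\sim F_\ell$ for uniform $U$, are routine.
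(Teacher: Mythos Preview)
Your proof is correct and follows essentially the same route as the paper: a coordinatewise monotone (quantile) coupling, with Assumption~\ref{as:compactness}.4 giving $\xi'_\ell\le\xi_\ell$ and Assumption~\ref{as:compactness}.3 giving $x'_\ell+\xi'_\ell\ge x_\ell+\xi_\ell$. One minor difference worth noting: you use independent uniforms $U_1,\dots,U_d$ across coordinates, which is needed so that $\xi$ and $\xi'$ actually have the product laws $\mbf{Q}(\cdot\mid x,a,f)$ and $\mbf{Q}(\cdot\mid x',a,f)$; the paper's write-up uses a single uniform $U$ for all coordinates, so your version is arguably more careful on this point.
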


\begin{proof}
The proof uses a coupling argument.  Let $U$ be a
uniform random variable on $[0,1]$.  Let $F_\ell$ (resp., $F_\ell'$) be the
cumulative distribution function of $\mbf{Q}_\ell(\cdot\ |\ x,a, f)$ (resp.,
$\mbf{Q}_\ell(\cdot\ |\ x', a, f)$), and let $G_\ell$ (resp., $G_\ell'$)
be the cumulative distribution function of $\mbf{P}_\ell(\cdot\ |\ x, a, f)$
(resp., $\mbf{P}_\ell(\cdot\ |\ x', a, f)$).  By Assumption
\ref{as:compactness}, $\mbf{P}_\ell(\cdot\ |\ x, a, f)$ is
stochastically nondecreasing in $x$, and $\mbf{Q}_\ell(\cdot\ |\ x, a,
f)$ is stochastically
nonincreasing in $x$.  Thus for all $z$, $F_\ell(z) \leq F_\ell'(z)$,
but for all $y$, $G_\ell(y) \geq
G_\ell'(y)$; further, $G_\ell(y) = F_\ell(y-x_\ell)$ (and $G_\ell'(y) =
F_\ell'(y - x_\ell')$).  Let $\xi_\ell = \inf \{ z_\ell: F_\ell(z_\ell) \geq U\}$, and let
$\xi_\ell' = \inf \{ z_\ell : F'(z_\ell) \geq U \}$.  Then $\xi_\ell
\geq \xi_\ell'$ for all $\ell$, i.e., $\xi \geq \xi'$.  Rewriting
the definitions, we also have $x_\ell + \xi_\ell = \inf \{ y_\ell : F_\ell(y_\ell - x_\ell) \geq
U\}$, and $x_\ell' + \xi_\ell' = \inf \{ y_\ell : F_\ell'(y_\ell -
x_\ell') \geq U\}$, i.e., $x_\ell +
\xi_\ell = \inf \{ y_\ell : G_\ell(y_\ell) \geq U\}$, and $x_\ell' +
\xi_\ell' = \inf \{ y_\ell : G_\ell'(y_\ell) \geq U\}$.  Thus $x_\ell
+ \xi_\ell \leq x_\ell' + \xi_\ell'$ for all $\ell$, i.e., $x' + \xi'
\geq x + \xi$, as required. 
\end{proof}

Given a set $S$ define $\rho_\infty(x,S) = \inf_{y \in S} \|x -
y\|_\infty.$ Thus $\rho_\infty$ gives the $\infty$-norm distance to a set.  We have
the following lemma.

\begin{lemma}
\label{lem:mulimit}
As $\|x\|_\infty \to \infty$, $\sup_{{f\in \mfr{F}}}\sup_{\mu \in \mc{P}(f)} \rho_\infty(\mu(x), \mc{A}') \to 0.$
\end{lemma}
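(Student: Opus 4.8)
The plan is to argue by contradiction, exploiting Assumption \ref{as:compactness}.1 (payoff increments at large states become nonpositive, uniformly in $a$ and $f$) together with Assumption \ref{as:compactness}.5(a) (actions outside $\mc{A}'$ can be replaced by a smaller action in $\mc{A}'$ with a strictly positive payoff gain controlled by $\kappa$). Suppose the claim fails: then there is $\epsilon > 0$, a sequence of states $x^{(k)}$ with $\|x^{(k)}\|_\infty \to \infty$, population states $f^{(k)} \in \mfr{F}$, and optimal oblivious strategies $\mu_k \in \mc{P}(f^{(k)})$ such that $\rho_\infty(\mu_k(x^{(k)}), \mc{A}') \geq \epsilon$ for all $k$. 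Write $a_k = \mu_k(x^{(k)})$; by Assumption \ref{as:compactness}.5(a) there is $a_k' \in \mc{A}'$ with $a_k' \leq a_k$ and $\pi(x^{(k)}, a_k', f^{(k)}) - \pi(x^{(k)}, a_k, f^{(k)}) \geq \kappa(\epsilon) > 0$ (using that $\kappa$ is strictly increasing and $\|a_k' - a_k\|_\infty \geq \rho_\infty(a_k, \mc{A}') \geq \epsilon$). The goal is to show that at state $x^{(k)}$ with $k$ large, the strategy obtained from $\mu_k$ by switching the action at $x^{(k)}$ to $a_k'$ strictly improves the oblivious value, contradicting optimality of $\mu_k$.

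The key step is a comparison of continuation values. By the Bellman equation \eqref{eq:bellman}, optimality of $a_k$ at $x^{(k)}$ means
\[
\vTilde^*(x^{(k)} \mid f^{(k)}) = \pi(x^{(k)}, a_k, f^{(k)}) + \beta \sum_{x'} \vTilde^*(x' \mid f^{(k)}) \mbf{P}(x' \mid x^{(k)}, a_k, f^{(k)}),
\]
and since $a_k'$ is feasible,
\[
\vTilde^*(x^{(k)} \mid f^{(k)}) \geq \pi(x^{(k)}, a_k', f^{(k)}) + \beta \sum_{x'} \vTilde^*(x' \mid f^{(k)}) \mbf{P}(x' \mid x^{(k)}, a_k', f^{(k)}).
\]
Subtracting, the immediate-reward gain of at least $\kappa(\epsilon)$ must be offset by a loss of at least $\kappa(\epsilon)/\beta$ in expected continuation value when moving from kernel $\mbf{P}(\cdot \mid x^{(k)}, a_k', f^{(k)})$ to $\mbf{P}(\cdot \mid x^{(k)}, a_k, f^{(k)})$. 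Using the coupling of Lemma \ref{lem:coupling} (valid since $a_k' \leq a_k$ and the kernels are stochastically nondecreasing in $a$ by Assumption \ref{as:compactness}.3) together with monotonicity of $\vTilde^*(\cdot \mid f)$ — which we can obtain coordinatewise from Assumption \ref{as:compactness}.1 by the same value-iteration induction as in Lemma \ref{lem:incrconc}, since each increment step cannot decrease a nondecreasing value function when payoffs are eventually nonincreasing — one sees that the continuation-value difference is at most the expected variation of $\vTilde^*$ over a bounded (by $M$, Assumption \ref{as:continuity}.2) region near $x^{(k)}$. The plan is then to bound this variation: using Assumption \ref{as:compactness}.1, the one-step payoff at states near $x^{(k)}$ differs by an amount tending to $0$ as $\|x^{(k)}\|_\infty\to\infty$, uniformly in $a,f$; a standard telescoping argument on the value iterates (each bounded in the weighted sup norm, cf. Lemma \ref{lem:bellman}) shows $\sup_{f}\sup_{\|y - x^{(k)}\|_\infty \le M}|\vTilde^*(y\mid f) - \vTilde^*(x^{(k)}\mid f)| \to 0$. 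For $k$ large this bound is below $\kappa(\epsilon)/\beta$, yielding the contradiction.

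The main obstacle I anticipate is making the uniform continuation-value estimate rigorous: the value functions $\vTilde^*(\cdot \mid f)$ are only bounded in the \emph{weighted} sup norm $\|\cdot\|_{W\mhyphen\infty}$, so naive differences grow polynomially in $\|x\|$. The fix is to avoid comparing $\vTilde^*$ at distant states and instead compare it only across the $M$-bounded increment window around $x^{(k)}$, propagating the ``eventually flat payoff'' property of Assumption \ref{as:compactness}.1 through finitely many — more precisely, $k$-stage-contraction-many — value-iteration steps, and to handle the geometric tail of the iteration using the $\rho$-contraction from Lemma \ref{lem:bellman}. One must be careful that all these estimates are uniform over $f \in \mfr{F}$, which is exactly why Assumptions \ref{as:compactness}.1 and \ref{as:compactness}.5 are stated uniformly in $f$; I would track that uniformity explicitly throughout.
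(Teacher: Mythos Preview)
Your overall strategy --- argue by contradiction, use Assumption \ref{as:compactness}.5(a) to find a better action $a_k' \in \mc{A}'$ with $a_k' \leq a_k$ at $x^{(k)}$, and then show the resulting one-step deviation is strictly profitable for large $k$ --- is exactly right and matches the paper. The difficulty is entirely in the continuation-value comparison, and here your proposal has two genuine gaps.

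First, the monotonicity claim for $\vTilde^*(\cdot \mid f)$ is not justified. You invoke the value-iteration induction of Lemma \ref{lem:incrconc}, but that lemma relies on Assumption \ref{as:convexity}.2 (payoff strictly increasing in $x$), which is \emph{not} in force here. Assumption \ref{as:compactness}.1 only says payoff increments are eventually nonpositive in $\limsup$; it is compatible with a payoff that oscillates or strictly decreases at large states, and yields no monotonicity of $\vTilde^*$.

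Second, your claim that $\sup_f \sup_{\|y - x^{(k)}\|_\infty \leq M} |\vTilde^*(y \mid f) - \vTilde^*(x^{(k)} \mid f)| \to 0$ is stronger than what is needed and may be false: because Assumption \ref{as:compactness}.1 is one-sided, the value function can strictly decrease at large states with differences bounded away from zero. What you actually need is only the one-sided bound that the continuation value under $a_k$ does not exceed that under $a_k'$ by more than $o(1)$. Your proposed route via the $k$-stage $\rho$-contraction does not deliver this either: the contraction is in the weighted norm $\|\cdot\|_{W\mhyphen\infty}$, so tail errors scale like $\rho^{\lfloor\ell/k\rfloor} W(x^{(k)})$, which blows up polynomially in $\|x^{(k)}\|_\infty$.

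The paper sidesteps both issues by working at the trajectory level rather than through a one-step Bellman comparison. The deviation it considers is non-Markovian: at time $0$ play $a_n'$, and at every later time play the \emph{same realized action sequence} $a_{t,n}$ generated under $\mu_n$ (not the same strategy). Using first the $a$-monotonicity of the kernel to couple the time-$1$ states, and then inductively the two-sided coupling of Lemma \ref{lem:coupling} (which gives both $\xi' \leq \xi$ and $x'+\xi' \geq x+\xi$), one obtains $0 \leq x_{t,n} - x_{t,n}' \leq \Delta_n$ with $\|\Delta_n\|_\infty \leq 2M$ for \emph{all} $t$ almost surely. The payoff difference at time $t$ is then bounded above by $\sup_{0\leq\delta,\ \|\delta\|_\infty \leq 2M}\sup_{a,f}\bigl(\pi(x_{t,n},a,f)-\pi(x_{t,n}-\delta,a,f)\bigr)$, to which Assumption \ref{as:compactness}.1 applies directly since $x_{t,n}$ lies within $tM$ of $x_n$. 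Truncating the discounted sum at a finite $T$ and bounding the tail by a uniform constant times $\beta^T/(1-\beta)$ gives the contradiction, without ever needing monotonicity of $\vTilde^*$ or weighted-norm contraction estimates.
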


\begin{proof}
Suppose the statement of the lemma fails; then there exists $r > 0$ and
a sequence ${f_n \in \mfr{F}}$, $\mu_n \in \mc{P}(f_n)$, and $x_n$ (where
$\|x_n\|_\infty\to\infty$) such that
$\rho_\infty(\mu_n(x_n), \mc{A}') \geq r$ for all $n$.
We use
this fact to construct a profitable deviation from the policy $\mu_n$,
for sufficiently large $n$.

Observe that by Assumption \ref{as:compactness}, there must exist
$a_n'\in \mc{A}'$ with $a_n' \leq \mu_n(x_n)$, such that:
\[ \pi(x_n, a_n', f_n) - \pi(x_n, \mu_n(x_n), f_n) \geq \kappa( \|a_n'
- \mu_n(x_n)\|_\infty) \geq \kappa( r) > 0, \]
where the last inequality follows since $\kappa$ is strictly
increasing with $\kappa(0) = 0$.  Importantly, note the bound on the
right hand side is a constant, 
independent of $n$.

Let $x_{0,n} = x_n$, and let $x_{t,n}$ and $a_{t,n}$ denote the state
and action sequence realized under $\mu_n$, starting from $x_{0,n}$,
under the kernel $\mbf{P}(\cdot | x, a, f_n)$.  We consider a
deviation from $\mu_n$, where at time~$0$, instead of playing $a_{0,n}
= \mu_n(x_n)$, the agent plays $a_{0,n}' = a_n'$; and then at all times in the
future, the agent {\em follows the same actions as the original
  sequence}, i.e., $a_{t,n}' = a_{t,n}$.  Let $x_{t,n}'$ denote the
resulting state sequence.

Since the kernel is stochastically nondecreasing in $a$, and $a_n'
\leq a_n$, it follows that there exists a common probability space
together with increments $\xi_{0,n}, \xi_{0,n}'$, such that $\xi_{0,n} \sim
\mbf{Q}(\cdot | x_n, a_n, f_n)$, $\xi_{0,n}' \sim \mbf{Q}(\cdot | x_n,
a_n', f_n)$, and $\xi_{0,n}' \leq \xi_{0,n}$ almost surely.  Thus we can
couple together $x_{1,n}$ and $x_{1,n}'$, by letting $x_{1,n} = x_n +
\xi_{0,n}$, and $x_{1,n}' = x_n + \xi_{0,n}'$.  In particular, observe
that with these definitions we have $x_{1,n} \geq x_{1,n}'$.  Let
$\Delta_n = \xi_{0,n} - \xi_{0,n}' \geq 0$.  Note that
$\|\Delta_n\|_\infty \leq 2M$, by Assumption \ref{as:continuity}
(bounded increments).

Next, it follows from Lemma \ref{lem:coupling} that there exists a
probability space with random variables $\xi_{1,n}, \xi_{1,n}'$ such
that $\xi_{1,n} \sim \mbf{Q}(\cdot | x_{1,n}, a_{1,n}, f_n)$ and
$\xi_{1,n}' \sim \mbf{Q}(\cdot | x_{1,n}', a_{1,n}, f_n)$, $\xi_{1,n}
\leq \xi_{1,n}'$ almost surely, and yet $x_{1,n} + \xi_{1,n} \geq
x_{1,n}' + \xi_{1,n}'$ almost surely.  Thus we can couple together
$x_{2,n}$ and $x_{2,n}'$, by letting $x_{2,n} = x_{1,n} + \xi_{1,n}$,
and let $x_{2,n}' = x_{1,n}' + \xi_{1,n}'$.  Proceeding inductively,
it can be shown that there exists a joint probability measure under
which $0 \leq x_{t,n} - x_{t,n}' \leq \Delta_n,$ 
almost surely, for all $t \geq 1$ (where the inequalities are
interpreted coordinatewise); this follows by a standard application of
the Kolmogorov extension theorem.

We now compare the payoffs obtained under these two sequences.  We
have:
\begin{align*}
\E\Big[ \sum_t \beta^t ( \pi(x_{t,n}, a_{t,n}, f_n)& - \pi(x_{t,n}',
  a_{t,n}', f_n)) \Big] = \pi(x_n, \mu_n(x_n), f_n) - \pi(x_n, a_n', f_n)\\&
\quad \quad  + \E\left[ \sum_{t\geq 1}
  \beta^t ( \pi(x_{t,n}, a_{t,n}, f_n) - \pi(x_{t,n}',
  a_{t,n}, f_n)) \right]\\
& \leq -\kappa(r) + \E\left[ \sum_{t\geq 1}
  \beta^t
\sup_{\delta \geq 0 :
    \|\delta\|_\infty \leq 2M} \sup_{a,f} (\pi(x_{t,n}, a, f)
  -\pi(x_{t,n}-\delta, a, f))\right].
\end{align*}

Since increments are bounded (Assumption
\ref{as:compactness}), in time $t$, the maximum distance the
state could have moved in each coordinate from the initial state $x$
is bounded by $tM$.  Thus if $x_{0,n} =
x_n$, then:
\[
\sup_{\delta \geq 0: \|\delta\|_\infty \leq
2M} \sup_{a,f} (\pi(x_{t,n}, a, f) -\pi(x_{t,n}-\delta, a, f))
\leq \sup_{ \substack{\delta \geq 0, \epsilon: \|\delta\|_\infty \leq
2M, \\ \|\epsilon\|_\infty \leq tM }} \sup_{a,f} (\pi(x_n + \epsilon, a, f) - \pi(x_n +
\epsilon-\delta, a, f)). \]
Let $A_{t,n}$ denote the right hand side of the preceding equation; note
that this is a deterministic quantity, and that the supremum is over a
finite set.  Thus from Assumption \ref{as:compactness}, we have
$\limsup_{n \to \infty} A_{t,n} \leq 0$.

Finally, observe that since $\limsup_{\|x\|_\infty \to \infty}
\sup_{a,f} (\pi(x + \delta, a, f) - \pi(x,
a, f)) \leq 0$, it follows that:
\[ \sup_{y \in \Z_+^d, \delta \geq 0: \|\delta\|_\infty \leq
  2M} \sup_{a,f} (\pi(y, a, f) - \pi(y-\delta,
a, f)) < \infty. \]
We denote the left hand side of the preceding inequality by $D$.  Note
that this is a constant independent of $n$.

Combining our arguments, we have that for all sufficiently large $n$,
there holds:
\[ \E\left[ \sum_t \beta^t ( \pi(x_{t,n}, a_{t,n}, f_n) - \pi(x_{t,n}',
  a_{t,n}', f_n)) \right] \leq -\kappa(r) + \sum_{t = 1}^T
\beta^t A_{t,n} + \frac{\beta^T D}{1 - \beta}. \]
By taking $T$ sufficiently large, we can ensure that the last term on
the right hand side is strictly less than $\kappa(r)/2$; and by
then taking $n$ sufficiently large, we can ensure that the second term
on the right hand side is also strictly less than $\kappa(r)/2$.  Thus for sufficiently large $n$, we conclude that the
left hand side is negative---contradicting optimality of $\mu_n$.  The
lemma follows. 
\end{proof}

\begin{lemma}
\label{lem:negincr}
There exists $\ol{\epsilon} > 0$ and $\ol{K}$ such that for all $\ell$
and all $x$ with $x_\ell \geq \ol{K}$, $ \\
\sup_f \sup_{\mu \in \mc{P}(f)} \sum_{z_\ell} z_\ell \mbf{Q}_\ell(z_\ell\ |\ x, \mu(x),
f) < -\ol{\epsilon}.$
\end{lemma}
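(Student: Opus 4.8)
The plan is to combine Lemma~\ref{lem:mulimit} (optimal actions at large states lie close to $\mc{A}'$), the stochastic monotonicity of $\mbf{Q}_\ell$ in the state (Assumption~\ref{as:compactness}.4), and the strict negativity of the $\ell$-th drift at actions in $\mc{A}'$ once $x'_\ell\ge K'$ (Assumption~\ref{as:compactness}.5(b)). Throughout, write $\phi_\ell(x,a,f)=\sum_{z_\ell}z_\ell\,\mbf{Q}_\ell(z_\ell\mid x,a,f)$ for the drift of coordinate $\ell$ and $\psi_\ell(x,a)=\sup_f\phi_\ell(x,a,f)$, which is continuous in $a$ by Assumption~\ref{as:compactness}.4. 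The target is a single negative constant dominating $\phi_\ell(x,\mu(x),f)$ uniformly over $f$, over $\mu\in\mc{P}(f)$, and over all $x$ with $x_\ell$ large.

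First I would collapse the state dependence onto one reference point. For each coordinate $\ell$ let $\bar x^{(\ell)}$ be the state whose $\ell$-th coordinate is $K'$ and whose other coordinates are $0$. Any $x$ with $x_\ell\ge K'$ dominates $\bar x^{(\ell)}$ coordinatewise, so the stochastic-monotonicity clause of Assumption~\ref{as:compactness}.4, applied to the nondecreasing test function $z_\ell\mapsto z_\ell$ (legitimate since the increment support is finite by bounded increments, Assumption~\ref{as:continuity}.2), gives $\phi_\ell(x,a,f)\le\phi_\ell(\bar x^{(\ell)},a,f)$ for all $a\in\mc{A}$, $f\in\mfr{F}$, hence $\psi_\ell(x,a)\le\psi_\ell(\bar x^{(\ell)},a)$. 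In particular, Assumption~\ref{as:compactness}.5(b) evaluated at $\bar x^{(\ell)}$ says $\psi_\ell(\bar x^{(\ell)},a')<0$ for each $a'\in\mc{A}'$; setting $\epsilon_1=\min_{1\le\ell\le d}\bigl(-\sup_{a'\in\mc{A}'}\psi_\ell(\bar x^{(\ell)},a')\bigr)$, a finite minimum of strictly positive numbers (using compactness of $\mc{A}'$ and continuity of $\psi_\ell(\bar x^{(\ell)},\cdot)$), yields $\psi_\ell(x,a')\le-\epsilon_1$ for every $\ell$, every $x$ with $x_\ell\ge K'$, and every $a'\in\mc{A}'$.

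Next I would enlarge $\mc{A}'$ to a neighborhood. The function $\psi_\ell(\bar x^{(\ell)},\cdot)$ is continuous on the compact set $\mc{A}$ and is $\le-\epsilon_1$ on the compact set $\mc{A}'$, so the open set $\{a\in\mc{A}:\psi_\ell(\bar x^{(\ell)},a)<-\epsilon_1/2\}$ contains some $r_\ell$-neighborhood of $\mc{A}'$ (a standard finite-subcover argument); put $r=\min_\ell r_\ell>0$. By Lemma~\ref{lem:mulimit} there is $\bar K_0$ such that $\rho_\infty(\mu(x),\mc{A}')<r$ whenever $\|x\|_\infty\ge\bar K_0$, for every $f$ and every $\mu\in\mc{P}(f)$. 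Take $\bar K=\max\{K',\bar K_0\}$. Then for any $\ell$, any $x$ with $x_\ell\ge\bar K$, any $f$, and any $\mu\in\mc{P}(f)$, chaining the inequalities gives $\phi_\ell(x,\mu(x),f)\le\psi_\ell(\bar x^{(\ell)},\mu(x))\le-\epsilon_1/2$, so $\sup_f\sup_{\mu\in\mc{P}(f)}\phi_\ell(x,\mu(x),f)\le-\epsilon_1/2$; setting $\bar\epsilon=\epsilon_1/4$ gives the strict inequality in the statement.

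I expect the continuity-to-neighborhood step to be the delicate one: $\psi_\ell(x,\cdot)$ is continuous in $a$ only for each fixed $x$, so a naive argument would produce an $x$-dependent neighborhood of $\mc{A}'$ and no uniform bound. The monotonicity-driven reduction of every admissible state onto the single reference state $\bar x^{(\ell)}$ is precisely what makes the neighborhood uniform. Two bookkeeping points also deserve care: that $\mc{A}'$ is nonempty (forced, else $\rho_\infty(\cdot,\mc{A}')\equiv+\infty$ and Lemma~\ref{lem:mulimit} would be vacuous), and the implicit consequence of Assumption~\ref{as:compactness}.5(b) that $K'\ge1$, so that $\bar x^{(\ell)}$ is a state at which negative increments are possible.
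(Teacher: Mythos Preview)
Your proof is correct and follows essentially the same approach as the paper: both reduce to the reference state $K'e^{(\ell)}$ via the stochastic monotonicity of $\mbf{Q}_\ell$ in $x$, use continuity of $\psi_\ell(\bar x^{(\ell)},\cdot)$ on the compact action set to extend the strict negativity from $\mc{A}'$ to a uniform neighborhood, and then invoke Lemma~\ref{lem:mulimit} to place $\mu(x)$ inside that neighborhood for large $\|x\|_\infty$. The only cosmetic differences are that the paper phrases the neighborhood step via uniform continuity rather than a finite-subcover argument, and applies the monotonicity reduction after (rather than before) the continuity step; your observation that fixing a single reference state is what makes the neighborhood uniform in $x$ is exactly the point.
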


\begin{proof}
Fix $\epsilon > 0$
so that for all $\ell$ and all $x'$ with $x_\ell' \geq K', $ $ \sup_{a' \in \mc{A}'} \sup_f \sum_{z_\ell} z_\ell \mbf{Q}_\ell(z_\ell | x', a', f) <
-\epsilon;$ such a constant exists by the last part of Assumption
\ref{as:compactness}.  Observe that since $\mc{A}$ is compact and
$\sup_f \sum_{z_\ell} {z_\ell} \mbf{Q}_\ell({z_\ell}\ |\ x, a, f)$ is
continuous in $a$ (Assumption \ref{as:compactness}), it follows that
$\sup_f \sum_{z_\ell} {z_\ell} \mbf{Q}_\ell({z_\ell}\ |\ x, a, f)$ is
in fact {\em uniformly} continuous
in $a \in \mc{A}$.  Let
$e^{(\ell)}$ denote the $\ell$'th standard basis vector (i.e.,
$e_{\ell'}^{(\ell)} = 0$ for $\ell' \neq \ell$, and $e_{\ell}^{(\ell)}
= 1$).  By uniform continuity, we can conclude there must exist a $\delta_\ell > 0$ such
that if $\|a -  a'\|_\infty < \delta_\ell$, then:
\[ \left| \sup_f \sum_{z_\ell} {z_\ell} \mbf{Q}_\ell({z_\ell} |
  K'e^{(\ell)}, a, f) - \sup_f \sum_{z_\ell}
  {z_\ell} \mbf{Q}_\ell({z_\ell} | K'e^{\ell}, a', f) \right| < \epsilon/2. \]
Note in particular, if $\rho_\infty(a, \mc{A}') < \delta_\ell$, then
there exists $a' \in \mc{A}'$ with $\|a - a'\|_\infty < \delta_\ell$.
By our choice of~$\epsilon$ we have $ \sup_f \sum_{{z_\ell}} {z_\ell} \mbf{Q}_\ell({z_\ell}\ | K'e^{(\ell)}, a, f) < -\frac{\epsilon}{2}.$
Now let $\delta = \min\{ \delta_1, \ldots, \delta_d\}$.  Since the
increment kernel is stochastically
nonincreasing in $x$, it follows that if $\rho_\infty(a, \mc{A}') <
\delta$ and $x_\ell \geq K'$, then $ \sup_f \sum_{z_\ell} {z_\ell} \mbf{Q}_\ell({z_\ell} | x, a, f) < -\frac{\epsilon}{2}. $
Since $\sup_f \sup_{\mu \in \mc{P}(f)}
\rho_\infty(\mu(x), \mc{A}') \to 0$ as $\|x\|_\infty \to \infty$, the
result follows if we let $\ol{\epsilon} = \epsilon/2$. 
\end{proof}

\begin{lemma}
\label{lem:phinonempty}
For every {$f\in\mfr{F}$}, $\Phi(f)$ is nonempty.
\end{lemma}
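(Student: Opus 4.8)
The plan is to unwind the definition $\Phi(f) = \mc{D}(\mc{P}(f), f)$ and exhibit, for one fixed optimal oblivious strategy, an invariant distribution of the associated single-agent Markov chain. By Lemma~\ref{lem:bellman}, $\mc{P}(f)$ is nonempty, so fix some $\mu \in \mc{P}(f)$; it then suffices to show that the Markov chain on $\mc{X} = \Z_+^d$ with transition kernel $x \mapsto \mbf{P}(\cdot\ |\ x, \mu(x), f)$ admits an invariant probability measure, i.e.\ that $\mc{D}(\mu,f) \neq \emptyset$. The key inputs are Lemma~\ref{lem:negincr} (under any optimal oblivious strategy each coordinate has uniformly negative increment drift once that coordinate is large) and the bounded-increments property of Assumption~\ref{as:continuity}.

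The first step is to set up a Foster--Lyapunov drift inequality in the spirit of \cite{meyn_1993}. I would take the Lyapunov function $V(x) = \sum_{\ell=1}^d x_\ell^2$; a linear Lyapunov function will not work when $d > 1$, since the negative drift contributed by a single large coordinate need not dominate the bounded positive drift of the other, moderate, coordinates. Writing $x_{t+1} = x_t + \xi_t$ with coordinatewise increments distributed according to $\mbf{Q}_\ell(\cdot\ |\ x, \mu(x), f)$ (this uses multiplicative separability, Assumption~\ref{as:compactness}.2), one computes
\[
\E\big[V(x_{t+1}) - V(x_t)\ \big|\ x_t = x\big] = \sum_{\ell=1}^d \Big( 2 x_\ell \sum_{z_\ell} z_\ell\, \mbf{Q}_\ell(z_\ell\ |\ x, \mu(x), f) + \sum_{z_\ell} z_\ell^2\, \mbf{Q}_\ell(z_\ell\ |\ x, \mu(x), f) \Big).
\]
For any coordinate $\ell$ with $x_\ell \geq \ol{K}$ (the constant from Lemma~\ref{lem:negincr}) the $\ell$-th summand is at most $-2\ol{\epsilon}\, x_\ell + M^2$; for the remaining coordinates it is at most $2\ol{K}M + M^2$, using $|z_\ell| \leq M$. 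Summing and using $\sum_{\ell : x_\ell \geq \ol{K}} x_\ell \geq \norm{x}_1 - d\ol{K}$, this yields a bound of the form $\E[V(x_{t+1}) - V(x_t)\ |\ x_t = x] \leq -c\norm{x}_1 + b$ for constants $c > 0$ and $b < \infty$ independent of $x$.

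The final step converts the drift inequality into existence of an invariant distribution. Summing it over $t = 0,\ldots,T-1$ and using $V \geq 0$ gives $\frac{c}{T}\sum_{t=0}^{T-1}\E[\norm{x_t}_1] \leq b + V(x_0)/T$, so by Markov's inequality the Ces\`aro occupation measures $\nu_T = \frac{1}{T}\sum_{t=0}^{T-1}\P(x_t \in \cdot\,)$ satisfy $\sup_T \nu_T(\{x : \norm{x}_1 > R\}) \to 0$ as $R \to \infty$; that is, $\{\nu_T\}$ is tight on the discrete space $\Z_+^d$. Extracting a pointwise-convergent subsequence $\nu_{T_k} \to \nu$, tightness ensures $\nu$ is a probability measure (no mass escapes to infinity), and the telescoping identity $\nu_T(P\phi) - \nu_T(\phi) = \frac{1}{T}\big(\E[\phi(x_T)] - \phi(x_0)\big) \to 0$ for every bounded $\phi$ forces $\nu P = \nu$. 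Hence $\nu \in \mc{D}(\mu,f) \subseteq \Phi(f)$, and the proof is complete. I expect the main obstacle to be the drift computation in the second step: Lemma~\ref{lem:negincr} controls only one coordinate at a time and only once that coordinate is large, so the quadratic choice of $V$ is essential in order that the $-2\ol{\epsilon}\,x_\ell$ term from a single large coordinate outweigh the $O(1)$ positive contributions of all the other coordinates.
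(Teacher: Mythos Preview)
Your proof is correct and largely parallels the paper's own argument: both use Lemma~\ref{lem:bellman} to pick $\mu \in \mc{P}(f)$, both choose the quadratic Lyapunov function $V(x) = \sum_\ell x_\ell^2$, and both obtain essentially the same drift estimate by splitting coordinates according to whether $x_\ell$ exceeds the threshold $\ol{K}$ from Lemma~\ref{lem:negincr}. Where the two proofs diverge is in the final step that turns the drift inequality into an invariant distribution. The paper first runs a separate reachability argument to show that the chain has at least one closed communicating class (any state outside a fixed finite box can reach that box with positive probability, and then finiteness of the box forces a closed class), and only afterward applies the Foster--Lyapunov criterion to conclude that every closed class is positive recurrent. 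You bypass the closed-class step entirely and instead use the Krylov--Bogolyubov/occupation-measure route: the drift bound gives tightness of the Ces\`aro averages $\nu_T$, and any weak limit is automatically invariant. Your route is more direct and self-contained; the paper's route yields the slightly stronger structural statement that \emph{every} closed class is positive recurrent, which is not needed for the lemma as stated but does feed into the uniform moment bound in Lemma~\ref{lem:momentbound}.
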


\begin{proof}
As described in the discussion of Section \ref{ssec:compactness}, it
suffices to show that the state Markov chain induced by an optimal
oblivious strategy possesses at least one invariant
distribution---i.e., that $\mc{D}(\mu, f)$ is nonempty, where $\mu$ is
an optimal oblivious strategy given $f$.

We first show that for every $f$ and every $\mu \in \mc{P}(f)$, the Markov chain on
$\mc{X}$ induced by $\mu$ and $f$ has at least one closed class.
Let $S = \{ x : \|x\|_\infty \leq \ol{K} + M\}$.  By Lemma \ref{lem:negincr},
if $x \not\in S$, then there
exists some state $x'$ with $\mbf{P}(x' | x, \mu(x), f) > 0$ such that
$x'_\ell \leq x_\ell - \ol{\epsilon}$ for all $\ell$ where $x_\ell
> \ol{K}$.  On the other hand, since increments are bounded, for any
$\ell$ where $x_\ell \leq \ol{K}$, we have $x_\ell' \leq \ol{K} + M$.
Applying this fact inductively,
we find that
for any $x \not\in S$, there must exist a positive probability sequence
of states from $x$ to $S$; i.e., a sequence $y_0, y_1, y_2,
\ldots, y_\tau$ such that $y_0 = x$, $y_\tau \in S$, and $\mbf{P}(y_t |
y_{t-1}, \mu(y_{t-1}), f) >0$ for all $t$.  We say that $S$ is {\em
  reachable} from $x$.

So now suppose the chain induced by $\mu$ and $f$ has no closed class.  Fix $x_0 \in S$.  Since the class containing
$x_0$ is not closed, there must exist a state $x'$ reachable from
$x_0$ with positive probability, such that the chain never returns to
$x_0$ starting from $x'$.  If $x' \in S$, let $x_1 = x'$.  If $x'
\not\in S$, then using the argument in the preceding paragraph, there
must exist a state $x_1 \in S$ reachable from~$x'$.  Arguing
inductively, we can construct a sequence of states $x_0, x_1, x_2,
\ldots$ where $x_t \in S$ for all $t$, and yet $x_0, \ldots, x_{t-1}$
are not reachable from $x_t$.  But $S$ is finite, so at least one
state must repeat in this sequence---contradicting the construction.
We conclude that the chain must have at least one closed class.

To complete the proof, we use a Foster-Lyapunov argument.  Let $U(x) =
\sum_\ell x_\ell^2$.  Then $\{ x \in
\mc{X} : U(x) \leq R\}$ is finite for all $R$.  So now let $\omega =
(2 d \ol{K} M + d M^2 + 1)/(2\ol{\epsilon})$, and suppose
$\|x\|_\infty > \max\{\omega, \ol{K}\}$.  We reason as follows:
\begin{align*}
\sum_{x'} U(x') \mbf{P}(x' | x, \mu(x), f) &=  U(x) + 2 \sum_\ell x_\ell \sum_{z_\ell} z_\ell \mbf{Q}_\ell(z_\ell | x,
\mu(x), f) + \sum_\ell \sum_{z_\ell} z_\ell^2 \mbf{Q}_\ell(z_\ell | x,
\mu(x), f) \\
&\leq U(x) + 2 \sum_{\ell : x_\ell \leq \ol{K}} M x_\ell - 2
\sum_{\ell : x_\ell > \ol{K}} \ol{\epsilon} x_\ell + d M^2  \leq U(x) - 1.
\end{align*}
The first equality follows by definition of $\mbf{Q}$ and $U$,
and multiplicative separability of $\mbf{Q}$. The next step follows
since increments are bounded (Assumption~\ref{as:compactness}), and by applying Lemma~\ref{lem:negincr} for
$x_\ell > \ol{K}$. The last inequality follows from the fact that the state space is
$d$-dimensional, $\|x\|_\infty >
\max\{\ol{K}, \omega\}$, and by definition of $\omega$.
Since increments are bounded, it is
trivial that for every $R$:
\[ \sup_{x : \|x\|_\infty \leq R} \left( \sum_{x'} U(x') \mbf{P}(x'\  |\ x,
  \mu(x), f) - U(x) \right) < \infty. \]
It follows by the Foster-Lyapunov criterion that every closed class of
the Markov chain induced by $\mu$
is positive recurrent, as required \citep{hajek1982hitting, meyn_1993,  glynn2006bounding}. 
\end{proof}


\begin{lemma}
\label{lem:momentbound}
For every $\eta \in \Z_+$, $\sup_f \sup_{\phi \in \Phi(f)} \sum_x \|x\|_\eta^\eta \phi(x) < \infty.$
\end{lemma}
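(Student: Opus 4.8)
The plan is to apply a Foster--Lyapunov moment bound with a polynomial Lyapunov function, using the uniform negative drift established in Lemma~\ref{lem:negincr}. Fix $\eta \ge 1$ (the case $\eta = 0$ is trivial, as $\|x\|_0^0 \le d$), and set $V(x) = \sum_{\ell=1}^d x_\ell^{\eta+1}$. Fix $f \in \mfr{F}$ and $\mu \in \mc{P}(f)$. I will produce constants $c > 0$ and $b < \infty$, \emph{independent of $f$ and $\mu$}, such that the one-step drift of $V$ under the kernel $\mbf{P}(\cdot\mid x,\mu(x),f)$ is at most $-c\sum_\ell x_\ell^\eta + b$ for every $x$. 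By multiplicative separability (Assumption~\ref{as:compactness}.2), the coordinate increments $z_\ell \sim \mbf{Q}_\ell(\cdot\mid x,\mu(x),f)$ are conditionally independent given $x$, so the drift decomposes as $\sum_{\ell=1}^d\big(\E[(x_\ell+z_\ell)^{\eta+1}] - x_\ell^{\eta+1}\big)$.

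Expanding each term by the binomial theorem gives $(\eta+1)x_\ell^\eta\,\E[z_\ell] + \sum_{k=2}^{\eta+1}\binom{\eta+1}{k}x_\ell^{\eta+1-k}\E[z_\ell^k]$. Bounded increments (Assumption~\ref{as:continuity}) give $|z_\ell| \le M$, so the higher-order terms are bounded in absolute value by $C_1(1 + x_\ell^{\eta-1})$ for a constant $C_1$ depending only on $\eta$ and $M$. For $x_\ell \ge \ol K$, Lemma~\ref{lem:negincr} gives $\E[z_\ell] = \sum_{z_\ell} z_\ell\,\mbf{Q}_\ell(z_\ell\mid x,\mu(x),f) < -\ol\epsilon$, so the $\ell$-th term is at most $-(\eta+1)\ol\epsilon\,x_\ell^\eta + C_1(1 + x_\ell^{\eta-1})$; since $x_\ell^{\eta-1} = o(x_\ell^\eta)$, this is at most $-\tfrac12(\eta+1)\ol\epsilon\,x_\ell^\eta$ once $x_\ell \ge \ol K'$ for a constant $\ol K' \ge \ol K$ depending only on $\eta,M,\ol\epsilon$. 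For $x_\ell < \ol K'$ the $\ell$-th term is at most $(\ol K' + M)^{\eta+1} =: C_2$ by bounded increments. Summing over $\ell$ and using $\sum_{\ell:x_\ell\ge\ol K'} x_\ell^\eta \ge \sum_\ell x_\ell^\eta - d(\ol K')^\eta$ yields
\[
\sum_{x'} V(x')\,\mbf{P}(x'\mid x,\mu(x),f) \;\le\; V(x) - c\sum_{\ell=1}^d x_\ell^\eta + b,
\]
with $c = \tfrac12(\eta+1)\ol\epsilon$ and $b = c\,d(\ol K')^\eta + dC_2$, neither depending on $f$ or $\mu$.

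Since $V \ge 0$, $\sum_\ell x_\ell^\eta \ge 0$, and the sublevel set $\{x \in \Z_+^d : \sum_\ell x_\ell^\eta \le R\}$ is finite for every $R$ (as $\eta \ge 1$), the standard Foster--Lyapunov moment bound---in the unbounded-Lyapunov form given in \citep{glynn2006bounding}; see also \citep{meyn_1993, hajek1982hitting}---implies that every invariant probability measure $\phi \in \mc{D}(\mu,f)$ satisfies $\sum_x\big(\sum_\ell x_\ell^\eta\big)\phi(x) \le b/c$. Because $x \in \Z_+^d$, we have $\|x\|_\eta^\eta = \sum_\ell x_\ell^\eta$, so $\sum_x \|x\|_\eta^\eta\,\phi(x) \le b/c$; since $b/c$ is independent of $f$, $\mu \in \mc{P}(f)$, and $\phi \in \mc{D}(\mu,f)$, taking the supremum over all $f \in \mfr{F}$ and all $\phi \in \Phi(f) = \mc{D}(\mc{P}(f),f)$ gives the claim.

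I expect the main obstacle to be the last step: passing from the one-step drift inequality to a bound on $\sum_x(\sum_\ell x_\ell^\eta)\phi(x)$ without knowing a priori that $\phi$ integrates $V$. This is precisely what the cited moment-bound results handle; a self-contained argument would integrate the drift inequality against $\phi$ after replacing $V$ by the truncations $\min(V,N)$ (finite on the finite sublevel sets of $V$), using invariance of $\phi$ together with monotone convergence on the term $c\sum_\ell x_\ell^\eta$ as $N \to \infty$. Everything else is routine, and the uniformity of $c$ and $b$ over $f$ and $\mu$ is inherited directly from that of $\ol\epsilon$ and $\ol K$ in Lemma~\ref{lem:negincr} and of $M$ in Assumption~\ref{as:continuity}.
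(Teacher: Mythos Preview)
Your proof is correct and shares the same core architecture as the paper's: the Lyapunov function $V(x)=\sum_\ell x_\ell^{\eta+1}$, the binomial expansion of the drift, and Lemma~\ref{lem:negincr} to control the leading coefficient uniformly in $f$ and $\mu$. The one genuine difference is in how the lower-order binomial terms are handled. The paper proceeds by induction on $\eta$: it writes the drift as $-h(x)+g(x)$ where $g(x)$ is a polynomial of degree $\eta-1$, and then bounds $\sum_x g(x)\phi(x)$ using the inductive hypothesis on the moments of order $0,\ldots,\eta-1$. You instead absorb the degree-$(\eta-1)$ remainder into the leading $x_\ell^\eta$ term for $x_\ell\ge\ol K'$ via $x_\ell^{\eta-1}=o(x_\ell^\eta)$, which yields a clean constant $b$ on the right-hand side and dispenses with the induction entirely. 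Your route is slightly more direct; the paper's route makes the dependence of the bound on lower moments explicit, which could be useful if one wanted quantitative control, but is not needed here. Both arrive at the same Foster--Lyapunov moment inequality, and both face (and address in the same standard way) the integrability subtlety you flag in your last paragraph.
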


\begin{proof}
We again use a Foster-Lyapunov argument.  We proceed by induction; the
claim is clearly true if $\eta = 0$.  So assume the claim is true up to
$\eta-1$; in particular, define:
\[ \alpha_{k} = \sup_f \sup_{\phi \in \Phi(f)} \sum_x \|x\|_k^k \phi(x) \]
for $k = 0, \ldots, \eta-1$.  Fix $f$, and let $\mu \in \mc{P}(f)$ be an optimal
oblivious strategy given $f$.  The preceding lemma establishes that
the Markov chain induced by $\mu$ possesses at least one invariant
distribution.  Let $U(x) =
\sum_\ell x_\ell^{\eta+1}$.  Then we have:
\begin{align*}
\sum_{x'} U(x') \mbf{P}(x'|\ x, \mu(x), f) &= \sum_\ell \sum_{z_\ell} (x_\ell + z_\ell)^{\eta+1}
\mbf{Q}_\ell(z_\ell\ |\ x, \mu(x), f) \\
&= \sum_\ell \sum_{z_\ell} \sum_{k = 0}^{\eta+1} {{\eta+1} \choose {k}} x_\ell^k z_\ell^{\eta+1-k}
\mbf{Q}_\ell(z_\ell\ |\ x, \mu(x), f) \\
&= U(x) + (\eta+1)\sum_\ell x_\ell^\eta \sum_{z_\ell} z_\ell
\mbf{Q}_\ell(z_\ell\ |\ x, \mu(x), f) \\
& \quad + \sum_\ell \sum_{z_\ell}
\sum_{k = 0}^{\eta-1} {{\eta+1} \choose {k}} x_\ell^k z_\ell^{\eta+1-k}
\mbf{Q}(z\ |\ x, \mu(x), f).
\end{align*}
Define $g(x)$ as:
\[ g(x) = \sum_{k = 0}^{\eta-1} {{\eta+1}\choose k} M^{\eta+1-k}
\sum_\ell x_\ell^k.\]
By the inductive hypothesis,
\[ \gamma \triangleq \sup_f \sup_{\phi \in \Phi(f)} \sum_x g(x) \phi(x) < \infty. \]
Further, by Lemma \ref{lem:negincr}, for all $\ell$ and all $x$ such that
$x_\ell  \geq \ol{K}$, we have:
\[ \sum_{z_\ell} z_\ell \mbf{Q}_\ell(z_\ell\ |\ x, \mu(x), f) < -\ol{\epsilon}. \]
Define $h(x)$ as:
\[ h(x) = -(\eta + 1) M \sum_{\ell : x_\ell \leq \ol{K}} x_\ell^\eta +
\ol{\epsilon} (\eta + 1) \sum_{ \ell : x_\ell > \ol{K}}
x_\ell^\eta. \]
It follows that:
\[ \sum_{x'} U(x') \mbf{P}(x'\ |\ x, \mu(x), f) - U(x) \leq -h(x) +
g(x). \]
Now fix any distribution $\phi \in \mc{D}(\mu, f)$.  Since the Markov
chain induced by $\mu$ and $f$ must be irreducible on the support of
$\phi$, it follows by the
Foster-Lyapunov criterion \citep{meyn_1993} that:
\[ \sum_{x} h(x) \phi(x) \leq \sum_x g(x)\phi(x) \leq \gamma. \]
Rearranging terms, we conclude that:
\[ \sum_{x} \left(\sum_{\ell : x_\ell > \ol{K}} x_\ell^\eta\right)
\phi(x) \leq \frac{\gamma}{\ol{\epsilon}(\eta+1)} + \frac{dM \ol{K}^\eta}{\ol{\epsilon}}. \]
Thus:
\[ \sum_{x} \|x\|_\eta^\eta \phi(x) \leq
\frac{\gamma}{\ol{\epsilon}(\eta+1)} + \left( \frac{dM}{\ol{\epsilon}}+d\right) \ol{K}^\eta. \]
(Recall that the sum is only over $x \in \Z_+^d$.)  Since the right
hand side is finite and independent of $f$ and $\phi$, the result follows. 
\end{proof}

\begin{sproof}{Proof of Proposition \ref{prop:compactness}.}
We have already established that 
$\Phi(f)$ is nonempty {for all $f \in \mfr{F}$} in Lemma 
\ref{lem:phinonempty}.  Define $B = \sup_f \sup_{\phi \in \Phi(f)} \sum_x \|x\|_{p+1}^{p+1} \phi(x) <
\infty$, where the inequality is the result of Lemma \ref{lem:momentbound}.

We define the set $\mfr{C} = \left\{ f \in \mfr{F} : \sum_x \|x\|_{p+1}^{p+1} f(x) \leq B\right\}$.
By the preceding observation, {$\Phi(\mfr{F}) \subset \mfr{C}$}.  It is clear that
$\mfr{C}$ is nonempty and convex.  It remains to be shown that
$\mfr{C}$ is compact in the $\onep$-norm.  It is straightforward to
check that $\mfr{C}$ is complete; we show that $\mfr{C}$ is
totally bounded, thus establishing compactness.

Fix $\epsilon > 0$.  Choose $K_\epsilon$ so that $B/K_\epsilon <
\epsilon$.  Then for all $f \in \mfr{C}$:
\begin{equation}
\label{eq:lighttail}
 \sum_{x : \|x\|_\infty \geq K_\epsilon} \|x\|_p^p f(x) \leq \frac{B}{K_\epsilon} <
\epsilon.
\end{equation}

Let $S_\epsilon = \{ x : \|x\|_\infty < K_\epsilon\}$ and let
$\mfr{S}_C$ be the projection of $\mfr{C}$ onto $S_\epsilon$; i.e.,
\[ \mfr{S}_C = \{ g \in \R^{S_\epsilon} : \exists\ f \in
\mfr{C} \text{ with } g(x) = f(x) \forall\ x \in S_\epsilon \}. \]
It is straightforward to check that $\mfr{S}_C$ is a compact subset
of the finite-dimensional space $\R^{S_\epsilon}$; so let $f_1, \ldots, f_k \in \mfr{S}_C$ be
a $\epsilon$-cover of $\mfr{S}_C$ (i.e., $\mfr{S}_C$ is covered by the
balls around $f_1, \ldots, f_k$ of radius~$\epsilon$ in the
$\onep$-norm).  Then it follows that $f_1, \ldots, f_k$ is a
$2\epsilon$-cover of $\mfr{C}$, since \eqref{eq:lighttail} bounds
the tail of any $f \in \mfr{C}$ by $\epsilon$.  This establishes
that $\mfr{C}$ is totally bounded in the $\onep$-norm, as required.
\end{sproof}

\subsection{Finite Actions}
\label{subsec:finite-actions}

We conclude by briefly discussing how the proof of Proposition
\ref{prop:compactness} may be adapted in the case of finite action
spaces (cf. Definition \ref{def:finiteaction}).  Suppose that $S
\subset \R^q$ is a finite set.  We now show that as long as
Assumption \ref{as:compactness} holds with respect to pure
actions---i.e., with $\mc{A}$ replaced by $S$---Proposition~\ref{prop:compactness} continues to hold.

Lemma \ref{lem:coupling} follows as before, except with $\mc{A}$
replaced by $S$.  Lemma \ref{lem:mulimit} follows the same argument if
we restrict attention to {\em pure} strategies $\mu$, i.e., strategies
that take a pure action in every state.  Let $\hat{\mc{P}}(f)$ denote
the set of optimal pure strategies given $f$.  Then Lemma
\ref{lem:mulimit} then yields that as $\|x\|_\infty \to \infty$:
\[ \sup_f \sup_{\mu \in \hat{\mc{P}}(f)} \rho_\infty(\mu(x), \mc{A}')
\to 0. \]
Since $\mc{A}' \subset S$, it is finite as well.  It follows that there
exists $\zeta$ such that for
$x$ such that $\|x\|_\infty \geq \zeta$, for all $f$, and all $\mu \in
\hat{\mc{P}}(f)$, we have $\mu(x) \in \mc{A}'$.  From this and Assumption~\ref{as:compactness} the
result of Lemma~\ref{lem:negincr} holds for $\mu \in \hat{\mc{P}}(f)$,
i.e., there exists $\epsilon > 0$ such that for all $\ell$
and all $x$ with $x_\ell \geq K'$,
\[ \sup_f \sup_{\mu \in \hat{\mc{P}}(f)} \sum_{z_\ell} z_\ell \mbf{Q}_\ell(z_\ell\ |\ x, \mu(x),
f) < -\epsilon. \]

To complete our proof, we need only note that the set of all optimal
oblivious strategies $\mc{P}(f)$ can be obtained by pointwise convex
combinations of optimal pure oblivious strategies; this follows from
Bellman's equation and the fact that the payoff is linear in the mixed
action.  Thus we also have:
\[ \sup_f \sup_{\mu \in \mc{P}(f)} \sum_{z_\ell} z_\ell \mbf{Q}_\ell(z_\ell\ |\ x, \mu(x),
f) < -\epsilon. \]
The remainder of the proof follows as before.


\section{AME: Proof}

Throughout this section we suppose Assumption \ref{as:continuity} holds. We begin by defining the following sets.

\begin{definition}
\label{def:SS-new}
For every $x \in \mc{X}$, define
\begin{align}
\label{eqn:SS-per-state}
\mc{X}_{x} = \left\{z \in \mc{X} \ \Big | \  \mbf{P}(x \ | \ z, a, f) > 0\ \text{for some}\ a \in \mc{A}\ \text{and for some}\ f \in \mfr{F}_p\right\}.
\end{align}
Also define $\mc{X}_{x, t}$ as
\begin{align}
\label{eqn:SS-per-time}
\mc{X}_{x,t} = \left\{ z \in \mc{X} \ \Big | \  \norm{z}_{\infty} \leq \norm{x}_{\infty} + tM\right\}.
\end{align}
\end{definition}

Thus, $\mc{X}_{x}$ is the set of all initial states that can result in
the final state as~$x$. Since the increments are bounded
(Assumption~\ref{as:continuity}), for every~$x \in \mc{X}$, the
set~$\mc{X}_{x}$ is finite. The set~$\mc{X}_{x, t}$ is a superset of
all possible states that can be reached at time~$t$ starting from
state~$x$ (since the increments are uniformly bounded over action~$a$
and distribution~$f$); note that $\mc{X}_{x,t}$ is finite as well.

{The following key lemma establishes that as the number of
players grows large, the population empirical distribution in a game
with finitely many players approaches the limiting SE population.
The
result is similar in spirit to related results on mean field limits
of interacting particle systems, cf.~\cite{sznitman1991topics}; there the main insight
is that, under appropriate conditions, the stochastic evolution of a
finite-dimensional interacting particle system approaches the
deterministic mean field limit over finite time horizons.  Our model
introduces two sources of complexity.  First, agents' state
transitions are coupled, so the population state Markov process is not simply the
aggregation of independent agent state dynamics.  Second, our state
space is unbounded, so additional care is required to ensure the tail
of the population state distribution is controlled in games with a
large but finite number of players.  This is where the light tail
condition plays a key role.  Our proof proceeds by induction over time
periods.

\begin{lemma}
\label{lem:convergence-of-f}
Let $(\mu, f)$ be a stationary equilibrium {with $f \in \mfr{F}_p$}. Consider an $m$-player
game. Let $x_{i,0}^{(m)} = x_{0}$ and suppose the initial state of
every player (other than player~$i$) is independently sampled from the
distribution~$f$. That is, suppose $x_{j, 0}^{(m)} \sim f$ for all $j
\neq i$; let $f^{(m)}\in \mfr{F}^{(m)}$ denote the initial population
state. Let $a_{i,t}^{(m)}$ be any sequence of (possibly random, possibly history dependent) actions. Suppose players' states evolve as $x_{i, t+1}^{(m)} \sim \mbf{P}\big(\cdot\ | \ x_{i,t}^{(m)}, a_{i,t}^{(m)}, f_{-i,t}^{(m)}\big)$ and for all $j \neq i$, as $x_{j, t+1}^{(m)} \sim \mbf{P}\big(\cdot\ | \ x_{j,t}^{(m)}, \mu(x_{j,t}^{(m)}), f_{-j,t}^{(m)}\big)$.
\ignore{
\begin{align*}
x_{j, t+1}^{(m)} &\sim \mbf{P}\big(\cdot\ | \ x_{j,t}^{(m)}, \mu(x_{j,t}^{(m)}), f_{-j,t}^{(m)}\big) \ \forall \ j = 1, 2, \cdots, m,\ \ j \neq i,\\
x_{i, t+1}^{(m)} &\sim \mbf{P}\big(\cdot\ | \ x_{i,t}^{(m)}, a_{i,t}^{(m)}, f_{-i,t}^{(m)}\big).
\end{align*}
}
Then, for every initial state~$x_{0}$, for all times~$t$, $\norm{\fmi - f}_{\onep} \rightarrow 0$ almost surely \footnote{Note that the convergence is almost surely in the randomness associated with the initial population state.} as $m \rightarrow \infty$.
\end{lemma}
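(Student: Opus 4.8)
The plan is to prove the claim by induction on $t$. It is convenient to split the target into two ingredients: (i) \emph{pointwise convergence}, $f_{-i,t}^{(m)}(y) \to f(y)$ almost surely for every fixed $y \in \mc{X}$ (including $y=0$); and (ii) \emph{uniform tail integrability}, namely that for every $\epsilon>0$ there is $K$ with $\limsup_m \sum_{\|y\|_\infty > K}\|y\|_p^p f_{-i,t}^{(m)}(y) \le \epsilon$ almost surely. A Scheff\'e-type argument then combines (i) and (ii) into $\|f_{-i,t}^{(m)} - f\|_{\onep}\to 0$ a.s.: choosing $K$ so that in addition $\sum_{\|y\|_\infty > K}\|y\|_p^p f(y) < \epsilon$ (possible since $f\in\mfr{F}_p$), the finite sum $\sum_{\|y\|_\infty \le K}\|y\|_p^p |f_{-i,t}^{(m)}(y) - f(y)|$ vanishes by (i), while the two tails are each at most $\epsilon$ in the limit.

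Ingredient (ii) does not require the induction; it holds for every fixed $t$ directly from bounded increments (Assumption~\ref{as:continuity}.2). Indeed $\|x_{j,t}^{(m)}\|_\infty \le \|x_{j,0}^{(m)}\|_\infty + tM$ deterministically, so for $K > tM$ the tail sum is bounded above by the empirical average over $j \ne i$ of the i.i.d.\ random variables $\big(c_d(\|x_{j,0}^{(m)}\|_\infty + tM)\big)^p\,\indic{\{\|x_{j,0}^{(m)}\|_\infty > K - tM\}}$ (with $c_d$ a dimensional constant), whose common mean tends to $0$ as $K\to\infty$ by dominated convergence using $f\in\mfr{F}_p$; the strong law of large numbers, applied along a sequence of values of $K$ tending to infinity, yields (ii).

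Ingredient (i) is where the induction runs. The base case $t=0$ is the strong law applied to the i.i.d.\ indicators $\indic{\{x_{j,0}^{(m)} = y\}}$. For the inductive step, condition on the $\sigma$-algebra $\mc{H}_t$ generated by all states and actions through time $t$; by the model's conditional-independence assumption the indicators $\indic{\{x_{j,t+1}^{(m)} = y\}}$, $j\ne i$, are conditionally independent given $\mc{H}_t$, with conditional means $\mbf{P}(y\,|\,x_{j,t}^{(m)},\mu(x_{j,t}^{(m)}),f_{-j,t}^{(m)})$. Decompose $f_{-i,t+1}^{(m)}(y)$ into its centered part plus the average of these conditional means. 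The centered part is an average of conditionally independent $[-1,1]$-valued variables, so a conditional Hoeffding bound together with Borel--Cantelli sends it to $0$ a.s. For the average of conditional means, bounded increments force $\mbf{P}(y\,|\,z,a,f)=0$ unless $z$ lies in the finite set $B_y=\{z:\|z-y\|_\infty\le M\}$, so this average equals $\sum_{z\in B_y} f_{-i,t}^{(m)}(z)\cdot\big(\text{average of }\mbf{P}(y\,|\,z,\mu(z),f_{-j,t}^{(m)})\text{ over }j\ne i\text{ with }x_{j,t}^{(m)}=z\big)$. For those $j$ the population states $f_{-j,t}^{(m)}$ differ from $f_{-i,t}^{(m)}$ only through one player of uniformly bounded state (player $i$'s state is deterministically bounded, and $z\in B_y$), hence by a constant times $1/m$ in the $\onep$ norm; combined with the induction hypothesis $\|f_{-i,t}^{(m)}-f\|_{\onep}\to0$ this gives $f_{-j,t}^{(m)}\to f$ in $\onep$ norm uniformly over the relevant $j$, so joint continuity of the kernel in $(a,f)$ (Assumption~\ref{as:continuity}.4) replaces $\mbf{P}(y\,|\,z,\mu(z),f_{-j,t}^{(m)})$ by $\mbf{P}(y\,|\,z,\mu(z),f)$ up to $o(1)$. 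Using the pointwise induction hypothesis $f_{-i,t}^{(m)}(z)\to f(z)$ on the finite set $B_y$ and the stationary-equilibrium consistency condition $f\in\mc{D}(\mu,f)$ — that is, $\sum_z f(z)\mbf{P}(y\,|\,z,\mu(z),f)=f(y)$ — the average of conditional means converges to $f(y)$, which completes the inductive step.

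The main obstacle is the unbounded state space. One must keep the tail of the finite-$m$ empirical population state uniformly controlled — this is ingredient (ii), and it is precisely where the light-tail hypothesis $f\in\mfr{F}_p$ is indispensable — while also handling the fact that in the inductive step the relevant transition kernels carry population-state arguments $f_{-j,t}^{(m)}$ that vary with $j$ and over an a priori unbounded range of source states. Bounded increments is what rescues both difficulties: it collapses the kernel sum to a finite range of source states and supplies the deterministic state bound behind (ii), while the conditional-independence assumption is what makes the Hoeffding/Borel--Cantelli step legitimate even though agents' transitions are coupled through the population state.
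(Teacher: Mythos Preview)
Your argument is correct and follows the same inductive architecture as the paper, but with two implementation differences worth noting. First, for the tail control (your ingredient~(ii)) you use the deterministic bound $\|x_{j,t}^{(m)}\|_\infty \le \|x_{j,0}^{(m)}\|_\infty + tM$ to reduce the time-$t$ tail directly to an i.i.d.\ average at time~$0$, whereas the paper propagates the tail bound inductively from $t$ to $t+1$ (picking up a factor $(2M)^p$ at each step). Your route is shorter and avoids the inductive constant. Second, for the pointwise step you decompose $f_{-i,t+1}^{(m)}(y)$ into a centered part (handled by a conditional Hoeffding bound and Borel--Cantelli) plus the average of conditional means (handled by kernel continuity and the invariance $f\in\mc{D}(\mu,f)$). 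The paper instead groups players by their current state~$x$, observes that all players at the same state see the \emph{same} population state $\hat f_t^{(x,m)}$, and then applies a separate refined strong law (its Lemma~\ref{lem:slln}) to each group. Your treatment sidesteps that auxiliary lemma at the cost of tracking the $O(1/m)$ discrepancy between $f_{-j,t}^{(m)}$ and $f_{-i,t}^{(m)}$ explicitly; the paper's observation that this discrepancy actually vanishes within a group is slicker but leads to essentially the same conclusion. One small presentational point: your induction is really on the full $\onep$-convergence (since you invoke $\|f_{-i,t}^{(m)}-f\|_{\onep}\to 0$ in the step), with (ii) established freshly at each level; stating this up front would make the logical structure cleaner.
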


\begin{proof}
Note that $f \in \mfr{F}_p$ and hence $\norm{f}_{\onep} < \infty$.  Thus, given any $\epsilon > 0$, there exists a finite set $\mc{C}_{\epsilon, f}$ such that:
\begin{align}
\label{eqn:cf-tail}
\sum_{x \notin \mc{C}_{\epsilon, f}} \norm{x}_{p}^{p}f(x) < \epsilon.
\end{align}
At~$t = 0$, we have
\[
f_{-i, 0}^{(m)}(x) = \frac{1}{m-1}\sum_{j = 1}^{m-1}\indic{\{X_{j,0} = x\}},
\]
where $X_{j,0}$ are i.i.d random variables distributed according to the distribution~$f$. Define:
\[ Y_j = \norm{X_{j, 0}}_p^p \indic{\{ X_{j, 0} \not\in \mc{C}_{\epsilon, f}\}}. \]
Note that the $Y_j$ are i.i.d. random variables, with:
\[ \mbf{E}[Y_j] = \sum_{x \not\in \mc{C}_{\epsilon, f}} \norm{x}_p^p
f(x). \]
Further, observe that:
\[ \sum_{x \not\in \mc{C}_{\epsilon, f}} \norm{x}_p^p f_{-i, 0}^{(m)}(x) =
\frac{1}{m-1} \sum_{j = 1}^{m-1} Y_j. \]
Thus by the strong law of large numbers, almost surely as $m \to \infty$,
\[ \sum_{x \not\in \mc{C}_{\epsilon, f}} \norm{x}_p^p f_{-i, 0}^{(m)}(x) \to
\sum_{x \not\in \mc{C}_{\epsilon, f}} \norm{x}_p^p f(x) < \epsilon. \]

Now observe that:
\[ \norm{f_{-i, 0}^{(m)}(x) - f}_{1\mhyphen p} \leq \sum_{x \in
  \mc{C}_{\epsilon, f}}\norm{x}_{p}^{p}|f_{-i, 0}^{(m)}(x) - f(x)| +
\sum_{x \notin \mc{C}_{\epsilon, f}}\norm{x}_{p}^{p}f_{-i, 0}^{(m)}(x)  + \sum_{x
  \notin \mc{C}_{\epsilon, f}}\norm{x}_{p}^{p} f(x). \]
Each of the second and third terms on the right hand side is almost
surely less than $\epsilon$ for sufficiently large $m$.  For the first
term, observe that   $|f_{-i, 0}^{(m)}(x) - f(x)| \to 0$ almost surely, again by the strong law
  of large numbers (since $f^{(m)}(x)$ is the sample average of
  $m-1$ Bernoulli random variables with parameter $f(x)$).  Thus
  the first term approaches zero almost surely as $m \to \infty$ by the bounded
  convergence theorem. Since $\epsilon$ was arbitrary, this proves that $\norm{f_{-i,0}^{(m)} - f}_{\onep} \rightarrow 0$ almost surely as $m\rightarrow \infty$.

We now use an induction argument; let us assume that, $\norm{f_{-i,\tau}^{(m)} - f}_{\onep} \rightarrow 0$ almost surely as $m\rightarrow \infty$ for all times $\tau \leq t$. From the definition of $f_{-i, t+1}^{(m)}$ we have:
\[
f_{-i,t+1}^{(m)}(y) = \frac{1}{m-1}\sum_{j\neq i}\indic{\{x_{j,t+1}^{(m)} = y\}},
\]
where $x_{j, t+1}^{(m)} \sim \mbf{P}\big(\cdot \ | \ x_{j,t}^{(m)}, \mu(x_{j,t}^{(m)}), f_{-j, t}^{(m)}\big)$ for all $j \neq i$. Note that if two players have same initial state, then the population state from their viewpoint is identical. That is, if $x_{j,t}^{(m)} = x_{k, t}^{(m)}$, then $f_{-j, t}^{(m)}(y) = f_{-k, t}^{(m)}(y)$ for all $y \in \mc{X}$. We can thus redefine the population state from the viewpoint of a player at a particular state. Let $\hat{f}_{ t}^{(x, m)}$ be the the population state at time~$t$ from the viewpoint of a player at state~$x$. Then, if $x_{j,t}^{(m)} = x_{k, t}^{(m)} = x$, then for all $y \in \mc{X}$, $f_{-j, t}^{(m)}(y) = f_{-k, t}^{(m)}(y) = \hat{f}_{ t}^{(x, m)}(y)$. Without loss of generality, we assume $m > 1$. Let $\eta_{-i, t}^{(m)}(x)$ be the total number of players (excluding player $i$) that have their state at time $t$ as $x$, i.e., $\eta_{-i, t}^{(m)}(x) = (m-1) \fmi(x)$. Note that $\eta_{-i, t}^{(m)}(x) = 0$ if and only $ \fmi(x) = 0$. We can now write $f_{-i,t+1}^{(m)}(y)$ as:
\begin{align}
\label{eqn:nextStepDist}
f_{-i,t+1}^{(m)}(y) &= \frac{1}{m-1}\sum_{x\in \mc{X}}\sum_{j= 1}^{\eta_{-i, t}^{(m)}(x)}\indic{\{Y_{j,x,t}^{(m)} = y\}} \nonumber\\
&=\sum_{x\in \mc{X}}\fmi(x)\left[\frac{1}{\eta_{-i, t}^{(m)}(x)}\sum_{j= 1}^{\eta_{-i, t}^{(m)}(x)}\indic{\{Y_{j,x,t}^{(m)} = y\}}\right] \nonumber \\
& =\sum_{x\in \mc{X}_{y}}\fmi(x)\left[\frac{1}{\eta_{-i, t}^{(m)}(x)}\sum_{j= 1}^{\eta_{-i, t}^{(m)}(x)}\indic{\{Y_{j,x,t}^{(m)} = y\}}\right]
\end{align}
where the last equality follows from the Definition~\ref{def:SS-new}. Here, $Y_{j,x,t}^{(m)}$ are random variables that are independently drawn according to the transition kernel $\mbf{P}(\cdot\ |\  x, \mu(x),\hat{f}_{t}^{(x, m)})$. Note that if $\eta_{-i, t}^{(m)}(x) = 0$, we interpret the term inside the parentheses as zero.
%
%

Let us now look at $\hat{f}_{ t}^{(x, m)}$. We have
\begin{align*}
\hat{f}_{ t}^{(x, m)}(z) =  \fmi(z) + \frac{1}{m-1}\indic{\{x_{i,t}^{(m)} = z\}} - \frac{1}{m-1}\indic{\{z = x\}}.
\end{align*}
Consider $\norm{\hat{f}_{ t}^{(x, m)} - f}_{\onep}$. We have:
\begin{align*}
\norm{\hat{f}_{ t}^{(x, m)} - f}_{\onep} &= \sum_{z \in \mc{X}}\norm{z}_{p}^{p}\left|\hat{f}_{ t}^{(x, m)}(z) - f(z)\right|\\
& = \sum_{z \in \mc{X}}\norm{z}_{p}^{p}\left|\fmi(z) + \frac{1}{m-1}\indic{\{x_{i,t}^{(m)} = z\}} - \frac{1}{m-1}\indic{\{z = x\}} - f(z)\right|\\
&\leq \sum_{z \in \mc{X}}\norm{z}_{p}^{p}\left|\fmi(z) -f(z)\right| + \frac{1}{m-1}\sum_{z \in \mc{X}}\norm{z}_{p}^{p}\indic{\{x_{i,t}^{(m)} = z\}} + \frac{1}{m-1}\sum_{z \in \mc{X}}\norm{z}_{p}^{p}\indic{\{z = x\}}\\
& = \norm{\fmi - f}_{\onep} + \frac{1}{m-1}\sum_{z \in \mc{X}}\norm{z}_{p}^{p}\indic{\{x_{i,t}^{(m)} = z\}} + \frac{1}{m-1}\sum_{z \in \mc{X}}\norm{z}_{p}^{p}\indic{\{z = x\}}
\end{align*}
From the induction hypothesis, we have $\norm{\fmi - f}_{\onep}
\rightarrow 0$ almost surely as $m\rightarrow \infty$. Note that at
time~$t$, $x_{i,t}^{(m)} \in \mc{X}_{x_{0}, t}$ from
equation~\eqref{eqn:SS-per-time}, and $\mc{X}_{x_0,t}$ is finite. Thus,
\[
\sup_{m}\sum_{z \in \mc{X}}\norm{z}_{p}^{p}\indic{\{x_{i,t}^{(m)} = z\}} < \infty \mbox{, almost surely.}
\]
This implies that for all states $x\in \mc{X}$, $\norm{\hat{f}_{ t}^{(x, m)} - f}_{\onep} \rightarrow 0$ almost surely as $m\rightarrow \infty$. From Assumption~\ref{as:continuity}, we know that the transition kernel is continuous in the population state~$f$ (where~$\mfr{F}_p$ is endowed with the $\onep$ norm). Thus for every $x \in \mc{X}$, we have almost surely:
\begin{align}
\label{eqn:conv-kernel}
\mbf{P}(\cdot\ | \ x, \mu(x),\hat{f}_{t}^{(x, m)}) \rightarrow \mbf{P}(\cdot \ |\  x, \mu(x),f),
\end{align}
as $m\rightarrow \infty$.

Next, we show that $f_{-i,t+1}^{(m)}(y) \to f(y)$ almost surely as $m
\to \infty$, for all $y$.  We leverage
equation~\eqref{eqn:nextStepDist}.  Observe that the set of points $x
\in \mc{X}$ where $\|x\|_p \leq 1$ is finite, since $\mc{X}$ is a
subset of an integer lattice. From the induction hypothesis, as $\sum_{x
  \in \mc{X}} \|x\|_p^p | \fmi(x) - f(x)| \to 0$ almost surely as $m
\to \infty$, it follows that $\fmi(x) \rightarrow f(x)$ almost surely
for all $x\in \mc{X}$ as $x\rightarrow \infty$.

Suppose that $x \in \mc{X}_y$ and $f(x) > 0$.  Since $\fmi(x) \to
f(x)$, it follows that $\eta_{-i,t}^{(m)} \to \infty$ as $m \to
\infty$, almost surely.
Note that $Y_{j,x,t}^{(m)}$ are random variables that are independently drawn according to the transition kernel $\mbf{P}(\cdot| x, \mu(x),\hat{f}_{t}^{(x, m)})$. From equation~\eqref{eqn:conv-kernel}, and Lemma~\ref{lem:slln}, we get that for
every $x, y \in \mc{X}$, there holds
\[
\frac{1}{\eta_{-i, t}^{(m)}(x)}\sum_{j= 1}^{\eta_{-i,
    t}^{(m)}(x)}\indic{\{Y_{j,x,t}^{(m)} = y\}} \rightarrow \mbf{P}(y| x, \mu(x),f),
\]
almost surely as $m\rightarrow \infty$.

On the other hand, suppose $x \in \mc{X}_y$ and $f(x) = 0$.  Again,
since $\fmi(x) \to f(x)$ as $x \to \infty$, it follows that as $m \to
\infty$, almost surely:
\[ \fmi(x)\left[\frac{1}{\eta_{-i, t}^{(m)}(x)}\sum_{j= 1}^{\eta_{-i,
      t}^{(m)}(x)}\indic{\{Y_{j,x,t}^{(m)} = y\}}\right] \to 0, \]
since the term in brackets is nonnegative and bounded.  (Recall we
interpret the term in brackets as zero if $\fmi(x) = 0$.)

We conclude that, almost surely, as $m \to \infty$:
\[ f_{-i,t+1}^{(m)}(y) = \sum_{x\in \mc{X}_{y}}\fmi(x)\left[\frac{1}{\eta_{-i,
      t}^{(m)}(x)}\sum_{j= 1}^{\eta_{-i,
      t}^{(m)}(x)}\indic{\{Y_{j,x,t}^{(m)} = y\}}\right] \to \sum_{x
  \in \mc{X}_y} f(x) \mbf{P}(y | x, \mu(x), f) = f(y). \]

To complete the proof, we need to show that $\norm{f_{-i,t+1}^{(m)} -
  f}_{\onep}\rightarrow 0$ almost surely as $m \rightarrow \infty$.
Since $\fmi(x) \to f(x)$ almost surely, for all $\epsilon > 0$ we have:
\[  \sum_{x \in \mc{C}_{\epsilon, f}} \|x\|_p^p \fmi(x) \to \sum_{x \in
    \mc{C}_{\epsilon, f}} \|x\|_p^p f(x).\]
This together with the fact that $\|\fmi - f\|_{\onep} \to 0$ implies
that, almost surely:
\begin{equation}
\label{eqn:fmi_bound}
 \limsup_{m \to \infty} \sum_{x \in \mc{C}_{\epsilon, f}} \|x\|_p^p
  \fmi(x) < \epsilon.
\end{equation}

Now at time~$t+1$, we have
\begin{align}
\sum_{x\notin \mc{C}_{\epsilon, f}}\norm{x}_{p}^{p}f_{-i, t+1}^{(m)} &= \sum_{x\notin \mc{C}_{\epsilon, f}}\sum_{\ell=1}^{d}|x_{\ell}|^{p}f_{-i,t+1}^{(m)}(x)\nonumber \\
&\leq \sum_{x\notin \mc{C}_{\epsilon, f}}\sum_{\ell=1}^{d}\big(|x_{\ell}| + M\big)^{p}f_{-i,t}^{(m)}(x),
\label{eqn:conv-f-temp4}
\end{align}
where the equality follows because $\mc{X}$ is a subset of the $d$-dimensional integer lattice. The last inequality follows from the fact that the increments are bounded (Assumption~\ref{as:continuity}). Without loss of generality, assume that $|x_{\ell}| \geq 1$ and that $M \geq 1$. Then we have:
\begin{align*}
\big(|x_{\ell}| + M\big)^{p} &= \sum_{j=1}^{p}{{p} \choose {j}}|x_{\ell}|^{j} M^{p-j}\\
&\leq \sum_{j=1}^{p}{{p} \choose {j}}|x_{\ell}|^{p} M^{p}\\
& = 2^{p}M^{p} |x_{\ell}|^{p} = K_{1} |x_{\ell}|^{p},
\end{align*}
where we let $K_{1} = (2M)^{p}$. Substituting in
equation~\eqref{eqn:conv-f-temp4}, we have, almost surely,
\begin{align*}
\limsup_{m \to \infty} \sum_{x\notin \mc{C}_{\epsilon, f}}\norm{x}_{p}^{p}f_{-i, t+1}^{(m)} &\leq \sum_{x\notin \mc{C}_{\epsilon, f}}\sum_{\ell=1}^{d}K_{1}|x_{\ell}|^{p}f_{-i,t}^{(m)}(x)\\
& = K_{1}\sum_{x\notin \mc{C}_{\epsilon, f}}\norm{x}_{p}^{p}f_{-i, t}^{(m)}(x)\\
& < K_{1} \epsilon,
\end{align*}
where the last inequality follows from equation~\eqref{eqn:fmi_bound}. Now observe that:
\begin{align*}
\norm{f_{-i,t+1}^{(m)} - f}_{\onep} &\leq \sum_{x \in \mc{C}_{\epsilon, f}}\norm{x}_{p}^{p}|f_{-i, t+1}^{(m)}(x) - f(x)| +
\sum_{x \notin  \mc{C}_{\epsilon, f}}\norm{x}_{p}^{p}f_{-i, t+1}^{(m)}(x)  + \sum_{x
  \notin \mc{C}_{\epsilon, f}}\norm{x}_{p}^{p} f(x).
\end{align*}
In taking a limsup on the left hand side, the second term on the right
hand side is almost surely less than~$K_{1} \epsilon$. From the
definition of $\mc{C}_{\epsilon, f}$ and equation~\eqref{eqn:cf-tail},
we get that the third term on the right hand side is also less
than~$\epsilon$.
Finally, since for every $x$ $|f_{-i,t+1}^{(m)}(x) - f(x)|
\rightarrow 0$ almost surely as $m\rightarrow \infty$, and
$C_{\epsilon, f}$ is finite, the first
term in the above equation approaches zero almost surely as
$m\rightarrow \infty$ by the Bounded Convergence
Theorem.
Since~$\epsilon$ was arbitrary, this proves the induction
step and hence the lemma. 
\end{proof}


The preceding proof uses the following refinement of the strong law of
large numbers.

\begin{lemma}
\label{lem:slln}
Suppose $0 \leq p_k \leq 1$ for all $k$, and that $p_k \to p$ as $k
\to \infty$.  For each $k$, let $Y_1^{(k)}, \ldots, Y_k^{(k)}$ be
i.i.d. Bernoulli random variables with parameter $p_k$.  Then almost
surely:
\[ \lim_{k \to \infty} \frac{1}{k} \sum_{i = 1}^k Y_i^{(k)} = p. \]
\end{lemma}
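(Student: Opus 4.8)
The plan is to prove this triangular-array strong law via a fourth-moment estimate combined with the first Borel--Cantelli lemma; the classical SLLN cannot be invoked directly because the common law of the summands changes with $k$, so the array is not a single i.i.d.\ sequence.

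First I would center the variables. Everything may be taken on a common probability space, and nothing need be assumed about the joint law across distinct rows $k$ (the Borel--Cantelli argument below only uses a union bound). Set $Z_i^{(k)} = Y_i^{(k)} - p_k$ and $S_k = \sum_{i=1}^k Z_i^{(k)}$, so that $\frac{1}{k}\sum_{i=1}^k Y_i^{(k)} = \frac{S_k}{k} + p_k$. Since $p_k \to p$ deterministically, it suffices to show $S_k/k \to 0$ almost surely.

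Next I would bound $\mathbb{E}[S_k^4]$. The $Z_1^{(k)},\ldots,Z_k^{(k)}$ are i.i.d., mean zero, and satisfy $|Z_i^{(k)}| \le 1$ and $\mathbb{E}[(Z_i^{(k)})^2] = p_k(1-p_k) \le 1/4$. Expanding $(\sum_i Z_i^{(k)})^4$ and discarding every term containing some $Z_i^{(k)}$ to an odd power (these vanish by independence and mean zero) leaves $\mathbb{E}[S_k^4] = \sum_i \mathbb{E}[(Z_i^{(k)})^4] + 3\sum_{i\ne j}\mathbb{E}[(Z_i^{(k)})^2]\mathbb{E}[(Z_j^{(k)})^2] \le k + \tfrac{3}{16}k(k-1) \le C k^2$ for a universal constant $C$. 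Hence for every $\epsilon > 0$, Markov's inequality gives $\mathbb{P}(|S_k| \ge k\epsilon) \le \mathbb{E}[S_k^4]/(k\epsilon)^4 \le C/(k^2\epsilon^4)$, which is summable in $k$.

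Finally I would apply Borel--Cantelli. For each fixed $n \in \mathbb{Z}_+$, taking $\epsilon = 1/n$ yields $\sum_k \mathbb{P}(|S_k/k| \ge 1/n) < \infty$, so $\mathbb{P}(|S_k/k| \ge 1/n \text{ infinitely often}) = 0$; intersecting these countably many null events over $n$ shows $S_k/k \to 0$ almost surely, whence $\frac{1}{k}\sum_{i=1}^k Y_i^{(k)} = S_k/k + p_k \to p$ almost surely. There is no substantive obstacle here beyond noticing that the ordinary SLLN does not apply to a triangular array; a cleaner variant replaces the fourth-moment bound with Hoeffding's inequality, $\mathbb{P}(|S_k/k| \ge \epsilon) \le 2 e^{-2k\epsilon^2}$, which gives the required summability at once.
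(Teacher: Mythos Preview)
Your proof is correct. Both you and the paper follow the same overall architecture---obtain a summable tail bound on $\bigl|\tfrac{1}{k}\sum_i Y_i^{(k)} - p_k\bigr|$ and invoke the first Borel--Cantelli lemma---but the concentration step differs: you derive the bound from a fourth-moment estimate $\mathbb{E}[S_k^4] \le Ck^2$ plus Markov's inequality, whereas the paper applies Hoeffding's inequality directly. Your route is slightly more elementary (it needs nothing beyond moment expansion), while the paper's is shorter; you already note the Hoeffding shortcut yourself at the end, so you have effectively covered both arguments.
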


\begin{proof}
Let $\epsilon > 0$.  By Hoeffding's inequality, we have:
\[ \prob\left( \left|\frac{1}{k} \sum_{i = 1}^k Y_i^{(k)} - p_k\right|
  > \epsilon\right) \leq 2 e^{-k \epsilon_k^2}, \]
since $0 \leq Y_i^{(k)} \leq 1$ for all $i,k$.  Let $\epsilon_k =
1/k$; then by the Borel-Cantelli lemma, the event on the left hand
side in the preceding expression occurs for only finitely many $k$,
almost surely.  In other words, almost surely:
\[ \lim_{k \to \infty} \left[ p_k - \frac{1}{k} \sum_{i = 1}^k Y_i^{(k)}\right] =
0. \]
The result follows. 
\end{proof}


Before we prove the AME property, we need some additional notation. Let $(\mu, f)$ be a stationary equilibrium. Consider again an $m$ player game and focus on player~$i$. Let $x_{i,0}^{(m)} = x_{0}$ and assume that player~$i$ uses a cognizant strategy~$\mu_{m}$. The initial state of every other player $j \neq i$ is independently drawn from the distribution~$f$, that is, $x_{j,0}^{(m)} \sim f$. Denote the initial distribution of all $m-1$ players (excluding player~$i$) by $f^{(m)} \in \mfr{F}^{(m)}$. The state evolution of player~$i$ is given by
\begin{align}
\label{eqn:dynamics-i-1}
x_{i, t+1}^{(m)} \sim \mbf{P}\left(\cdot\ | \ x_{i,t}^{(m)}, a_{i,t}^{(m)}, \fmi\right),
\end{align}
where $a^{(m)}_{i,t} = \mu_{m}\big(x_{i,t}^{(m)}, \fmi\big)$ and $\fmi$ is the actual population distribution. Here the superscript~$m$ on the state variable represents the fact that we are considering an~$m$ player stochastic game. Let every other player~$j$ use the oblivious strategy~$\mu$ and thus their state evolution is given by
\begin{align}
\label{eqn:dynamics-j-1}
x_{j, t+1}^{(m)} \sim \mbf{P}\left(\cdot\ | \ x_{j,t}^{(m)}, \mu\big( x_{j,t}^{(m)}\big), f_{-j,t}^{(m)}\right).
\end{align}
Define $V^{(m)}\big(x, f^{(m)}\ | \ \mu_{m}, \muVec^{(m-1)}\big)$ to be the actual value function of player~$i$, with its initial state~$x$, the initial distribution of the rest of the population as $f^{(m)} \in \mfr{F}^{(m)}$, when the player uses a cognizant strategy~$\mu_{m}$ and every other player uses an oblivious strategy~$\mu$. We have
\begin{multline}
\label{eqn:actual-V-1}
V^{(m)}\big(x, f^{(m)}\ |\ \mu_{m}, \muVec^{(m-1)} \big) = \E\Big[\sum_{t =0}^{\infty}\beta^{t}\pi\big(x_{i,t}, a_{i,t},\fmi\big) \ \big| \
x_{i,0} = x, f_{-i,0}^{(m)} = f^{(m)};\\
 \mu_{i} = \mu_{m}, \muVeci = \muVecmone \Big].
\end{multline}

We define a new player that is coupled to player~$i$ in the $m$ player stochastic games defined above. We call this player the {\em coupled} player. Let $\hat{x}_{i,t}^{(m)}$ be the state of this coupled player at time~$t$. The subscript~$i$ and the superscript~$m$ reflect the fact that this player is coupled to player~$i$ in an~$m$ player stochastic game. We assume that the state evolution of this player is given by:
\begin{align}
\label{eqn:tm-coupled-player}
\xhat_{i,t+1}^{(m)} \sim \mbf{P}\big(\cdot\ | \ \xhat_{i,t}^{(m)}, \hat{a}_{i,t}^{(m)}, f\big),
\end{align}
where $\hat{a}_{i,t}^{(m)} = a^{(m)}_{i,t} =
\mu_{m}\big(x_{i,t}^{(m)}, \fmi\big)$. In other words, this coupled
player takes the same action as player~$i$ at every time~$t$ and this
action depends on the {\em actual} population state of $m-1$
players. However, note that the state evolution is dependent only on
the mean field population state~$f$. Let us define
\begin{align}
\label{eqn:V-coupled-player}
\hat{V}^{(m)}\left(x \ \Big | \ f; \mu_{m}, \muVec^{(m-1)} \right) = \E\left[\sum_{t=0}^{\infty}\beta^{t}\pi\left(\xhat_{i,t}^{(m)}, \hat{a}_{i,t}^{(m)}, f\right)\ | \ \xhat_{i, 0}^{(m)} = x_{0}, \hat{a}_{i,t}^{(m)} = \mu_{m}(x_{i,t}, \fmi); \muVec^{(m-1)}\right].
\end{align}
Thus, $\hat{V}^{(m)}(x \ | \ f; \mu_{m}, \mu)$ is the expected net present value of this coupled player, when the player's initial state is $x$, the long run average population state is~$f$, and the initial population state is $f^{(m)}_{-i,0}=f^{(m)}$. Observe that
\begin{align}
\label{eqn:cd-vhat-ub}
\hat{V}^{(m)}\left(x \ | \ f; \mu_{m}, \muVec^{(m-1)} \right) &\leq \sup_{\mu' \in \mfr{M}}\hat{V}^{(m)}(x \ | \ f; \mu', \muVec^{(m-1)})  = \sup_{\mu' \in \mfr{M}_O}\hat{V}^{(m)}(x \ | \ f; \mu', \muVec^{(m-1)}) \nonumber \\
& = \vTilde^*(x \ | \ f) = \vTilde(x \ | \ \mu, f).
\end{align}
Here, the first equality follows from Lemma~\ref{lem:bellman}, which implies that the supremum over all cognizant strategies is the same as the supremum over oblivious strategies (since the state evolution of other players does not affect the payoff of this coupled player), and the last equality follows since $\mu \in \mc{P}(f)$.


\begin{lemma}
\label{lem:conv-state-dist-1}
Let $(\mu, f)$ be a stationary equilibrium and consider an $m$ player game. Let the initial state of player~$i$ be $x^{(m)}_{i,0} = x$, and let $f^{(m)} \in \mfr{F}^{(m)}$ be the initial population state of $m-1$ players whose initial state is sampled independently from the distribution~$f$. Assume that player~$i$ uses a cognizant strategy $\mu_{m}$ and every other player uses the oblivious strategy~$\mu$. Their state evolutions are given by equation~\eqref{eqn:dynamics-i-1} and~\eqref{eqn:dynamics-j-1}. Also define a coupled player with initial state $\hat{x}_{i,0}^{(m)} = x$ and let its state evolution be given by equation~\eqref{eqn:tm-coupled-player}. Then, for all times~$t$, and for every $y \in \mc{X}$, we have $\left|\prob\big(\hat{x}_{i,t}^{(m)} = y\big) - \prob\big(x_{i,t}^{(m)} = y\big)\right| \rightarrow 0$, almost surely\footnote{{The almost sure convergence of the probabilities is in the randomness associated with the initial population state.}} as $m \rightarrow \infty$.
\end{lemma}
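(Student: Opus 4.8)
The plan is to prove the lemma by an induction on $t$, using a synchronous coupling of the two chains driven by the population convergence $\|f_{-i,t}^{(m)}-f\|_{\onep}\to 0$ from Lemma~\ref{lem:convergence-of-f} together with the kernel continuity in Assumption~\ref{as:continuity}. Since the assertion concerns only the one-dimensional marginals of $x_{i,t}^{(m)}$ and $\hat{x}_{i,t}^{(m)}$, it is insensitive to the joint law of the pair $(x_{i,\cdot}^{(m)},\hat{x}_{i,\cdot}^{(m)})$, so I am free to realize both chains on one probability space under whatever coupling is most convenient.

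First I would record two reductions from bounded increments (Assumption~\ref{as:continuity}.2). For every $t$ both chains take values in the \emph{finite} set $\mc{X}_{x_0,t}=\{z:\|z\|_\infty\le\|x_0\|_\infty+tM\}$ of Definition~\ref{def:SS-new}, so it suffices to bound $\prob(x_{i,t}^{(m)}\neq\hat{x}_{i,t}^{(m)})$, because $|\prob(\hat{x}_{i,t}^{(m)}=y)-\prob(x_{i,t}^{(m)}=y)|\le\prob(x_{i,t}^{(m)}\neq\hat{x}_{i,t}^{(m)})$ for every $y$ (on $\{x_{i,t}^{(m)}=\hat{x}_{i,t}^{(m)}\}$ the two indicators agree). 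Likewise, for fixed $z$ the kernel $\mbf{P}(\cdot\,|\,z,a,f)$ is supported on the finitely many $x'$ with $\|x'-z\|_\infty\le M$; combining this with joint continuity of $\mbf{P}(x'\,|\,z,a,f)$ in $(a,f)$ (Assumption~\ref{as:continuity}.4), compactness of $\mc{A}$ (Assumption~\ref{as:continuity}.1), and a standard argument (extract a convergent action by compactness, use joint continuity), one obtains that whenever $f_{-i,t}^{(m)}\to f$ in the $\onep$ norm,
\[
\delta_t^{(m)}\ :=\ \max_{z\in\mc{X}_{x_0,t}}\ \sup_{a\in\mc{A}}\ \Big\|\mbf{P}(\cdot\,|\,z,a,f_{-i,t}^{(m)})-\mbf{P}(\cdot\,|\,z,a,f)\Big\|_{\mathrm{TV}}\ \longrightarrow\ 0 .
\]
Applying Lemma~\ref{lem:convergence-of-f} with the (random, history-dependent) action sequence $a_{i,t}^{(m)}=\mu_m(x_{i,t}^{(m)},f_{-i,t}^{(m)})$ (recall $f\in\mfr{F}_p$), we get $\|f_{-i,t}^{(m)}-f\|_{\onep}\to 0$ almost surely for every $t$, hence $\delta_t^{(m)}\to 0$ almost surely; and $0\le\delta_t^{(m)}\le 1$.

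Next I would couple the chains as follows: start both at $x_0$; conditionally on the history, as long as $x_{i,t}^{(m)}=\hat{x}_{i,t}^{(m)}=z$, draw the pair $(x_{i,t+1}^{(m)},\hat{x}_{i,t+1}^{(m)})$ from a maximal coupling of $\mbf{P}(\cdot\,|\,z,a_{i,t}^{(m)},f_{-i,t}^{(m)})$ and $\mbf{P}(\cdot\,|\,z,a_{i,t}^{(m)},f)$ — the two actions coincide, both equal to $\mu_m(z,f_{-i,t}^{(m)})$ — and once the chains disagree let them evolve independently according to \eqref{eqn:dynamics-i-1} and \eqref{eqn:tm-coupled-player}. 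This respects the prescribed marginal dynamics of each chain. The defining property of a maximal coupling gives, on $\{x_{i,t}^{(m)}=\hat{x}_{i,t}^{(m)}\}$, $\prob(x_{i,t+1}^{(m)}\neq\hat{x}_{i,t+1}^{(m)}\mid\mathcal{F}_t)\le\delta_t^{(m)}$, whence
\[
\prob\big(x_{i,t+1}^{(m)}\neq\hat{x}_{i,t+1}^{(m)}\big)\ \le\ \prob\big(x_{i,t}^{(m)}\neq\hat{x}_{i,t}^{(m)}\big)\ +\ \E\big[\delta_t^{(m)}\big];
\]
since $\prob(x_{i,0}^{(m)}\neq\hat{x}_{i,0}^{(m)})=0$, iterating yields $\prob(x_{i,t}^{(m)}\neq\hat{x}_{i,t}^{(m)})\le\sum_{\tau=0}^{t-1}\E[\delta_\tau^{(m)}]$, which vanishes because $\delta_\tau^{(m)}\to 0$ a.s.\ and is bounded. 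Carrying the same inequality through conditionally on the population trajectory yields the stated almost-sure version, since there $\prob(x_{i,t}^{(m)}\neq\hat{x}_{i,t}^{(m)}\mid\cdot)\le\sum_{\tau<t}\delta_\tau^{(m)}\to 0$ a.s. Together with the first reduction this proves the lemma.

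The main obstacle is obtaining the \emph{uniform} kernel bound $\delta_t^{(m)}$: a single vanishing quantity must dominate the total-variation distance between the two kernels simultaneously over all reachable states $z\in\mc{X}_{x_0,t}$ and all actions $a\in\mc{A}$ that player~$i$ might use, and this must hold along the \emph{random} path $t\mapsto f_{-i,t}^{(m)}$. Finiteness of $\mc{X}_{x_0,t}$ and of the support of each kernel reduces the state/transition aspect to a finite sum; compactness of $\mc{A}$ together with \emph{joint} continuity in $(a,f)$ (rather than mere separate continuity) controls the action variable; and Lemma~\ref{lem:convergence-of-f} supplies the almost-sure $\onep$-convergence $f_{-i,t}^{(m)}\to f$. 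The remaining work is careful bookkeeping of the null sets over the finitely many states relevant at each fixed time $t$.
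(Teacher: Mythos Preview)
Your proposal is correct and rests on the same three ingredients as the paper's proof: induction on $t$, the almost-sure population convergence $\|f_{-i,t}^{(m)}-f\|_{\onep}\to 0$ from Lemma~\ref{lem:convergence-of-f}, and finiteness of the reachable state set at each time (via bounded increments) combined with the kernel's joint continuity in $(a,f)$. The difference is in how the induction step is executed.

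The paper proceeds directly at the level of marginals: it writes $\prob(x_{i,t}^{(m)}=y)$ and $\prob(\hat{x}_{i,t}^{(m)}=y)$ as sums over the finite set $\mc{X}_y$ of one-step predecessors, and then argues termwise that the differences vanish using the induction hypothesis and the uniform (over $a\in\mc{A}$) kernel convergence $\sup_{a}|\mbf{P}(y\mid z,a,f_{-i,t-1}^{(m)})-\mbf{P}(y\mid z,a,f)|\to 0$. You instead build a synchronous maximal coupling of the two chains and bound $\prob(x_{i,t}^{(m)}\neq\hat{x}_{i,t}^{(m)})$ by the accumulated total-variation errors $\sum_{\tau<t}\delta_\tau^{(m)}$. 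Your route is slightly more constructive and yields the marginally stronger conclusion that the two chains can be realized to agree with high probability, not merely that their marginals are close; the paper's route avoids introducing the coupling machinery and works purely with the transition equations. Both arguments hinge on the same uniform-in-$a$ continuity step (you need it for $\delta_t^{(m)}\to 0$; the paper needs it because the action $\mu_m(z,f_{-i,t-1}^{(m)})$ varies with $m$), so neither is materially simpler than the other.
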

\begin{proof}
The lemma is trivially true for~$t = 0$. Let us assume that it holds for all times~$\tau = 0, 1, \cdots, t-1$. Then, we have
\begin{align*}
\prob\left(x_{i,t}^{(m)} = y\right) &= \sum_{z \in \mc{X}_{y}}\prob\left(x_{i, t-1}^{(m)} = z\right)\mbf{P}\left(y \ \Big | \  z, \mu_{m}(z, f_{-i,t-1}^{(m)}), f_{-i,t-1}^{(m)}\right) \\
\prob\big(\hat{x}_{i,t}^{(m)} = y\big) &= \sum_{z \in \mc{X}_{y}}\prob\left(\hat{x}_{i, t-1}^{(m)} = z\right)\mbf{P}\left(y \ \Big | \  z, \mu_{m}(z, f_{-i,t-1}^{(m)}), f\right).
\end{align*}
Here we use the fact that the coupled player uses the same action as player~$i$ and the state evolution of the coupled player is given by equation~\eqref{eqn:tm-coupled-player}. Note that the summation is over all states in the finite set~$\mc{X}_{y}$, where $\mc{X}_{y}$ is defined as in equation~\eqref{eqn:SS-per-state}.

From Lemma~\ref{lem:convergence-of-f}, we know that for all times~$t$,
$\norm{f_{-i, t}^{(m)} - f}_{\onep} \rightarrow 0$ almost surely as
$m\rightarrow \infty$. From Assumption~\ref{as:continuity}, we know
that the transition kernel is jointly continuous in the action~$a$ and
distribution~$f$ (where the set of distributions~$\mfr{F}_p$ is endowed
with $\onep$ norm).   Since the action set is compact, this implies
that for all $y,z \in \mc{X}$, $
\lim_{m\rightarrow \infty} \sup_{a \in \mc{A}} \left| \mbf{P}\left(y \
    \Big | \  z, a, f_{-i,t-1}^{(m)}\right) - \mbf{P}\left(y \ \Big |
    \  z, a, f\right) \right| = 0.$ almost  surely. It follows that for every $y, z \in \mc{X}$, $\lim_{m\rightarrow \infty} \left |\mbf{P}\left(y \ \Big | \  z, \mu_{m}(z, f_{-i,t-1}^{(m)}), f_{-i,t-1}^{(m)}\right) -
  \mbf{P}\left(y \ \Big | \  z, \mu_{m}(z, f_{-i,t-1}^{(m)}), f\right)
\right| = 0$ almost surely. From  the induction hypothesis, we know that for
every~$z\in \mc{X}$, $\left|\prob\big(\hat{x}_{i,t-1}^{(m)} = z\big) -
  \prob\big(x_{i,t-1}^{(m)} = z\big)\right|\rightarrow 0$ almost surely as $m\rightarrow \infty$. This along with the finiteness of the set $\mc{X}_{y}$, gives that for every $y \in \mc{X}$ $\left|\prob\big(\hat{x}_{i,t}^{(m)} = y\big) - \prob\big(x_{i,t}^{(m)} = y\big)\right| \rightarrow 0$ almost surely as $m\rightarrow \infty$. This proves the lemma. 
\end{proof}

%
%

\begin{lemma}
\label{lemma:payoff-conv-1}
Let $(\mu, f)$ be a stationary equilibrium and consider an $m$ player game. Let the initial state of player~$i$ be $x^{(m)}_{i,0} = x$, and let $f^{(m)} \in \mfr{F}^{(m)}$ be the initial population state of $m-1$ players whose initial state is sampled independently from the distribution~$f$. Assume that player~$i$ uses a cognizant strategy $\mu_{m}$ and every other player uses the oblivious strategy~$\mu$. Their state evolutions are given by equation~\eqref{eqn:dynamics-i-1} and~\eqref{eqn:dynamics-j-1}. Also define a coupled player with initial state $\hat{x}_{i,0}^{(m)} = x$ and let its state evolution be given by equation~\eqref{eqn:tm-coupled-player}. Then, for all times~$t$, we have $\limsup_{m \rightarrow \infty}\E\left[\pi\left(x_{i,t}^{(m)}, \mu_m\big(x_{i,t}^{(m)}, \fmi\big), \fmi\right) - \pi\left(\hat{x}_{i,t}^{(m)}, \mu_m\big(x_{i,t}^{(m)}, \fmi\big), f\right)\right] \leq 0,$ almost surely\footnote{{The almost sure convergence of the expected value of the payoff is in the randomness associated with the initial population state.}}
\end{lemma}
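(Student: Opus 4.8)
This lemma is the per‑period heart of the AME argument: once it is available, multiplying by $\beta^{t}$, summing over $t$, and using the uniform tail bound of Lemma~\ref{lem:supbound} to truncate the horizon shows that player~$i$'s value under the cognizant strategy $\mu_m$ cannot asymptotically exceed the coupled player's value, which is in turn at most $\vTilde^{*}(x_0\mid f)=\vTilde(x_0\mid\mu,f)$ by~\eqref{eqn:cd-vhat-ub}. To prove the lemma itself, the plan is to split the period‑$t$ gap into a ``population‑state effect'' and a ``state effect.'' Since $\hat a_{i,t}^{(m)}=a_{i,t}^{(m)}$, writing $a:=a_{i,t}^{(m)}=\mu_m(x_{i,t}^{(m)},\fmi)$ we have
\begin{equation*}
\pi(x_{i,t}^{(m)},a,\fmi)-\pi(\hat{x}_{i,t}^{(m)},a,f)
=\underbrace{\big(\pi(x_{i,t}^{(m)},a,\fmi)-\pi(x_{i,t}^{(m)},a,f)\big)}_{A_m}
+\underbrace{\big(\pi(x_{i,t}^{(m)},a,f)-\pi(\hat{x}_{i,t}^{(m)},a,f)\big)}_{B_m}.
\end{equation*}
Two consequences of bounded increments (Assumption~\ref{as:continuity}) are used throughout: for every $s\le t$ both $x_{i,s}^{(m)}$ and $\hat{x}_{i,s}^{(m)}$ lie in the finite set $\mc{X}_{x_0,t}$ of Definition~\ref{def:SS-new}; and $|\pi(z,b,g)|\le L_t:=K(1+\norm{x_0}_\infty+tM)^n$ for all $z\in\mc{X}_{x_0,t}$, $b\in\mc{A}$, $g\in\mfr{F}$, by the growth‑rate bound. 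Hence $|A_m|\le 2L_t$ and $|B_m|\le 2L_t$ surely.

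For $A_m$ I would bound $|A_m|\le\max_{z\in\mc{X}_{x_0,t}}\sup_{b\in\mc{A}}|\pi(z,b,\fmi)-\pi(z,b,f)|$. Because $\mc{X}_{x_0,t}$ is finite, $\mc{A}$ compact, and $\pi$ jointly continuous in $(b,f)$ on $\mc{A}\times\mfr{F}_p$ (Assumption~\ref{as:continuity}), the right‑hand side is a continuous functional of $\fmi\in\mfr{F}_p$ vanishing at $f$; since $\norm{\fmi-f}_{\onep}\to 0$ almost surely by Lemma~\ref{lem:convergence-of-f}, this bound tends to $0$ almost surely, and being dominated by $2L_t$ it also tends to $0$ in expectation by bounded convergence. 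Thus $\E[A_m]\to 0$.

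The main obstacle is $B_m$: the action $a=\mu_m(x_{i,t}^{(m)},\fmi)$ is a function of player~$i$'s trajectory and of the population state, hence correlated with everything in sight, and $\pi$ is not continuous in the discrete state variable, so Lemma~\ref{lem:conv-state-dist-1}, which compares only the marginal laws of $x_{i,t}^{(m)}$ and $\hat{x}_{i,t}^{(m)}$, does not by itself control $\E[B_m]$. The plan is instead to realize $\hat{x}_{i,\cdot}^{(m)}$ on the same probability space as $x_{i,\cdot}^{(m)}$ by a step‑by‑step maximal (splitting) coupling — legitimate because \eqref{eqn:tm-coupled-player} pins down only the one‑step transition law of the coupled player, which the coupling respects, so $\hat V^{(m)}$ is unchanged. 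Both processes start at $x_0$, and whenever they have agreed through step $s$, the conditional laws of $x_{i,s+1}^{(m)}$ and $\hat{x}_{i,s+1}^{(m)}$ given the time‑$s$ system state are $\mbf{P}(\cdot\mid z,a_{i,s}^{(m)},f_{-i,s}^{(m)})$ and $\mbf{P}(\cdot\mid z,a_{i,s}^{(m)},f)$ with the common state $z$ and common action $a_{i,s}^{(m)}$, which differ in total variation by at most
\begin{equation*}
\delta_m^{(s)}:=\max_{z\in\mc{X}_{x_0,t}}\ \sup_{b\in\mc{A}}\ \big\|\mbf{P}(\cdot\mid z,b,f_{-i,s}^{(m)})-\mbf{P}(\cdot\mid z,b,f)\big\|_{\mathrm{TV}}.
\end{equation*}
By bounded increments (so the total variation is a finite sum over next states), joint continuity of the kernel in $(b,f)$, compactness of $\mc{A}$, and $\norm{f_{-i,s}^{(m)}-f}_{\onep}\to 0$ (Lemma~\ref{lem:convergence-of-f}), each $\delta_m^{(s)}\to 0$ almost surely, and $\delta_m^{(s)}$ is measurable with respect to the time‑$s$ system state. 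The splitting construction then produces $\hat{x}_{i,s+1}^{(m)}$ with the prescribed conditional law and agreement probability at least $1-\delta_m^{(s)}$, so a union bound over $s<t$ gives $\prob\big(x_{i,t}^{(m)}\ne\hat{x}_{i,t}^{(m)}\big)\le\E\big[\sum_{s<t}\delta_m^{(s)}\big]\to 0$. Since $B_m=0$ on $\{x_{i,t}^{(m)}=\hat{x}_{i,t}^{(m)}\}$ — the shared action $a$ appears in both terms — we get $\E[|B_m|]\le 2L_t\,\prob\big(x_{i,t}^{(m)}\ne\hat{x}_{i,t}^{(m)}\big)\to 0$. Combining with the bound on $A_m$ yields $\E[A_m+B_m]\to 0$, hence $\limsup_m\E[A_m+B_m]\le 0$; the almost‑sure qualifier follows by reading $\E$ as conditional expectation given the initial population state $f^{(m)}$ and rerunning the identical estimates, all invoked convergences continuing to hold almost surely and bounded convergence applying conditionally.
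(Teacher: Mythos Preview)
Your decomposition into a ``population-state effect'' $A_m$ and a ``state effect'' $B_m$ is exactly the paper's split $T_{1,t}^{(m)}+T_{2,t}^{(m)}$, and your treatment of $A_m$ coincides with the paper's argument for $T_1$ (take a sup over $a\in\mc{A}$, use joint continuity of $\pi$ on the compact $\mc{A}$ together with finiteness of $\mc{X}_{x_0,t}$, then invoke Lemma~\ref{lem:convergence-of-f}). Where you diverge is on $B_m$: the paper writes
\[
T_{2,t}^{(m)}\le\sum_{y\in\mc{X}_{x,t}}\bigl|\prob(x_{i,t}^{(m)}=y)-\prob(\hat{x}_{i,t}^{(m)}=y)\bigr|\,\sup_{a\in\mc{A}}|\pi(y,a,f)|
\]
and applies Lemma~\ref{lem:conv-state-dist-1} (convergence of \emph{marginal} laws). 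You correctly observe that this step is not justified as written, since $a_{i,t}^{(m)}=\mu_m(x_{i,t}^{(m)},\fmi)$ depends on $x_{i,t}^{(m)}$ and hence the two terms in $B_m$ cannot be decoupled via marginals alone. Your remedy---realizing $\hat{x}_{i,\cdot}^{(m)}$ through a step-by-step maximal (splitting) coupling, which is compatible with~\eqref{eqn:tm-coupled-player} since that equation fixes only the one-step conditional law and leaves the joint construction free---yields $\prob(x_{i,t}^{(m)}\ne\hat{x}_{i,t}^{(m)})\le\sum_{s<t}\E[\delta_m^{(s)}]\to 0$ and then $\E|B_m|\le 2L_t\,\prob(x_{i,t}^{(m)}\ne\hat{x}_{i,t}^{(m)})\to 0$. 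This is sound, and in fact patches a gap in the paper's passage from the correlated difference $\E[\pi(x_{i,t}^{(m)},a_{i,t}^{(m)},f)-\pi(\hat{x}_{i,t}^{(m)},a_{i,t}^{(m)},f)]$ to a marginal-law bound; your route is structurally the same as the paper's but more careful on the second term.
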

\begin{proof}
Let us write $a_{i,t}^{(m)} =  \mu_m\big(x_{i,t}^{(m)}, \fmi\big)$. We have
\begin{align*}
\Delta^{(m)}_{i,t} &= \E\left[\pi\left(x_{i,t}^{(m)}, a_{i,t}^{(m)},
    \fmi\right) - \pi\left(\hat{x}_{i,t}^{(m)},   a_{i,t}^{(m)}, f\right)\right] \\
&= \E\left[\pi\left(x_{i,t}^{(m)}, a_{i,t}^{(m)}, \fmi\right) -
  \pi\left(x_{i,t}^{(m)},  a_{i,t}^{(m)}, f\right)\right] +
  \E\left[\pi\left(x_{i,t}^{(m)}, a_{i,t}^{(m)}, f\right) - \pi\left(\hat{x}_{i,t}^{(m)}, a_{i,t}^{(m)}, f\right)\right] \\
&\triangleq T^{(m)}_{1,t} + T^{(m)}_{2,t}.
\end{align*}

Consider the first term. We have

\begin{align*}
T^{(m)}_{1,t} &\leq \sum_{y \in \mc{X}}\prob\big(x_{i,t}^{(m)} = y\big) \sup_{a\in \mc{A}}\left|\pi\left(y, a, \fmi \right) - \pi\left(y, a, f \right)\right|\\
& = \sum_{y \in \mc{X}_{x,t}}\prob\big(x_{i,t}^{(m)} = y\big) \sup_{a\in \mc{A}}\left|\pi\left(y, a, \fmi \right) - \pi\left(y, a, f \right)\right|,
\end{align*}

\ignore{
\begin{align*}
T^{(m)}_{1,t} &= \sum_{y \in \mc{X}}\prob\big(x_{i,t}^{(m)} = y\big) \left(\pi\left(y, a_{i,t}^{(m)}, \fmi \right) - \pi\left(y, a_{i,t}^{(m)}, f \right)\right)\\
&\leq \sum_{y \in \mc{X}}\prob\big(x_{i,t}^{(m)} = y\big) \sup_{a\in \mc{A}}\left|\pi\left(y, a, \fmi \right) - \pi\left(y, a, f \right)\right|\\
& = \sum_{y \in \mc{X}_{x,t}}\prob\big(x_{i,t}^{(m)} = y\big) \sup_{a\in \mc{A}}\left|\pi\left(y, a, \fmi \right) - \pi\left(y, a, f \right)\right|,
\end{align*}
}
where the last equality follows from the fact that $x_{i,0}^{(m)} = x$ and from equation~\eqref{eqn:SS-per-time}. From Assumption~\ref{as:continuity}, we know that the payoff is jointly continuous in action~$a$ and distribution~$f$ (with the set of distributions~$\mfr{F}_p$ endowed with $\onep$ norm) and the set~$\mc{A}$ is compact. Thus, for every $y \in \mc{X}$, we have $\sup_{a\in \mc{A}}\left|\pi\left(y, a, \fmi \right) - \pi\left(y, a, f \right)\right| \rightarrow 0,$ almost surely as $m\rightarrow \infty$. This along with the fact that $\mc{X}_{x,t}$ is finite shows that $\limsup_{m\rightarrow \infty}T^{(m)}_{1,t} \leq 0$ almost surely.

%

Now consider the second term. We have

\begin{align*}
T^{(m)}_{2,t} & = \E\left[\pi\left(x_{i,t}^{(m)}, a_{i,t}^{(m)}, f\right) - \left(\hat{x}_{i,t}^{(m)}, a_{i,t}^{(m)}, f\right)\right]\\
&\leq \sum_{y \in \mc{X}}\left|\prob\big(x_{i,t}^{(m)} = y\big) - \prob\big(\hat{x}_{i,t}^{(m)} = y\big)\right|\sup_{a \in \mc{A}}\left|\pi\big(y, a, f)\right|\\
& = \sum_{y \in \mc{X}_{x,t}}\left|\prob\big(x_{i,t}^{(m)} = y\big) - \prob\big(\hat{x}_{i,t}^{(m)} = y\big)\right|\sup_{a \in \mc{A}}\left|\pi\big(y, a, f)\right|,
\end{align*}
where the last equality follows from the fact that $x_{i,0}^{(m)} = \hat{x}_{i,0}^{(m)} = x$ and from Definition~\ref{def:SS-new}. From Lemma~\ref{lem:conv-state-dist-1}, we know that for every~$y \in \mc{X}$, $\left|\prob\big(x_{i,t}^{(m)} = y\big) - \prob\big(\hat{x}_{i,t}^{(m)} = y\big)\right| \rightarrow 0$ almost surely  $m\rightarrow \infty$. Since $\mc{X}_{x,t}$ is finite for every fixed $x\in \mc{X}$ and every time~$t$, this implies that $\limsup_{m \rightarrow \infty}T^{(m)}_{2,t} \leq 0$ almost surely. This proves the lemma. 
\end{proof}

Before we proceed further, we need one additional piece of notation. Once again let $(\mu, f)$ be a stationary equilibrium and consider an oblivious player. Let $\tilde{x}_{t}$ be the state of this oblivious player at time~$t$. We assume that $\tilde{x}_{0} = x$ and since the player used the oblivious strategy $\mu$, the state evolution of this player is given by
\begin{align}
\label{eqn:dynamics-oblivious-player}
\tilde{x}_{t+1} \sim \mbf{P}\big(\cdot \ | \ \tilde{x}_{t}, \tilde{a}_{t}, f\big)
\end{align}
where $\tilde{a}_{t} = \mu(\tilde{x}_{t})$. We let $\vTilde\big(x\ |\ \mu, f\big)$ (as defined in equation~\eqref{eqn:oe-value-func}) to be the oblivious value function for this player starting from state~$x$.

Also, consider an $m$ player game and focus on player~$i$. We represent the state of player~$i$ at time~$t$ by $\check{x}_{i,t}^{(m)}$. As before, the superscript $m$ on the state variable represents the fact that we are considering an~$m$ player stochastic game. Let $\check{x}_{i,0}^{(m)} = x$ and let player~$i$ also use the oblivious strategy~$\mu$.  The initial state of every other player $j \neq i$ is drawn independently from the distribution~$f$, that is, $\check{x}_{j,0}^{(m)} \sim f$. Denote the initial distribution of all $m-1$ players (excluding player~$i$) by $f^{(m)} \in \mfr{F}^{(m)}$. The state evolution of player~$i$ is then given by
\begin{align}
\label{eqn:dynamics-i-2}
\check{x}_{i, t+1}^{(m)} \sim \mbf{P}\left(\cdot\ | \ \check{x}_{i,t}^{(m)}, \check{a}_{i,t}^{(m)}, \fmi\right),
\end{align}
where $\check{a}^{(m)}_{i,t} = \mu\big(\check{x}_{i,t}^{(m)}\big)$. Note that even though the player uses an oblivious strategy, its state evolution is affected by the {\em actual} population state. Let every other player~$j$ also use the oblivious strategy~$\mu$ and let their state evolution be given by
\begin{align}
\label{eqn:dynamics-j-2}
\check{x}_{j, t+1}^{(m)} \sim \mbf{P}\left(\cdot\ \Big | \ \check{x}_{j,t}^{(m)}, \mu\big( \check{x}_{j,t}^{(m)}\big), f_{-j,t}^{(m)}\right).
\end{align}
Define $V^{(m)}\big(x, f^{(m)}\ | \ \muVec^{(m)}\big)$ to be the actual value function of the player, when the initial state of the player is~$x$, the initial population distribution is~$f^{(m)}$ and every player uses the oblivious strategy~$\mu$. That is,

\begin{align}
\label{eqn:actual-V-2}
V^{(m)}\big(x, f^{(m)}\ |\  \muVec^{(m)} \big) = \E\left[\sum_{t =0}^{\infty}\beta^{t}\pi\big(\check{x}_{i,t}, \check{a}_{i,t},\fmi\big) \ \big| \
\check{x}_{i,0} = x, f_{-i,0}^{(m)} = f^{(m)}; \mu_{i} = \mu, \muVeci = \muVec^{(m)} \right].
\end{align}



\begin{lemma}
\label{lem:conv-state-dist-2}
Let $(\mu, f)$ be a stationary equilibrium and consider an $m$ player stochastic game. Let $\check{x}_{i,0}^{(m)} = x$, and let $f^{(m)} \in \mfr{F}^{(m)}$ be the initial population state of $m-1$ players whose initial state is sampled independently from~$f$. Assume that every player uses the oblivious strategy~$\mu$ and their state evolutions are given by equations~\eqref{eqn:dynamics-i-2} and~\eqref{eqn:dynamics-j-2}. Also, consider an oblivious player with $\tilde{x}_{0} = x$ and let its state evolution be given by equation~\eqref{eqn:dynamics-oblivious-player}. Then, for every time~$t$ and for all $y\in \mc{X}$, we have $\left|\prob(\tilde{x}_{t} = y) - \prob(\check{x}_{i,t}^{(m)} = y)\right| \rightarrow 0,$ almost surely as $m\rightarrow \infty$.
\end{lemma}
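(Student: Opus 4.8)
The plan is to prove this by induction on the time index $t$, following the same template as the proof of Lemma~\ref{lem:conv-state-dist-1}, but in a slightly simpler setting: here player~$i$ also plays the oblivious strategy $\mu$, so no cognizant strategy $\mu_m$ enters, and the comparison process $\tilde{x}_t$ is a self-contained oblivious trajectory (its time-$t$ action is $\mu(\tilde{x}_t)$) rather than a process whose actions are copied from player~$i$. For the base case $t = 0$ the claim is immediate, since $\check{x}_{i,0}^{(m)} = \tilde{x}_0 = x$ deterministically, so the difference of the point masses vanishes.

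For the inductive step, assume the claim holds for all $\tau \le t-1$. Conditioning on the state at time $t-1$, and using that increments are bounded (Assumption~\ref{as:continuity}), so that only $z$ in the finite set $\mc{X}_y$ of Definition~\ref{def:SS-new} contribute, the dynamics~\eqref{eqn:dynamics-i-2} and~\eqref{eqn:dynamics-oblivious-player} give
\[
\prob\big(\check{x}_{i,t}^{(m)} = y\big) = \sum_{z \in \mc{X}_y} \prob\big(\check{x}_{i,t-1}^{(m)} = z\big)\, \mbf{P}\big(y \mid z, \mu(z), f_{-i,t-1}^{(m)}\big), \qquad
\prob\big(\tilde{x}_t = y\big) = \sum_{z \in \mc{X}_y} \prob\big(\tilde{x}_{t-1} = z\big)\, \mbf{P}\big(y \mid z, \mu(z), f\big).
\]
Adding and subtracting $\sum_{z \in \mc{X}_y} \prob(\check{x}_{i,t-1}^{(m)} = z)\,\mbf{P}(y \mid z, \mu(z), f)$ and using that kernel values and probabilities are bounded by $1$ yields
\[
\big|\prob(\tilde{x}_t = y) - \prob(\check{x}_{i,t}^{(m)} = y)\big| \;\le\; \sum_{z \in \mc{X}_y}\big|\prob(\tilde{x}_{t-1}=z) - \prob(\check{x}_{i,t-1}^{(m)}=z)\big| \;+\; \sum_{z \in \mc{X}_y}\big|\mbf{P}(y \mid z,\mu(z),f) - \mbf{P}(y \mid z,\mu(z),f_{-i,t-1}^{(m)})\big|.
\]
The first sum tends to zero almost surely by the induction hypothesis, since $\mc{X}_y$ is finite. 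For the second sum, Lemma~\ref{lem:convergence-of-f}, applied with player~$i$ using the (history-dependent) action sequence generated by $\mu$, gives $\|f_{-i,t-1}^{(m)} - f\|_{\onep} \to 0$ almost surely; since by Assumption~\ref{as:continuity} the kernel $\mbf{P}(y \mid z, a, \cdot)$ is continuous on $\mfr{F}_p$ in the $\onep$ norm, each summand tends to zero almost surely, and finiteness of $\mc{X}_y$ gives that the whole sum does as well. This completes the induction.

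I do not expect a real obstacle here: the content is a routine combination of the uniform law of large numbers already established in Lemma~\ref{lem:convergence-of-f} with continuity of the transition kernel and the finiteness (from bounded increments) of the predecessor sets $\mc{X}_y$. The one point worth stating carefully is that $\tilde{x}_t$ is \emph{not} the coupled process of Lemma~\ref{lem:conv-state-dist-1}: its time-$(t-1)$ action is $\mu(\tilde{x}_{t-1})$, evaluated at its own state, not a copy of player~$i$'s action. This causes no difficulty precisely because here player~$i$ also follows $\mu$, so the recursions for $\prob(\check{x}_{i,t}^{(m)} = \cdot)$ and $\prob(\tilde{x}_t = \cdot)$ use the \emph{same} transition map $z \mapsto \mbf{P}(\cdot \mid z, \mu(z), \cdot)$, differing only in whether the population-state argument is the random finite-game quantity $f_{-i,t-1}^{(m)}$ or its limit $f$ --- exactly the discrepancy controlled by Lemma~\ref{lem:convergence-of-f}.
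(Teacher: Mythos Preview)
Your proof is correct and follows essentially the same approach as the paper's: induction on $t$, writing out the one-step recursions for $\prob(\tilde{x}_t=y)$ and $\prob(\check{x}_{i,t}^{(m)}=y)$ over the finite predecessor set $\mc{X}_y$, then invoking Lemma~\ref{lem:convergence-of-f} together with continuity of the kernel in $f$ and the induction hypothesis. Your triangle-inequality decomposition makes explicit a step the paper leaves implicit, and your closing remark distinguishing $\tilde{x}_t$ from the coupled process of Lemma~\ref{lem:conv-state-dist-1} is a helpful clarification the paper omits, but neither changes the substance of the argument.
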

\begin{proof}
The lemma is trivially true for~$t = 0$. Let us assume that it holds for all times~$\tau = 0, 1, \cdots, t-1$. Then, we have
\begin{align*}
\prob\left(\tilde{x}_{t} = y\right) &= \sum_{z \in \mc{X}_{y}}\prob\left(\tilde{x}_{t-1} = z\right)\mbf{P}\left(y \ \Big | \  z, \mu(z), f\right) \\
\prob\big(\check{x}_{i,t}^{(m)} = y\big) &= \sum_{z \in \mc{X}_{y}}\prob\left(\check{x}_{i, t-1}^{(m)} = z\right)\mbf{P}\left(y \ \Big | \  z, \mu(z), \fmi\right).
\end{align*}
Note that the summation above is over all states in a finite set~$\mc{X}_{y}$ (as defined in Definition~\ref{def:SS-new}).

From Lemma~\ref{lem:convergence-of-f}, we know that for all times~$t$, $\norm{f_{-i, t}^{(m)} - f}_{\onep} \rightarrow 0$ almost surely as $m\rightarrow \infty$. From Assumption~\ref{as:continuity}, we know that the transition kernel is continuous in the distribution (where the set of distributions~$\mfr{F}_p$ is endowed with $\onep$ norm). From the induction hypothesis, we know that $\left|\prob\big(\tilde{x}_{t-1} = z\big) - \prob\big(\check{x}_{-i,t-1}^{(m)} = z\big)\right| \rightarrow 0$. This along with the finiteness of the set $\mc{X}_{y}$, gives that for every $x \in \mc{X}$
\[
\left|\prob\left(\tilde{x}_{t} = x\right) - \prob\big(\check{x}_{i,t}^{(m)} = x\big)\right| \rightarrow 0
\]
almost surely as $m\rightarrow \infty$. This proves the lemma. 
\end{proof}

\begin{lemma}
\label{lemma:payoff-conv-2}
Let $(\mu, f)$ be a stationary equilibrium and consider an $m$ player stochastic game. Let $\check{x}_{i,0}^{(m)} = x$, and let $f^{(m)} \in \mfr{F}^{(m)}$ be the initial population state of $m-1$ players whose initial state is sampled independently from~$f$. Assume that every player uses the oblivious strategy~$\mu$ and their state evolutions are given by equations~\eqref{eqn:dynamics-i-2} and~\eqref{eqn:dynamics-j-2}. Also, consider an oblivious player with $\tilde{x}_{0} = x$ and let its state evolution be given by equation~\eqref{eqn:dynamics-oblivious-player}. Then for all times~$t$, we have $\E\left[\pi\big(\tilde{x}_{t}, \mu(\tilde{x}_{t}), f\big) - \pi\big(\check{x}_{i,t}^{(m)}, \mu(\check{x}_{i,t}^{(m)}), \fmi\big)\right] \rightarrow 0,$ almost surely as $m\rightarrow \infty$.
\end{lemma}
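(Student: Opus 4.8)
The plan is to follow the proof of Lemma~\ref{lemma:payoff-conv-1} essentially verbatim, the only difference being that now both state processes under comparison evolve under the \emph{oblivious} strategy~$\mu$. Write
\[
\Delta^{(m)}_t \triangleq \E\!\left[\pi\big(\tilde{x}_{t}, \mu(\tilde{x}_{t}), f\big) - \pi\big(\check{x}_{i,t}^{(m)}, \mu(\check{x}_{i,t}^{(m)}), \fmi\big)\right],
\]
and split $\Delta^{(m)}_t = T^{(m)}_{1,t} + T^{(m)}_{2,t}$, where
\[
T^{(m)}_{1,t} = \E\!\left[\pi\big(\check{x}_{i,t}^{(m)}, \mu(\check{x}_{i,t}^{(m)}), f\big) - \pi\big(\check{x}_{i,t}^{(m)}, \mu(\check{x}_{i,t}^{(m)}), \fmi\big)\right]
\]
isolates the effect of replacing the actual population state~$\fmi$ by the mean-field state~$f$ (holding the state and action fixed), and
\[
T^{(m)}_{2,t} = \E\!\left[\pi\big(\tilde{x}_{t}, \mu(\tilde{x}_{t}), f\big) - \pi\big(\check{x}_{i,t}^{(m)}, \mu(\check{x}_{i,t}^{(m)}), f\big)\right]
\]
isolates the effect of replacing the time-$t$ state of player~$i$ in the finite game by the time-$t$ state of the genuinely oblivious player.

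First I would note that since increments are bounded (Assumption~\ref{as:continuity}) and $\check{x}_{i,0}^{(m)} = \tilde{x}_0 = x$, both $\check{x}_{i,t}^{(m)}$ and $\tilde{x}_t$ take values in the finite set $\mc{X}_{x,t}$ of Definition~\ref{def:SS-new}, which legitimizes all interchanges of limits with finite sums below. For $T^{(m)}_{1,t}$, bound
\[
|T^{(m)}_{1,t}| \leq \sum_{y \in \mc{X}_{x,t}} \prob\!\big(\check{x}_{i,t}^{(m)} = y\big)\, \sup_{a \in \mc{A}} \big|\pi(y,a,f) - \pi(y,a,\fmi)\big|;
\]
by Lemma~\ref{lem:convergence-of-f} we have $\|\fmi - f\|_{\onep} \to 0$ almost surely, so by joint continuity of $\pi$ in $a$ and $f$ (with $\mfr{F}_p$ endowed with the $\onep$ norm) together with compactness of $\mc{A}$, the inner supremum tends to zero almost surely for each fixed $y$; finiteness of $\mc{X}_{x,t}$ then gives $T^{(m)}_{1,t} \to 0$ almost surely. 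For $T^{(m)}_{2,t}$, I would observe that $g(y) \triangleq \pi(y,\mu(y),f)$ is a \emph{deterministic} function of $y$, bounded on the finite set $\mc{X}_{x,t}$ (again using the growth bound in Assumption~\ref{as:continuity}), and write
\[
T^{(m)}_{2,t} = \sum_{y \in \mc{X}_{x,t}} \Big(\prob\!\big(\tilde{x}_{t} = y\big) - \prob\!\big(\check{x}_{i,t}^{(m)} = y\big)\Big)\, g(y);
\]
by Lemma~\ref{lem:conv-state-dist-2} each coefficient tends to zero almost surely, so $T^{(m)}_{2,t} \to 0$ almost surely. Adding the two halves yields $\Delta^{(m)}_t \to 0$ almost surely, which is the claim.

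I do not anticipate a genuine obstacle here: the argument is a direct adaptation of Lemma~\ref{lemma:payoff-conv-1}, and in fact slightly simpler, since both processes being compared use the same fixed oblivious strategy~$\mu$, so that $T^{(m)}_{2,t}$ collapses to a finite linear functional of $\prob(\tilde{x}_t = \cdot) - \prob(\check{x}_{i,t}^{(m)} = \cdot)$ rather than requiring a coupling of actions as in the cognizant case. The only points requiring care are (i) confirming that both state sequences live in the common finite set $\mc{X}_{x,t}$, and (ii) bookkeeping that every convergence statement is almost sure with respect to the randomness in the initial population state~$f^{(m)}$, a property inherited through Lemmas~\ref{lem:convergence-of-f} and~\ref{lem:conv-state-dist-2}.
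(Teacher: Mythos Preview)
Your proof is correct and follows essentially the same approach as the paper: a two-term split controlled respectively by Lemma~\ref{lem:convergence-of-f} (convergence of $\fmi$ to $f$) and Lemma~\ref{lem:conv-state-dist-2} (convergence of the state laws), with finiteness of $\mc{X}_{x,t}$ justifying the limit exchanges. The only cosmetic difference is the choice of intermediate pivot: the paper inserts $\pi(\tilde{x}_t,\mu(\tilde{x}_t),\fmi)$, whereas you insert $\pi(\check{x}_{i,t}^{(m)},\mu(\check{x}_{i,t}^{(m)}),f)$; your choice makes $T^{(m)}_{2,t}$ a finite linear combination against the \emph{fixed} function $g(y)=\pi(y,\mu(y),f)$, which is marginally cleaner than the paper's version where the integrand $\pi(y,\mu(y),\fmi)$ still depends on $m$ and one must additionally invoke its boundedness on $\mc{X}_{x,t}$.
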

\begin{proof}
Define $\Delta_{i,t}^{(m)}$ as
\begin{align*}
\Delta_{i,t}^{(m)} &= \E\left[\pi\big(\tilde{x}_{t}, \mu(\tilde{x}_{t}), f\big) - \pi\big(\check{x}_{i,t}^{(m)}, \mu(\check{x}_{i,t}^{(m)}), \fmi\big)\right] \\
& = \E\left[\pi\big(\tilde{x}_{t}, \mu(\tilde{x}_{t}), f\big) - \pi\big(\tilde{x}_{t}, \mu(\tilde{x}_{t}), \fmi)\right] + \E\left[\pi\big(\tilde{x}_{t}, \mu(\tilde{x}_{t}), \fmi) -  \pi\big(\check{x}_{i,t}^{(m)}, \mu(\check{x}_{i,t}^{(m)}), \fmi\big)\right] \\
&\triangleq T^{(m)}_{1,t} + T^{(m)}_{2,t}.
\end{align*}

Note that from Lemma~\ref{lem:convergence-of-f}, we have that $\norm{\fmi - f}_{\onep} \rightarrow 0$ almost surely as $m \rightarrow \infty$. From Assumption~\ref{as:continuity}, we know that the payoff is continuous in the distribution, where the set of distributions~$\mfr{F}_p$ is endowed with $\onep$ norm. Thus, for every~$y$ and $a$, we have
\begin{equation}
\label{eq:pidifflimit}
\left|\pi(y, a, f) - \pi(y, a, \fmi)\right| \rightarrow 0,
\end{equation}
as $m \rightarrow \infty$. Consider the first term. We have:
\begin{align*}
T^{(m)}_{1,t} & = \sum_{y\in\mc{X}}\prob\left(\tilde{x}_{t} = y\right)\left|\pi(y, \mu(y), f) - \pi(y, \mu(y), \fmi)\right|\\
& = \sum_{y\in\mc{X}_{x,t}}\prob\left(\tilde{x}_{t} = y\right)\left|\pi(y, \mu(y), f) - \pi(y, \mu(y), \fmi)\right|,
\end{align*}
where the last equality follows from the fact that $\tilde{x}_{0} = x$ and from Definition~\ref{def:SS-new}. Since $\mc{X}_{x, t}$ is a finite set for every initial state~$x \in \mc{X}$ and every time ~$t$, we get that $T^{(m)}_{1,t} \rightarrow 0$ almost surely as $m\rightarrow \infty$.

Consider now the second term. We have:
\begin{align*}
T^{(m)}_{2,t} &= \E\left[\pi\big(\tilde{x}_{t}, \mu(\tilde{x}_{t}), \fmi) -  \pi\big(\check{x}_{i,t}^{(m)}, \mu(\check{x}_{i,t}^{(m)}), \fmi\big)\right]\\
& = \sum_{y\in\mc{X}}\prob\big(\tilde{x}_{t} = y\big)\pi\big(y, \mu(y), \fmi) - \sum_{y\in\mc{X}}\prob\big(\check{x}_{i,t}^{(m)} = y\big)\pi\big(y, \mu(y), \fmi\big) \\
& = \sum_{y\in\mc{X}_{t}}\left(\prob\big(\tilde{x}_{t} = y\big) - \prob\big(\check{x}_{i,t}^{(m)} = y\big)\right)\pi\big(y, \mu(y), \fmi).
\end{align*}
From Lemma~\ref{lem:conv-state-dist-2},
equation~\eqref{eq:pidifflimit}, and the finiteness of $\mc{X}_{x,
  t}$, we get that $\limsup_{m\rightarrow \infty} T^{(m)}_{2,t} \leq
0$ almost surely. This proves the lemma. 
\end{proof}

\begin{sproof} {{Proof of Theorem \ref{th:AME}.}}
Let us define
\[
\Delta V^{(m)}(x, f^{(m)}) \triangleq V^{(m)}\big(x,
f^{(m)}\ |\ \mu_{m}, \muVec^{(m-1)}\big)
- V^{(m)}\big(x, f^{(m)}\ |\  \muVec^{(m)}\big).
\]
Then we need to show that for all~$x$,~$\limsup_{m\rightarrow \infty}
 \Delta V^{(m)}(x, f^{(m)}) \leq 0$ almost surely. We can write

\begin{align*}
&\Delta V^{(m)}(x, f^{(m)})  = V^{(m)}\big(x,f^{(m)}\ |\ \mu_{m}, \muVec^{(m-1)}\big) - \vTilde(x \ |\ \mu, f) + \vTilde(x \ |\ \mu, f)  - V^{(m)}\big(x, f^{(m)}\ |\ \muVec^{(m)}\big) \\
& \quad \quad \leq V^{(m)}\big(x,f^{(m)}\ |\ \mu_{m}, \muVec^{(m-1)}\big) - \hat{V}^{(m)}\big(x \ | \ f; \mu_{m}, \muVec^{(m-1)}\big) + \vTilde(x \ |\ \mu, f)  - V^{(m)}\big(x, f^{(m)}\ |\ \muVec^{(m)}\big) \\
&\quad \quad \triangleq T^{(m)}_{1} + T^{(m)}_{2}.
\end{align*}
Here the inequality follows from equation~\eqref{eqn:cd-vhat-ub}. Consider the term $T^{(m)}_{1}$. We have
\begin{align*}
T^{(m)}_{1} &= V^{(m)}\big(x,f^{(m)}\ |\ \mu_{m}, \muVec^{(m-1)}\big) - \hat{V}^{(m)}\big(x \ | \ f; \mu_{m}, \muVec^{(m-1)}\big) \\
& = \E\left[\sum_{t=0}^{\infty}\beta^{t}\left(\pi\big(x_{i,t}^{(m)}, a_{i,t}^{(m)}, \fmi\big) - \pi\big(\hat{x}_{i,t}^{(m)}, \hat{a}_{i,t}^{(m)}, f\big)\right)\right],
\end{align*}
where the last equality follows from equations~\eqref{eqn:actual-V-1} and~\eqref{eqn:V-coupled-player}. Note that $x_{i,0} = \xhat_{i,0} = x$ and $a_{i,t} = \hat{a}_{i,t} = \mu_{m}\big(x_{i,t}, \fmi\big)$ and the state transitions of players are given by equations~\eqref{eqn:dynamics-i-1}, \eqref{eqn:dynamics-j-1}, and \eqref{eqn:tm-coupled-player}. From Lemma~\ref{lemma:payoff-conv-1}, we have $\limsup_{m\rightarrow \infty} \E\left[\sum_{t = 0}^{T-1}\beta^{t}\left(\pi\big(x_{i,t}^{(m)}, a_{i,t}^{(m)}, \fmi\big) - \pi\big(\hat{x}_{i,t}^{(m)}, \hat{a}_{i,t}^{(m)}, f\big)\right)\right] \leq 0,$ almost surely for any finite time~$T$. From Lemma~\ref{lem:supbound}, we have, almost surely 
\[
\E\left[\sum_{t = T}^{\infty}\beta^{t}\left(\pi\big(x_{i,t}^{(m)}, \hat{a}_{i,t}^{(m)}, \fmi\big) - \pi\big(\hat{x}_{i,t}^{(m)},  a_{i,t}^{(m)}, f\big)\right)\right] \leq 2 C(x, T),
\]
 which goes to zero as $T \rightarrow \infty$. This proves that
$\limsup_{m\rightarrow \infty} T^{(m)}_{1} \leq 0$ almost
surely. Similar analysis (with an application of Lemma
\ref{lemma:payoff-conv-2}) shows that $\limsup_{m\rightarrow \infty}
T^{(m)}_{2} \leq 0$ almost surely, yielding the result. 
\end{sproof}

\begin{sproof}{Proof of Theorem \ref{th:AME-S}.}
Similar to the proof of Theorem \ref{th:AME}, let us define
\[
\Delta V^{(m)}(x, f^{(m)}) \triangleq V^{(m)}\big(x,
f^{(m)}\ |\ \mu_{m}, \muVec^{(m-1)}\big)
- V^{(m)}\big(x, f^{(m)}\ |\  \muVec^{(m)}\big).
\]
Then we need to show that for all~$x$,~$\limsup_{m\rightarrow \infty}
 \Delta V^{(m)}(x, f^{(m)}) \leq 0$ almost surely. We can write

\begin{align*}
\Delta V^{(m)}(x, f^{(m)}) & = V^{(m)}\big(x,f^{(m)}\ |\ \mu_{m}, \muVec^{(m-1)}\big) - \vTilde(x \ |\ \mu, f) + \vTilde(x \ |\ \mu, f)  - V^{(m)}\big(x, f^{(m)}\ |\ \muVec^{(m)}\big) \\
& \leq V^{(m)}\big(x,f^{(m)}\ |\ \mu_{m}, \muVec^{(m-1)}\big) - \hat{V}^{(m)}\big(x \ | \ f; \mu_{m}, \muVec^{(m-1)}\big) + \vTilde(x \ |\ \mu, f)  - V^{(m)}\big(x, f^{(m)}\ |\ \muVec^{(m)}\big) \\
&\triangleq T^{(m)}_{1} + T^{(m)}_{2},
\end{align*}
where $\hat{V}^{(m)}$ is defined as in \eqref{eqn:V-coupled-player}
(and in particular, using the limit profit function $\pi$).
Here the inequality follows from equation~\eqref{eqn:cd-vhat-ub}. Consider the term $T^{(m)}_{1}$. We have
\begin{align*}
T^{(m)}_{1} &= V^{(m)}\big(x,f^{(m)}\ |\ \mu_{m}, \muVec^{(m-1)}\big) - \hat{V}^{(m)}\big(x \ | \ f; \mu_{m}, \muVec^{(m-1)}\big) \\
& = \E\left[\sum_{t=0}^{\infty}\beta^{t}\left(\pi_m\big(x_{i,t}^{(m)}, a_{i,t}^{(m)}, \fmi\big) - \pi\big(\hat{x}_{i,t}^{(m)}, \hat{a}_{i,t}^{(m)}, f\big)\right)\right],
\end{align*}
where the last equality follows from equation \eqref{eqn:actual-V-1} (with $\pi$ replaced by $\pi_m$) and equation \eqref{eqn:V-coupled-player}. Note that $x_{i,0} = \xhat_{i,0} = x$ and $a_{i,t} = \hat{a}_{i,t} = \mu_{m}\big(x_{i,t}, \fmi\big)$ and the state transitions of players are given by equations~\eqref{eqn:dynamics-i-1}, \eqref{eqn:dynamics-j-1}, and \eqref{eqn:tm-coupled-player}. Now,
\begin{align*}
T^{(m)}_{1}
& = \E\left[\sum_{t=0}^{\infty}\beta^{t}\left(\pi_m\big(x_{i,t}^{(m)}, a_{i,t}^{(m)}, \fmi\big) - \pi_m\big(\hat x_{i,t}^{(m)}, \hat a_{i,t}^{(m)}, f\big) \right)+ \left(\pi_m\big(\hat x_{i,t}^{(m)}, \hat a_{i,t}^{(m)}, f\big)- \pi\big(\hat{x}_{i,t}^{(m)}, \hat{a}_{i,t}^{(m)}, f\big)\right)\right],
\end{align*}
Using equicontinuity and the uniform growth rate bound, a similar
argument to the proof of Theorem \ref{th:AME} (via Lemmas \ref{lemma:payoff-conv-1} and
\ref{lemma:payoff-conv-2}) shows that:
\begin{align*}
\limsup_{m\to\infty}~ \E\left[\sum_{t=0}^{\infty}\beta^{t}\left(\pi_m\big(x_{i,t}^{(m)}, a_{i,t}^{(m)}, \fmi\big) - \pi_m\big(\hat x_{i,t}^{(m)}, \hat a_{i,t}^{(m)}, f\big) \right)\right]\leq 0,
\end{align*}
almost surely. Recall that, for all $x,a,f$, $\lim_{m\to\infty}
\pi_m(x,a,f)=\pi(x,a,f)$.  Since $\mc{A}$ is compact and increments
are bounded, it follows that $\pi_m(\hat x_{i,t}^{(m)}, \hat
a_{i,t}^{(m)}, f)- \pi(\hat{x}_{i,t}^{(m)},
\hat{a}_{i,t}^{(m)}, f) \to 0$ almost surely as $m \to \infty$, for
all times $t$.  Using the fact that increments are bounded, the
uniform growth rate bound, and the dominated convergence theorem, the
expectation of the preceding difference also
approaches zero almost surely.  Finally, by truncating the sum at time
$T$, an argument similar
to the proof of Theorem \ref{th:AME} gives:
\begin{align*} \limsup_{m\to\infty}~\E\left[\sum_{t=0}^{\infty}\beta^{t}  \left(\pi_m\big(\hat x_{i,t}^{(m)}, \hat a_{i,t}^{(m)}, f\big)- \pi\big(\hat{x}_{i,t}^{(m)}, \hat{a}_{i,t}^{(m)}, f\big)\right)\right] \leq 0. \end{align*}
This proves that
$\limsup_{m\rightarrow \infty} T^{(m)}_{1} \leq 0$ almost
surely. Similar analysis  shows that $\limsup_{m\rightarrow \infty}
T^{(m)}_{2} \leq 0$ almost surely, yielding the result. 
\end{sproof}

%


\end{document}